\definecolor{lblue}{RGB}{40, 103, 178}
\definecolor{cred}{RGB}{177, 4, 14}
\definecolor{sgreen}{RGB}{46, 139, 87}
\tikzstyle{disc} = [rectangle, rounded corners, 
\tikzstyle{match} = [rectangle, rounded corners, 
\tikzstyle{weights} = [rectangle, rounded corners, 
\tikzstyle{arrow} = [very thick,->,>=stealth]
\newtheorem{theorem}{Theorem}
\newtheorem{lemma}[theorem]{Lemma}
\newtheorem{proposition}[theorem]{Proposition}
\newtheorem{remark}{Remark}
\newtheorem{definition}{Definition}
\newtheorem{corollary}[theorem]{Corollary}
\newtheorem{assumption}{Assumption}
\newenvironment{customasm}[1]
  {\innercustomasm}
  {\endinnercustomasm}
\newcommand{\R}{\mathbb{R}}
\newcommand{\EE}[1]{\mathbb{E}\left[{#1}\right]}
\newcommand{\EEst}[2]{\mathbb{E}\left[{#1}\  \middle| \ {#2}\right]}
\newcommand{\VV}[1]{\mathrm{Var}\left({#1}\right)}
\newcommand{\VVst}[2]{\mathrm{Var}\left({#1}\  \middle| \ {#2}\right)}
\newcommand{\VPst}[2]{\mathrm{Var}_{{#1}}({#2})}
\newcommand{\Vp}[2]{\mathrm{Var}_{{#1}}\left({#2}\right)}
\newcommand{\Ep}[2]{\mathbb{E}_{{#1}}\left[{#2}\right]}
\newcommand{\Epst}[3]{\mathbb{E}_{{#1}}\left[{#2}\  \middle| \ {#3}\right]}
\newcommand{\PP}[1]{\mathbb{P}\left\{{#1}\right\}}
\newcommand{\PPst}[2]{\mathbb{P}\left\{{#1}\  \middle| \ {#2}\right\}}
\newcommand{\Ppst}[3]{\mathbb{P}_{{#1}}\left\{{#2}\  \middle| \ {#3}\right\}}
\newcommand{\Pp}[2]{\mathbb{P}_{{#1}}\left\{{#2}\right\}}
\newcommand{\eqd}{\stackrel{\textnormal{d}}{=}}
\newcommand{\One}[1]{{\mathbbm{1}}\left\{{#1}\right\}}
\newcommand{\one}[1]{{\mathbbm{1}}_{{#1}}}
\newcommand{\iidsim}{\stackrel{\textnormal{iid}}{\sim}}
  \newcommand\independent{\protect\mathpalette{\protect\independenT}{\perp}}
\def\independenT#1#2{\mathrel{\rlap{$#1#2$}\mkern2mu{#1#2}}}
\DeclareMathOperator*{\argmax}{argmax}
\newcommand{\ISNR}{\mathrm{ISNR}}
\newcommand{\ISO}{\mathrm{ISO}}
\newcommand{\Xcal}{\mathcal{X}}
\newcommand{\Ycal}{\mathcal{Y}}
\newcommand{\Zcal}{\mathcal{Z}}
\newcommand{\Mcal}{\mathcal{M}}
\newcommand{\bs}{\mathbf{s}}
\newcommand{\bx}{\mathbf{x}}
\newcommand{\bX}{\mathbf{X}}
\newcommand{\bY}{\mathbf{Y}}
\newcommand{\bZ}{\mathbf{Z}}
\newcommand{\bw}{\mathbf{w}}
\newcommand{\bv}{\mathbf{v}}
\newcommand{\RR}{\mathbb{R}}
\newcommand{\bzeta}{\mathbf{\eps}}
\def\bzeta{{\boldsymbol{\zeta}}}
\def\beps{{\boldsymbol{\epsilon}}}
\newcommand{\normal}{\mathcal{N}}
\newcommand{\TV}{\textnormal{d}_{\textnormal{TV}}}
\definecolor{orange}{RGB}{255,127,0}
\title{Testing conditional independence under isotonicity}
\author[1]{\fontsize{13.8}{13.8} Rohan Hore}
\author[2]{\fontsize{13.8}{13.8} Jake A. Soloff}
\author[3]{\fontsize{13.8}{13.8} Rina Foygel Barber}
\author[4]{\fontsize{13.8}{13.8} Richard J. Samworth}
\affil[1]{Department of Statistics and Data Science, Carnegie Mellon University}
\affil[2]{Department of Statistics, University of Michigan}
\affil[3]{Department of Statistics, University of Chicago}
\affil[4]{Statistical Laboratory, University of Cambridge}
\date{\today}
\begin{document}

\maketitle

\begin{abstract}
We propose a test of the conditional independence of random variables $X$ and~$Y$ given~$Z$ under the additional assumption that $X$ is stochastically nondecreasing in~$Z$.  The well-documented hardness of testing conditional independence means that some further restriction on the null hypothesis parameter space is required. In contrast to existing approaches based on parametric models, smoothness assumptions, or approximations to the conditional distribution of $X$ given $Z$ and/or $Y$ given $Z$, our test requires only the stochastic monotonicity assumption.
Our procedure, called \textnormal{\texttt{PairSwap-ICI}}, determines the significance of a statistic by randomly swapping the $X$ values within ordered pairs of~$Z$ values. The matched pairs and the test statistic may depend on both $Y$ and $Z$, providing the analyst with significant flexibility in constructing a powerful test. Our test offers finite-sample Type~I error control, and provably achieves high power against a large class of alternatives.  
We validate our theoretical findings through a series of simulations and real data experiments.
 
\end{abstract}

\section{Introduction}
Testing conditional independence (CI) has emerged in recent years as a core contemporary statistical challenge for practitioners and theoreticians alike. On the practical side, it underpins the problems of variable selection, graphical modelling and causal inference, while from a
theoretical perspective, it has recently been understood that there is a strong sense in which the problem is fundamentally hard without further restrictions, and this has led to great interest in reasonable additional conditions under which one can establish non-trivial CI tests.

To formalize our setting, consider the CI null hypothesis
\[
H_0^{\textnormal{CI}}: X \independent Y \mid Z,
\]
where $X$ and $Y$ are variables of interest (such as a treatment $X$ and an outcome $Y$), while~$Z$ represents a confounder.  Our available data consist of independent copies   $(X_1,Y_1,Z_1),\ldots,(X_n,Y_n,Z_n)$ of $(X,Y,Z) \sim P$, where $P$ is an unknown distribution on $\Xcal\times\Ycal\times\Zcal$. Throughout this paper, we write $P_{X\mid Z}$ and $P_{Y\mid Z}$ to denote the conditional distributions of $X$ given $Z$ and of $Y$ given $Z$ respectively.

Given the clear practical relevance of conditional independence testing, as well as the well-established literature on unconditional independence testing, the following hardness result due to \citet{shah2020hardness} initially comes as a big surprise.
\begin{theorem}[\citealp{shah2020hardness}]\label{thm:hardness_CI}
    Let $d \in \mathbb{N}$ and let $\mathcal{P}_{\textnormal{Cont}}$ denote the set of distributions of $(X,Y,Z)$ on $\R^{d}$ that are absolutely continuous with respect to Lebesgue measure.  For any $\alpha\in (0,1)$, any test of the null $H_0^{\textnormal{CI}} \cap \mathcal{P}_{\textnormal{Cont}}$ with Type~I error level $\alpha$ has power no greater than $\alpha$ at every alternative distribution in $\mathcal{P}_{\textnormal{Cont}}\setminus H_0^{\textnormal{CI}}$.
\end{theorem}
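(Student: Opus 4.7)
My plan is to prove the theorem via a mixture argument on the $n$-sample space. Given any level-$\alpha$ test $\psi_n$ for $H_0^{\textnormal{CI}} \cap \mathcal{P}_{\textnormal{Cont}}$ and any $Q \in \mathcal{P}_{\textnormal{Cont}}$, it suffices to produce, for every $\eps > 0$, a probability measure $\pi$ supported on $H_0^{\textnormal{CI}} \cap \mathcal{P}_{\textnormal{Cont}}$ such that the mixture $M_\pi := \int P^{\otimes n} \, d\pi(P)$ satisfies $\TV(M_\pi, Q^{\otimes n}) < \eps$. Indeed, by linearity of expectation and the uniform level guarantee,
\[
\Ep{Q^{\otimes n}}{\psi_n} \leq \Ep{M_\pi}{\psi_n} + \TV(M_\pi, Q^{\otimes n}) = \int \Ep{P^{\otimes n}}{\psi_n} \, d\pi(P) + \TV(M_\pi, Q^{\otimes n}) \leq \alpha + \eps,
\]
and letting $\eps \downarrow 0$ yields the desired power bound.

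To construct $\pi$, I would partition the support of $Z$ into cubes $B_1, B_2, \dots$ of side length $\delta$, and draw independent pairs $(\bv_j, \bw_j) \sim Q_{X,Y \mid Z \in B_j}$ across cubes. Given a profile $\vartheta = (\bv_j, \bw_j)_j$, define $P_\vartheta$ with $Z$-marginal equal to $Q_Z$ and, for $z \in B_j$, $X \mid Z = z \sim \normal(\bv_j, \sigma^2 \ident)$ independently of $Y \mid Z = z \sim \normal(\bw_j, \sigma^2 \ident)$. By construction $X \independent Y \mid Z$ under $P_\vartheta$, and the Gaussian noise of scale $\sigma > 0$ ensures $P_\vartheta \in \mathcal{P}_{\textnormal{Cont}}$, so $P_\vartheta \in H_0^{\textnormal{CI}} \cap \mathcal{P}_{\textnormal{Cont}}$.

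To bound $\TV(M_\pi, Q^{\otimes n})$, I would couple $M_\pi$ and $Q^{\otimes n}$ through a shared draw $Z_1, \dots, Z_n \iidsim Q_Z$. On the event that the $Z_i$'s occupy $n$ distinct cubes --- which has probability $1 - o(1)$ under fine partitions for fixed $n$ --- the cube averages $(\bv_{j(Z_i)}, \bw_{j(Z_i)})_i$ are independent under $\pi$ with marginals $Q_{X,Y \mid Z \in B_{j(Z_i)}}$, so the conditional joint of $(X_i, Y_i)_i$ under $M_\pi$ factorises into Gaussian smoothings of these cube conditionals. Lebesgue differentiation gives $Q_{X,Y \mid Z \in B_{j(z)}} \to Q_{X,Y \mid Z = z}$ in TV for a.e.\ $z$ as $\delta \downarrow 0$, and convolution with $\normal(0, \sigma^2 \ident)^{\otimes 2}$ disappears in TV as $\sigma \downarrow 0$; combining these with dominated convergence and the vanishing collision probability drives $\TV(M_\pi, Q^{\otimes n})$ to zero.

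The main obstacle --- and the reason a mixture is essential rather than a single distribution --- is that no single $P \in H_0^{\textnormal{CI}}$ can be TV-close to $Q$ when $Q$ violates CI: any CI $P$ has $P_{X,Y \mid Z = z}$ of product form, so the optimal CI projection of $Q$ incurs a TV defect controlled below by the conditional mutual information of $X$ and $Y$ given $Z$ under $Q$, which is strictly positive for $Q \not\in H_0^{\textnormal{CI}}$, and $\TV(P^{\otimes n}, Q^{\otimes n}) \geq \TV(P, Q)$. The mixture circumvents this by making the per-cube regression values $(\bv_j, \bw_j)$ random, so averaging over $\vartheta$ can reproduce the non-product conditional dependence in $Q$ at the observed-sample level even though no individual $P_\vartheta$ is close to $Q$. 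The delicate technical step is checking that cube-wise independence in $\pi$ transfers to approximate independence across the $n$ observations, which hinges on the continuity of $Z$ ensuring that multiple samples almost surely land in distinct cubes as $\delta \downarrow 0$.
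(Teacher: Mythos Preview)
The paper does not prove this statement; Theorem~\ref{thm:hardness_CI} is quoted from \citet{shah2020hardness} as background motivation, and no argument for it is given anywhere in the paper. There is therefore no proof here to compare yours against.

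That said, your Bayes-mixture construction is a correct route to the result, and your diagnosis that a single null distribution cannot suffice is accurate: if, for instance, $Q_{X,Y\mid Z=z}$ is a fixed non-product Gaussian independent of $z$, then for every $P$ with $X\independent Y\mid Z$ the conditional $\TV\bigl(P_{X,Y\mid Z=z},Q_{X,Y\mid Z=z}\bigr)$ is bounded below by a positive constant, and a short triangle-inequality argument then shows $\inf_{P\in H_0^{\textnormal{CI}}\cap\mathcal{P}_{\textnormal{Cont}}}\TV(P,Q)>0$ regardless of how $P_Z$ is chosen. So the randomisation over $\vartheta$ is essential rather than merely convenient. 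The two places in your sketch that would need care in a full proof are (i) the convergence $Q_{X,Y\mid Z\in B_{j(z)}}\to Q_{X,Y\mid Z=z}$ in total variation for $Q_Z$-almost every $z$, which is an $L^1$-valued Lebesgue differentiation statement most cleanly handled by approximating the joint density in $L^1$ by a continuous compactly supported function and controlling the remainder via the Hardy--Littlewood maximal function; and (ii) the order of the double limit --- choose $\delta$ first to control both the differentiation error and the collision probability, and only then send $\sigma\downarrow 0$, since the smoothing error depends on which cube-conditional is being smoothed. Both points are standard once isolated.
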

In other words, any valid test of the conditional independence null is powerless at \emph{any} alternative distribution.
We also refer the reader to \citet{neykov2021minimax} and \citet{kim2022local} for further discussion and results on the fundamental hardness of constructing distribution-free tests for CI. Naturally, these hardness results have led to growing interest in identifying natural and minimal assumptions under which non-trivial CI tests can be developed. Existing approaches typically aim to guarantee validity (Type~I error control) over restricted classes of null distributions that impose one of the following additional structures:
\begin{enumerate}
    \item[(a)] a parametric model, such as joint Gaussianity of~$(X, Y, Z)$, or a Gaussian linear model for $Y\mid (X,Z)$ \citep[e.g.,][]{kalisch2007estimating,Epskamp04072018,colombo12};
    \item[(b)] a known or well-estimated conditional distribution $P_{X\mid Z}$ \citep{candes2018panning,barber2020robust,berrett2020conditional,niu2024reconciling}; or 
    \item[(c)] smoothness of the conditional distribution $P_{X\mid Z}$ \citep{shah2020hardness,lundborg2022conditional,kim2022local,lundborg2022projected}, e.g.~it belongs to a H\"older class.
\end{enumerate}

A common limitation of these approaches is that these structural assumptions require specifying information about the underlying model or data distribution. In contrast, we will consider a different setting, where we restrict the null by placing a shape constraint on the conditional distribution $P_{X\mid Z}$ (or alternatively $P_{Y\mid Z}$), which requires neither estimating any distributions nor selecting any tuning parameter, and is typically motivated by the application.

\subsection{Our approach}
In this work, we introduce a nonparametric structure under which we can test conditional independence: we assume a shape constraint---specifically, a form of stochastic monotonicity---for the conditional distribution of $X\mid Z$. A common application of conditional independence testing is to determine whether one or more covariates are needed in a regression model for a response after accounting for confounders $Z$.  In our context, and by analogy with the `Model-$X$' framework of \citet{candes2018panning}, it is convenient to regard $X$ as our response and $Y$ as the covariates of interest.  Typically in such practical applications, $X$ is univariate, and our initial exposition will therefore focus on this case, while allowing $Y$ and $Z$ to take values in general measurable spaces $\mathcal{Y}$ and $\mathcal{Z}$ (the space $\mathcal{Z}$ should be equipped with a partial order).  Later, in Section~\ref{sec:extension}, we demonstrate how our framework, methodology and theory can be extended to accommodate multivariate $X$.

Our stochastic monotonicity assumption can be formulated as follows:
 
\begin{customasm}{1}[Monotonicity of the conditional distribution $P_{X\mid Z}$]\label{asm:st}
   Let $\Xcal = \R$, and $\preceq$ be a partial order on $\Zcal$.  We assume $X$ is stochastically nondecreasing in $Z$, meaning that
    \[
    \textnormal{if $z\preceq z'$ then }x
    \Ppst{P}{X\ge x}{Z=z}\le \Ppst{P}{X\ge x}{Z=z'}\textnormal{ for all $x$}.
    \]
\end{customasm}

Assumption~\ref{asm:st} is motivated by several applications, particularly in biomedicine, where, for instance, factors such as smoking intensity may be associated with increased risk of certain diseases or conditions.  We observe similar trends in economics, where a greater education level is often linked to higher salaries, and in credit risk modelling, where better credit scores are associated with higher chances of loan approval.

This assumption does not fall into any of our categories (a)--(c), which summarize existing literature on conditional independence tests, and as illustrated by the examples above, may be naturally motivated by applications. To the best of our knowledge, there are no existing non-trivial CI tests under such nonparametric shape constraints.  

In Section~\ref{sec:method}, we establish that our proposed framework provides a valid CI test for any space~$\mathcal{Z}$ equipped with a partial ordering under which Assumption~\ref{asm:st} holds. For example, when $\mathcal{Z}=\mathbb{R}^{d_Z}$, a common choice is the coordinate-wise order, which is appropriate for instance in settings where increases in any of the risk factors such as age, smoking intensity or exposure to air pollution are associated with higher chances of developing certain diseases. 

Similar to the general framework of (a)--(c), we address the hardness of conditional independence testing via a restricted null hypothesis, namely the isotonic conditional independence (ICI) null,
\begin{align}\label{eq:null}
    H_0^{\textnormal{ICI}} : X\independent Y \mid Z,\textnormal{ and $P_{X\mid Z}$ satisfies Assumption~\ref{asm:st}.}
\end{align}
Naturally, this test should only be applied in settings where the monotonicity condition of Assumption~\ref{asm:st} is well-motivated, so that a rejection of $H_0^{\textnormal{ICI}} $ can reasonably be interpreted as evidence that $X\not\independent Y\mid Z$.  However, we emphasise again that some additional assumption beyond $H_0^{\textnormal{CI}}$ is essential for any valid test to have non-trivial power at any alternative.

The idea behind our framework is to consider carefully-chosen pairs of data points $(X_i,Y_i,Z_i)$ and $(X_j,Y_j,Z_j)$ with $Z_i \preceq Z_j$.  Under $H_0^{\textnormal{ICI}}$, we expect $X_i \leq X_j$ more often than not, and a substantial violation of this provides evidence of the influence of $Y$, i.e.~evidence against $H_0^{\textnormal{ICI}}$.  To calibrate the test appropriately under the null, we employ a particular type of permutation test, where permutations are restricted within matched pairs. To set the stage for the notation and ideas underlying our methodology, we briefly review the simpler framework of testing the null hypothesis of marginal independence, $X \independent Y$, and illustrate how permutations facilitate a valid test in that setting.

\subsection{Background: testing independence}\label{sec: marginal independence}

Given a joint distribution $P$ on $\mathcal{X}\times\mathcal{Y}$, let $(X_1,Y_1),\ldots,(X_n,Y_n) \iidsim P$, and write $\bX = (X_i)_{i=1}^n$ and $\bY = (Y_i)_{i=1}^n$.  We can reframe the problem of testing marginal independence as testing whether the entries of~$\bX$ are independent and identically distributed~given~$\bY$. Specifically,   
permutation tests look for violations of exchangeability of $\bX$ given~$\bY$. The general approach proceeds as follows: based on~$\bY$, the analyst chooses any statistic~$T : \Xcal^n \to \R$, with larger values of $T(\bX)$ indicating evidence against the independence null.\footnote{We emphasise that $T(\bX)$ is allowed to depend on both $\bX$ and $\bY$---for instance we may define the function as $T(\bx) = |\sum_{i=1}^n x_i Y_i|$, though we suppress the dependence on $\bY$ in our notation.}  Write $\mathcal{S}_n$ for the set of permutations of $[n] := \{1,\ldots,n\}$, and for $\sigma \in \mathcal{S}_n$, let $T_\sigma = T(\bX^\sigma)$ denote the value of the statistic when the entries of $\bX$ are permuted according to $\sigma$---that is, $\bX^\sigma = (X_{\sigma(1)},\dots,X_{\sigma(n)})$. Finally, define a $p$-value
\[
p = \frac{1}{n!}\sum_{\sigma\in\mathcal{S}_n}\One{T_\sigma\ge T}.
\]
This construction produces a valid $p$-value for \emph{any} choice of test statistic~$T$, and the statistic~$T$ can be tailored to have power against certain specific alternatives.  Indeed, this strategy has been successfully employed to construct independence tests via nearest neighbor distances and mutual information \citep{berrett2019nonparametric}, moment methods \citep{kim2022minimax}, kernels \citep{pfister2018kernel}, the Hilbert--Schmidt independence criterion \citep{albert2022adaptive} and $U$-statistics \citep{berrett2021usp,berrett2021optimal}.

In our work, since we are interested in conditional (rather than  marginal) independence, we will follow a similar strategy, but will use a restricted class of functions~$T$ and a (data-dependent) subgroup of permutations~$\sigma\in \mathcal{S}_n$, both of which respect the stochastic monotonicity condition specified in Assumption~\ref{asm:st}.

\subsection{Our contributions}
In Section~\ref{sec:method}, we introduce our general methodology for testing $H_0^{\textnormal{ICI}}$, the \textnormal{\texttt{PairSwap-ICI}} procedure. Our framework allows practitioners flexibility in specifying a matching, i.e., a sequence of ordered pairs of observations, and a test statistic for each matched pair; these latter quantities are then aggregated across all matched pairs to form the overall test statistic. We calibrate the test by comparing the observed statistic to its counterpart computed on~$X$ values permuted within matched pairs, and prove that the \textnormal{\texttt{PairSwap-ICI}} procedure guarantees finite-sample Type~I error control over the null $H_0^{\textnormal{ICI}}$, regardless of the matching or test statistic chosen.

Notwithstanding our general validity guarantee, the matching and test statistic for the \textnormal{\texttt{PairSwap-ICI}} procedure must be chosen carefully in order to achieve good power against alternatives of interest. Accordingly, in Section~\ref{sec:design}, we provide specific recommendations for these choices, motivated both by consideration of oracle strategies for maximizing power and by heuristics for how these can be approximated in practice. 
This intuition is then formalized in Section~\ref{sec:power}, where in Theorem~\ref{thm:general-power-asymptotic} we provide an asymptotic expression for the power of a general version of the \textnormal{\texttt{PairSwap-ICI}} test under a broad family of regression models for~$X$ conditional on $(Y,Z)$. We further introduce a quantity called the isotonic signal-to-noise ratio ($\ISNR$), and show that it controls the power of any valid test of $H_0^{\textnormal{ICI}}$.  These results turn out to be closely connected, because  we prove in Theorems~\ref{thm:oracle-matching-asymptote} and~\ref{thm:oracle-matching-estimated-mu} that the empirical analog of $\ISNR$ characterizes the power of the \texttt{PairSwap-ICI} test under both oracle and plug-in matching schemes and under the additional assumption that $d_Z=1$ and $\mathcal{Z}$ is equipped with the usual total order on $\mathbb{R}$. 

After extending our methodology and theory to multivariate $X$ in Section~\ref{sec:extension}, our simulations in Section~\ref{sec:simulations} validate our theory, confirming that our test is not excessively conservative and has power to detect departures from $H_0^{\textnormal{ICI}}$ provided that the signal-to-noise ratio exceeds the critical rate determined by our theory.  As an illustration of an application of the \textnormal{\texttt{PairSwap-ICI}} test on real data, in Section~\ref{sec:experiments} we consider a diabetes dataset from the Pima community near Phoenix, Arizona.  One goal here is to assess which of six variables are individual risk factors for diabetes, after controlling for age.  Finally, we conclude with a discussion in Section~\ref{sec:discussion}. Most proofs are deferred to the Appendix.

\section{Methodology}\label{sec:method}

In this section, we take $\mathcal{Y}$ and $\mathcal{Z}$ to be measurable spaces with $\mathcal{Z}$ having a partial order, and let $\mathcal{X} = \mathbb{R}$. Here we present our general procedure, called the \textnormal{\texttt{PairSwap-ICI}} test, for testing the isotonic conditional independence null~$H_0^{\textnormal{ICI}}$.  This builds on the intuition outlined in the introduction, whereby pairs of data points $(X_i,Y_i,Z_i)$ and $(X_j,Y_j,Z_j)$ with large positive differences $X_i - X_j$ and $Z_i\preceq Z_j$ provide evidence against the null.

More formally, based on~$\bY = (Y_i)_{i=1}^n$ and~$\bZ = (Z_i)_{i=1}^n$, and without access to $\bX = (X_i)_{i=1}^n$, the analyst chooses:
\begin{enumerate}[(i)]
    \item A sequence of ordered pairs 
    \[
    (i_{1},j_{1}),\ldots, (i_{L},j_{L})
    \]
    of indices in $[n]$, where all $2L$ entries are distinct.
    We require the pairs to be ordered in the sense that 
    \begin{align}\label{eq:monotonicity}
        Z_{i_\ell}\preceq Z_{j_\ell}
    \end{align}
    for each $\ell \in [L]$.  Such a choice of ordered pairs (with any $L \leq \lfloor n/2 \rfloor$) is referred to as a \emph{matching}, and we denote the set of all possible matchings in $[n]$ satisfying~\eqref{eq:monotonicity} as~$\mathcal{M}_n(\bZ)$. In Section~\ref{sec:design}, we show how matched pairs can be carefully selected to encode information about $\bY$ and thereby enhance the power to detect specific conditional dependencies.
    
    \item A sequence of functions~$\psi_1,\ldots,\psi_L$, where each $\psi_\ell : \Xcal\times \Xcal \to \R$ satisfies the following \emph{anti-monotonicity} property:
    \begin{definition}
    A function $\psi:\Xcal\times\Xcal\to\R$ is said to satisfy anti-monotonicity if 
        \begin{align}
        \label{defn:monotonicity-of-psi}
        \psi(x + \Delta, x' - \Delta') - \psi(x' - \Delta', x + \Delta) \geq \psi(x,x') - \psi(x',x) 
    \end{align}
    for all $x,x'$ and all $\Delta,\Delta'\geq 0$.
    \end{definition}
     An example of a class of functions $\psi$ that satisfy anti-monotonicity is $\psi(x,x')=f(x-x')$ for any monotone nondecreasing function $f$ that is also \emph{odd} in the sense that~$f(x)= -f(-x)$, for instance, $\psi_\ell(x,x') = w_\ell\cdot (x-x')$ or $\psi_\ell(x,x') = w_\ell\cdot \textnormal{sign}(x-x')$.  As with the choice of matching, in Section~\ref{sec:design}, we show how the functions ${\psi_\ell}$ can be selected advantageously; for instance, when $\psi_\ell(x,x') = w_\ell\cdot (x-x')$, dependence on $(\bY,\bZ)$ is simply captured through the weights $w_\ell$.
\end{enumerate}

With these choices in place, our test statistic $T:\mathcal{X}^n\to\R$ is defined as\footnote{ Again, as for marginal permutation tests in Section~\ref{sec: marginal independence}, here we suppress dependence on $\bY$ and $\bZ$ in the notation $T(\bx)$, even though this statistic does depend on $\bY$ and $\bZ$ through the choices of the matched pairs $(i_\ell,j_\ell)$ and functions $\psi_\ell$.}
    \begin{align}\label{eq:test-stat}
        T(\bx) = \sum_{\ell=1}^L \psi_\ell(x_{i_\ell}, x_{j_\ell}).
    \end{align}
In order to calibrate the test, the analyst compares the observed test statistic $T = T(\bX)$ with versions of $T$ where indices within pairs $(i_\ell, j_\ell)$ are randomly swapped. Specifically, for $\bs = (s_1,\ldots,s_L) \in \{\pm 1\}^L$, define $T_\bs = T(\bX^\bs)$, where $\bX^\bs$ is a swapped version of the data vector $\bX$, with entries
\[
\begin{cases}
    (X_{i_\ell}^\bs, X_{j_\ell}^\bs) = (X_{i_\ell}, X_{j_\ell}) & s_\ell = +1, \\
    (X_{i_\ell}^\bs, X_{j_\ell}^\bs) = (X_{j_\ell}, X_{i_\ell}) & s_\ell = -1.
\end{cases}
\]
Thus, $s_\ell=-1$ indicates that the random variables $X_{i_\ell}$ and $X_{j_\ell}$ are swapped in $(X_{i_\ell}^\bs, X_{j_\ell}^\bs)$ compared with $(X_{i_\ell}, X_{j_\ell})$, while $s_\ell=+1$ indicates no swap.
Informally, the two constraints~\eqref{eq:monotonicity} and~\eqref{defn:monotonicity-of-psi} ensure that, under the null, each $\psi_{\ell}(X_{i_\ell},X_{j_\ell})$ is likely to be no larger than its swapped version, $\psi_{\ell}(X_{j_\ell},X_{i_\ell})$---and thus, the statistic $T=T(\bX)$ is likely to be no larger than its swapped copies, $T_{\bs}=T(\bX^{\bs})$. If instead $T>T_{\bs}$ for many swaps $\bs$, this indicates evidence against the null. Indeed, we may define 
\begin{align}\label{eq:def-test}
p&:= \frac{1}{2^L}\sum_{\bs\in\{\pm 1\}^L}\One{T_\bs \ge T},
\end{align}
and in Theorem~\ref{thm:main} we show that this is a valid $p$-value.  In words, this $p$-value compares the observed value~$T$ of the statistic against all possible permuted statistic values $T_{\bs}$ that we would obtain by swapping indices within matched pairs of our observed data $\bX$.

\paragraph{Example: a linear test statistic.}
Before proceeding, we give a simple example of a test statistic that we might choose to use: consider a linear test statistic,
\[T(\bx) = \sum_{i=1}^n \beta_i x_i\]
for some coefficients $\beta_i\in\R$ (which may depend on $\bY$, $\bZ)$. This function can be used as the test statistic for the \textnormal{\texttt{PairSwap-ICI}} test, as long as the coefficients satisfy
\begin{equation}\label{eqn:linear_condition}
\beta_{i_\ell} \geq \beta_{j_\ell}\textnormal{ for each $\ell\in[L]$}.
\end{equation}
To see why, first note that without loss of generality we can take $\beta_i =0$ for all $i\in[n]\setminus \{i_1,j_1,\dots,i_L,j_L\}$, i.e., all data points not belonging to any of the $L$ pairs. This is because the indicator $\One{T_{\bs}\geq T}$, appearing in the computation of the $p$-value, is invariant to these terms. Next,
define
\[
\psi_\ell(x,x') = \beta_{i_\ell} x + \beta_{j_\ell}x',
\]
which satisfies~\eqref{defn:monotonicity-of-psi} because for any $\Delta,\Delta' \geq 0$ with $x,x',x+\Delta,x'-\Delta' \in \mathcal{X}$, we have 
\[
\psi_\ell(x + \Delta, x' - \Delta') - \psi_\ell(x' - \Delta', x + \Delta) - \psi_\ell(x,x') + \psi_\ell(x',x) = (\beta_{i_\ell} - \beta_{j_\ell})(\Delta + \Delta') \geq 0
\]
by our assumption~\eqref{eqn:linear_condition}. We then have $T(\bx)$ equal to the test statistic defined in~\eqref{eq:test-stat}.

We remark that choosing such a test statistic is by no means implying an assumption that the dependence between $X$ and $Y$ given $Z$ follows a linear model
---it may be the case that the statistic $T(\bx)=\beta^\top\bx$ has good power for distinguishing the null from the alternative even if a linear model is only a coarse approximation to the true  model.

\subsection{Validity}\label{sec:validity}

Our first main result establishes that our method yields a valid test of~$H_0^{\textnormal{ICI}}$.

\begin{theorem}\label{thm:main}
    Under $H_0^{\textnormal{ICI}}$, the conditional Type~I error of the \textnormal{\texttt{PairSwap-ICI}} test satisfies $\PPst{p\le \alpha }{ \bY, \bZ}\le \alpha$ for all~$\alpha \in [0,1]$. In particular, the test enjoys marginal error control: $\PP{p\leq \alpha}\leq \alpha$ for all $\alpha$.
\end{theorem}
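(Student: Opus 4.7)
The plan is to reformulate the $p$-value in terms of ``swap differences'' $H_\ell := \psi_\ell(X_{i_\ell},X_{j_\ell}) - \psi_\ell(X_{j_\ell},X_{i_\ell})$ and reduce the validity claim to the classical sign-change permutation test via a symmetrization coupling. A direct algebraic calculation yields $T(\bX^\bs)-T(\bX) = -\sum_{\ell:s_\ell=-1}H_\ell$, so the event $\{T_\bs \geq T\}$ coincides with $\{\sum_{\ell \in S(\bs)} H_\ell \leq 0\}$, where $S(\bs):=\{\ell:s_\ell=-1\}$; since $\bs \leftrightarrow S(\bs)$ bijects $\{\pm 1\}^L$ with $2^{[L]}$, we obtain
\[
p = \frac{1}{2^L}\sum_{S\subseteq[L]}\mathbbm{1}\Bigl\{\sum_{\ell\in S}H_\ell \leq 0\Bigr\}.
\]
It thus suffices, conditional on $(\bY,\bZ)$, to establish two structural properties of $(H_\ell)_{\ell=1}^L$ under $H_0^{\textnormal{ICI}}$: (i) the $H_\ell$ are conditionally independent, and (ii) $H_\ell \leqst -H_\ell$ for each $\ell$.

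For (i), the conditional-independence clause of $H_0^{\textnormal{ICI}}$ makes the $X_k$ conditionally independent given $(\bY,\bZ)$, and since the matched pairs use disjoint indices, the $H_\ell$'s---each a function of a distinct pair $(X_{i_\ell}, X_{j_\ell})$---are conditionally independent. For (ii), I would set $h_\ell(x,x') := \psi_\ell(x,x')-\psi_\ell(x',x)$ and $A_t := \{(x,x'):h_\ell(x,x')\geq t\}$: the anti-monotonicity condition~\eqref{defn:monotonicity-of-psi} is precisely the statement that $A_t$ is nondecreasing in $x$ and nonincreasing in $x'$, and antisymmetry of $h_\ell$ identifies $\{-H_\ell \geq t\}$ with $\{(X_{j_\ell},X_{i_\ell}) \in A_t\}$. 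Since Assumption~\ref{asm:st} with $Z_{i_\ell}\preceq Z_{j_\ell}$ gives $X_{i_\ell} \leqst X_{j_\ell}$ (as independent draws from $P_{X|Z_{i_\ell}}$ and $P_{X|Z_{j_\ell}}$), the quantile coupling $(X_{i_\ell},X_{j_\ell}) \eqd (F_{i_\ell}^{-1}(U), F_{j_\ell}^{-1}(V))$ with independent uniforms $(U,V)$ and $F_{i_\ell}^{-1}\leq F_{j_\ell}^{-1}$ combined with the ``staircase'' structure of $A_t$ yields the pointwise bound $\mathbbm{1}_{A_t}(F_{i_\ell}^{-1}(U),F_{j_\ell}^{-1}(V))\leq\mathbbm{1}_{A_t}(F_{j_\ell}^{-1}(U),F_{i_\ell}^{-1}(V))$; integrating gives (ii).

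With (i) and (ii) in hand, I would symmetrize. Let $\tilde H_\ell$ follow the mixture law $\tfrac12\mathcal{L}(H_\ell)+\tfrac12\mathcal{L}(-H_\ell)$, which is symmetric about $0$; property (ii) then implies $H_\ell\leqst\tilde H_\ell$, so I can couple $H_\ell$ and $\tilde H_\ell$ with $H_\ell\leq\tilde H_\ell$ almost surely, done independently across $\ell$. In the coupling, $(\tilde H_\ell)_\ell$ is a vector of independent symmetric random variables and $\sum_{\ell\in S}H_\ell\leq\sum_{\ell\in S}\tilde H_\ell$ for every $S$, so the symmetrized $p$-value $\tilde p := 2^{-L}\sum_S\mathbbm{1}\{\sum_{\ell\in S}\tilde H_\ell\leq 0\}$ satisfies $\tilde p \leq p$ almost surely. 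Via the bijection $S\leftrightarrow \sigma\in\{\pm 1\}^L$ given by $\sigma_\ell=-1 \Leftrightarrow \ell\in S$, an elementary rearrangement shows $\tilde p = 2^{-L}\bigl|\{\sigma\in\{\pm 1\}^L:\sum_\ell\sigma_\ell\tilde H_\ell\geq\sum_\ell\tilde H_\ell\}\bigr|$, the standard permutation $p$-value for the statistic $\sum_\ell\tilde H_\ell$ under the sign-change group acting on independent symmetric data. The classical permutation validity argument then yields $\PPst{\tilde p\leq\alpha}{\bY,\bZ}\leq\alpha$, and combining with $\tilde p\leq p$ gives the desired conditional bound on $p$; the marginal bound follows by the tower property. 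The main obstacle will be (ii): recognising that anti-monotonicity~\eqref{defn:monotonicity-of-psi} encodes precisely the staircase geometry of the upper level sets of $h_\ell$ needed for the quantile-coupling argument---once that is in hand, the remaining steps are a careful packaging of couplings around the classical sign-flip permutation test.
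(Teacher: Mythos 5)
Your proof is correct and takes a genuinely different route from the paper's, though both hinge on the same underlying idea of ``sharpening'' the null by symmetrization. The paper works at the level of the data vector $\bx$: it shows the map $\bx\mapsto p(\bx)$ is monotone (nonincreasing in each $x_{i_\ell}$, nondecreasing in each $x_{j_\ell}$), constructs a sharp-null vector $\bX_\sharp$ by drawing $(X_\sharp)_{i_\ell},(X_\sharp)_{j_\ell}$ i.i.d.\ from the mixture $\tfrac12 P_{i_\ell}+\tfrac12 P_{j_\ell}$, and uses stochastic ordering of the coordinates plus the monotonicity to get $p(\bX_\sharp)\leqst p(\bX)$; validity under the sharp null follows from swap-invariance of $\bX_\sharp$ in distribution. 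You instead distill everything to the scalar differences $H_\ell = \psi_\ell(X_{i_\ell},X_{j_\ell})-\psi_\ell(X_{j_\ell},X_{i_\ell})$, establish the key stochastic ordering $H_\ell\leqst -H_\ell$ directly via the quantile coupling and the ``staircase'' geometry of the upper level sets of the antisymmetric, anti-monotone $h_\ell$, and then symmetrize at the level of the $H_\ell$'s themselves by coupling each $H_\ell$ below an independent draw $\tilde H_\ell$ from $\tfrac12\mathcal{L}(H_\ell)+\tfrac12\mathcal{L}(-H_\ell)$. Your reduction makes the connection to the classical sign-flip permutation test particularly transparent and isolates the role of anti-monotonicity cleanly; the paper's version, by staying at the level of the data vector and the monotone map $\bx\mapsto p(\bx)$, transfers essentially verbatim to the Monte Carlo $p$-value $\hat p_M$ (Theorem~\ref{thm:mc-validity}), where Steps~1 and~2 of the proof are reused with only Step~3 modified. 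Either formulation gives a complete proof of Theorem~\ref{thm:main}.
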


\begin{proof}
Our proof is split into three steps. First we derive some deterministic properties of the $p$-value. Next, we compare to the \emph{sharp null}, where the pair $X_{i_\ell},X_{j_\ell}$ are identically distributed (rather than stochastically ordered). Finally, we examine the validity of the test under the sharp null.

\paragraph{Step 1: some deterministic properties of the $p$-value.} First, fix $\alpha \in [0,1]$ and define a function $p:\R^n\to[0,1]$ as
\begin{align}\label{eq:pvalue-as-a-function-of-data}
    p(\bx) = \frac{1}{2^L}\sum_{\bs\in\{\pm 1\}^L}\One{T(\bx^{\bs})\geq T(\bx)},
\end{align}
so that the $p$-value in~\eqref{eq:def-test} can be written as
$p = p(\bX)$. For each $\bs\in\{\pm 1\}^L$, we can observe that the value $p(\bx^{\bs})$ is simply computing the quantile of $T(\bx^{\bs})$ among all possible swapped statistics, $\big(T(\bx^{\bs'}) : \bs'\in\{\pm 1\}^L\big)$. Consequently, it holds deterministically that
\begin{equation}\label{eqn:pvalue_deterministic_quantile}\frac{1}{2^L}\sum_{\bs\in\{\pm 1\}^L} \One{p(\bx^{\bs}) \leq \alpha}\leq \alpha.
\end{equation}
(Lemma \ref{lem:deterministic_control_on_perm_pvalue} in the Appendix verifies this bound, for completeness.)

In addition, we claim that $p(\cdot)$ is monotone in its coordinates, namely, $p(\bx)$ is nonincreasing in each $x_{i_\ell}$, and nondecreasing in each $x_{j_\ell}$. To see why, for each $\bs\in\{\pm 1\}^L$, we can calculate
\begin{align*}
    \One{T(\bx^\bs)\geq T(\bx)}
    &=\One{\sum_{\ell : s_\ell = +1} \psi_\ell(x_{i_\ell},x_{j_\ell}) +\sum_{\ell : s_\ell = -1} \psi_\ell(x_{j_\ell},x_{i_\ell}) \geq \sum_{\ell=1}^L \psi_\ell(x_{i_\ell},x_{j_\ell})}\\
    &=\One{\sum_{\ell : s_\ell = -1} \bigl(\psi_\ell(x_{i_\ell},x_{j_\ell}) - \psi_\ell(x_{j_\ell},x_{i_\ell})\bigr) \leq 0}.
\end{align*}
By the anti-monotonicity condition~\eqref{defn:monotonicity-of-psi} on  $\psi_\ell$, this function is nonincreasing in each $x_{i_\ell}$, and nondecreasing in each $x_{j_\ell}$, and therefore the same is true for $p(\bx)$ as well.

\paragraph{Step 2: compare to the sharp null.} For each $i\in[n]$, let $P_i = P_{X|Z}(\cdot\mid Z_i)$ denote the null distribution of $X_i$ (after conditioning on $\bY,\bZ$). 
By Assumption~\ref{asm:st}, we know that $P_{i_\ell}\preceq_{\rm st}P_{j_\ell}$ for each pair $\ell\in[L]$, where $\preceq_{\rm st}$ denotes the stochastic ordering on distributions. Next, we also define distributions $\bar{P}_\ell$ for each $\ell\in[L]$, given by the mixture
\[\bar{P}_\ell = \frac{1}{2}P_{i_\ell}+ \frac{1}{2}P_{j_\ell}.\] In particular, then, conditional on $\bY, \bZ$,
\begin{equation}\label{eqn:stochastic_order_for_mix}P_{i_\ell}\preceq_{\rm st}\bar{P}_\ell\preceq_{\rm st}P_{j_\ell}, \quad \ell\in[L].\end{equation}
We will now compare the observed data values, whose distribution (conditional on $\bY,\bZ$) is given by
\[\bX = (X_1,\dots,X_n)\ \sim \  P_1\times \dots \times P_n,\]
against a different distribution,
\[\bX_{\sharp} = \bigl((X_\sharp)_1,\dots,(X_\sharp)_n\bigr) \ \sim \  (P_\sharp)_1 \times \dots\times (P_\sharp)_n,\]
where the distributions $(P_\sharp)_i$ are defined by setting
\[(P_\sharp)_{i_\ell} = (P_\sharp)_{j_\ell} = \bar{P}_\ell\]
for each $\ell\in[L]$ (and, for any index $i\in[n]\setminus \{i_1,j_1,\dots,i_L,j_L\}$ that does not belong to any of the $L$ matched pairs, we simply take $(P_\sharp)_i=P_i$).
We can think of this alternative vector of observations as being drawn from a \emph{sharp null}, because for each pair $\ell$, the random variables $(X_\sharp)_{i_\ell},(X_\sharp)_{j_\ell}$ are identically distributed (rather than stochastically ordered, as for $X_{i_\ell},X_{j_\ell}$). In particular, this implies that for any $\bs\in\{\pm 1\}^L$,
\begin{equation}\label{eqn:eqd_under_sharp_null}(\bX_\sharp)^\bs \eqd \bX_\sharp\end{equation}
after conditioning on $\bY,\bZ$. 

In Step 1, we verified that the function $p(\bx)$ is nonincreasing in each $x_{i_\ell}$, and nondecreasing in each $x_{j_\ell}$. In particular, combined with the stochastic ordering~\eqref{eqn:stochastic_order_for_mix}, this means that
$p(\bX_\sharp)\preceq_{\rm st} p(\bX) $
 (conditional on $\bY,\bZ$). We therefore have
\[\PPst{p\leq \alpha}{\bY,\bZ} = \PPst{p(\bX)\leq \alpha}{\bY,\bZ}\leq \PPst{p(\bX_\sharp)\leq \alpha}{ \bY,\bZ}.\]
From this point on, then, we only need to verify the validity of the $p$-value $p(\bX_\sharp)$ computed under the sharp null.

\paragraph{Step 3: validity under the sharp null.} 
For  data $\bX_\sharp$ drawn under a sharp null, we have
\begin{multline*} \PPst{p(\bX_\sharp)\leq \alpha}{\bY,\bZ}= \frac{1}{2^L}\sum_{\bs\in\{\pm 1\}^L}\PPst{p\bigl((\bX_\sharp)^\bs\bigr) \leq \alpha}{\bY,\bZ} \\= \EEst{\frac{1}{2^L}\sum_{\bs\in\{\pm 1\}^L}\One{p\bigl((\bX_\sharp)^\bs\bigr) \leq \alpha} }{\bY,\bZ} \leq \alpha,\end{multline*}
where the first step holds by~\eqref{eqn:eqd_under_sharp_null}, while the last step holds by the deterministic calculation~\eqref{eqn:pvalue_deterministic_quantile} from Step 1. 
\end{proof}

The $p$-value constructed in~\eqref{eq:def-test} requires computing~$T_\bs = T(\bX^\bs)$ for all~$2^L$ values of~$\bs\in \{-1,1\}^L$, which may be computationally prohibitive for moderate or large~$L$. In practice, it is common to use a Monte Carlo approximation to the $p$-value: we sample~$\bs^{(1)}, \ldots, \bs^{(M)}\iidsim \textnormal{Unif}(\{\pm1\}^L)$, and then compute 
\[
\hat{p}_M = \frac{1+\sum_{m=1}^M \One{T_{\bs^{(m)}}\ge T}}{1+M}.
\]
The extra~`$1+$' term appearing in the numerator and denominator is necessary to ensure error control for this Monte Carlo version of our test \citep{davison1997bootstrap,phipson2010permutation}; in particular, this correction ensures we cannot have $\hat{p}_M=0$. The following theorem verifies that this version of the test also controls the Type~I error.

\begin{theorem}\label{thm:mc-validity}
 Fix any $M\in \mathbb{N}$. Under $H_0^{\textnormal{ICI}}$, it holds that $\PPst{\hat{p}_M\le \alpha }{\bY,\bZ}\le \alpha$ for all~$\alpha \in [0,1]$, and consequently, $\PP{\hat{p}_M\leq \alpha}\leq \alpha$.
\end{theorem}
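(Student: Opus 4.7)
I would follow the three-step structure of the proof of Theorem~\ref{thm:main}, with one substantive change: the deterministic quantile bound~\eqref{eqn:pvalue_deterministic_quantile} used in Step~1 is replaced by a conditional exchangeability argument that exploits the ``$1+$'' adjustment in the numerator and denominator of $\hat{p}_M$.

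First, I would observe that for each fixed realization of the Monte Carlo swaps $(\bs^{(1)},\ldots,\bs^{(M)})$, the function
\[
\hat{p}_M(\bx;\bs^{(1:M)}) := \frac{1+\sum_{m=1}^M \One{T(\bx^{\bs^{(m)}})\geq T(\bx)}}{1+M}
\]
is nonincreasing in each $x_{i_\ell}$ and nondecreasing in each $x_{j_\ell}$. Each indicator $\One{T(\bx^{\bs^{(m)}})\geq T(\bx)}$ admits precisely the rewriting derived in Step~1 of the proof of Theorem~\ref{thm:main}, so the anti-monotonicity condition~\eqref{defn:monotonicity-of-psi} on $\psi_\ell$ forces the required coordinatewise monotonicity; this property is preserved under nonnegative-weighted sums and positive rescaling. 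Conditioning on $(\bY,\bZ,\bs^{(1:M)})$ and invoking the stochastic-order chain $P_{i_\ell}\preceq_{\rm st}\bar{P}_\ell\preceq_{\rm st}P_{j_\ell}$ exactly as in Step~2 of the proof of Theorem~\ref{thm:main} then yields
\[
\PPst{\hat{p}_M(\bX;\bs^{(1:M)})\leq\alpha}{\bY,\bZ,\bs^{(1:M)}}\leq\PPst{\hat{p}_M(\bX_\sharp;\bs^{(1:M)})\leq\alpha}{\bY,\bZ,\bs^{(1:M)}},
\]
and integrating over $\bs^{(1:M)}$ reduces the problem to validating the Monte Carlo $p$-value under the sharp null.

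For the sharp-null validity, set $T_0:=T(\bX_\sharp)$ and $T_m:=T((\bX_\sharp)^{\bs^{(m)}})$ for $m\in[M]$, so that $\hat{p}_M(\bX_\sharp;\bs^{(1:M)})=(1+\#\{m\in[M]:T_m\geq T_0\})/(1+M)$. I would show that, conditional on $(\bY,\bZ)$, the tuple $(T_0,T_1,\ldots,T_M)$ is exchangeable. To do so, introduce an auxiliary $\bs^{(0)}\sim\textnormal{Unif}(\{\pm1\}^L)$ independent of everything else; the sharp-null invariance~\eqref{eqn:eqd_under_sharp_null} gives $\bX_\sharp\eqd(\bX_\sharp)^{\bs^{(0)}}$, and hence
\[
(T_0,T_1,\ldots,T_M)\eqd\bigl(T((\bX_\sharp)^{\bs^{(0)}}),T((\bX_\sharp)^{\bs^{(0)}\bs^{(1)}}),\ldots,T((\bX_\sharp)^{\bs^{(0)}\bs^{(M)}})\bigr).
\]
Since $\{\pm1\}^L$ is abelian under coordinatewise multiplication, the tuple $(\bs^{(0)},\bs^{(0)}\bs^{(1)},\ldots,\bs^{(0)}\bs^{(M)})$ is itself iid uniform, making the right-hand side exchangeable in its $M+1$ coordinates. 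The classical rank bound for exchangeable tuples---the ``$1+$'' correction of \citet{phipson2010permutation}---then delivers $\PPst{(1+\#\{m:T_m\geq T_0\})/(1+M)\leq\alpha}{\bY,\bZ}\leq\alpha$, and marginal Type~I control follows by integrating over $(\bY,\bZ)$.

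I do not anticipate any genuine obstacle. The only point requiring care is that the coordinatewise monotonicity of $\hat{p}_M(\cdot;\bs^{(1:M)})$ must hold pointwise in $\bs^{(1:M)}$ so that the Step~2 stochastic-dominance reduction can be executed conditional on the swaps before they are averaged out; this is immediate from the form of the indicator but should be stated explicitly.
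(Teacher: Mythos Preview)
Your proposal is correct and follows essentially the same three-step approach as the paper's proof: the same coordinatewise monotonicity of $\hat{p}_M(\cdot;\bs^{(1:M)})$, the same sharp-null reduction conditional on the Monte Carlo swaps, and the same introduction of an auxiliary $\bs^{(0)}\sim\textnormal{Unif}(\{\pm1\}^L)$ together with the group structure of $\{\pm1\}^L$ to establish validity under the sharp null. The only cosmetic difference is that the paper carries out Step~3 by an explicit rank computation (rewriting the transformed $p$-value as $(1+M)^{-1}\sum_{m=0}^M\One{T((\bX_\sharp)^{\bs^{(m)}})\geq T((\bX_\sharp)^{\bs^{(0)}})}$ and noting the rank of the $m=0$ term is uniform), whereas you package the same content as exchangeability of $(T_0,\ldots,T_M)$ plus the standard rank bound; these are two phrasings of the same argument.
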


We emphasise that the Type~I error control of the \texttt{PairSwap-ICI} method holds regardless of the dimensions of $\Ycal,\Zcal$; in fact, these may even be infinite-dimensional.  In Section~\ref{sec:extension} we show that the same validity guarantee holds under our extended framework that permits $X$ to be multivariate.  
\section{Designing a powerful \textnormal{\texttt{PairSwap-ICI}} test}\label{sec:design}

 Our \texttt{PairSwap-ICI} methodology from Section~\ref{sec:method} offers great flexibility to the analyst, who may select any pairs $(i_\ell,j_\ell)$ subject to the ordering constraint~\eqref{eq:monotonicity}, and any functions~$\psi_\ell$ satisfying anti-monotonicity~\eqref{defn:monotonicity-of-psi}. In particular, they can decide on these aspects of the test \textit{after} exploring the data~$\textbf{Y}, \textbf{Z}$, choosing to include a pair $(i_\ell,j_\ell)$ if the data observed in $\bY$ indicates that $X_{i_\ell}>X_{j_\ell}$ would be likely under the alternative of interest. However, the quality of the matches~$(i_\ell, j_\ell)$ and functions~$\psi_\ell$ affects the power of our test.  The aim of this section, then, is to construct a principled, powerful implementation of our test, by designing concrete choices for the pairs $(i_\ell,j_\ell)$ and the functions $\psi_\ell$ introduced in Section~\ref{sec:method}. Throughout, we will restrict our attention to test statistics $T(\bx)$ of the form 
\begin{align}\label{def:sum-of-psi}
    T(\bx) = \sum_{\ell=1}^L w_\ell\, \psi(x_{i_\ell},x_{j_\ell})
\end{align}
for some $w_1,\ldots,w_L  \geq 0$.  That is, in the original definition of the test statistic~\eqref{eq:test-stat}, we take $\psi_\ell(\cdot) = w_\ell\psi(\cdot)$ for some vector of non-negative weights $\bw = (w_\ell)_{\ell=1}^L$ and some \emph{fixed} kernel~$\psi$, which is required to satisfy the anti-monotonicity property~\eqref{defn:monotonicity-of-psi}.

With this simplification, designing a test statistic now requires specifying the kernel $\psi$, deciding which pairs~$(i_\ell, j_\ell)$ are matched, and finally, how much weight~$w_\ell$ to assign to each pair, as depicted in this flowchart:

\begin{center}
\begin{tikzpicture}
\node (start) [disc] {Kernel $\psi$};
\node (mid) [match, right of=start, xshift=1.7in] {Matching $M=\{(i_\ell, j_\ell)\}_{\ell=1}^L$};
\node (end) [weights, right of=mid, xshift=1.9in] {Weights $\bw = (w_\ell)_{\ell=1}^L$};
\draw [arrow] (start) -- (mid);
\draw [arrow] (mid) -- (end);
\end{tikzpicture}
\end{center}

As guaranteed by Theorem~\ref{thm:main}, our test controls the Type~I error for any choice of $\psi$ satisfying~\eqref{defn:monotonicity-of-psi}, $M\in\Mcal_n(\bZ)$, and non-negative weight vector $\bw$. However, for the test to be effective, we need to tailor these choices to the specific application of interest. 
Of course, all of these choices interact with each other: what constitutes a good matching depends on how we choose the weights, and vice versa. 

\subsection{Specifying the kernel \texorpdfstring{$\psi$}{psi}}

We begin by considering several simple options for the kernel $\psi$.  As a first example, consider $\psi(x,x') = x-x'$. This choice of $\psi$ means that $\psi(X_{i_\ell},X_{j_\ell})$ is likely to be non-positive under the null (since $Z_{i_\ell}\preceq Z_{j_\ell}$), but under the alternative, may be likely to be positive and large (if the pair $(i_\ell,j_\ell)$ is chosen wisely). Alternatively, we may consider nonlinear test statistics to handle a broader range of settings. If $X$ has heavy tails, then the distribution of a linear statistic~$T$ can be very sensitive to extreme values.  We can ameliorate this sensitivity by using $\psi(x,x') = \textnormal{sign}(x-x')$, or $\psi(x,x') = (-K) \vee (x-x') \wedge K$  for some constant $K > 0$ (i.e., the truncation of $x-x'$ to some bounded interval $[-K,K]$).

Each of the above examples of kernels $\psi$ satisfies the required anti-monotonicity property~\eqref{defn:monotonicity-of-psi}.  In fact, we may take any such anti-monotonic kernel to be also anti-symmetric.  To see this, observe that we can always write
\[
    \psi(x,x') = \frac{1}{2}\bigl\{\psi(x,x') + \psi(x',x)\bigr\} + \frac{1}{2}\bigl\{\psi(x,x') - \psi(x',x)\bigr\} =: \psi_{\mathrm{sym}}(x,x') + \psi_{\mathrm{anti}}(x,x'), 
\]
where by construction, $\psi_{\mathrm{sym}}$ is a symmetric function while $\psi_{\mathrm{anti}}$ is anti-symmetric.  The function $\psi_{\mathrm{sym}}$ does not contribute to the $p$-value---that is, the $p$-value would be unchanged if we replace $\psi$ with $\psi_{\mathrm{anti}}$ (for example, our test with the kernel $\psi(x,x')=x-x'$ coincides with the special case of linear test statistic from Section~\ref{sec:method}). This establishes our claim, and henceforth we therefore assume without loss of generality that our anti-monotonic kernel $\psi$ is also anti-symmetric.

\subsection{Oracle strategies for choosing the weights and matching}\label{sec:oracle}

We now build intuition for how to choose the weight vector $\bw$ and matching $M$ effectively by sketching the asymptotics of our test, assuming some oracle knowledge (or estimates) of the data distribution; see Section~\ref{sec:power} for a more formal analysis.  Let us consider any statistic~$T$ of the form~\eqref{def:sum-of-psi}, where (as discussed above) without loss of generality we can take $\psi$ to be an anti-symmetric function. 
We now wish to choose the weights $\bw = (w_\ell)_{\ell\in [L]}$ and matching $M = \{(i_\ell, j_\ell)\}_{\ell\in [L]}$ to maximize the power of our test. 
Given the data $(\bX, \bY, \bZ)$, the reference statistic $T_\bs$ can be expressed as $T_\bs = \sum_{\ell=1}^L s_\ell \cdot w_\ell\psi(x_{i_\ell},x_{j_\ell})$ by construction of the swapping operation, and is therefore a sum of $L$ independent random variables. 
Under some regularity conditions on the weights~$\bw$ and the function $\psi$, a central limit theorem (CLT) approximation gives, for large $L$, that
\begin{align}\label{eqn:CLT_approx_p_from_sec:oracle}
    p 
    &\approx 1 - \Phi(\hat{T}) 
    \qquad\textnormal{where}\qquad \hat{T} = \hat{T}(\bw,M)
    :=\frac{\sum_{\ell=1}^Lw_\ell\psi(X_{i_\ell},X_{j_\ell})}
    {\sqrt{\sum_{\ell=1}^L w_\ell^2\psi(X_{i_\ell},X_{j_\ell})^2}},
\end{align}
where $\Phi$ is the standard Gaussian distribution function. The above approximation holds for fixed~$\bX,\bY,\bZ$ and relies only on the CLT approximation for a weighted sum of $L$ independent signs $s_1,\dots,s_L\in \{\pm 1\}$.  We should therefore aim to choose weights that \emph{maximize} the approximate probability of rejection, $\mathbb{P}\bigl\{1 - \Phi(\hat{T})\leq \alpha \mid \bY,\bZ\bigr\}$, in order to achieve the best possible power. Under some conditions this conditional power can be further approximated~as
\begin{equation}
\label{eqn:estimate_power}
\Phi\Biggl(\Phi^{-1}(\alpha)+\frac{\sum_{\ell=1}^L w_\ell \EEst{\psi(X_{i_\ell},X_{j_\ell})}{\bY,\bZ}}{\sqrt{\sum_{\ell=1}^L w_\ell^2 \VVst{\psi(X_{i_\ell},X_{j_\ell})}{\bY,\bZ}}}\Biggr)
\end{equation}
(see Theorem \ref{thm:general power analysis} in the Appendix for a closer look at this approximation).
The following lemma shows how to maximize this approximation over the weights, treating the matching~$M$ as fixed.
\begin{lemma}
\label{Lemma:OracleWeights}
Assume that $\mathrm{Var}\bigl(\psi_\ell(X_{i_\ell},X_{j_\ell}) \mid \bY,\bZ\bigr) > 0$ for $\ell \in [L]$. Considered as a function of $\bw = (w_1,\ldots,w_L) \in [0,\infty)^L$, the function in~\eqref{eqn:estimate_power} is maximized by the choice
\begin{equation}\label{eq:oracle-weights}
w^*_\ell =\frac{\max\left\{\EEst{\psi(X_{i_\ell},X_{j_\ell})}{\bY,\bZ},0\right\}}{\mathrm{Var}\bigl(\psi_\ell(X_{i_\ell},X_{j_\ell}) \mid \bY,\bZ\bigr)},
\end{equation}
for $\ell \in [L]$.
\end{lemma}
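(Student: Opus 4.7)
Since $\Phi$ is strictly increasing and $\bar{\Phi}^{-1}(\alpha)$ does not depend on $\bw$, maximising~\eqref{eqn:estimate_power} over $\bw \in [0,\infty)^L$ is equivalent to maximising the signal-to-noise ratio
\[
g(\bw) \ := \ \frac{\sum_{\ell=1}^L w_\ell \mu_\ell}{\sqrt{\sum_{\ell=1}^L w_\ell^2 \sigma_\ell^2}}, \qquad \mu_\ell := \EEst{\psi(X_{i_\ell},X_{j_\ell})}{\bY,\bZ}, \quad \sigma_\ell^2 := \VVst{\psi(X_{i_\ell},X_{j_\ell})}{\bY,\bZ},
\]
where $\sigma_\ell^2 > 0$ by hypothesis. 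My plan is to reduce this to an unconstrained Cauchy--Schwarz calculation in two steps: first restrict the support of $\bw^*$ to the indices with positive mean, then identify the optimal direction on that support.

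\textbf{Step 1: restricting the support.} Note that $g$ is positively homogeneous of degree $0$, so it suffices to work with any convenient normalisation. Let $L_+ = \{\ell \in [L] : \mu_\ell > 0\}$ and $L_- = [L] \setminus L_+$. For any $\bw \in [0,\infty)^L$, define $\tilde{\bw}$ by $\tilde{w}_\ell = w_\ell \one{\ell \in L_+}$. Then the numerator of $g(\tilde{\bw})$ is at least that of $g(\bw)$ (since we remove nonpositive contributions $w_\ell \mu_\ell$ with $\ell \in L_-$), while the denominator is no larger (since we remove nonnegative contributions $w_\ell^2 \sigma_\ell^2$). Hence $g(\tilde{\bw}) \geq g(\bw)$ whenever the numerator is nonnegative, which we may assume as otherwise $g(\bw) \leq 0 \leq g(\bw^*)$ and there is nothing further to show. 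Thus we may restrict to weights supported on $L_+$.

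\textbf{Step 2: Cauchy--Schwarz on $L_+$.} Setting $u_\ell = w_\ell \sigma_\ell$ and $v_\ell = \mu_\ell/\sigma_\ell$ for $\ell \in L_+$, the Cauchy--Schwarz inequality yields
\[
\sum_{\ell \in L_+} w_\ell \mu_\ell \ = \ \sum_{\ell \in L_+} u_\ell v_\ell \ \leq \ \Bigl(\sum_{\ell \in L_+} u_\ell^2\Bigr)^{1/2}\Bigl(\sum_{\ell \in L_+} v_\ell^2\Bigr)^{1/2} \ = \ \Bigl(\sum_{\ell \in L_+} w_\ell^2 \sigma_\ell^2\Bigr)^{1/2}\Bigl(\sum_{\ell \in L_+} \mu_\ell^2/\sigma_\ell^2\Bigr)^{1/2},
\]
so $g(\bw) \leq (\sum_{\ell \in L_+} \mu_\ell^2/\sigma_\ell^2)^{1/2}$, with equality on $L_+$ if and only if $u_\ell \propto v_\ell$, i.e.~$w_\ell \propto \mu_\ell/\sigma_\ell^2$. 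Combining with Step~1, a maximiser is given by
\[
w_\ell^* \ \propto \ \frac{\mu_\ell}{\sigma_\ell^2}\one{\ell \in L_+} \ = \ \frac{\max\{\mu_\ell, 0\}}{\sigma_\ell^2},
\]
matching~\eqref{eq:oracle-weights}.

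\textbf{Main obstacle.} The argument is essentially linear-algebraic once the right substitution is identified, so there is no serious obstacle. The only delicate point is handling the degenerate case $L_+ = \emptyset$: then $\bw^* = \mathbf{0}$ and $g$ is undefined at $\bw^*$, but in this regime every $\bw \in [0,\infty)^L$ yields $g(\bw) \leq 0$, so the formula is consistent in the sense that no strictly positive signal-to-noise ratio is attainable. A brief remark to this effect suffices; no separate argument is needed.
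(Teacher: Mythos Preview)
Your proof is correct and follows essentially the same approach as the paper: both reduce to a Cauchy--Schwarz inequality after the change of variables $u_\ell = w_\ell\sigma_\ell$, with the paper handling the nonnegativity constraint by replacing $\mu_\ell$ with $\max\{\mu_\ell,0\}$ in the numerator in one line, while you equivalently restrict the support to $L_+$ first. The structure and key idea are the same.
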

\begin{proof}
Writing $
u_\ell = \EEst{\psi(X_{i_\ell},X_{j_\ell})}{\bY,\bZ}$ and $v_\ell = \bigl[\mathrm{Var}\bigl(\psi(X_{i_\ell},X_{j_\ell}) \mid \bY,\bZ\bigr)\bigr]^{1/2}$ for convenience, for any $w\in[0,\infty)^L$ we calculate
\[ \frac{\sum_{\ell=1}^L w_\ell u_\ell}{\sqrt{\sum_{\ell=1}^L w_\ell^2 v_\ell^2}} \leq \frac{\sum_{\ell=1}^L w_\ell \max\{u_\ell,0\}}{\sqrt{\sum_{\ell=1}^L w_\ell^2 v_\ell^2}} \\
= \frac{\sum_{\ell=1}^L (w_\ell v_\ell) \cdot \frac{\max\{u_\ell,0\}}{v_\ell}}{\sqrt{\sum_{\ell=1}^L (w_\ell v_\ell)^2}}\leq \biggl(\sum_{\ell=1}^L \frac{\max\{u_\ell,0\}^2}{v_\ell^2}\biggr)^{1/2},
\]
with equality if and only if $w_\ell \propto \max\{u_\ell,0\}/v_\ell^2$ for $\ell \in [L]$.
\end{proof}

\paragraph{Using a plug-in estimate for the moments.}
In Lemma~\ref{Lemma:OracleWeights}, the oracle weight vector $\bw^*$ depends on the conditional expected value and variance, $\EEst{\psi(X_{i_\ell}, X_{j_\ell})}{\bY, \bZ}$ and $\mathrm{Var}\bigl(\psi(X_{i_\ell}, X_{j_\ell}) \bigm| \bY, \bZ\bigr)$.  For the remainder of this subsection, we will assume that we have access to estimates $\hat{E}_{ij}$ of $\EEst{\psi(X_{i_\ell}, X_{j_\ell})}{\bY, \bZ}$ and $\hat{V}_{ij} > 0$ of $\mathrm{Var}\bigl(\psi(X_i,X_j) \bigm| \bY,\bZ\bigr)$, for each $i,j \in [n]$, constructed from data that are independent of $\bX$.  For example, these might be obtained from a fitted model for the conditional distribution of $X$ given $Y,Z$ based on a separate dataset (e.g.~via sample splitting).  For the linear kernel $\psi(x,x')=x-x'$, this would amount to estimating the first two moments of $X_i\mid Y_i, Z_i$ for $i \in [n]$, whereas if $\psi(x,x') = \textnormal{sign}(x-x')$, then we would need to estimate $\PPst{X_i > X_j }{ Y_i, Z_i,Y_j,Z_j}$ for distinct indices $i,j$.  With these estimates in place, we can seek to maximize the power of our test by choosing weights
\[
\hat{w}_\ell =\frac{\max\{\hat{E}_{i_\ell j_\ell},0\}}{\hat{V}_{i_\ell j_\ell}}
\]
for $\ell \in [L]$. We emphasise that even highly inaccurate estimates of the oracle weights still yield a valid test of $H_0^{\textnormal{ICI}}$, though accurate estimates of the conditional moments above enhance the power of our test.  This is in contrast to methods of type (b) and (c) from the introduction, where the quality of estimates of (aspects of) the conditional distribution of $X$ given $(Y,Z)$, or of $X$ given $Z$, affects their Type~I error control.

\paragraph{Choosing the matching.}
 If the oracle weights in~\eqref{eq:oracle-weights} were known, then the estimator~\eqref{eqn:estimate_power} of the test's conditional power
would be maximized by solving a maximum-weight matching problem, namely that of finding 
\begin{align}\label{def:oracle-matching}
        M^* \in \argmax_{M \in \mathcal{M}_n(\bZ)}\sum_{(i,j)\in M} \bigl(W_{ij}^*\bigr)^2 \ \ \textnormal{where}\ \  
    W_{ij}^* = \frac{\max\left\{\EEst{\psi(X_i,X_j)}{\bY,\bZ},0\right\}}{\VVst{\psi(X_i,X_j)}{\bY,\bZ}}.
\end{align}
We refer to this version of our procedure as \emph{oracle matching}.
In practice, when the moments of the distribution are estimated rather than known, we can instead use the plug-in estimates 
\begin{align}\label{def:plug-in-matching}
        \hat{M} \in \argmax_{M \in \mathcal{M}_n(\bZ)} \sum_{(i,j)\in M} \hat{W}_{ij}^2\ \ \textnormal{where} \  \hat{W}_{ij} = \frac{\max\{\hat{E}_{ij}, 0\}}{\hat{V}_{ij}}.
\end{align}
To run \textnormal{\texttt{PairSwap-ICI}}, we then take weights $w_\ell = \hat{W}_{i_\ell j_\ell}$ for each pair $(i_\ell,j_\ell) \in \hat{M}$.
This plug-in matching $\hat{M}$ can be computed efficiently in polynomial time \citep{edmonds1965paths, duan2014linear} using an algorithm of \cite{gabow1985scaling}.

\subsection{Heuristic strategies for choosing the matching and weights}\label{sec:heuristic matching}

In Section~\ref{sec:oracle} above, the test statistic $T(\bx)$ was designed with the aim of maximizing the power of our test, and we argue in Theorem~\ref{thm:oracle-matching-estimated-mu} in Section~\ref{sec:power} that this strategy is particularly effective when we can construct good estimates of the first two conditional moments of our data.  This may be possible if, for instance, we have been able to fit a model using a separate dataset; naturally, an accurate estimate of the true model can enable a very powerful test (since we have a good approximation of the alternative that we are testing against).  The aim of this section, on the other hand, is to propose an alternative approach designed for settings when we may not be able to estimate reliably the required conditional moments, or because the computations are too costly (in particular, the matching strategy in~\eqref{def:plug-in-matching}).  We demonstrate a simple scheme for designing a weight vector $\bw$, and two easy strategies for choosing a matching $M$, that do not require extensive prior knowledge or costly calculations.  Of course, this will come at some cost in terms of the resulting power of the test, since we are no longer mimicking an oracle test---but, as we will see in both our theoretical guarantees and our empirical results below, these simple strategies can often attain high power nonetheless.

Throughout this section, we will restrict our attention to the univariate setting where $\mathcal{X}, \mathcal{Y}, \Zcal = \R$, and will also choose the kernel $\psi(x,x') = x-x'$ when defining our test statistic as in~\eqref{def:sum-of-psi}. We will work in the setting where we hypothesise that, under the alternative, there is a \emph{positive} association between $X$ and $Y$ even after controlling for $Z$ (of course, if our hypothesis is a negative association, we can follow an analogous strategy). The idea is simple: we will take pairs $(i_\ell,j_\ell)$ such that
\begin{itemize}
    \item $Z_{i_\ell} \leq Z_{j_\ell}$ (as required for validity), but $Z_{i_\ell} \approx Z_{j_\ell}$; and
    \item $Y_{i_\ell} > Y_{j_\ell}$ 
\end{itemize}
 Here, the choice $Z_{i_\ell} \approx Z_{j_\ell}$ is desirable to ensure that the joint conditional distributions of $(X_{i_\ell},Y_{i_\ell})$ given $Z_{i_\ell}$ and $(X_{j_\ell},Y_{j_\ell})$ given $Z_{j_\ell}$ should be similar.  In this case, it is anticipated that under $H_0^{\mathrm{ICI}}$, we should observe $\sum_{\ell=1}^L (X_{i_\ell} - X_{j_\ell}) \approx 0$, and since $Y_{i_\ell} > Y_{j_\ell}$ for each $\ell \in [L]$, if $\sum_{\ell=1}^L (X_{i_\ell} - X_{j_\ell}) \gg 0$, then this provides evidence against $H_0^{\mathrm{ICI}}$.  On the other hand, if we were to choose pairs with $Z_{i_\ell} \ll Z_{j_\ell}$, then these joint conditional distributions may be rather different, and the stochastic monotonicity assumption might mean that $\sum_{\ell=1}^L (X_{i_\ell} - X_{j_\ell}) \gg 0$ occurs only rarely even when the conditional independence is violated.  In other words, if we choose pairs with $Z_{i_\ell} \ll Z_{j_\ell}$, then under the stochastic monotonicity assumption, we retain Type~I error control, but lose power relative to choosing pairs with $Z_{i_\ell} \approx Z_{j_\ell}$.
\subsubsection{A simple weighting scheme}\label{sec:simple_matching}
We begin by defining a mechanism for choosing the weights: we will take
\[
w_\ell = \big(Y_{i_\ell} - Y_{j_\ell}\big)_+,
\]
which is simply the difference in $Y$ values, whenever this difference is positive for the pair $(i_\ell,j_\ell)$.

Why is this simple strategy a reasonable choice across broad settings? With this choice of weights, our test statistic $T(\bX)$ is then given by
\[
T(\bX) = \sum_{\ell : Y_{i_\ell} > Y_{j_\ell}}  (Y_{i_\ell} - Y_{j_\ell}) \cdot (X_{i_\ell} - X_{j_\ell} ).
\]
Since under the alternative, we expect $X_{i_\ell} > X_{j_\ell}$ whenever $Y_{i_\ell}>Y_{j_\ell}$, this means that the expected value of $T(\bX)$ is positive under the alternative (but, of course, is non-positive under the null).

As another motivation, consider the partially linear model where 
\[
\EEst{X}{Y=y,Z=z} = \beta^*y + \mu^*_Z(z),
\]
and the conditional variance is constant,
\[\VVst{X}{Y=y,Z=z} = \sigma^*{}^2.\]
Then, following the oracle strategy of Section~\ref{sec:oracle}, as in~\eqref{eq:oracle-weights} we have
oracle weights 
\begin{align*}
    w^*_\ell = \frac{\max\{\EEst{X_{i_\ell}-X_{j_\ell}}{\bY,\bZ},0\}}{\VVst{X_{i_\ell}-X_{j_\ell}}{\bY,\bZ}} &=  \frac{\max\bigl\{\beta^*(Y_{i_\ell}-Y_{j_\ell}) + \bigl(\mu^*_Z(Z_{i_\ell}) - \mu^*_Z(Z_{j_\ell})\bigr),0\bigr\}}{2\sigma^*{}^2}\\ &\approx  \frac{\beta^*}{2\sigma^*{}^2} \max\bigl\{(Y_{i_\ell}-Y_{j_\ell}),0\bigr\},
\end{align*}
where the last step holds since we have assumed $Z_{i_\ell}\approx Z_{j_\ell}$ in our choice of the pair, and so we can expect $\mu^*_Z(Z_{i_\ell}) \approx \mu^*_Z(Z_{j_\ell})$, at least when $\mu_Z^*$ is smooth.  But crucially, our test is invariant to rescaling the weights---that is, choosing weights $w_\ell = \max\bigl\{Y_{i_\ell} - Y_{j_\ell},0\bigr\}$ is equivalent to choosing weights $\frac{\beta^*}{2\sigma^*{}^2} \max\bigl\{(Y_{i_\ell}-Y_{j_\ell}),0\bigr\}$, and thus is nearly equivalent to the oracle weights.

\subsubsection{Two simple matching schemes}
Next, we propose two matching strategies that, again, do not require any knowledge or estimate of the model.

\paragraph{Neighbor matching.}
Our first simple matching strategy is to choose pairs that are nearest neighbors in the list of sorted $Z$ values, as formalized in Algorithm~\ref{alg:neighbour-matching}. This strategy ensures that, for all pairs $(i_\ell,j_\ell) = \bigl(\pi(2m-1),\pi(2m)\bigr)$ that are included in the matching, we have $Z_{i_\ell}\le Z_{j_\ell}$ and $Y_{i_\ell}>Y_{j_\ell}$ by construction, and moreover, it likely holds that $Z_{i_\ell}\approx Z_{j_\ell}$ (since we have chosen two consecutive $Z$ values in the sorted list). 
\begin{algorithm}[ht]
   \caption{Neighbor matching}
   \label{alg:neighbour-matching}
    \begin{algorithmic}
    \STATE \textbf{Preliminaries:} sort $Z$ values, i.e., find a permutation $\pi$  of $[n]$ such that
\[Z_{\pi(1)} \leq Z_{\pi(2)} \leq \dots \leq Z_{\pi(n-1)}\leq Z_{\pi(n)}.\]
\FOR{$m=1,\dots,\lfloor n/2\rfloor$}
\STATE If $Y_{\pi(2m-1)}>Y_{\pi(2m)}$, then add the new pair $\bigl(\pi(2m-1),\pi(2m)\bigr)$ to the matching.
\ENDFOR
\end{algorithmic}
\end{algorithm}
However, an obvious limitation of this na\"ive matching strategy is that many consecutive pairs $\bigl(\pi(2m-1),\pi(2m)\bigr)$ in the sorted list  can fail to have $Y_{\pi(2m-1)}>Y_{\pi(2m)}$ just by chance. In particular, if the $Z$ values in this pair are approximately equal (as we might expect), then the values $Y_{\pi(2m-1)},Y_{\pi(2m)}$ are expected to be approximately independent and identically distributed---which  means that the event $Y_{\pi(2m-1)}>Y_{\pi(2m)}$ will fail approximately half the time. In other words, we are discarding approximately half of the data: we will expect to have $L\approx n/4$ pairs in this matching (meaning that $2L\approx n/2$ data points have been assigned to a matched pair).

\paragraph{Cross-bin matching.}
Our next strategy is cross-bin matching, which aims to avoid the inefficiency of neighbor matching by using (nearly) all of the data. 
To implement this strategy, we will partition the list of sorted $Z$ values into $K$ bins, and will allow a pair of data points to be matched as long as the $Z$ values are in adjacent bins (rather than requiring consecutive $Z$ values, as for neighbor matching). 

\begin{algorithm}[ht]
   \caption{Cross-bin matching}
   \label{alg:cross-bin-matching}
    \begin{algorithmic}
    \STATE \textbf{Preliminaries:} sort $Z$ values, i.e., find a permutation $\pi$ of $[n]$ such that
\[Z_{\pi(1)} \leq Z_{\pi(2)} \leq \dots \leq Z_{\pi(n-1)}\leq Z_{\pi(n)},\]
and define $K$ bins of indices,
\begin{align*}
    A_1 &= \{\pi(1),\dots,\pi(m)\},\\
    A_2 &= \{\pi(m+1),\dots,\pi(2m)\},\\
    \vdots\\
    A_K&=\bigl\{\pi\bigl((K-1)m + 1\bigr) , \dots, \pi(Km)\bigr\},
\end{align*}
where $m=\lfloor n/K\rfloor$.
\FOR{$k=1,\dots,K$}
\STATE Define $r_{k,1},\dots,r_{k,m}$ as a permutation of $A_k$ such that
$Y_{r_{k,1}} \leq \dots \leq Y_{r_{k,m}}$.
\ENDFOR
\FOR{$k=1,\dots,K-1$}
\FOR{$s=1,\dots,\lfloor m/2\rfloor$}
\STATE If $Y_{r_{k,m+1-s}} > Y_{r_{k+1,s}}$, then add the new pair $(r_{k,m+1-s},r_{k+1,s})$ to the matching.
\ENDFOR
\ENDFOR
\end{algorithmic}
\end{algorithm}

To explain this procedure in words:
\begin{itemize}
    \item First we group the $Z$ values into $K$ bins, with each bin $A_k$ containing $m=\lfloor n/K\rfloor$ consecutive $Z$ values.
    \item We then attempt to match the largest values of $Y$ in bin $k$ with the smallest values of $Y$ in bin $k+1$---that is, in the inner ``for loop'' in Algorithm~\ref{alg:cross-bin-matching}, at step $s=1$ we are attempting to match the largest $Y$ value in bin $k$ (i.e., $Y_{r_{k,m}}$) with the smallest $Y$ value in bin $k+1$ (i.e., $Y_{r_{k+1,1}}$), and then at step $s=2$ we proceed to matching the second-largest and second-smallest, and so on.
\end{itemize}
This scheme is illustrated in Figure~\ref{fig:cross-bin}, and is stated formally in Algorithm~\ref{alg:cross-bin-matching}.

\begin{figure}[ht]
    \centering
    \includegraphics[width=.8\textwidth]{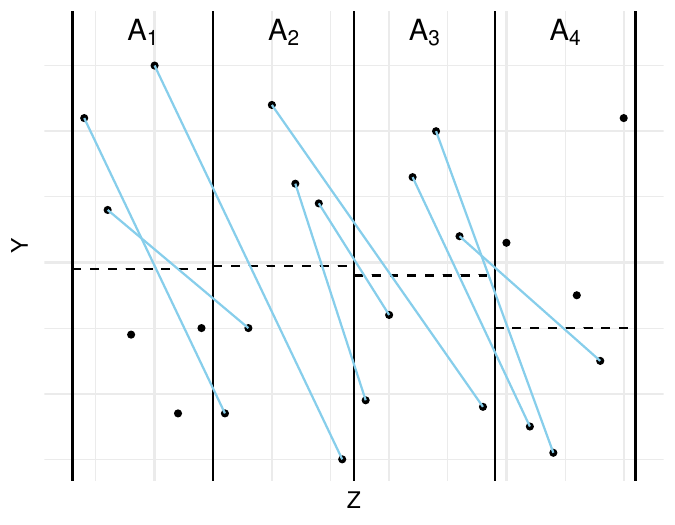}
    \caption{Demonstration of the cross-bin matching scheme described in \cref{alg:cross-bin-matching}.  The dashed lines indicate the median within each bin.}   
    \label{fig:cross-bin}
\end{figure}

Why do we expect that this strategy will be more powerful than the simpler neighbor matching strategy? At a high level, while neighbor matching is expected to discard around half of the data, the cross-bin matching strategy can potentially assign nearly all data points to a matched pair. However, there is a potential tradeoff: while the pairs produced by both strategies will likely satisfy $Z_{i_\ell}\approx Z_{j_\ell}$,  this approximation will be closer to equality for neighbor matching (where pairs consist of consecutive $Z$ values) than cross-bin matching (where pairs consist of $Z$ values in neighboring bins, i.e., they may be up to $2m$ positions apart in the sorted list). If $m$ is not too large (i.e., the number of bins $K$ is not too small), though, then we may hope that this difference is negligible.
We will examine both methods theoretically in the following section to study these tradeoffs in more detail.

\section{Power analysis}\label{sec:power}

In this section, we study the power of the \textnormal{\texttt{PairSwap-ICI}} test under the following general model. The data $(\bX,\bY,\bZ) = (X_i,Y_i,Z_i)_{i\in [n]} \iidsim P_{X,Y,Z}$ are drawn according to
\begin{equation}\label{model:general alternative}
    \bX=\mu_n(\bY,\bZ)+\sigma_n\bzeta,
\end{equation}
with $(Y_i,Z_i)_{i\in [n]}\iidsim P_{Y,Z}$ drawn independently from $(\zeta_i)_{i\in [n]}\iidsim P_\zeta$. We assume that $\mu_n:\mathcal{Y} \times \mathcal{Z} \rightarrow \Xcal$ (applied componentwise) is a measurable function and $\sigma_n> 0$, and also that $P_\zeta$ has mean $0$, variance $1$, and finite fourth moment. Throughout this section,  
we assume that the statistic $T$ admits the form in \eqref{def:sum-of-psi} with $\psi(x,x') = x-x'$. 
 
\subsection{Asymptotic power guarantee for the general \texttt{PairSwap-ICI} test}\label{sec:power_general}

We start by studying the power of \texttt{PairSwap-ICI} test in the setting where $\mathcal{Y}$ and $\mathcal{Z}$ are general measurable spaces. We use $\|\cdot\|_2$ to denote the Euclidean norm, and write $\mathrm{d}_{\rm TV}$ for the total variation (TV) distance, so that for probability measures $P$ and $Q$ defined on a common measurable space $(W,\mathcal{W})$, $\mathrm{d}_{\rm TV}(P,Q) :=\sup_{A \in \mathcal{W}}|P(A)-Q(A)|$.  Given any vector~$\bf{v}$, we write $\bf{v}^+$ to denote the vector with $i$th component $v_{i}^{+} =\max\{v_i,0\}$. We write $\bf{a}\circ \bf{b}$ for the Hadamard product of vectors $\bf{a}$, $\bf{b}$ of the same dimension, with $i$th component $a_i\cdot b_i$.
Given a matching $M= \bigl((i_\ell,j_\ell)\bigr)_{\ell \in [L]} \in \mathcal{M}_n(\bZ)$ and a vector $\bv = (v_i)_{i \in [n]} \in \mathbb{R}^n$, we define the differencing operator $\Delta:\mathbb{R}^n \rightarrow \mathbb{R}^L$ by
\[
\Delta \bv \equiv (\Delta_\ell \bv)_{\ell \in [L]} := (v_{i_\ell} - v_{j_\ell})_{\ell \in [L]}.
\]
In particular, it will be convenient to define the \emph{mean difference vector}
\[
\Delta \mu_n(\bY,\bZ) \equiv \Delta \mu_n(\bY,\bZ;M) = \bigl(\mu_n(Y_{i_\ell},Z_{i_\ell})-\mu_n(Y_{j_\ell},Z_{j_\ell})\bigr)_{\ell \in [L]} \in \R^L.
\]
We will make the following assumptions on the weight vector $\bw \equiv \bw_n \in [0,\infty)^L$ and mean difference vector:
\begin{assumption}\label{A1}
    We have $\|\bw\|_2 > 0$
     and $\|\bw\|_\infty = {\rm o}_P(\|\bw\|_2)$.
    \end{assumption}
    \begin{assumption}\label{A2}
    We have $\|\bw\circ\Delta\mu_n(\bY,\bZ)\|_2 = {\rm o}_P\left( \big|\bw^\top \Delta \mu_n(\bY,\bZ)\big|\vee\sigma_n\|\bw\|_2\right)$ and $\|\bw\circ\Delta\mu_n(\bY,\bZ)\|_3 = \mathrm{o}_P\left(\|\bw\circ\Delta\mu_n(\bY,\bZ)\|_2\vee\sigma_n\|\bw\|_2\right)$.
    \end{assumption}
    To explain the origin of these assumptions, observe that since we are under the model class \eqref{model:general alternative}, and we use the shared linear kernel $\psi(x,x')=x-x'$, the \textnormal{\texttt{PairSwap-ICI}} test revolves around the statistic 
    \[
\bw\circ\Delta\bX=\bw\circ\Delta\mu_n(\bY,\bZ)+\sigma_n\bigl(\bw\circ\Delta\bzeta\bigr);
    \]
    see~\eqref{eqn:test_as_fucntion_ofWcircDeltaX}.  Here, the first term can be interpreted as the signal of the test, while the second term represents the additional noise. Under this interpretation, Assumption~\textbf{A1} ensures that the noise term is well behaved by requiring that the weight vector $\bw$ is not too spiky, while Assumption~\textbf{A2} requires the signal term not to be too spiky.

The following theorem gives our main general result on the asymptotic conditional power of the \texttt{PairSwap-ICI} test.

\begin{theorem}\label{thm:general-power-asymptotic}
    Suppose that the triple $(\bX,\bY,\bZ)$ satisfies \eqref{model:general alternative}.  Suppose further that the weight vector $\bw \in [0,\infty)^L$ and $M \in \mathcal{M}_n(\bZ)$ (which may both depend on $\bY,\bZ$ but not $\bX$) satisfy Assumptions~\ref{A1} and~\ref{A2}.  Then for $\alpha \in (0,1)$, we have that
    \[
    \PPst{p\leq \alpha}{\bY,\bZ}=\Phi\biggl(\Phi^{-1}(\alpha)+\frac{\bw^\top \Delta \mu_n(\bY,\bZ)}{\sqrt{2}\sigma_n\|\bw\|_2}\biggr)+\mathrm{o}_P(1).
    \]
\end{theorem}
Theorem~\ref{thm:general-power-asymptotic} follows from Theorem~\ref{thm:general power analysis} in Appendix~\ref{app:general_power}, which provides finite-sample upper and lower bounds on the conditional power of the \texttt{PairSwap-ICI} test for any feasible matching and weight vector.  One key insight of Theorem~\ref{thm:general-power-asymptotic} is that it completely characterizes the asymptotic power of our test through the \emph{\texttt{PairSwap-ICI} signal-to-noise ratio}
\begin{equation}\label{eqn:signal_general_wandM}
    S(\bw,M):=\frac{\bw^\top \Delta \mu_n(\bY,\bZ)}{\sqrt{2}\sigma_n\|\bw\|_2}.
\end{equation}
It is interesting to compare the conclusion of Theorem~\ref{thm:general-power-asymptotic} with the informal approximation to the conditional power given by~\eqref{eqn:estimate_power}.  Referring to the terms in that expression, in the model~\eqref{model:general alternative}, we have
\[
\frac{\sum_{\ell=1}^L w_\ell \EEst{\psi(X_{i_\ell},X_{j_\ell})}{\bY,\bZ}}{\sqrt{\sum_{\ell=1}^L w_\ell^2 \VVst{\psi(X_{i_\ell},X_{j_\ell})}{\bY,\bZ}}}
= \frac{\sum_{\ell=1}^L w_\ell \EEst{X_{i_\ell} - X_{j_\ell}}{\bY,\bZ}}{\sqrt{\sum_{\ell=1}^L w_\ell^2 \VVst{X_{i_\ell} - X_{j_\ell}}{\bY,\bZ}}} =  S(\bw,M).
\]
Thus, we can regard Theorem~\ref{thm:general-power-asymptotic} as formalizing the heuristic arguments in Section~\ref{sec:oracle}.  By analogy with Lemma~\ref{Lemma:OracleWeights}, we see that the oracle weight vector maximizes the correlation between $\bw$ and $\Delta \mu_n(\bY,\bZ)$, and is achieved by setting
\begin{equation}\label{eqn:oracle_weight_general_model}
    \bw^* \propto \Delta \mu_n^+(\bY,\bZ).
\end{equation}

\subsection{Hardness of testing the null \texorpdfstring{$H_0^{\textnormal{ICI}}$}{the isotonic null}}\label{sec:characterizing hardness}
 
We now turn to the question of the fundamental hardness of testing $H_0^{\mathrm{ICI}}$ in the model~\eqref{model:general alternative}.  
 
Let $\mathcal{C}_\ISO$ denote the closed convex cone of nondecreasing functions on $\Zcal$, i.e.~the set of $f:\Zcal\to \R$ such that $f(z_1)\le f(z_2)$ whenever $z_1\preceq z_2$.    Define 
\begin{equation}\label{eq: isotonic_signal_strength}
\ISNR_n=\frac{\inf_{g\in\mathcal{C}_\ISO}\Ep{P_{Y,Z}^n}{\|\mu_n(\bY,\bZ)-g(\bZ)\|_2}}{\sigma_n}, \quad \widehat\ISNR_n=\frac{\inf_{g\in\mathcal{C}_\ISO}\|\mu_n(\bY,\bZ)-g(\bZ)\|_2}{\sigma_n},
\end{equation}
which we call the oracle and empirical \emph{isotonic signal-to-noise ratio} respectively (here $g(\bZ)$ is again applied componentwise). These quantities capture the extent to which the true model for $X\mid Y,Z$ contains signal that cannot be explained by the isotonic null, $H_0^{\textnormal{ICI}}$. In particular, under $H_0^{\textnormal{ICI}}$, we have $\ISNR_n=\widehat\ISNR_n=0$ (since we can simply take $g$ to be the true regression function, which does not depend on $Y$), while under the alternative these quantities might be large.  We naturally expect $\widehat{\ISNR}_n \approx \ISNR_n$ when~$n$ is large.

We can relate the \texttt{PairSwap-ICI} signal-to-noise ratio~\eqref{eqn:signal_general_wandM} with oracle weight vector $\bw^*$ from~\eqref{eqn:oracle_weight_general_model} to the empirical isotonic signal-to-noise ratio via
\begin{equation}\label{eqn:upper_bd_signal_by_ISNR}
    S(\bw^*,M) =\frac{\|\Delta \mu_n^+(\bY,\bZ)\|_2}{\sqrt{2}\sigma_n} \leq \widehat{\ISNR}_n;
\end{equation}
see Theorem~\ref{thm:key-property-oracle} for a proof.

The goal of this subsection is to show that the population counterpart of $\widehat{\ISNR}_n$, i.e.~the oracle isotonic signal-to-noise ratio $\ISNR_n$, provides a fundamental limit to the power of \emph{any} test of~$H_0^{\mathrm{ICI}}$.  To this end, we begin with a standard total variation calculation. 
\begin{proposition}\label{thm: hardness result general}
    Fix $\alpha\in (0,1)$. Fix any test\footnote{Traditionally, we think of a hypothesis test $\phi$ as a map from data to a decision, i.e., $\phi(\bX,\bY,\bZ)\in\{0,1\}$. Why, then, do we define tests $\phi$ as mapping to the space $[0,1]$? This is because, in some settings, we may want to consider randomized tests---for instance, in the notation above where $\phi$ maps to $[0,1]$, an outcome $\phi(\bX,\bY,\bZ)=0.75$ represents that, given the data, our randomized test rejects the null with probability 0.75. Of course, a nonrandomized test is simply a special case, obtained by restricting the output of $\phi$ to lie in $\{0,1\}$.} $\phi: (\Xcal\times\Ycal\times\Zcal)^n \to [0,1]$ that controls the Type~I error at level $\alpha$, i.e.,
\[
\sup_{P\in H_0^{\textnormal{ICI}}}\Ep{P}{\phi(\bX,\bY,\bZ)} \leq \alpha.
\]
    Then for any distribution $P_{X,Y,Z}$,
    \[
    \Ep{P_{X,Y,Z}}{\phi(\bX,\bY,\bZ)}\leq \alpha+ \inf_{Q_{X,Y,Z}\in H_0^\textnormal{ICI}}\TV\bigl(P^n_{X,Y,Z},Q^n_{X,Y,Z}\bigr).
    \]
\end{proposition}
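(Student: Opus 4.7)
The plan is to apply the standard variational characterisation of total variation distance to reduce the problem to the Type~I error control hypothesis. Concretely, I would start by recalling that for any two probability measures $P, Q$ on a common measurable space and any measurable function $f$ taking values in $[0,1]$,
\[
\bigl|\Ep{P}{f} - \Ep{Q}{f}\bigr| \leq \TV(P, Q).
\]
This is immediate from writing the expectations as integrals against the densities (with respect to a dominating measure) and applying $|\int f\,(dP - dQ)| \leq \|f\|_\infty \cdot \TV(P,Q)$.

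Next, I would fix an arbitrary $Q_{X,Y,Z} \in H_0^{\textnormal{ICI}}$. Since the data $(\bX,\bY,\bZ)$ consist of $n$ i.i.d.\ copies of $(X,Y,Z)$, drawing from the product measure $P_{X,Y,Z}^n$ corresponds to the actual data generating process, while drawing from $Q_{X,Y,Z}^n$ corresponds to a null distribution (because $Q_{X,Y,Z} \in H_0^{\textnormal{ICI}}$). The Type~I error control hypothesis then gives directly
\[
\Ep{Q^n_{X,Y,Z}}{\phi(\bX,\bY,\bZ)} \leq \alpha.
\]
Applying the variational bound above with $f = \phi$ and with the product measures $P^n_{X,Y,Z}$ and $Q^n_{X,Y,Z}$ yields
\[
\Ep{P^n_{X,Y,Z}}{\phi(\bX,\bY,\bZ)} \leq \Ep{Q^n_{X,Y,Z}}{\phi(\bX,\bY,\bZ)} + \TV\bigl(P^n_{X,Y,Z}, Q^n_{X,Y,Z}\bigr) \leq \alpha + \TV\bigl(P^n_{X,Y,Z}, Q^n_{X,Y,Z}\bigr).
\]
Since $Q_{X,Y,Z} \in H_0^{\textnormal{ICI}}$ was arbitrary, taking the infimum on the right-hand side over all such $Q_{X,Y,Z}$ gives the claim.

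There isn't really a hard step here; the argument is a one-shot application of the TV coupling inequality. The only thing that could conceivably cause trouble is making sure that $\phi$ taking values in $[0,1]$ (to allow for randomised tests) doesn't break anything, but since the inequality $|\Ep{P}{f} - \Ep{Q}{f}| \leq \|f\|_\infty \TV(P,Q)$ uses only boundedness of $f$, this case is handled identically to the nonrandomised one.
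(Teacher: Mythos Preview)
Your proposal is correct and follows essentially the same route as the paper: fix $Q_{X,Y,Z}\in H_0^{\textnormal{ICI}}$, bound $\Ep{P^n}{\phi}-\Ep{Q^n}{\phi}$ by $\TV(P^n,Q^n)$ using that $\phi\in[0,1]$, apply the Type~I error assumption, and take the infimum over $Q$. There is nothing to add.
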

 This last total variation term is the distance of $P_{X,Y,Z}$ from the null class $H_0^\textnormal{ICI}$, meaning the closer $P$ is to null models, the harder it will be to have non-trivial power against $P$. While in general it is hard to derive exact expressions for the total variation term, under some additional model assumptions on $P$, we can derive interpretable upper bounds for it.  Specifically, consider the model~\eqref{model:general alternative} where the noise distribution $P_\zeta$ has a density function~$f_\zeta$ satisfying
 \begin{equation}\label{eqn:assume_hellinger}\left\{ \int_\R \left(\sqrt{f_\zeta(x)} - \sqrt{f_\zeta(x + t)}\right)^2\;\mathsf{d}x \right\}^{1/2} \leq L_\zeta  |t|\textnormal{ for all $t\in\R$.} \end{equation}
 Note that the left-hand side is equal to
 \[{\rm H}(P_\zeta, t+P_\zeta),\]
 where ${\rm H}$ denotes the Hellinger distance between distributions, and $t+P_\zeta$ denotes the distribution of $t+\zeta$ when $\zeta\sim P_\zeta$ (i.e., a translation of the original distribution $P_\zeta$). In other words, we are assuming that the map $t\mapsto t+P_\zeta$ is $L_\zeta$-Lipschitz with respect to the Hellinger distance. For example, if $P_\zeta = \normal(0,1)$ is the standard Gaussian, then ${\rm H}^2(P_\zeta, t+P_\zeta) = 2\bigl(1-e^{-t^2/8}\bigr)\leq t^2/4$ \citep{pardo2018statistical}, and so the assumption~\eqref{eqn:assume_hellinger} holds with $L_\zeta = 1/2$.

We now state a result establishing a bound on power for any test, as a function of the isotonic SNR.
     \begin{theorem}\label{lem:hardness_result_hellinger}
        Suppose $(\bX,\bY,\bZ)$ satisfy~\eqref{model:general alternative} for some noise distribution $P_\zeta$ satisfying the condition~\eqref{eqn:assume_hellinger}. Then 
        \[\inf_{Q_{X,Y,Z}\in H_0^\textnormal{ICI}}\TV\left(P^n_{X,Y,Z},Q^n_{X,Y,Z}\right) \leq L_\zeta\cdot \ISNR_n,\]
        and consequently, for any test $\phi$
     that controls Type~I error at level $\alpha$,
         \[
         \Ep{P_{X,Y,Z}}{\phi(\bX,\bY,\bZ)}\leq \alpha + L_\zeta\cdot \ISNR_n.
         \]
    \end{theorem}
To summarize, we have shown that under a mild assumption on the noise distribution $P_\zeta$, the oracle isotonic SNR gives a universal upper bound on the power of \emph{any} test of the isotonic null, $H_0^{\textnormal{ICI}}$---no valid test can achieve non-trivial power when $\ISNR_n$ is negligible.

\subsection{Asymptotic power of oracle and plug-in procedures}\label{sec:power_oracle}

The theory of Sections~\ref{sec:power_general} and~\ref{sec:characterizing hardness}, and in particular the general inequality~\eqref{eqn:upper_bd_signal_by_ISNR}, raise the natural question of the extent to which the \texttt{PairSwap-ICI} test, with appropriate choices of weight vector and matching, is able to approximate the upper bound on the power of any valid test, as articulated by Theorem~\ref{lem:hardness_result_hellinger}.  Ignoring the difference between the empirical and oracle isotonic SNRs, this amounts to asking about the tightness of the inequality~\eqref{eqn:upper_bd_signal_by_ISNR}.  We address this by specializing Theorem~\ref{thm:general-power-asymptotic} initially to the setting of oracle weights and matching, as well as $\mathcal{Z} = \mathbb{R}$.  Since these oracle weights and matching depend on properties of the underlying data generating distribution that would typically be unknown to the practitioner, we then establish that the same asymptotic power properties hold for the plug-in version of the procedure outlined in Section~\ref{sec:oracle}.

\subsubsection{\texorpdfstring{$\widehat \ISNR_n$}{The emprical isotonic SNR} drives the conditional power of oracle matching}\label{sec:power_general_alternatives}

Theorem~\ref{thm:oracle-matching-asymptote} below provides asymptotic upper and lower bounds on the conditional power $\PPst{p\leq \alpha}{ \bY,\bZ}$ for oracle weights and matching.
\begin{theorem}\label{thm:oracle-matching-asymptote}
    Suppose that $\widehat{\ISNR}_n > 0$ and that $\bigl(\|\mu_n(\bY,\bZ)\|_\infty/\sigma_n\bigr)\vee\bigl(\|\mu_n(\bY,\bZ)\|_\infty/\sigma_n\bigr)^4= \mathrm{o}_P(\widehat\ISNR_n)$. Then, under the model~\eqref{model:general alternative}, the conditional power of the \textnormal{\texttt{PairSwap-ICI}} procedure with oracle weights and matching satisfies for $\alpha \in (0,1)$ that
    \begin{equation}\label{eq:asymptote-power-oracle-bounds}
        \Phi\biggl(\Phi^{-1}(\alpha)+\frac{\widehat\ISNR_n}{\sqrt{2}}\biggr)-\mathrm{o}_P(1)\leq \PPst{p\leq\alpha}{\bY,\bZ}\leq 
        \Phi\bigl(\Phi^{-1}(\alpha)+\widehat\ISNR_n\bigr)+\mathrm{o}_P(1).
    \end{equation}
\end{theorem}
The condition  $\bigl(\|\mu_n(\bY,\bZ)\|_\infty/\sigma_n\bigr)\vee\bigl(\|\mu_n(\bY,\bZ)\|_\infty/\sigma_n\bigr)^4= \mathrm{o}_P(\widehat\ISNR_n)$ is a mild assumption on the quality of the matching. For example, if $\mu_n$ is bounded and $\sigma_n=\sigma$ for every $n\in \mathbb{N}$, then $\bigl(\|\mu_n(\bY,\bZ)\|_\infty/\sigma_n\bigr)\vee\bigl(\|\mu_n(\bY,\bZ)\|_\infty/\sigma_n\bigr)^4 = \mathrm{O}(1)$, so for this assumption to hold, it is sufficient to have $\widehat\ISNR_n\overset{P}{\to}\infty$; since this latter quantity represents the Euclidean distance between the $n$-dimensional vector $\mu_n(\bY,\bZ)$ and the convex cone $\mathcal{C}_{\mathrm{ISO}}$, this amounts to a weak requirement.

From Theorem~\ref{thm:oracle-matching-asymptote}, we see that the empirical isotonic SNR drives both the upper and lower bound on the conditional power.   Furthermore, the upper and lower bounds on conditional power match, up to the factor of $1/\sqrt{2}$ appearing in the expression for the lower bound but not the upper bound. In Theorem~\ref{thm:cross_bin_matching} in Section~\ref{sec:partial_linear_model} below, we will see that in a partially linear model setting and under the hypothesis of the symmetry of the distribution $P_{Y|Z}$, it is possible to close this~gap.

\subsubsection{Power guarantees for plug-in matching with an estimate of $\mu_n$}\label{sec:power_with_estimated_mu}
 Suppose now that no oracle information is available and that we run the \texttt{PairSwap-ICI} test using the plug-in matching from Section~\ref{sec:power_with_estimated_mu}.  By~\eqref{eqn:oracle_weight_general_model}, under the model~\eqref{model:general alternative} with the linear kernel $\psi(x,x')=x-x'$, and since the $p$-value does not depend on the scale of the weights, we may assume without loss of generality that the optimal weight vector is 
\[
\bw^*=\Delta\mu_n^+(\bY,\bZ).
\]
In particular, the oracle weights (and hence matching) depend only on $\mu_n$ and not on $\sigma_n$. Thus, we may construct an estimate $\hat\mu_n$ of $\mu_n$ on a random split of the data, and then apply \texttt{PairSwap-ICI} on the remaining observations with the plug-in matching $\hat M$ from~\eqref{def:plug-in-matching}. For simplicity of presentation, however, in Theorem~\ref{thm:oracle-matching-estimated-mu} below, we assume that $\hat{\mu}_n$ is constructed based on an independent dataset. 
\begin{theorem}\label{thm:oracle-matching-estimated-mu}
    Consider the setting and assumptions of Theorem \ref{thm:oracle-matching-asymptote}, except that we use plug-in matching $\hat{M}$ with an estimate $\hat\mu_n$ constructed based on independent data.  Suppose further that
    \[
    \frac{\|\hat\mu_n(\bY,\bZ)-\mu_n(\bY,\bZ)\|_2}{\sigma_n} =\mathrm{o}_P(\widehat{\mathrm{ISNR}}_n), ~~\frac{\|\hat\mu_n(\bY,\bZ)-\mu_n(\bY,\bZ)\|_\infty^4}{\sigma_n^4} =\mathrm{o}_P(\widehat{\mathrm{ISNR}}_n),
    \]
    with the probability statements taken over the joint distribution of $(\bY,\bZ)$ and the data used to construct $\hat{\mu}_n$.  Then the conditional power of the \textnormal{\texttt{PairSwap-ICI}} satisfies \eqref{eq:asymptote-power-oracle-bounds} for $\alpha \in (0,1)$.
\end{theorem}
Thus, under suitable consistency assumptions on $\hat\mu_n$, we can recover the power guarantees in Theorem~\ref{thm:oracle-matching-asymptote}. Since the requirements on $\hat{\mu}_n$ in Theorem~\ref{thm:oracle-matching-estimated-mu} are quite weak, it may be advantageous to employ a data split of unequal sizes when using plug-in matching---in particular, we might choose to use only a small fraction of the available data points to estimate $\mu_n$ and compute the weights and matching, and save the rest for the \textnormal{\texttt{PairSwap-ICI}} test once these choices are fixed.

\subsection{Near-optimal power guarantees without knowledge of \texorpdfstring{$\mu_n$}{mu}}\label{sec:partial_linear_model}

hile the results of Section~\ref{sec:power_oracle} establish the optimality properties of the oracle and plug-in versions of the \texttt{PairSwap-ICI} procedure, we remark that when $n$ is large, it may be time-consuming to compute the optimal matching in~\eqref{def:oracle-matching}.   The aim of this subsection, then is to show that the very simple matching schemes of Section~\ref{sec:heuristic matching} enjoy similar optimality properties, at least under some further assumptions on the model~\eqref{model:general alternative}.  Specifically, assume here that $\Ycal = \Zcal=\R$ and consider the class of partially linear Gaussian models given by
\begin{equation}\label{model:partial_linear_model}
(\bX,\bY,\bZ)~\textnormal{satisfy}~\eqref{model:general alternative} \  \textnormal{with}~
 \mu_n(\bY,\bZ)=\mu_0(\bZ)+\beta_n\,\bY
\end{equation}
for some $\mu_0\in \mathcal{C}_\ISO$ (applied componentwise) and some $\beta_n\in (0,\infty)$; we assume also that $Y$ is a mean-zero variable taking values in $[-1,1]$, with $\EE{\VVst{Y}{Z}}>0$ (i.e., $Y$ is not equal to a deterministic function of $Z$).  
Note that $\mu_0$ is a fixed function, so the dependence on $n$ of the distribution of $(X,Y,Z)$ is solely through $\beta_n$ and $\sigma_n$.  The following lemma relates $\ISNR_n$ to a more interpretable quantity, namely the expected conditional variance of $Y$ given~$Z$.
\begin{lemma}\label{lemma:beta-phase-transition}
Under the model class \eqref{model:partial_linear_model}, provided that $\mathbb{E}|\mu_0(Z)| < \infty$,
    \[
    \frac{\sqrt{n}\beta_n}{\sigma_n} \cdot \bigl(\EE{\VVst{Y}{Z}}\bigr)^{1/2}\,\bigl(1 + \mathrm{o}_P(1)\bigr)\leq \ISNR_n \leq \frac{\sqrt{n}\beta_n}{\sigma_n} \bigl(\mathrm{Var}(Y)\bigr)^{1/2} \leq \frac{\sqrt{n}\beta_n}{\sigma_n} .
    \]
\end{lemma}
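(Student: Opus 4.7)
The plan for establishing $\ISNR_n \leq \sqrt{n}\beta_n\sqrt{\mathrm{Var}(Y)}/\sigma_n$ is to test $g = \mu_0$, which lies in $\mathcal{C}_\ISO$ by assumption. This choice gives $\mu_n(\bY,\bZ) - g(\bZ) = \beta_n\bY$, and Jensen's inequality combined with $\mathbb{E}Y = 0$ yields $\mathbb{E}\|\beta_n\bY\|_2 \leq \beta_n\sqrt{\mathbb{E}\|\bY\|_2^2} = \beta_n\sqrt{n\,\mathrm{Var}(Y)}$; dividing by $\sigma_n$ gives the middle inequality. The final inequality $\sqrt{\mathrm{Var}(Y)} \leq 1$ follows immediately from $Y \in [-1,1]$.

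\textbf{Lower bound: decomposition.} For the lower bound, I would fix an arbitrary $g \in \mathcal{C}_\ISO$ and decompose
\[
\mu_n(\bY,\bZ) - g(\bZ) = \bv + \mathbf{W},
\]
where $\bv := \mu_0(\bZ) + \beta_n m(\bZ) - g(\bZ)$ with $m(z) := \mathbb{E}[Y \mid Z=z]$, and $\mathbf{W} := \beta_n(\bY - m(\bZ))$. Conditional on $\bZ$, the vector $\bv$ is deterministic while $\mathbf{W}$ has independent mean-zero coordinates with $|W_i| \leq 2\beta_n$ and $\mathbb{E}[W_i^2 \mid Z_i] = \beta_n^2\mathrm{Var}(Y_i \mid Z_i)$. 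The cross term vanishes in conditional expectation, giving
\[
\mathbb{E}\bigl[\|\bv+\mathbf{W}\|_2^2 \bigm| \bZ\bigr] = \|\bv\|_2^2 + \beta_n^2 \sum_{i=1}^n \mathrm{Var}(Y_i \mid Z_i).
\]

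\textbf{From second moment to first moment, uniformly in $g$.} Next, I would convert this conditional $L^2$ identity into an $L^1$ bound via the inequality $\mathrm{Var}(\sqrt{X}) \leq \mathrm{Var}(X)/\mathbb{E}X$ (which follows from $|\sqrt{X} - \sqrt{\mathbb{E}X}| \leq |X - \mathbb{E}X|/\sqrt{\mathbb{E}X}$), obtaining
\[
\bigl(\mathbb{E}[\|\bv + \mathbf{W}\|_2 \mid \bZ]\bigr)^2 \geq \mathbb{E}[\|\bv+\mathbf{W}\|_2^2 \mid \bZ] - \frac{\mathrm{Var}(\|\bv+\mathbf{W}\|_2^2 \mid \bZ)}{\mathbb{E}[\|\bv+\mathbf{W}\|_2^2 \mid \bZ]}.
\]
Conditional independence and $|W_i|\leq 2\beta_n$ yield $\mathrm{Var}(\|\bv+\mathbf{W}\|_2^2 \mid \bZ) \leq C(\beta_n^2\|\bv\|_2^2 + n\beta_n^4)$ via a bias-variance decomposition of each $(v_i+W_i)^2$. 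The crucial step is to show the correction ratio $\mathrm{Var}(\cdot)/\mathbb{E}[\cdot]^2$ is $o(1)$ \emph{uniformly in $g$}: on the LLN event $\mathcal{E}_n := \{\sum_i \mathrm{Var}(Y_i\mid Z_i) \geq \tfrac{n}{2}\mathbb{E}[\mathrm{Var}(Y\mid Z)]\}$ (whose probability tends to $1$, since $\mathbb{E}[\mathrm{Var}(Y\mid Z)] > 0$), a short algebraic argument using $\|\bv\|_2^2/(\|\bv\|_2^2 + \beta_n^2\sum_i \mathrm{Var}(Y_i\mid Z_i)) \leq 1$ for the $\|\bv\|_2^2$-piece and $\|\bv\|_2^2 + \beta_n^2\sum_i \mathrm{Var}(Y_i \mid Z_i) \geq \beta_n^2\sum_i \mathrm{Var}(Y_i \mid Z_i)$ for the $n\beta_n^4$-piece shows that this ratio is $O(1/n)$, with the implicit constant free of $\bv$.

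\textbf{Conclusion and main obstacle.} Dropping $\|\bv\|_2^2 \geq 0$ then yields, on $\mathcal{E}_n$, $\mathbb{E}[\|\bv+\mathbf{W}\|_2 \mid \bZ] \geq \beta_n\sqrt{\sum_i \mathrm{Var}(Y_i\mid Z_i)}(1 - O(1/n))$. Taking expectation over $\bZ$, using $\mathrm{Var}(Y\mid Z)\leq 1$ and bounded convergence to evaluate $\mathbb{E}\sqrt{\sum_i \mathrm{Var}(Y_i\mid Z_i)} = \sqrt{n\,\mathbb{E}[\mathrm{Var}(Y\mid Z)]}(1-o(1))$, and absorbing the $\mathcal{E}_n^c$ contribution via $\sqrt{\sum_i\mathrm{Var}(Y_i\mid Z_i)}\leq\sqrt{n}$ with $\mathbb{P}(\mathcal{E}_n^c) = o(1)$, gives $\mathbb{E}\|\mu_n(\bY,\bZ) - g(\bZ)\|_2 \geq \beta_n\sqrt{n\,\mathbb{E}[\mathrm{Var}(Y\mid Z)]}(1-o(1))$ uniformly in $g$; taking the infimum and dividing by $\sigma_n$ completes the proof. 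The main technical difficulty is precisely this uniformity in $g$: since $\bv$ depends on $g$ and can have unbounded norm, one must carefully verify that the growth of the conditional variance in $\|\bv\|_2^2$ is exactly offset by the simultaneous growth of the conditional second moment of $\|\bv+\mathbf{W}\|_2^2$.
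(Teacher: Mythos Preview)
Your upper bound coincides with the paper's: both test $g=\mu_0\in\mathcal{C}_\ISO$ and apply Jensen to $\mathbb{E}\|\beta_n\bY\|_2$.

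For the lower bound your route is genuinely different, and in fact more careful. The paper simply relaxes $\mathcal{C}_\ISO$ to the class $\mathcal{G}$ of all measurable functions and asserts
\[
\inf_{g\in\mathcal{G}}\Ep{}{\|\mu_n(\bY,\bZ)-g(\bZ)\|_2}=\Ep{}{\bigl\|\mu_n(\bY,\bZ)-\EEst{\mu_n(\bY,\bZ)}{\bZ}\bigr\|_2}=\Ep{}{\|\beta_n(\bY-\EEst{\bY}{\bZ})\|_2},
\]
after which the weak law of large numbers on $\frac{1}{n}\sum_i(Y_i-\mathbb{E}[Y\mid Z_i])^2$ finishes. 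This is much shorter, but the displayed equality is not obviously justified: the conditional mean minimizes $\mathbb{E}\|\cdot\|_2^2$, not $\mathbb{E}\|\cdot\|_2$ (the latter is minimized by the spatial median), so at face value one only has $\inf_{\mathcal{G}}\leq\mathbb{E}\|\beta_n(\bY-\mathbb{E}[\bY\mid\bZ])\|_2$, which is the wrong direction for a lower bound. Your decomposition $\bv+\mathbf{W}$ and second-to-first moment conversion via $\mathrm{Var}(\sqrt{X})\leq\mathrm{Var}(X)/\mathbb{E}X$ avoid this issue entirely by producing, for \emph{every} $g$, a lower bound on $\mathbb{E}[\|\bv+\mathbf{W}\|_2\mid\bZ]$ that depends on $g$ only through the nonnegative term $\|\bv\|_2^2$, which you then drop. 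The cost is precisely the uniformity bookkeeping you flag; your $O(1/n)$ bound on the correction ratio (splitting the two numerator pieces and using $\beta_n^2 S$ as a floor for the denominator on $\mathcal{E}_n$) is correct and does not depend on $\bv$. As a small bonus, your argument delivers a deterministic $(1-o(1))$ factor, which is cleaner than the stated $(1+o_P(1))$ given that $\ISNR_n$ is itself deterministic.
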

We can draw several conclusions from this result.
By Lemma~\ref{lemma:beta-phase-transition} and Theorem~\ref{lem:hardness_result_hellinger}, when $P_\zeta=\mathcal{N}(0,1)$, no valid test 
can achieve asymptotically non-trivial power against an alternative in the model class \eqref{model:partial_linear_model} when $\frac{\sqrt{n}\beta_n}{\sigma_n} = \mathrm{o}(1)$. On the other hand, if $\frac{\sqrt{n}\beta_n}{\sigma_n}\to \infty$, then $\ISNR_n\to\infty$, and the \textnormal{\texttt{PairSwap-ICI}} test should have power converging to $1$. 

Under mild conditions on the model, we expect that $\widehat{\ISNR}_n = \ISNR_n\bigl(1+\mathrm{o}_P(1)\bigr)$. With this approximation, another consequence of Lemma~\ref{lemma:beta-phase-transition} and \eqref{eq:asymptote-power-oracle-bounds} is that the conditional power of oracle matching also satisfies \begin{equation}\label{eqn:limits_of_general_power_guarantee}
      \PPst{p\leq\alpha}{\bY,\bZ} \geq \Phi\Biggl(\Phi^{-1}(\alpha)+\frac{\sqrt{n}\beta_n\cdot  \bigl\{\mathbb{E}\bigl[\mathrm{Var}_{Y\sim P_{Y\mid Z}}(Y)\bigr]\bigr\}^{1/2}}{\sqrt{2}\sigma_n}\Biggr)-\mathrm{o}_P(1).
\end{equation}

We can now compare the asymptotic conditional power of the \textnormal{\texttt{PairSwap-ICI}} test with neighbor matching and cross-bin matching against this lower bound.
\begin{theorem}\label{thm:neighbour_matching}
    Under the model class \eqref{model:partial_linear_model},  fix $\alpha\in(0,1)$ and assume that $\sigma_n\gg \sqrt{\log n/n}$. Assume also that $\mu_0(Z)$ is a sub-Gaussian random variable. Then the conditional power of neighbor matching (\cref{alg:neighbour-matching}), implemented with kernel $\psi(x,x')=x-x'$ and weights $w_\ell = \max\bigl\{Y_{i_\ell}-Y_{j_\ell},0\bigr\}$, satisfies 
    \[
   \PPst{p\leq\alpha}{\bY,\bZ}=\Phi\biggl(\Phi^{-1}(\alpha)+\frac{\sqrt{n}\beta_n\cdot  \bigl\{\mathbb{E}\bigl[\mathrm{Var}_{Y\sim P_{Y\mid Z}}(Y)\bigr]\bigr\}^{1/2}}{2\sigma_n}\biggr)+\mathrm{o}_P(1). 
    \]
\end{theorem}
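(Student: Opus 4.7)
The plan is to invoke the general conditional-power approximation for linear-kernel \texttt{PairSwap-ICI} tests (Corollary~\ref{cor:general-power-asymptotic}), which under its regularity hypotheses gives
\[
\PPst{p\leq\alpha}{\bY,\bZ}=\Phi\bigl(\widehat{\mathrm{ncp}}_n-\bar\Phi^{-1}(\alpha)\bigr)+\mathrm{o}_P(1),\ \ \widehat{\mathrm{ncp}}_n:=\frac{\sum_{\ell=1}^L w_\ell\bigl(\mu_n(Y_{i_\ell},Z_{i_\ell})-\mu_n(Y_{j_\ell},Z_{j_\ell})\bigr)}{\sigma_n\sqrt{2\sum_{\ell=1}^L w_\ell^2}}.
\]
By continuity of $\Phi$, proving the theorem reduces to showing $\widehat{\mathrm{ncp}}_n=\frac{\sqrt{n}\beta_n}{2\sigma_n}\sqrt{\EE{\VVst{Y}{Z}}}+\mathrm{o}_P(1)$. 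The hypotheses $|Y|\leq 1$, sub-Gaussianity of $\mu_0(Z)$, and $\sigma_n\gg\sqrt{\log n/n}$ together supply the tail control needed to apply the approximation, which I would verify in passing.

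Substituting $\mu_n(Y,Z)=\mu_0(Z)+\beta_n Y$ and using that $w_\ell=Y_{i_\ell}-Y_{j_\ell}$ on every included pair, the numerator of $\widehat{\mathrm{ncp}}_n$ splits as $\beta_n\sum_\ell w_\ell^2-\sum_\ell w_\ell\bigl(\mu_0(Z_{j_\ell})-\mu_0(Z_{i_\ell})\bigr)$. The boundedness $w_\ell\leq 2$ and the fact that matched pairs lie among adjacent pairs in the sorted order of $\bZ$ yield a telescoping bound
\[
\Bigl|\sum_\ell w_\ell\bigl(\mu_0(Z_{j_\ell})-\mu_0(Z_{i_\ell})\bigr)\Bigr|\leq 2\sum_{m=1}^{\lfloor n/2\rfloor}\bigl(\mu_0(Z_{\pi(2m)})-\mu_0(Z_{\pi(2m-1)})\bigr)\leq 2\bigl(\mu_0(Z_{\pi(n)})-\mu_0(Z_{\pi(1)})\bigr),
\]
which is $\mathrm{O}_P(\sqrt{\log n})$ by sub-Gaussianity of $\mu_0(Z)$. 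Once the denominator is shown (below) to be of order $\sigma_n\sqrt{n}$ in probability, this drift term contributes only $\mathrm{O}_P\bigl(\sqrt{\log n}/(\sigma_n\sqrt{n})\bigr)=\mathrm{o}_P(1)$ thanks to $\sigma_n\gg\sqrt{\log n/n}$.

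The central step is to prove $\sum_\ell w_\ell^2=\tfrac{n}{2}\EE{\VVst{Y}{Z}}\bigl(1+\mathrm{o}_P(1)\bigr)$. Conditional on $\bZ$, the summands $S_m=(Y_{\pi(2m-1)}-Y_{\pi(2m)})_+^2\in[0,4]$ are independent, so by Hoeffding's inequality they concentrate around their conditional mean $\sum_m f(Z_{\pi(2m-1)},Z_{\pi(2m)})$ with error $\mathrm{O}_P(\sqrt{n})$, where $f(z_1,z_2):=\EE{(Y_1-Y_2)_+^2}$ for independent $Y_k\sim P_{Y\mid Z=z_k}$. The diagonal symmetry under iid draws gives $f(z,z)=\tfrac12\EE{(Y_1-Y_2)^2}=\VVst{Y}{Z=z}$. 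Arguing that $f(Z_{\pi(2m-1)},Z_{\pi(2m)})\approx \VVst{Y}{Z=Z_{\pi(2m-1)}}$ since consecutive order statistics become arbitrarily close as $n\to\infty$, a law of large numbers over the odd-indexed order statistics then yields $\sum_m\VVst{Y}{Z=Z_{\pi(2m-1)}}=\tfrac{n}{2}\EE{\VVst{Y}{Z}}+\mathrm{o}_P(n)$. Substituting everything, $\widehat{\mathrm{ncp}}_n=\frac{\beta_n}{\sigma_n\sqrt{2}}\sqrt{\tfrac{n}{2}\EE{\VVst{Y}{Z}}}+\mathrm{o}_P(1)=\frac{\sqrt{n}\beta_n}{2\sigma_n}\sqrt{\EE{\VVst{Y}{Z}}}+\mathrm{o}_P(1)$, as required.

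The main obstacle is the diagonal-approximation step: replacing $f(Z_{\pi(2m-1)},Z_{\pi(2m)})$ by its diagonal value requires some form of continuity of $z\mapsto P_{Y\mid Z=z}$, which is not explicit in the theorem's hypotheses. I would handle this by an approximation argument, replacing the regular conditional distribution by a uniformly continuous version (e.g.\ via convolution with a vanishing-bandwidth kernel in $z$, or by a Lebesgue-differentiation reduction to almost-every points of continuity under $P_Z$), then controlling the remainder using the typical $\mathrm{O}_P(1/n)$ spacing of consecutive order statistics of $\bZ$ together with the boundedness of $Y$. A parallel regularity issue appears in the LLN for the odd-indexed sum $\sum_m g(Z_{\pi(2m-1)})$, where one must argue that the imbalance between odd- and even-indexed contributions is $\mathrm{o}_P(n)$; the same approximation-plus-spacing machinery should resolve it.
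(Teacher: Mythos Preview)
Your outline is correct and mirrors the paper's proof closely: both invoke Corollary~\ref{cor:general-power-asymptotic}, split the numerator via $\mu_n=\mu_0+\beta_nY$, control the $\mu_0$-drift by the same telescoping bound $\sum_\ell|\mu_0(Z_{j_\ell})-\mu_0(Z_{i_\ell})|\leq 2\|\mu_0(\bZ)\|_\infty=\mathrm{O}_P(\sqrt{\log n})$, and reduce to the concentration $\frac{2}{n}\sum_\ell w_\ell^2\to\EE{\VVst{Y}{Z}}$ (the paper's Proposition~\ref{lem:neighbour_matching_conc}), starting from the same conditional Hoeffding step.

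The one real difference is how the regularity obstacle you flag is dispatched. Rather than smoothing $P_{Y|Z}$ by convolution or appealing to Lebesgue differentiation---both of which would tacitly require extra structure on $P_Z$ (a density, or at least some control on the approximation error)---the paper introduces an \emph{independent} sample $Z'_1,\dots,Z'_{\lfloor n/2\rfloor}\iidsim P_Z$ and bounds the replacement error by $L_1$-Wasserstein distances ${\rm d}_{\rm W}\bigl(P_{Y|Z}(\cdot\mid Z_{(n,j)}),P_{Y|Z}(\cdot\mid Z'_{(\lfloor n/2\rfloor,m)})\bigr)$, using that $(y_1,y_2)\mapsto(y_1-y_2)_+^2$ is $4$-Lipschitz on $[-1,1]^2$. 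The averaged Wasserstein term is then shown to vanish by composing with $F_Z^{-1}$ to reduce to uniforms and applying Lusin's theorem to $u\mapsto F_{Y|U}(t\mid u)$ for each fixed $t$ (Lemma~\ref{lem:chatterjee_lemma_implies_dw}); since Lusin needs only Borel measurability, no continuity of $z\mapsto P_{Y|Z=z}$ is assumed. This same independent-sample coupling simultaneously resolves your second worry, since the LLN then runs over the genuinely i.i.d.\ $Z'_k$ rather than the odd-indexed order statistics. Your convolution/differentiation route might be made to work, but the Lusin device is cleaner and avoids any hidden hypotheses on $P_Z$. One minor caution: your reduction ``by continuity of $\Phi$'' to an additive $\mathrm{o}_P(1)$ error in $\widehat{\mathrm{ncp}}_n$ is not quite what the argument delivers---you get $\widehat{\mathrm{ncp}}_n=\text{target}\cdot(1+\mathrm{o}_P(1))+\mathrm{o}_P(1)$, and the multiplicative part can be large when $\beta_n\sqrt{n}/\sigma_n\to\infty$; the paper handles this via Lemma~\ref{lem:gaussian-CDF-approximation}, which is what ``continuity of $\Phi$'' really means here.
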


Notably, the conditional power of neighbor matching matches the lower bound from \eqref{eqn:limits_of_general_power_guarantee} up to a factor of $1/\sqrt{2}$ in the signal-to-noise ratio term. This factor arises due to the inherent inefficiency of neighbor matching: as discussed in Section~\ref{sec:heuristic matching}, neighbor matching discards roughly half the data.

We now turn to cross-bin matching. Since we would like paired $Z$ values $Z_{i_\ell}\leq Z_{j_\ell}$ to be approximately equal (in order to avoid an overly conservative test), we further allow the number of bins $K$ to depend on sample size $n$, and write this as $K_n$ from now on.  Consequently, the power of cross-bin matching can meet the lower bound in~\eqref{eqn:limits_of_general_power_guarantee}.

\begin{theorem}\label{thm:cross_bin_matching}
    Under the same setting and assumptions as Theorem \ref{thm:neighbour_matching}, the conditional power of cross-bin matching (Algorithm~\ref{alg:cross-bin-matching}) with $K_n$ bins (where $\sigma_n^{-1}\sqrt{n \log n}\ll K_n \ll n$), implemented with kernel $\psi(x,x')=x-x'$ and weights $w_\ell=\max\bigl\{Y_{i_\ell}-Y_{j_\ell},0\bigr\}$, satisfies 
    \begin{equation}
    \label{Eq:CPLowerBound}
   \PPst{p\leq\alpha}{\bY,\bZ}\geq\Phi\biggl(\Phi^{-1}(\alpha)+\frac{\sqrt{n}\beta_n\cdot  \bigl\{\mathbb{E}\bigl[\mathrm{Var}_{Y\sim P_{Y\mid Z}}(Y)\bigr]\bigr\}^{1/2}}{\sqrt{2}\sigma_n}\biggr)-\mathrm{o}_P(1).
    \end{equation}
    Further, if the conditional distribution 
    $P_{Y\mid Z}$ is symmetric about its mean almost surely, then the conditional power of cross-bin matching satisfies 
    \[
  \PPst{p\leq\alpha}{\bY,\bZ}=\Phi\biggl(\Phi^{-1}(\alpha)+\frac{\sqrt{n}\beta_n\cdot  \bigl\{\mathbb{E}\bigl[\mathrm{Var}_{Y\sim P_{Y\mid Z}}(Y)\bigr]\bigr\}^{1/2}}{\sigma_n}\biggr)+\mathrm{o}_P(1).
    \]
\end{theorem}
In particular, the first part of the result shows that for cross-bin matching, the conditional power asymptotically matches the lower bound in~\eqref{eqn:limits_of_general_power_guarantee}, without an additional factor of $1/\sqrt{2}$ as for neighbor matching under the same assumptions. Moreover, with an additional assumption of symmetry of the conditional distribution $P_{Y\mid Z}$, we can also remove another factor of $1/\sqrt{2}$.  Finally, in Theorem~\ref{thm:cross_bin_matching_general} in Appendix~\ref{app:power-cross-bin}, we characterize the asymptotic conditional power of cross-bin matching (in contrast to the lower bound in Theorem~\ref{thm:cross_bin_matching}) even without the symmetry assumption.

\section{Extension to multivariate response spaces}\label{sec:extension}

So far we have worked under Assumption~\ref{asm:st} and in the setting where the response space $\Xcal=\R$. In this section, we extend the \texttt{PairSwap-ICI} framework to multivariate responses, allowing $\Xcal = \R^{d_X}$ with $d_X \ge 1$. This extension broadens the applicability of our method, enabling it to handle settings in which the response variable is vector-valued, or other settings where there is no natural response variable and $X$ is vector-valued.

We begin by generalizing the stochastic monotonicity assumption to the case $\Xcal = \R^{d_X}$.  For $x,y \in \Xcal$, we write $x \le y$ to denote the coordinate-wise ordering $x_j \le y_j$ for all $j \in [d_X]$, and we say that a function $f:\Xcal \to \R$ is nondecreasing if $f(x) \le f(y)$ whenever $x \le y$.  
\begin{customasm}{2}[Monotonicity of the conditional distribution $P_{X\mid Z}$]\label{asm:st_gen}
    Let $\preceq$ be a partial order on $\Zcal$. We assume that $X$ is stochastically nondecreasing in $Z$, meaning that for all nondecreasing functions $f:\Xcal \to \R$,
    \[
    \textnormal{if $z \preceq z'$ then } \Epst{P}{f(X)}{Z=z} \le \Epst{P}{f(X)}{Z=z'}.
    \]
\end{customasm}

In particular, Assumption~\ref{asm:st_gen} specializes to Assumption~\ref{asm:st} when $d_X = 1$.\footnote{For any $x \in \Xcal$, the function $y \mapsto \One{y \ge x}$ is nondecreasing, so Assumption~\ref{asm:st_gen} imposes a monotonicity condition on the upper orthant probabilities of $X \mid Z$. When $d_X=1$, this reduces to monotonicity of the conditional survival function, as in Assumption~\ref{asm:st}.}

As before, to circumvent the hardness of conditional independence testing, we consider the restricted null hypothesis
\begin{align}\label{eq:null_gen}
    H_{0,d_X}^{\textnormal{ICI}} : X \independent Y \mid Z,
    \textnormal{ and $P_{X\mid Z}$ satisfies Assumption~\ref{asm:st_gen},}
\end{align}
so that $H_{0,1}^{\textnormal{ICI}}$ coincides with $H_0^{\textnormal{ICI}}$. 

The only modification of the \texttt{PairSwap-ICI} methodology required to accommodate multivariate $X$ concerns the definition of anti-monotonic functions.  Specifically, a function $\psi:\Xcal \times \Xcal \to \R$ is now said to satisfy anti-monotonicity if 
\begin{align}
\label{defn:monotonicity-of-psi-general}
    \psi(x +\Delta, x'-\Delta')-\psi(x'-\Delta',x+\Delta) \geq \psi(x,x')-\psi(x',x)
\end{align}
for all $x,x'$ and all $\Delta,\Delta' \geq 0$, where the inequality is interpreted elementwise.  Otherwise, the testing procedure proceeds exactly as in Section~\ref{sec:method}: given a matching $M = \bigl((i_1,j_1),\ldots,(i_L,j_L)\bigr) \in \mathcal{M}_n(\bZ)$ and anti-monotonic functions $\psi_1,\ldots,\psi_L:\mathcal{X} \times \mathcal{X} \rightarrow \mathbb{R}$, we define the test statistic $T$ as in~\eqref{eq:test-stat}, construct the $p$-value $p$ as in~\eqref{eq:def-test}, and reject $H_{0,d_X}^{\textnormal{ICI}}$ at level $\alpha \in (0,1)$ whenever $p \le \alpha$.

Theorem~\ref{thm:main-general} below confirms that the multivariate response \texttt{PairSwap-ICI} procedure retains finite-sample Type~I error control.

\begin{theorem}
    \label{thm:main-general}
    Under $H_{0,d_X}^{\textnormal{ICI}}$, the conditional Type~I error of the \textnormal{\texttt{PairSwap-ICI}} test satisfies $\PPst{p\le \alpha }{ \bY, \bZ}\le \alpha$ for all $\alpha \in [0,1]$.  
\end{theorem}

The proof of Theorem~\ref{thm:main-general} is analogous to that of Theorem~\ref{thm:main}. For completeness, we give the proof in Appendix~\ref{app:proof_validity_extension}.  We emphasise again that $\mathcal{Y}$ and $\mathcal{Z}$ here may be arbitrary measurable spaces, provided that  $\mathcal{Z}$ is equipped with a partial order.

Following the methodological development of Section~\ref{sec:design}, we consider anti-monotonic functions of the form $\psi_\ell=w_\ell\cdot\psi$ with $w_\ell\ge 0$ for all $\ell\in [L]$ and an anti-monotonic kernel $\psi$.  The oracle and plug-in matching schemes introduced in~\eqref{def:oracle-matching} and~\eqref{def:plug-in-matching} can be used directly in this setting. The main practical challenge therefore lies in designing a kernel $\psi$ that effectively captures the information contained in $\bX$.

A natural approach is to project $X$ along directions in the positive orthant. Specifically, writing $S^{+}_{d_X}:=\{u\in \R^{d_X}: u_j\ge 0~\text{for all}~j\in [d_X],\, \|u\|_2=1\}$, we may take any $u\in S^+_{d_X}$ and then consider the projection $\Pi_{d_X}:\mathcal{X} \rightarrow \mathbb{R}$ given by $\Pi_{d_X}(x) := u^\top x$. In particular, the direction $u$ can also be chosen based on $\bY,\bZ$ to gain higher power. Any such projection reduces the multivariate response to a scalar quantity while preserving the required monotonicity structure. Consequently, we can re-use the anti-monotonic $\psi$ functions introduced in Section~\ref{sec:design}; 
for the remainder of this section, we focus on the linear kernel $\psi(x,x')=\Pi_{d_X}(x)-\Pi_{d_X}(x')$.

To study the asymptotic conditional power of the multivariate response \texttt{PairSwap-ICI} procedure, we assume that the data $(\bX,\bY,\bZ) = (X_i,Y_i,Z_i)_{i\in [n]} \iidsim P_{X,Y,Z}$ are drawn according to the model~\eqref{model:general alternative}, where $\mu_n:\mathcal{Y} \times \mathcal{Z} \rightarrow \Xcal$ (applied componentwise) is a measurable function and $P_\zeta$ is a distribution on $\mathcal{X}$ having mean $0$, ones on the diagonal of its covariance matrix and $\int_{\mathcal{X}} \|x\|_2^4 \, dP_\zeta < \infty$.  Here $\mu_n(\bY,\bZ)\in\R^{n\times d_X}$, and given the projection direction $u \in S_{d_X}^+$, we have
\[
\Pi_{d_X}(\bX)=\bX u
=\mu_n(\bY,\bZ)u
+\sigma_n \,\bzeta'
\]
for some $\bzeta'$ with independent components having mean $0$, variance $1$ and finite fourth moment. Therefore, conditional on $u$, an application of Theorem~\ref{thm:general-power-asymptotic} immediately yields an analogous asymptotic characterization of the conditional power of the \texttt{PairSwap-ICI} test:

\begin{corollary}\label{thm:general-power-asymptotic_multivariate}
    In the setting of Theorem~\ref{thm:general-power-asymptotic}, suppose the weights, the matching and the projection direction $u$ satisfy Assumptions~\ref{A1} and~\ref{A2} after replacing $\Delta\mu_n(\bY,\bZ)$ with $\Delta\mu_n(\bY,\bZ)u$.  Then for $\alpha \in (0,1)$, we have
    \[
    \PPst{p\leq \alpha}{\bY,\bZ}=\Phi\biggl(\Phi^{-1}(\alpha)+\frac{\bw^\top \Delta\bigl(\mu_n(\bY,\bZ)u\bigr)}{\sqrt{2}\sigma_n\|\bw\|_2}\biggr)+\mathrm{o}_P(1).
    \]
\end{corollary}

Similar to~\eqref{eqn:oracle_weight_general_model}, for a given matching $M\in \mathcal{M}_n(\bZ)$ and a projection direction $u\in S^{+}_{d_X}$, the oracle weights are 
$\bw^\star \propto \Delta \bigl(\mu_n^+(\bY,\bZ) u\bigr)$.  With that choice, the \texttt{PairSwap-ICI} signal-to-noise ratio satisfies
\[
S(\bw^*,M;u) = \bigl\|\Delta\bigl(\mu_n^+(\bY,\bZ)u\bigr)\bigr\|_2,
\]
and the optimal choice of $u$ is 
\[
u^* \in \argmax_{u \in S_{d_X}^+} S(\bw^*,M;u).
\]
In practice, the oracle weights may again be estimated using a plug-in approach, so that a numerical approximation to $u^*$ (using either a deterministic or random grid) may be found.

\section{Simulations}\label{sec:simulations}

In this section, we evaluate the performance of our method on simulated data, and compare the matching strategies from Section~\ref{sec:heuristic matching}. 
Code for reproducing all experiments can be found at \url{https://github.com/jake-soloff/PairSwap-ICI-Experiments}. 
We will examine two versions of the \textnormal{\texttt{PairSwap-ICI}} test:
\begin{itemize}
    \item Neighbor matching (Algorithm~\ref{alg:neighbour-matching}), with the linear kernel $\psi(x,x') = x-x'$ and weights $w_\ell = \max\{Y_{i_\ell} - Y_{j_\ell},0\}$ as discussed in Section~\ref{sec:heuristic matching};
    \item Cross-bin matching (Algorithm~\ref{alg:cross-bin-matching}) with $K_n=\left\lfloor n^{2/3}\right\rfloor$ bins, again with the linear kernel $\psi(x,x') = x-x'$ and weights $w_\ell = \max\{Y_{i_\ell} - Y_{j_\ell},0\}$. This particular choice of $K_n$ is motivated by Theorem~\ref{thm:cross_bin_matching}, which suggests choosing the number of bins $K_n$ to satisfy $\sqrt{n \log n}\ll K_n \ll n$, in order to achieve competitive power with neighbor matching. 
\end{itemize}

\subsection{Extent of conservativeness under \texorpdfstring{$H_0^{\textnormal{ICI}}$}{the isotonic null}}\label{sec:validity_uni}

\cref{thm:main} establishes valid, finite-sample Type~I error control for our method. The purpose of this section is to evaluate the extent of the conservativeness of the Type~I error under various null distributions. Since our inference relies on the fact that matched pairs $(X_{i_\ell}, X_{j_\ell})$ are stochastically ordered under the null, intuitively this should depend on the strength of the monotonicity in the conditional distribution. 

To see how the dependence between $X$ and $Z$ affects the rejection probability, we sample $X$ from an additive noise model 
\[
X \mid Y, Z \sim \mathcal{N}\bigl(\mu(\gamma Z), 1\bigr),
\]
where $Y,Z$ are independent standard normal random variables. As long as $\mu$ is nondecreasing and $\gamma \ge 0$, this joint distribution belongs to the null~$H_0^{\textnormal{ICI}}$. The scalar $\gamma$ controls the strength of the monotonicity of $X\mid Z$. In particular, as $\gamma \searrow 0$ we expect the Type~I error $\PP{p\le \alpha}$ to approach $\alpha$.

In our simulations, we consider two functions $\mu$, the identity $\mu(z)=z$ and the standard Gaussian CDF $\mu(z) = \Phi(z)$. \cref{fig:validity} shows the Type~I error as a function of $\gamma$ for two levels of $\alpha$. We observe similar results for each $\alpha$, where the test typically becomes more conservative as $\gamma$ increases, as expected. Under the null, our test is more conservative for cross-bin matching than for neighbor matching, since the $Z$ values are further apart in cross-bin matching.  We remark that while other methods with asymptotic Type~I error guarantees may exhibit substantial violations of Type~I error control in finite samples, our procedure \texttt{PairSwap-ICI} test offers finite-sample validity.

\begin{figure}[!ht]
    \centering
    \fbox{\includegraphics[scale=0.8]{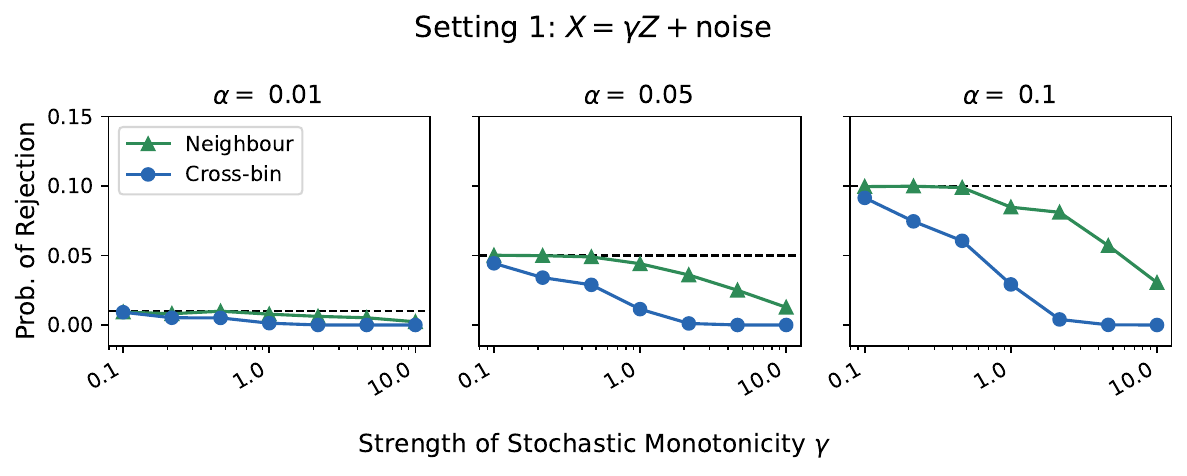}}\\ \bigskip
    \fbox{\includegraphics[scale=0.8]{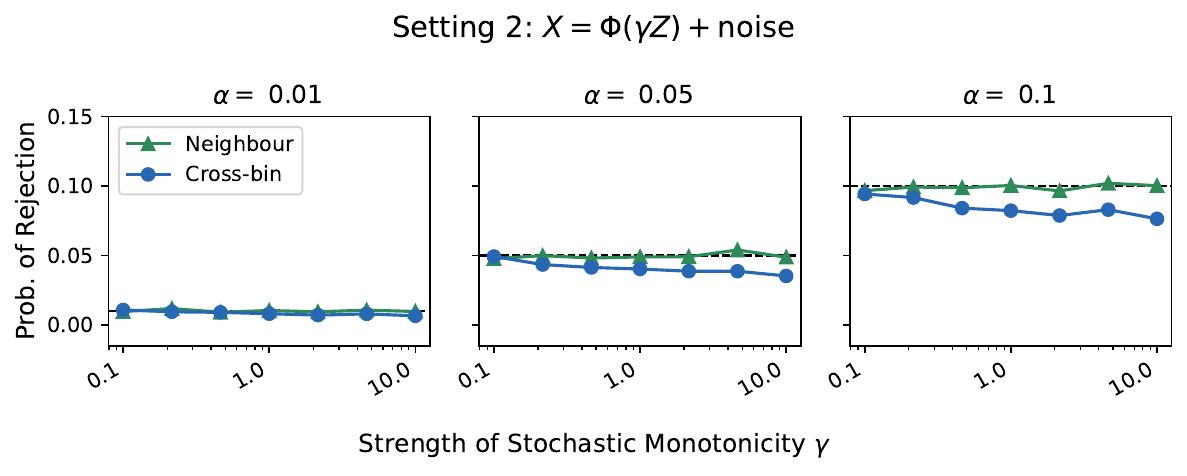}}
    \caption{Simulation results illustrating Type~I error control under the null~$H_0^{\textnormal{ICI}}$ for two forms of the conditional mean~$\EEst{X}{Z}$. Each subplot shows the rejection probability of the \textnormal{\texttt{PairSwap-ICI}} test on a dataset of size $1000$, averaged over $10^4$ simulation trials, as a function of the strength of stochastic monotonicity~$\gamma$.}
    \label{fig:validity}
\end{figure}

\begin{figure}[!ht]
    \centering
    \fbox{\includegraphics[scale=0.895]{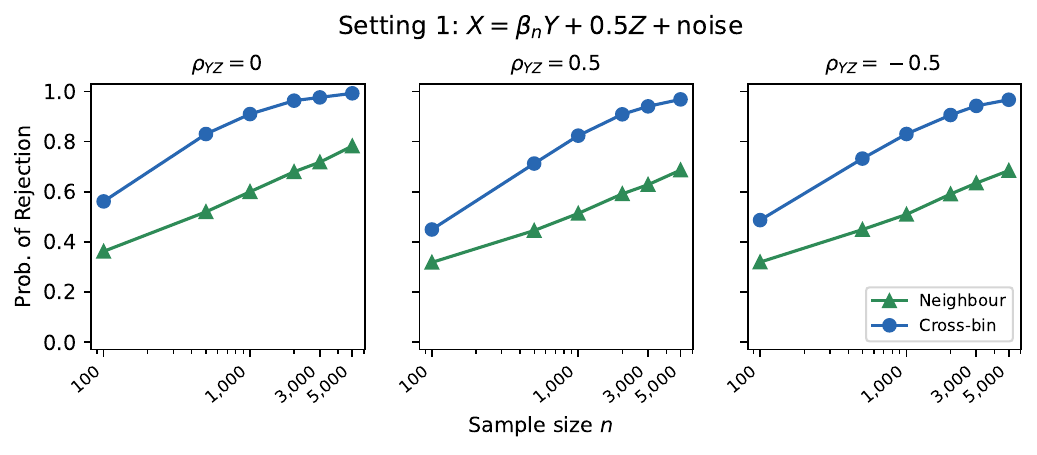}}\\ \bigskip
    \fbox{\includegraphics[scale=0.895]{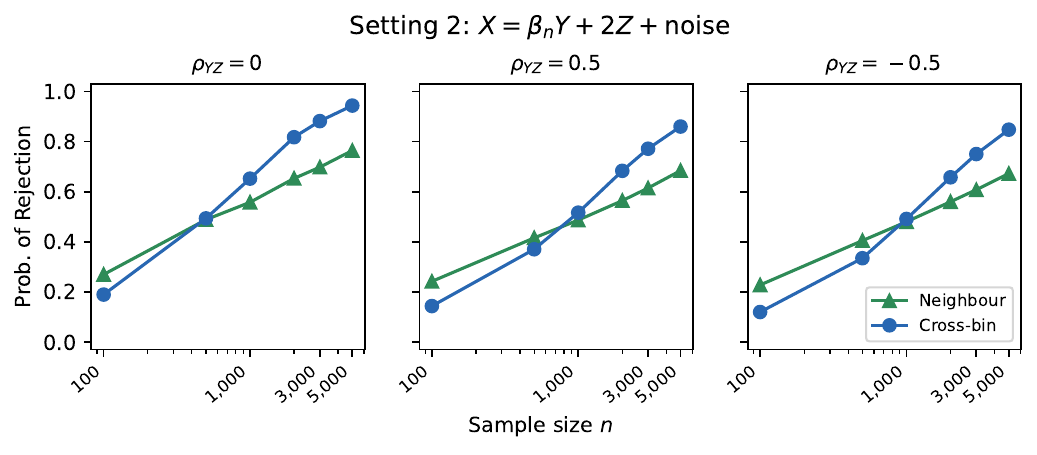}}
    \caption{Simulation results demonstrating the power of the \texttt{PairSwap-ICI} for two alternatives at level $\alpha=0.1$. Each subplot shows the rejection probability, averaged over $10^4$ simulation trials, as a function of the sample size~$n$. 
    Columns correspond to different relationships between $Y$ and $Z$. 
    In each setting, $X$ follows a Gaussian linear model with mean $\beta_n Y + \gamma Z$, where $\beta_n = n^{-1/3}$ and $\gamma = 0.5$ (above) or $\gamma = 2$ (below).
    }\label{fig:power-sim}
\end{figure}

\subsection{Power under alternatives}\label{sec:power_uni}

In Section~\ref{sec:partial_linear_model} we showed theoretically that our heuristic methods---neighbor matching and cross-bin matching---achieve high power (in fact, power that tends to 1) in the partial linear model~\eqref{model:partial_linear_model} with $\sigma_n=1$ for all $n$, provided the signal~$\beta_n$ exceeds the detection threshold~$n^{-1/2}$. In this section, we now examine this setting empirically. We sample data from the Gaussian linear model 
\[
X \mid Y, Z \sim \mathcal{N}\bigl(\beta_n Y + \gamma Z, 1\bigr),
\]
with $\beta_n = n^{-1/3}$. The pair $(Y,Z)$ is drawn from a bivariate Gaussian 
\[
\mathcal{N}_2\biggl(\boldsymbol{0}, \begin{bmatrix}
    1 & \rho_{YZ} \\
    \rho_{YZ} & 1
\end{bmatrix}\biggr).
\]
\cref{fig:power-sim} shows the (unconditional) power as a function of the sample size~$n$ for various choices of $\gamma$ and $\rho_{YZ}$. In Setting 1, we set $\gamma = 0.5$, and cross-bin matching uniformly dominates neighbor matching because it allows us to make many more matches of similar quality. On the other hand, in Setting 2 we set $\gamma = 2$, so the strong dependence of $X$ on $Z$ means the quality of a match $(i_\ell, j_\ell)$ degrades much more quickly as the gap $Z_{j_\ell} - Z_{i_\ell}$ increases---that is, for cross-bin matching, where $Z_{i_\ell}$ and $Z_{j_\ell}$ may be farther apart than for neighbor matching, this gap may lead to conservativeness that results in a loss of power. However, with a sufficiently large sample size, cross-bin matching performs at least as well as the neighbor matching. This is because the bin width decreases as~$n$ increases, so the quality of the cross-bin matches rivals that of the neighbor matches (with many more matches). The dependence $\rho_{YZ}$ between $Y$ and $Z$ does not have a major impact on the power of these two methods.

One take-away message from these simulations is that the stronger the stochastic monotonicity is believed to be, the more bins one should use in cross-bin matching, to avoid excessive conservativeness.  
On the other hand, if the extent of stochastic monotonicity is believed to be weak, then one should use fewer bins in cross-bin matching, to allow for a greater number of matches as well as larger weights for the matched pairs, and thereby increase power.

\subsection{Empirical performance in multivariate settings}\label{sec:mult_expt}

We now assess the empirical performance of the \texttt{PairSwap-ICI} procedure when the variables $X,Y,Z$ may all be multivariate, and we seek to test $H_{0,d_X}^{\textnormal{ICI}}$.  Fix $d_X = d_Y=5$ and $d_Z \in \{2,3\}$ and generate 
\[
Y\sim \mathcal{N}(0,I_{d_Y}), \qquad Z\sim \mathcal{N}(0,\Sigma),
\]
where $\Sigma$ is an equicorrelation covariance matrix with $\Sigma_{jj}=1$ and $\Sigma_{jk}=0.8$ for $j\neq k$. Independently, we generate $\zeta 
\sim \mathcal{N}(0,I_{d_Y})$. 

We consider three data generating mechanisms for the response $X$, designed to capture different forms of conditional dependence between $X$ and $Y$ given $Z$. In each setting we study two regimes: a null case with $\beta=0$, corresponding to conditional independence, and an alternative case with $\beta=0.5$, where conditional dependence is present.

\begin{enumerate}[(i)]
\item \textbf{Setting 1 (Linear dependence):} $X=\beta Y+0.1\sum_{j=1}^{d_Z}Z_j\,\mathbf{1}_{d_Y}+\zeta$.

\item \textbf{Setting 2 (Quadratic dependence):} $X=\beta Y^{2}+0.1\sum_{j=1}^{d_Z}Z_j\,\mathbf{1}_{d_Y}+\zeta$.
where $Y^{2}$ denotes the elementwise square of $Y$.

\item \textbf{Setting 3 (Interaction dependence):} $X=(0.1+\beta Y)\sum_{j=1}^{d_Z}Z_j+\zeta$.
\end{enumerate}

For each dataset, matched pairs are constructed using a simple random pairing strategy: observations are paired at random, and only those pairs satisfying the coordinate-wise order constraint on the $\Zcal$ space are retained. These retained pairs form the matched sample used by the \texttt{PairSwap-ICI} test. 
This matching scheme inevitably leads to a loss in effective sample size, since many $Z$ pairs are not comparable under the coordinate-wise order.  Nevertheless, as the experiments below demonstrate, even this simple matching strategy can achieve reasonable power.

With the aforementioned matching scheme, we will use the plug-in weights from~\eqref{def:plug-in-matching} and a projection direction $u=\Ep{v\sim \text{Unif}(S_{d_X}^+)}{v}$. We estimate the conditional mean function using sample splitting. The dataset is divided into two equal parts: the first split is used to estimate the conditional mean $\EEst{X}{Y,Z}$ required for the plug-in weights, and the second split is used to run the \texttt{PairSwap-ICI} testing procedure. This separation prevents the weight estimation step from introducing bias in the testing stage.  We evaluate performance across sample sizes  $n\in\{200,400,600,800,1000,1500,2000\}$,
and repeat each experiment over $1000$ Monte Carlo trials.

Figure~\ref{fig:pairswap_multivariate} presents the Type~I error curves and power curves of the \texttt{PairSwap-ICI} test as functions of the sample size for the three data generating mechanisms. Across all settings, the Type~I error curves remain below the pre-determined significance level $\alpha=0.1$, as anticipated by Theorem~\ref{thm:main-general}.  Under the alternative $\beta=0.5$, the power increases steadily with the sample size and approaches one as $n \rightarrow \infty$. These results indicate that the proposed method is capable of detecting a variety of conditional dependence structures even when the response variable and the conditioning variables are multivariate.

\begin{figure}[t]
    \centering
    \includegraphics[width=0.9\linewidth]{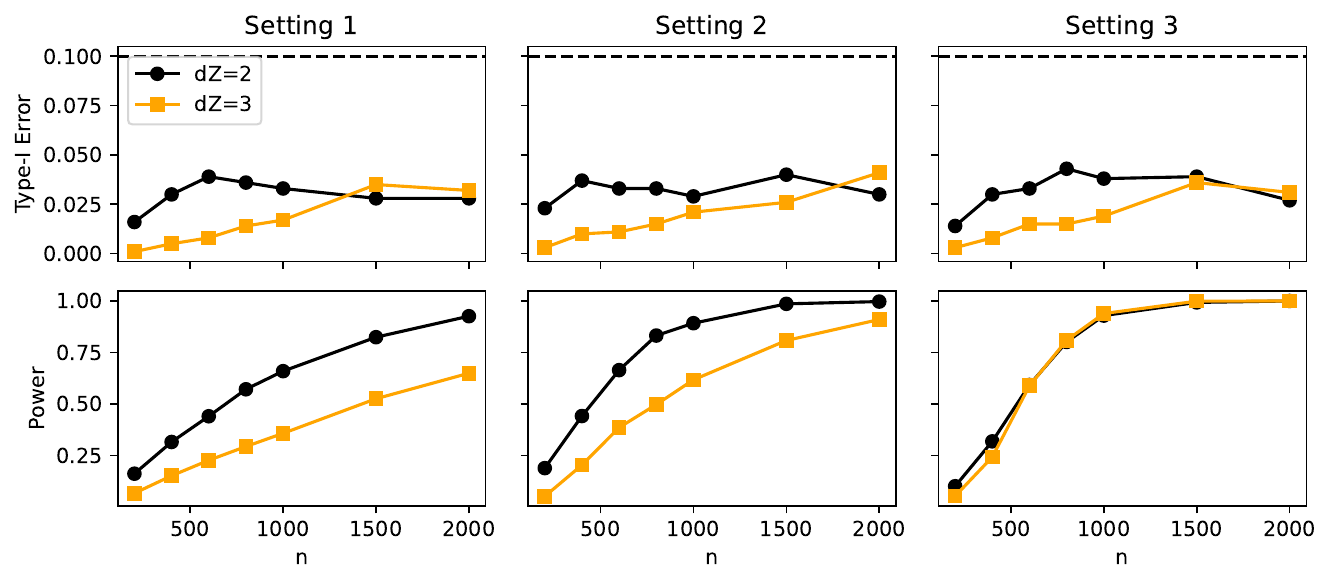}
    \caption{Empirical rejection probabilities of the \texttt{PairSwap-ICI} test in multivariate settings. 
    Each column corresponds to one of the three data generating mechanisms, while the curves show results for different covariate dimensions $d_Z\in\{2,3\}$. 
    The first row reports Type~I error rates under the null ($\beta=0$), and the second row reports power under the alternative ($\beta=0.5$).}
    \label{fig:pairswap_multivariate}
\end{figure}

\paragraph{Additional experiments.} 
In Appendix~\ref{app:additional_experiments}, we compare the Type~I error control and power of the proposed \texttt{PairSwap-ICI} test with several well-known benchmark methods for conditional independence testing under the aforementioned simulation settings, namely the Generalized Covariance Measure (GCM), Projected Covariance Measure (PCM), and Conditional Randomization Test (CRT).

\section{Experiment on real data: risk factors for diabetes}\label{sec:experiments}

We now evaluate the performance of our proposed testing procedure on a dataset\footnote{The data for this experiment were obtained from \url{https://archive.ics.uci.edu/dataset/34/diabetes}. Additional data descriptions can be found in \citet{smith1988using}.} on the incidence of diabetes among the Pima population near Phoenix, Arizona, originally collected by the US National Institute of Diabetes and Digestive and Kidney Diseases. The dataset comprises $768$ observations, and includes information on whether each patient has been diagnosed with diabetes according to World Health Organization standards. Additional variables provide data on the number of pregnancies, plasma glucose concentration, diastolic blood pressure, triceps skinfold thickness, 2-hour serum insulin levels, body mass index (BMI), diabetes pedigree function and age. 
Recognizing that age is a primary risk factor for diabetes, we examine whether diabetes incidence is independent of any other individual risk factors (e.g., glucose concentration, BMI, insulin level), conditional on age.  

It is well-known that the likelihood of developing diabetes increases with age (e.g., the Centers for Disease Control and Prevention\footnote{For more details, see 
\url{https://www.cdc.gov/diabetes/risk-factors/index.html} for a list of diabetes risk factors as officially reported by the US Centers for Disease Control and Prevention.} lists advanced age as one of the risk factors for Type~$1$ and Type~$2$ diabetes). Therefore, if we choose $X$ to be a binary variable representing incidence of diabetes and~$Z$ as the age of the patient, then we would expect $X$ to exhibit stochastic monotonicity with respect to $Z$ (i.e., we expect that Assumption~\ref{asm:st} holds, at least approximately).
Most of the other variables, such as BloodPressure, BMI, Glucose, and Pregnancies, are also considered potential risk factors for diabetes—but does this association remain
after we control for age? See Figure~\ref{fig:monotonicity of risk factors for diabetes} for a visualization of the association between diabetes and each of these variables.

\begin{figure}[!t]
    \centering
    \includegraphics[width=\textwidth]{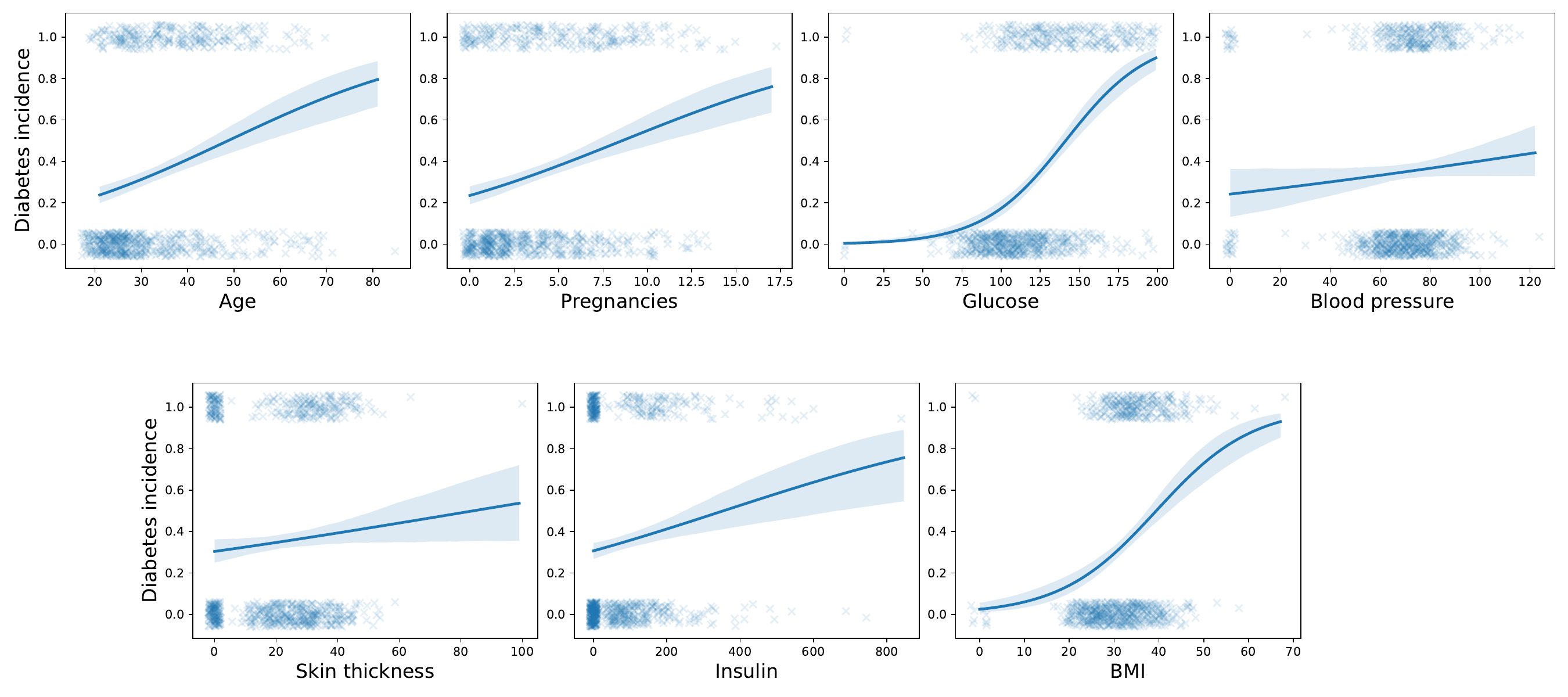}
    \caption{Scatter plots of $X$ (jittered for better visibility) and other feature variables along with the fitted logistic regression models to demonstrate the dependence among these variables and \texttt{Diabetes Incidence}.}
    \label{fig:monotonicity of risk factors for diabetes}
\end{figure}


\paragraph{Experiment $1$: marginal independence testing:}
We start with a warm-up experiment. We consider six variables: \texttt{Pregnancies}, \texttt{Glucose}, \texttt{BloodPressure}, \texttt{SkinThickness}, \texttt{Insulin}, and \texttt{BMI}, and aim to assess whether each of them is an individual risk factor for diabetes incidence. Specifically, we test the hypothesis $H_0: X \independent Y$, where $Y$ represents one of the six variables listed above, while $X$ is \texttt{Diabetes} (and since we are testing marginal rather than conditional independence, we do not attempt to control for $Z$, i.e., \texttt{Age}). For this purpose, we will be using the permutation test for independence with $T(\bX,\bY)=\bX^\top \bY$, as outlined in Section~\ref{sec: marginal independence}.

\begin{table}[ht!]
\centering
\scalebox{0.85}{
\begin{tabular}{|c|c|c|}
\hline
$X$ & $Y$ & Experiment $1$ \\
\hline
\multirow{6}{*}{\makecell{Diabetes \\ incidence}}
& Pregnancies 
& \textbf{0.001} (0.000)\\
& Glucose 
& \textbf{0.001} (0.000)  \\
&  Blood pressure & 0.155 (0.003) \\
& Skin thickness 
& 0.117 (0.002) \\
&  Insulin 
& \textbf{0.023} (0.001) \\
&  BMI 
& \textbf{0.001} (0.000) \\
\hline
\end{tabular}}
\caption{$p$-values of Experiment~$1$, averaged over random $3{,}000$ subsamples with standard errors in parentheses. Values significant at the $0.05$ level are shown in bold.}
\label{tab:p_values_marginal}
\end{table}
\paragraph{Experiment $2$: conditional independence testing, after controlling for age:}
Next, for the same set of six choices for $Y$, we test the hypothesis $H_0^{\textnormal{ICI}} : X \independent Y \mid Z$, where~$Z$ denotes \texttt{Age} (and $X$ is \texttt{Diabetes} as before). This allows us to identify risk factors for diabetes after controlling for age. As noted earlier, we expect the distribution of $X \mid Z = z$ to be stochastically monotone in~$z$, which supports the application of the \textnormal{\texttt{PairSwap-ICI}} testing procedure developed in this paper for this purpose. The \textnormal{\texttt{PairSwap-ICI}} method is implemented with neighbor matching or with cross-bin matching (with $K=50$ bins), and our statistic $T$ takes the form in \eqref{def:sum-of-psi} with linear kernel $\psi(x,x')=x-x'$ and weights $w_\ell=\max\{Y_{i_\ell}-Y_{j_\ell},0\}$.

\paragraph{Experiment $3$: conditional independence testing, with synthetic control $\tilde X$:}
Finally, we consider a semi-synthetic experiment where $X$ is replaced by synthetic observations~$\tilde{X}$, generated from an estimated model for $P_{X\mid Z}$ that satisfies stochastic monotonicity. We then test the hypothesis $\tilde X\independent Y\mid Z$ for the same choices of $Y$ from Experiment~$1$. Since $\tilde X$ is generated solely based on $Z$ and obeys stochastic monotonicity, the null hypothesis $H_0^{\textnormal{ICI}}$ holds trivially in this synthetic setting. The validity of our procedure should therefore ensure that the $p$-values generated by \textnormal{\texttt{PairSwap-ICI}} are (super-)uniformly distributed. 

To generate the synthetic feature $\tilde{X}$, since $X$ is binary, it suffices to fit an isotonic regression to estimate the conditional mean
$\EEst{X}{Z}$ and then sample $\tilde X$ from the Bernoulli distribution with this fitted conditional mean. Following the theory of \citet[Theorem~1]{henzi2021isotonic}, this is the best approximation to $P_{X\mid Z}$ under the \textit{continuous ranked probability score (CRPS)}, while respecting the monotonicity constraint.

\begin{table}[ht!]
\centering
\scalebox{0.85}{
\begin{tabular}{|c|c|c|c|c|c|c|}
\hline
\multicolumn{3}{|c|}{} 
& \multicolumn{2}{c|}{Experiment 2} 
& \multicolumn{2}{c|}{Experiment 3} \\
\hline
$X$ & $Z$ & $Y$ 
& \makecell{Neighbor \\ matching} 
& \makecell{Cross-bin \\ matching}
& \makecell{Neighbor \\ matching} 
& \makecell{Cross-bin \\ matching} \\
\hline
\multirow{6}{*}{\makecell{Diabetes \\ incidence}}
& \multirow{6}{*}{Age}
& Pregnancies 
& 0.433 (0.005) & 0.424 (0.004)
& 0.511 (0.005) & 0.544 (0.005) \\
& & Glucose 
& \textbf{0.002} (0.000) & \textbf{0.001} (0.000)
& 0.499 (0.005) & 0.550 (0.005) \\
& & Blood pressure 
& 0.496 (0.005) & 0.499 (0.004)
& 0.501 (0.005) & 0.555 (0.005) \\
& & Skin thickness 
& 0.243 (0.004) & 0.123 (0.003)
& 0.506 (0.005) & 0.558 (0.005) \\
& & Insulin 
& 0.159 (0.003) & 0.224 (0.004)
& 0.497 (0.005) & 0.545 (0.005) \\
& & BMI 
& \textbf{0.033} (0.001) & \textbf{0.001} (0.000)
& 0.499 (0.005) & 0.555 (0.005) \\
\hline
\end{tabular}}
\caption{\texttt{PairSwap-ICI} $p$-values, averaged over random $3{,}000$ subsamples with standard errors in parentheses, from Experiments~2 and~3. Values significant at the $0.05$ level are shown in bold.}
\label{tab:p_values_univar}
\end{table}

\paragraph{Results:}
 For each experiment, we repeat the following procedure $3{,}000$ times. We randomly split the full dataset into two halves: a training set and a test set.  For Experiment 3, since constructing a synthetic control requires estimating $P_{X\mid Z}$, we use the training half for this preliminary estimation problem (for Experiments~1 and~2 the training set is not used), and the test set to compute the test statistic and the corresponding $p$-value.
We compute $p$-values on the test set using the permutation test for marginal independence and the \textnormal{\texttt{PairSwap-ICI}} test for conditional independence. Finally, we report the average of the $3{,}000$ test-set $p$-values, along with the corresponding choice of the~$Y$ variable, in Tables~\ref{tab:p_values_marginal} and~\ref{tab:p_values_univar}.

Under the marginal independence test, four of the six variables are identified as having significant association with \texttt{Diabetes}, but, once we test conditional independence with the \textnormal{\texttt{PairSwap-ICI}} test, only two of these associations are identified as significant. 
Specifically, \texttt{Glucose} and $\texttt{BMI}$ both are identified as potential risk factors at the $0.05$ level of significance by the marginal independence test, and also by the \textnormal{\texttt{PairSwap-ICI}} test, even after controlling for \texttt{Age}.
On the other hand, the variables \texttt{Pregnancies} and \texttt{Insulin} are significant only under the marginal test; this suggests that, after controlling for \texttt{Age}, the data does not provide sufficient evidence to support them as risk factors for \texttt{Diabetes}.

Finally, we also note that all the averaged $p$-values from Experiment~3 with synthetic control~$\tilde{X}$ are concentrated around $0.5$, for each of the choices of $Y$. Since $(\tilde{X},Y,Z)$ satisfy $H_0^\textnormal{ICI}$, the $p$-values from Experiment~3 should be roughly uniform (or, if the test is conservative, super-uniform), and thus this behavior is expected from Theorem~\ref{thm:main}.

\section{Discussion}\label{sec:discussion}

In this paper, we have developed a nonparametric test of conditional independence assuming only stochastic monotonicity of the conditional distribution $P_{X\mid Z}$. 
This nonparametric constraint is natural in many applications, and allows us to circumvent the impossibility of assumption-free conditional independence testing \citep{shah2020hardness}.
We have introduced a variety of approaches to constructing a valid test statistic. 
Our test controls the Type~I error in finite samples and has power against an array of alternatives. We close with some connections to the literature, and potential avenues for future work.

\begin{itemize}
    \item \emph{Optimal power in general settings.} 
    In Theorem~\ref{thm:cross_bin_matching}, we were able to eradicate the earlier gap between the upper and lower bounds on the asymptotic conditional power of our test under the assumption that the conditional distribution of $Y\mid Z$ is symmetric, but it remains an open question whether other tests (or, perhaps, the \textnormal{\texttt{PairSwap-ICI}} but with a different kernel) may be able to avoid this assumption.
    \item \emph{Alternative methods for finite-sample Type~I error control?} While the stochastic monotonicity assumption facilitates consistent estimation of the conditional distribution~$P_{X\mid Z}$ using isotonic distributional regression \citep{mosching2020monotone,henzi2021isotonic}, and then applying a conditional independence test which assumes knowledge (or a good estimate) of $P_{X\mid Z}$ \citep{berrett2020conditional,candes2018panning}, these tests will only be valid asymptotically. However, under the same setting, the proposed \texttt{PairSwap-ICI} test enjoys finite sample Type~I error control without any further restrictions on the choice of matching, or weighting. An interesting open question would be to investigate whether these established approaches for CI testing can also be modified to attain validity in finite samples under the monotonicity assumption.
    \item \emph{Avoiding data splitting.} The oracle matching test derived in Section~\ref{sec:design} requires modelling the conditional mean and conditional variance of the kernel $\psi(X_i, X_j)$ as a function of $Y_i,Y_j,Z_i,Z_j$. We proposed to estimate these moments on a hold-out dataset. Can we instead perform cross-fitting to improve power and retain finite-sample error control? 
    \item \emph{Connection with knockoffs and conditional randomization tests.} Creating synthetic copies of $\bX$ via pairwise swaps resembles other conditional independence testing procedures, such as knockoffs and the conditional randomization test \citep{candes2018panning}, and the conditional permutation test \citep{berrett2020conditional}. An advantage of the \texttt{PairSwap-ICI} test is that we do not require knowledge of $P_{X|Z}$ apart from its stochastic monotonicity, though the potential price is that the test may be more conservative (i.e., the resulting $p$-value may be super-uniform), since we are not working under the ``sharp null''. 
    \item \emph{Alternative shape constraints.} We view stochastic monotonicity as a natural form of positive dependence for the joint distribution~$(X, Z)$. Are there approaches to test conditional independence under other models of dependence, such as likelihood ratio ordering or total positivity, or under other shape constraints, such as unimodality of $P_{X|Z}$ \citep{karlin1968total,shaked2007stochastic,mosching2024estimation}?
\end{itemize}
\subsection*{Acknowledgements}
J.A.S gratefully acknowledges the support of the Office of Naval Research via grant N00014-20-1-2337, and the Margot and Tom Pritzker Foundation.
R.F.B. was partially supported by the National Science Foundation via grant DMS-2023109, and by the Office of Naval Research via grant N00014-24-1-2544. The research of R. J. S. was supported by European Research Council Advanced Grant 101019498.

\subsection*{Availability of Data}
The data for this experiment was obtained from \url{https://archive.ics.uci.edu/dataset/34/diabetes} and the codes for reproducing all experiments can be found at \url{https://github.com/jake-soloff/PairSwap-ICI-Experiments}.

 \bibliographystyle{apalike}
 \bibliography{references}

\appendix
\counterwithin{theorem}{section}
\renewcommand{\thetheorem}{\Alph{section}.\arabic{theorem}}
\renewcommand{\thelemma}{\Alph{section}.\arabic{lemma}}

\section{Proof of Theorem~\ref{thm:mc-validity}}\label{app:proof-mc-validity}
The proof of this result follows the same structure as the proof of Theorem~\ref{thm:main}. 
\paragraph{Step 1: some deterministic properties of the $p$-value.}
Define a function $\hat{p}_M:\R^n\times (\{\pm 1\}^L)^M\to [0,1]$ by
\[
\hat{p}_M(\bx; \bs^{(1)},\dots,\bs^{(M)}) = \frac{1 +\sum_{m=1}^M \One{T(\bx^{\bs^{(m)}})\geq T(\bx)}}{1+M}.
\]
As in the proof of Theorem~\ref{thm:main}, this function is monotone nonincreasing in each $x_{i_\ell}$, and monotone nondecreasing in each $x_{j_\ell}$.

\paragraph{Step 2: compare to the sharp null.} 
Define $\bX_\sharp$ as in the proof of Theorem~\ref{thm:main}. Following identical arguments as in that proof, we can verify that, for any fixed $\bs^{(1)},\dots,\bs^{(M)}$, it holds that
\[\hat{p}_M\big(\bX_\sharp; \bs^{(1)},\dots,\bs^{(M)}\big) \preceq_{\rm st} \hat{p}_M\big(\bX; \bs^{(1)},\dots,\bs^{(M)}\big)\]
conditional on $\bY,\bZ$. Since $\hat{p}_M = \hat{p}_M(\bX;\bs^{(1)},\dots,\bs^{(M)})$ by construction, we therefore have
\[\PPst{\hat{p}_M\leq \alpha}{ \bY,\bZ,\bs^{(1)},\dots,\bs^{(M)}} \leq \PPst{\hat{p}_M\big(\bX_\sharp; \bs^{(1)},\dots,\bs^{(M)}\big)\leq \alpha}{ \bY,\bZ,\bs^{(1)},\dots,\bs^{(M)}}.\]
Marginalizing over the random draw of the swaps, $\bs^{(1)},\dots,\bs^{(M)}\iidsim \textnormal{Unif}(\{\pm 1\}^L)$, we therefore have
\[\PPst{\hat{p}_M\leq \alpha}{ \bY,\bZ} \leq \PPst{\hat{p}_M\big(\bX_\sharp; \bs^{(1)},\dots,\bs^{(M)}\big)\leq \alpha}{\bY,\bZ}.\]

\paragraph{Step 3: validity under the sharp null.} We now need to verify the validity of the Monte Carlo $p$-value, under the sharp null. Unlike the first two steps, for this step the arguments are somewhat different from the proof of Theorem~\ref{thm:main}.

First, let $\bs^{(0)}$ be an additional draw from $\textnormal{Unif}(\{\pm 1\}^L)$, sampled independently from all other random variables. Then
\[ 
(\bs^{(1)},\dots,\bs^{(M)})\eqd \big(\bs^{(0)}\circ \bs^{(1)},\dots,\bs^{(0)}\circ \bs^{(M)}\big),
\]
where $\circ$ denotes the elementwise product,
and so
\[
\hat{p}_M\big(\bX_\sharp; \bs^{(1)},\dots,\bs^{(M)}\big) \eqd \hat{p}_M\big(\bX_\sharp; \bs^{(0)}\circ \bs^{(1)},\dots,\bs^{(0)}\circ \bs^{(M)}\big)
\]
conditional on $\bY,\bZ$.
Moreover, by construction of the sharp null data $\bX_\sharp$, 
\[
\bX_\sharp\eqd (\bX_\sharp)^{\bs^{(0)}}
\]
holds conditional on $\bY,\bZ,\bs^{(0)},\bs^{(1)},\dots,\bs^{(M)}$, and therefore
\[
\hat{p}_M\Big(\bX_\sharp; \bs^{(0)}\circ \bs^{(1)},\dots,\bs^{(0)}\circ \bs^{(M)}\Big) \eqd \hat{p}_M\Big((\bX_\sharp)^{\bs^{(0)}}; \bs^{(0)}\circ \bs^{(1)},\dots,\bs^{(0)}\circ \bs^{(M)}\Big)
\]
holds conditional on $\bY,\bZ,\bs^{(0)},\bs^{(1)},\dots,\bs^{(M)}$. Combining all these calculations so far, then, 
\begin{equation}\label{eqn:eqd_monte_carlo}
\hat{p}_M\big(\bX_\sharp; \bs^{(1)},\dots,\bs^{(M)}\big) \eqd \hat{p}_M\Big((\bX_\sharp)^{\bs^{(0)}}; \bs^{(0)}\circ \bs^{(1)},\dots,\bs^{(0)}\circ \bs^{(M)}\Big),
\end{equation}
conditional on $\bY,\bZ$.

Next we calculate this last $p$-value: by definition,
\begin{align*}
    \hat{p}_M\Big((\bX_\sharp)^{\bs^{(0)}}; \bs^{(0)}\circ \bs^{(1)},\dots,\bs^{(0)}\circ \bs^{(M)}\Big)
    &=\frac{1 + \sum_{m=1}^M \One{T\bigl((\bX_\sharp)^{\bs^{(m)}}\bigr) \geq T\bigl((\bX_\sharp)^{\bs^{(0)}}\bigr)}}{1+M}\\
    &=\frac{\sum_{m=0}^M \One{T\bigl((\bX_\sharp)^{\bs^{(m)}}\bigr) \geq T\bigl((\bX_\sharp)^{\bs^{(0)}}\bigr)}}{1+M},
\end{align*}
where the first step holds since, for each $m \in [M]$, 
\[
\bigl((\bX_\sharp)^{\bs^{(0)}}\bigr)^{\bs^{(0)}\circ\bs^{(m)}} = (\bX_\sharp)^{\bs^{(0)}\circ \bs^{(0)}\circ \bs^{(m)}} = (\bX_\sharp)^{\bs^{(m)}}
\]
by definition of the swap operation. In other words, $\hat{p}_M\Big((\bX_\sharp)^{\bs^{(0)}}; \bs^{(0)}\circ \bs^{(1)},\dots,\bs^{(0)}\circ \bs^{(M)}\Big)$ simply compares the statistic $T\bigl((\bX_\sharp)^{\bs^{(0)}}\bigr)$ against the $M+1$ values $ T\bigl((\bX_\sharp)^{\bs^{(0)}}\bigr),\dots,T\bigl((\bX_\sharp)^{\bs^{(M)}}\bigr)$.
We therefore have
\[
\PPst{\hat{p}_M\Big((\bX_\sharp)^{\bs^{(0)}}; \bs^{(0)}\circ \bs^{(1)},\dots,\bs^{(0)}\circ \bs^{(M)}\Big) \leq \alpha}{\bX_\sharp,\bY,\bZ}
\leq \alpha,
\]
since, conditional on $\bX_\sharp,\bY,\bZ$, the sign vectors $\bs^{(0)},\dots,\bs^{(M)}$ are independent and identically distributed, and therefore the rank of $T((\bX_\sharp)^{\bs^{(0)}})$ among the list $ T((\bX_\sharp)^{\bs^{(0)}}),\dots,T((\bX_\sharp)^{\bs^{(M)}})$ is uniformly distributed.
Marginalizing over $\bX_\sharp$, therefore,
\[\PPst{\hat{p}_M\Big((\bX_\sharp)^{\bs^{(0)}}; \bs^{(0)}\circ \bs^{(1)},\dots,\bs^{(0)}\circ \bs^{(M)}\Big) \leq \alpha}{\bY,\bZ}
\leq \alpha.\]
Finally, combining this with our earlier calculation~\eqref{eqn:eqd_monte_carlo}, we have
\[\PPst{\hat{p}_M\big(\bX_\sharp; \bs^{(1)},\dots,\bs^{(M)}\big) \leq \alpha}{\bY,\bZ}
\leq \alpha,\]
which completes the proof. $\hfill{\square}$

\section{Proof of the results from Section~\ref{sec:power}}
In this section, we prove the results presented in Section~\ref{sec:power}. Throughout this appendix, we assume that the statistic $T$ admits the form in \eqref{def:sum-of-psi} with $\psi(x,x') = x-x'$, and that $\Ycal=\Zcal=\R$; the partial ordering $\preceq$ for $Z$ will simply be the usual ordering $\leq$ on $\R$. The organization of this appendix is as follows.
\begin{itemize}
    \item We begin in Appendix~\ref{app:general_power} by proving finite sample and asymptotic upper and lower bounds on the conditional power of \textnormal{\texttt{PairSwap-ICI}} test for any valid matching and weighting scheme, and  any statistic $T$ of the form \eqref{def:sum-of-psi} with a shared linear kernel $\psi(x,x')=x-x'$.
    \item Next, in Appendix~\ref{app:power-oracle}, we specialize these results to the oracle matching under two cases: one assuming access to oracle knowledge of $\mu$ (i.e., Theorem~\ref{thm:oracle-matching-asymptote}) and another with $\mu$ estimated from data (i.e., Theorem~\ref{thm:oracle-matching-estimated-mu}).
    \item Then, we shift our attention to the partially linear Gaussian models in \eqref{model:partial_linear_model}. In Appendices~\ref{app:power-neighbour} and~\ref{app:power-cross-bin} we prove the asymptotic behavior of conditional power for neighbor matching (Theorem~\ref{thm:neighbour_matching}) and for cross-bin matching (Theorem~\ref{thm:cross_bin_matching}), respectively.
    \item In Appendix~\ref{app:lemma-from-section-power} we prove the corollaries and lemmas from Section~\ref{sec:power}.
    \item In Appendix~\ref{app:IMM} we introduce an oracle matching, namely isotonic median matching, and discuss a key property of the same, which allows us to prove the results in Appendix~\ref{app:power-oracle}.
    \item Finally, in Appendix~\ref{app:lemmas-from-appendix-c1c4} we prove the lemmas used in Appendices~\ref{app:general_power}--\ref{app:power-cross-bin}.
\end{itemize}

\subsection{A general result on power of \textnormal{\texttt{PairSwap-ICI}} test}\label{app:general_power}
Here, we consider any valid matching and weighting scheme, and state finite-sample upper and lower bounds on the conditional power $\PPst{p\leq\alpha}{\bY,\bZ}$ of the \textnormal{\texttt{PairSwap-ICI}} test in Theorem~\ref{thm:general power analysis}. Further, under the asymptotic regime of Section~\ref{sec:power_general_alternatives}, we derive asymptotic high-probability upper and lower bounds for the same quantity in Corollary~\ref{thm:general-power-asymptotic}. 

\begin{theorem}\label{thm:general power analysis}
    Suppose that the triple $(\bX,\bY,\bZ)$ satisfies \eqref{model:general alternative}. Fix any matching $M=\{(i_1,j_1),\dots,(i_L,j_L)\}\in \mathcal{M}_n(\bZ)$, and any weight vector
    $\bw \in [0,\infty)^L$, which may both depend arbitrarily on $\bY,\bZ$ (but not on $\bX$). Assume that $\|\bw\|_2 > 0$ and fix any $\alpha\in(0,1)$. Let
    \[
    \epsilon_1 := \frac{\|\bw\|_\infty}{\|\bw\|_2}, \quad \epsilon_2 := \frac{\|\bw\circ\Delta\mu_n(\bY,\bZ)\|^3_3}{\sigma_n^3\|\bw\|^3_2 + \|\bw\circ\Delta\mu_n(\bY,\bZ)\|^3_2}.
    \]
    Then, writing $\bar{\Phi} := 1 - \Phi$, for any $\delta>0$,
    \begin{align}\label{eq:exact-upper-lower-bounf-on-conditional-power}
    &\Phi\Biggl(\frac{\bw^\top\Delta\mu_n(\bY,\bZ)}{\sqrt{2}\sigma_n\|\bw\|_2} -  \bar\Phi^{-1}(\alpha_{\rm{lo},\delta}) \cdot\sqrt{1 +\frac{ \|\bw\circ\Delta\mu_n(\bY,\bZ)\|^2_2}{2\sigma_n^2\|\bw\|^2_2}}\Biggr) -  (C'\epsilon_1 + \delta)\nonumber\\
    &\hspace{1cm}\leq \PPst{p\leq \alpha}{\bY,\bZ} \nonumber\\
    &\hspace{2cm}\leq 
    \Phi\Biggl(\frac{\bw^\top\Delta\mu_n(\bY,\bZ)}{\sqrt{2}\sigma_n\|\bw\|_2} -  \bar\Phi^{-1}(\alpha_{\rm{hi},\delta}) \cdot\sqrt{1 +\frac{ \|\bw\circ\Delta\mu_n(\bY,\bZ)\|^2_2}{2\sigma_n^2\|\bw\|^2_2}}\Biggr) +  (C'\epsilon_1 + \delta),
    \end{align}
where we write
$\alpha_{\rm{lo},\delta} = \alpha - C\delta^{-1}\cdot (\epsilon_1+\epsilon_2),\,
\alpha_{\rm{hi},\delta} = \alpha + C\delta^{-1}\cdot (\epsilon_1+\epsilon_2),$
with $C > 0$ depending only on $\alpha$ and on $\Ep{\zeta\sim P_\zeta}{\zeta^4}$, and $C' > 0$ depending only on $\Ep{\zeta\sim P_\zeta}{\zeta^4}$.
\end{theorem}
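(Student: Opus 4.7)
The plan is to analyze the rejection event $\{p\le\alpha\}$ via two Berry--Esseen Gaussian approximations---one conditional on $\bX$, one marginal over the noise---together with a Chebyshev concentration step for a random normalizer. Because $\psi(x,x')=x-x'$ is anti-symmetric, the swapped statistic simplifies to $T_{\bs}=\sum_{\ell=1}^L s_\ell w_\ell D_\ell$ where $D_\ell:=\Delta_\ell\bX=X_{i_\ell}-X_{j_\ell}$ and $\bs$ is uniform on $\{\pm1\}^L$, so the $p$-value equals the upper-tail probability (under this Rademacher distribution) of $\sum_\ell s_\ell w_\ell D_\ell$ evaluated at $T=\sum_\ell w_\ell D_\ell$. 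Writing $Q_{1-\alpha}(\bX)$ for the corresponding upper-$\alpha$ Rademacher quantile given $\bX$, the rejection event is (up to null-probability ties) $\{T\ge Q_{1-\alpha}(\bX)\}$. Under~\eqref{model:general alternative}, $D_\ell=\Delta_\ell\mu_n(\bY,\bZ)+\sigma_n\Delta_\ell\bzeta$ with the coordinates of $\Delta\bzeta$ independent, centred, of variance $2$ and finite fourth moment.

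For the quantile, Berry--Esseen applied to the Rademacher sum gives, conditionally on $\bX$,
\[
\sup_x\Bigl|\PPst{\textstyle\sum_\ell s_\ell w_\ell D_\ell/\|\bw\circ\Delta\bX\|_2\le x}{\bX,\bY,\bZ}-\Phi(x)\Bigr|\le C\,\|\bw\circ\Delta\bX\|_3^3/\|\bw\circ\Delta\bX\|_2^3.
\]
I then bound the $(\bY,\bZ)$-conditional expectation of the right-hand side by $C_1\epsilon_2+C_2\epsilon_1$ as follows. Decompose $|D_\ell|^3\lesssim|\Delta_\ell\mu_n(\bY,\bZ)|^3+\sigma_n^3|\Delta_\ell\bzeta|^3$, use $\Ep{P_\zeta}{|\zeta|^3}<\infty$ to bound the noise part in expectation by $O(\sigma_n^3\|\bw\|_3^3)$, and divide by $\bigl(\|\bw\circ\Delta\mu_n(\bY,\bZ)\|_2^2+2\sigma_n^2\|\bw\|_2^2\bigr)^{3/2}\asymp\|\bw\circ\Delta\mu_n(\bY,\bZ)\|_2^3+\sigma_n^3\|\bw\|_2^3$ via the elementary equivalence $(a^2+b^2)^{3/2}\asymp a^3+b^3$. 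The signal piece matches $\epsilon_2$, while the noise piece is bounded by $\|\bw\|_3^3/\|\bw\|_2^3\le\|\bw\|_\infty/\|\bw\|_2=\epsilon_1$. Markov's inequality then gives, outside an event of conditional probability $\delta$, the deterministic sandwich $\|\bw\circ\Delta\bX\|_2\bar\Phi^{-1}(\alpha_{\mathrm{hi},\delta})\le Q_{1-\alpha}(\bX)\le\|\bw\circ\Delta\bX\|_2\bar\Phi^{-1}(\alpha_{\mathrm{lo},\delta})$ with the shifted levels as in the theorem.

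For the marginal law of $T$, $T-\bw^\top\Delta\mu_n(\bY,\bZ)=\sigma_n\sum_\ell w_\ell(\zeta_{i_\ell}-\zeta_{j_\ell})$ is a weighted sum of $2L$ i.i.d.\ noise variables with total variance $2\sigma_n^2\|\bw\|_2^2$, and Berry--Esseen gives a uniform Gaussian approximation with error at most $C\Ep{P_\zeta}{|\zeta|^3}\|\bw\|_3^3/\|\bw\|_2^3\le C'\epsilon_1$; this is the source of the explicit $C'\epsilon_1$ term. Separately, the finite fourth moment of $\zeta$ with Chebyshev's inequality implies that $\|\bw\circ\Delta\bX\|_2^2$ concentrates around its conditional mean $\|\bw\circ\Delta\mu_n(\bY,\bZ)\|_2^2+2\sigma_n^2\|\bw\|_2^2$, so that, on an event of conditional probability $\ge 1-\delta$, one may replace $\|\bw\circ\Delta\bX\|_2$ in the quantile sandwich by the square root of this mean, with slack absorbed into the $\delta$ and $C'\epsilon_1$ terms. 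Intersecting the good events and applying the marginal Gaussian approximation to
\[
\frac{\bw^\top\Delta\mu_n(\bY,\bZ)-\bar\Phi^{-1}(\beta)\sqrt{\|\bw\circ\Delta\mu_n(\bY,\bZ)\|_2^2+2\sigma_n^2\|\bw\|_2^2}}{\sqrt{2}\sigma_n\|\bw\|_2},
\]
rewritten in the factorised form $\tfrac{\bw^\top\Delta\mu_n(\bY,\bZ)}{\sqrt{2}\sigma_n\|\bw\|_2}-\bar\Phi^{-1}(\beta)\sqrt{1+\|\bw\circ\Delta\mu_n(\bY,\bZ)\|_2^2/(2\sigma_n^2\|\bw\|_2^2)}$ for $\beta\in\{\alpha_{\mathrm{lo},\delta},\alpha_{\mathrm{hi},\delta}\}$, yields~\eqref{eq:exact-upper-lower-bounf-on-conditional-power}.

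The main technical obstacle is obtaining exactly the stated form of $\epsilon_2$: a direct Berry--Esseen bound gives only the (random and crude) quantity $\|\bw\circ\Delta\bX\|_\infty/\|\bw\circ\Delta\bX\|_2$. The refinement requires performing the signal/noise decomposition \emph{inside} the $(\bY,\bZ)$-conditional expectation, and then invoking $(a^2+b^2)^{3/2}\asymp a^3+b^3$ to realize the exact denominator $\sigma_n^3\|\bw\|_2^3+\|\bw\circ\Delta\mu_n(\bY,\bZ)\|_2^3$ of $\epsilon_2$. Carefully tracking how the $\delta$-dependent error budgets from Markov's inequality (for the Berry--Esseen expectation) and Chebyshev's inequality (for the concentration of $\|\bw\circ\Delta\bX\|_2$) aggregate into the shifted levels $\alpha_{\mathrm{lo},\delta}$, $\alpha_{\mathrm{hi},\delta}$ and into the $\delta$-term of the final bound is the main book-keeping challenge.
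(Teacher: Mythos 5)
Your proposal is correct and follows essentially the same route as the paper: (i) Berry--Esseen for the Rademacher sum $\sum_\ell s_\ell w_\ell \Delta_\ell\bX$ conditional on $\bX$; (ii) Berry--Esseen for the marginal distribution of $\bw^\top\Delta\bX$ given $\bY,\bZ$ (yielding the $C'\epsilon_1$ term); and (iii) Chebyshev/Markov concentration linking $\|\bw\circ\Delta\bX\|_2$ and $\|\bw\circ\Delta\bX\|_3$ to their deterministic surrogates, which is precisely the content of the paper's Lemma~\ref{lem:concentration_L2_L3_for_DeltaX}. One small clarification: you speak of ``bounding the $(\bY,\bZ)$-conditional expectation of the Berry--Esseen error,'' but that error is a ratio of two random quantities, and you cannot take the expectation of numerator and denominator separately; what you actually carry out---Markov on $\|\bw\circ\Delta\bX\|_3^3/V_n^{3/2}$ combined with Chebyshev to pin $\|\bw\circ\Delta\bX\|_2^2$ to $V_n$, then intersecting the good events---is the correct fix and matches the paper's Step~4.
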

\begin{proof}
    Recall from \eqref{eq:def-test} that the $p$-value of the \textnormal{\texttt{PairSwap-ICI}} test is given by 
    \[
    p=\Pp{\bs\sim\textnormal{Unif}(\{\pm1 \}^L)}{T_{\bs}\geq T}=\Pp{\bs\sim\textnormal{Unif}(\{\pm1 \}^L)}{\bs^\top(\bw\circ\Delta\bX)\geq \mathbf{1}^\top (\bw\circ\Delta\bX)}.
    \] From here, the proof is divided into four key steps.
    
    \paragraph{Step 1: a CLT approximation to the $p$-value.}
    
    Conditional on $(\bX,\bY,\bZ)$ and with $\bs\sim\textnormal{Unif}(\{\pm1 \}^L)$, we see that
    $\bs^\top(\bw\circ\Delta\bX)$ is a sum of $L$ independent random variables, where the $\ell$th term $s_\ell\cdot w_\ell \cdot \Delta_\ell\bX$ has mean zero, variance $w_\ell^2(\Delta_\ell\bX)^2$, and third absolute moment $w_\ell^3|\Delta_\ell\bX|^3$ for $\ell \in [L]$. Therefore, by the Berry--Esseen theorem \citep[Theorem~1]{shevtsova2010improvement},  
    \begin{align}\label{eqn:test_as_fucntion_ofWcircDeltaX}
        \left|p-\bar\Phi\left(\frac{\bw^\top \Delta\bX}{\|\bw\circ \Delta\bX\|_2}\right)\right| &\leq \sup_{x\in \RR}\biggl|\PPst{\frac{\bs^\top (\bw\circ\Delta\bX)}{\|\bw\circ \Delta\bX\|_2}\geq x }{ \bX,\bY,\bZ } - \bar\Phi(x)\biggr|\notag\\&\leq 0.56 \cdot \frac{\|\bw\circ \Delta\bX\|_3^3}{\|\bw\circ \Delta\bX\|_2^3} =: \Gamma_1,    
        \end{align}
        as long as $\|\bw\circ \Delta\bX\|_2>0$. 
        Thus
        \[
        \One{\bar\Phi\left(\frac{\bw^\top \Delta\bX}{\|\bw\circ \Delta\bX\|_2}\right) < \alpha - \Gamma_1} \leq \One{p\leq \alpha}\leq \One{\bar\Phi\left(\frac{\bw^\top \Delta\bX}{\|\bw\circ \Delta\bX\|_2}\right) \leq \alpha + \Gamma_1},
        \]
        or equivalently (since $\bar\Phi$ is strictly decreasing), 
         \begin{align*}
         \One{\bw^\top \Delta\bX > \|\bw\circ \Delta\bX\|_2\cdot \bar\Phi^{-1}(\alpha - \Gamma_1)} &\leq \One{p\leq \alpha} \\ 
         &\leq \One{\bw^\top \Delta\bX \geq \|\bw\circ \Delta\bX\|_2\cdot \bar\Phi^{-1}(\alpha + \Gamma_1)}.
         \end{align*}
        Moreover, when $\|\bw\circ \Delta\bX\|_2=0$, we have $p=1$, so the above inequality holds for that case as well. 
        
        After marginalizing over the distribution of $\bX$, then,
        \begin{align} \label{eq:main-proof-at-step1-end}
    &\PPst{\bw^\top \Delta\bX > \|\bw\circ \Delta\bX\|_2\cdot \bar\Phi^{-1}(\alpha - \Gamma_1)}{\bY,\bZ} \nonumber \\
    &\hspace{2cm}\leq \PPst{p\leq \alpha}{\bY,\bZ}\leq \PPst{\bw^\top \Delta\bX \geq \|\bw\circ \Delta\bX\|_2\cdot \bar\Phi^{-1}(\alpha + \Gamma_1)}{\bY,\bZ}.
    \end{align}

    \paragraph{Step 2: approximating $\|\bw\circ \Delta\bX\|_2$ in \eqref{eq:main-proof-at-step1-end}.}
    Write $V_n := 2\sigma_n^2\|\bw\|^2_2 + \|\bw\circ\Delta\mu_n(\bY,\bZ)\|^2_2$ and define
    \[
    \Gamma_2 :=\biggl|\frac{\|\bw\circ\Delta\bX\|_2}{V_n^{1/2}} - 1\biggr|.
    \]
    Then by \eqref{eq:main-proof-at-step1-end}, we have that
    \begin{align*}
        &\PPst{p\leq \alpha}{\bY,\bZ} \geq \PPst{\frac{\bw^\top \Delta\bX}{V_n^{1/2}} > \frac{\|\bw\circ \Delta\bX\|_2}{V_n^{1/2}}\cdot \bar\Phi^{-1}(\alpha - \Gamma_1)}{\bY,\bZ}\\
       &\hspace{1cm}\geq\PPst{\frac{\bw^\top \Delta\bX}{V_n^{1/2} } > \bar\Phi^{-1}(\alpha - \Gamma_1)+\bigl|\bar\Phi^{-1}(\alpha - \Gamma_1)\bigr|\Gamma_2}{\bY,\bZ}.
    \end{align*}
    We can do a similar calculation for the upper bound on the conditional power from~\eqref{eq:main-proof-at-step1-end}. Together, we have that
    \begin{align*}
        &\PPst{\frac{\bw^\top \Delta\bX}{V_n^{1/2} } > \bar\Phi^{-1}(\alpha - \Gamma_1)+\left|\bar\Phi^{-1}(\alpha - \Gamma_1)\right|\Gamma_2}{\bY,\bZ}\\
        &\leq \PPst{p\leq \alpha}{\bY,\bZ}\leq \PPst{\frac{\bw^\top \Delta\bX}{V_n^{1/2} } \geq \bar\Phi^{-1}(\alpha + \Gamma_1) -\left|\bar\Phi^{-1}(\alpha + \Gamma_1)\right|\Gamma_2}{\bY,\bZ}.
    \end{align*}
    Now fix $\varepsilon_\alpha>0$. Define the event 
    \begin{align}\label{eqn:alpha_Delta_event}
    \Omega_\alpha &:= \bigl\{\bar\Phi^{-1}(\alpha - \Gamma_1)+\bigl|\bar\Phi^{-1}(\alpha - \Gamma_1)\bigr|\Gamma_2 \leq \bar\Phi^{-1}( \alpha - \varepsilon_\alpha)\bigr\} \nonumber \\ 
    &\hspace{3cm} \bigcap\bigl\{\bar\Phi^{-1}(\alpha + \Gamma_1) -\bigl|\bar\Phi^{-1}(\alpha + \Gamma_1)\bigr|\Gamma_2 \geq \bar\Phi^{-1}(\alpha + \varepsilon_\alpha)\bigr\}.
    \end{align}
    Then
    \begin{align}\label{eq:main-proof-at-step2-end}
    &\PPst{\frac{\bw^\top \Delta\bX}{V_n^{1/2} } >\bar\Phi^{-1}(\alpha-\varepsilon_\alpha)}{\bY,\bZ} -\PPst{\Omega_\alpha^c}{\bY,\bZ}\nonumber\\
    &\hspace{1.5cm}\leq \PPst{p\leq \alpha}{\bY,\bZ} \leq \PPst{\frac{\bw^\top \Delta\bX}{V_n^{1/2} } \geq \bar\Phi^{-1}(\alpha+\varepsilon_\alpha)}{\bY,\bZ} +  \PPst{\Omega_\alpha^c}{\bY,\bZ}.\end{align}
    
\paragraph{Step 3: a CLT approximation to $\bw^\top\Delta\bX$ in \eqref{eq:main-proof-at-step1-end}.}
    
    Conditional on $\bY,\bZ$, we have that $\bw^\top\Delta\bX$ is a sum of $L$ independent terms, where for the $\ell$th term we have
    \[\EEst{w_\ell\cdot\Delta_\ell\bX}{\bY,\bZ} = w_\ell\cdot \Delta_\ell\mu_n(\bY,\bZ),\]
    and
    \[\VVst{w_\ell\cdot\Delta_\ell\bX}{\bY,\bZ} = w_\ell^2\cdot \VVst{\Delta_\ell\bX}{\bY,\bZ} = w_\ell^2 \cdot 2\sigma_n^2.\]
    Moreover, we have
    \begin{align*}
        \EEst{\bigl|w_\ell \cdot \Delta_\ell\bX - \EEst{w_\ell \cdot\Delta_\ell\bX}{\bY,\bZ}\bigr|^3}{\bY,\bZ}  &= \EEst{|w_\ell\sigma_n(\zeta_{i_\ell} - \zeta_{j_\ell})|^3}{\bY,\bZ}\\
        &\leq 8 \sigma_n^3 w_\ell^3\Ep{P_\zeta}{|\zeta|^3} \leq 8\rho_4^3 \sigma_n^3w_\ell^3,
    \end{align*}
    where we write $\rho_4:=\Ep{P_\zeta}{\zeta^4}^{1/4}$.
    Therefore, another application of the Berry--Esseen theorem gives 
    \begin{align*}   
    \sup_{x\in \RR}\biggl|\PPst{\frac{\bw^\top\Delta\bX }{\sqrt{2}\sigma_n\|\bw\|_2}\geq x }{ \bY,\bZ } &- \Phi\left(\frac{\bw^\top\Delta\mu_n(\bY,\bZ)}{\sqrt{2}\sigma_n\|\bw\|_2}-x \right)\biggr|\\
    &\leq 0.56 \cdot \frac{8\sigma_n^3\rho^3_4\|\bw\|_3^3}{(\sqrt{2}\sigma_n^3\|\bw\|_2)^3} 
    \leq 1.59\rho_4^3\cdot\frac{\|\bw\|_\infty}{\|\bw\|_2} .\end{align*}
    By rescaling to match the denominator in \eqref{eq:main-proof-at-step2-end}, we then have
    \[
    \sup_{x\in \RR}\Biggl|\PPst{\frac{\bw^\top\Delta\bX }{V_n^{1/2}}\geq x }{ \bY,\bZ }- \Phi\left(\frac{\bw^\top\Delta\mu_n(\bY,\bZ)}{\sqrt{2}\sigma_n\|\bw\|_2} - \frac{ xV_n^{1/2}}{\sqrt{2}\sigma_n\|\bw\|_2}\right)\Biggr|\leq 1.59\rho_4^3\frac{\|\bw\|_\infty}{\|\bw\|_2}  .
    \]
    Moreover, by \eqref{eq:main-proof-at-step2-end} we now have
    \begin{align*} &
    \Phi\biggl(\frac{\bw^\top\Delta\mu_n(\bY,\bZ)}{\sqrt{2}\sigma_n\|\bw\|_2} -  \bar\Phi^{-1}(\alpha - \varepsilon_\alpha) \cdot\frac{ V_n^{1/2}}{\sqrt{2}\sigma_n\|\bw\|_2}\biggr) -  1.59\rho_4^3\cdot\frac{\|\bw\|_\infty}{\|\bw\|_2} - \PPst{\Omega_\alpha^c}{\bY,\bZ}\\
    &\hspace{0.6cm}\leq \PPst{p\leq \alpha}{\bY,\bZ}\\ 
    &\hspace{0.6cm}\leq 
    \Phi\biggl(\frac{\bw^\top\Delta\mu_n(\bY,\bZ)}{\sqrt{2}\sigma_n\|\bw\|_2} -  \bar\Phi^{-1}(\alpha +\varepsilon_\alpha) \cdot\frac{ V_n^{1/2}}{\sqrt{2}\sigma_n\|\bw\|_2}\biggr) + 1.59\rho_4^3\cdot\frac{\|\bw\|_\infty}{\|\bw\|_2} +  \PPst{\Omega_\alpha^c}{\bY,\bZ}.
    \end{align*}
    \paragraph{Step 4: a concentration result, and completion of the proof.}
    Finally, we show how to choose $\varepsilon_\alpha$ to control $\PPst{\Omega_\alpha^c}{\bY,\bZ}$ appropriately.  We will prove in Lemma~\ref{lem:concentration_L2_L3_for_DeltaX} that the terms  $\|\bw\circ\Delta\bX\|_2$ and $\|\bw\circ\Delta\bX\|_3$ are likely to be well-behaved---in particular, $\|\bw\circ\Delta\bX\|_2$ concentrates near $V_n^{1/2}$, while $\|\bw\circ\Delta\bX\|_3$ is vanishing. Concretely, Lemma~\ref{lem:concentration_L2_L3_for_DeltaX} will show that for any $\delta>0$,
    \begin{align*}
        \mathbb{P}\Bigl[\bigl\{\|\bw\circ \Delta\bX\|^2_2 \in  \bigl[V_n(1 -\epsilon_*)\vee 0,V_n(1+\epsilon_*)\bigr]\bigr\} \bigcap \bigl\{
        \|\bw\circ\Delta\bX\|^2_3 \leq V_n \epsilon_*'\bigr\}\Bigm| \bY,\bZ\Bigr]\geq 1-\delta,
    \end{align*}
    where
    \[
    \epsilon_* = \sqrt{\frac{3}{\delta}}\cdot\biggl(  2\rho^2_4\cdot \epsilon_1 + 2(\epsilon_1^2\epsilon_2)^{1/3}\biggr), \quad \epsilon_*' = \frac{(24)^{2/3}\rho_4^2}{\delta^{2/3}}\cdot\epsilon_1^{2/3} + 4\epsilon_2^{2/3}.
    \]
    Hence, since $(1+\epsilon_*)^{1/2} - 1 \leq \epsilon_*$ and $1 - \{(1-\epsilon_*) \vee 0\}^{1/2} \leq \epsilon_* \wedge 1 \leq \epsilon_*$, we have
   \[
   \PPst{ \Gamma_1 \leq \Gamma_{1,*},\Gamma_2 \leq \epsilon_*}{\bY,\bZ}\geq 1-\delta,
   \]
   where $\Gamma_{1,*} := 0.56\bigl(\epsilon'_*/\{(1-\epsilon_*)\vee 0\}\bigr)^{3/2}$ and with the convention that $a/0=\infty$ for any $a> 0$.
    Now, we take
    \begin{align*}
    \varepsilon_\alpha :=& \max\Bigl\{\alpha -  \bar\Phi\bigl(\bar\Phi^{-1}(\alpha-\Gamma_{1,*})+\bigl|\bar\Phi^{-1}(\alpha-\Gamma_{1,*})\bigr|\epsilon_*\bigr),\\
    &\hspace{6cm}\bar\Phi\bigl(\bar\Phi^{-1}(\alpha+\Gamma_{1,*})-\bigl|\bar\Phi^{-1}(\alpha+\Gamma_{1,*})\bigr|\epsilon_*\bigr)-\alpha  \Bigr\},
    \end{align*}
    if $\epsilon_*<1$, and if $\epsilon_*\geq 1$, we take $\varepsilon_\alpha=\max(\alpha,1-\alpha)$. Also, here we take the convention that $\bar\Phi^{-1}(t)=\infty$ if $t\leq 0$ and $\bar\Phi^{-1}(t)=-\infty$ if $t\geq 1$. Then, by construction, with this choice of~$\varepsilon_\alpha$, the event $\Omega_\alpha$ defined in~\eqref{eqn:alpha_Delta_event} satisfies $\PPst{\Omega_\alpha^c}{\bY,\bZ} \leq \delta$.
    Combining both cases, we finally have that
    \begin{align*} 
    &\Phi\Biggl(\frac{\bw^\top\Delta\mu_n(\bY,\bZ)}{\sqrt{2}\sigma_n\|\bw\|_2} -  \bar\Phi^{-1}(\alpha - \varepsilon_\alpha) \cdot\sqrt{1 +\frac{ \|\bw\circ\Delta\mu_n(\bY,\bZ)\|^2_2}{2\sigma_n^2\|\bw\|^2_2}}\Biggr) -  1.59\rho_4^3\cdot\frac{\|\bw\|_\infty}{\|\bw\|_2} -  \delta\\
    &\hspace{0.5cm}\leq \PPst{p\leq \alpha}{\bY,\bZ}\\ 
    &\hspace{0.5cm}\leq 
    \Phi\Biggl(\frac{\bw^\top\Delta\mu_n(\bY,\bZ)}{\sqrt{2}\sigma_n\|\bw\|_2} -  \bar\Phi^{-1}(\alpha +\varepsilon_\alpha) \cdot\sqrt{1 +\frac{ \|\bw\circ\Delta\mu_n(\bY,\bZ)\|^2_2}{2\sigma_n^2\|\bw\|^2_2}}\Biggr) + 1.59\rho_4^3\cdot\frac{\|\bw\|_\infty}{\|\bw\|_2} +  \delta.\end{align*}
    Since by construction, $\varepsilon_\alpha\in[0,\max(\alpha,1-\alpha)]$, and by properties of $\bar\Phi$ and $\bar{\Phi}^{-1}$, there exists $C_\alpha > 0$, depending only on $\alpha$, such that
    \[\varepsilon_\alpha \leq C_\alpha(\epsilon_*+\epsilon_*'{}^{3/2}).\]
    Therefore, choosing $C > 0$, depending only on $\alpha$ and $\rho_4$, appropriately,
    \[
    \varepsilon_\alpha \leq C\delta^{-1} \cdot (\epsilon_1 + \epsilon_2),
    \]
    as required.
\end{proof}

\subsubsection{Proof of Theorem~\ref{thm:general-power-asymptotic}}
    We start by noting that by Assumption~\textbf{A1}, $\epsilon_1 = \frac{\|\bw\|_\infty}{\|\bw\|_2}=\mathrm{o}_P(1)$, and by Assumption~\textbf{A2}, 
\[
\epsilon_2=\frac{\|\bw\circ\Delta\mu_n(\bY,\bZ)\|^3_3}{\sigma_n^3\|\bw\|^3_2 + \|\bw\circ\Delta\mu_n(\bY,\bZ)\|^3_2}\leq \frac{\|\bw\circ\Delta\mu_n(\bY,\bZ)\|^3_3}{\sigma_n^3\|\bw\|^3_2 \vee \|\bw\circ\Delta\mu_n(\bY,\bZ)\|^3_2}=\mathrm{o}_P(1).
\]
Moreover, we have that
\begin{align*}
    \frac{\bw^\top\Delta\mu_n(\bY,\bZ)}{\sqrt{2}\sigma_n\|\bw\|_2} &-  \bar\Phi^{-1}(\alpha_{\rm{lo},\delta}) \cdot\sqrt{1 +\frac{ \|\bw\circ\Delta\mu_n(\bY,\bZ)\|^2_2}{2\sigma_n^2\|\bw\|^2_2}} \\
    &\geq \frac{\bw^\top\Delta\mu_n(\bY,\bZ)}{\sqrt{2}\sigma_n\|\bw\|_2} -  \biggl\{\bar\Phi^{-1}(\alpha_{\rm{lo},\delta})\biggl(1 +\frac{ \|\bw\circ\Delta\mu_n(\bY,\bZ)\|_2}{\sqrt{2}\sigma_n\|\bw\|_2}\biggr)\vee  \bar\Phi^{-1}(\alpha_{\rm{lo},\delta})\biggr\}\\
    &\geq \frac{\bw^\top\Delta\mu_n(\bY,\bZ)}{\sqrt{2}\sigma_n\|\bw\|_2}-\bar\Phi^{-1}(\alpha_{\rm{lo},\delta})-\bigl|\bar\Phi^{-1}(\alpha_{\rm{lo},\delta}) \bigr|\cdot \frac{ \|\bw\circ\Delta\mu_n(\bY,\bZ)\|_2}{\sqrt{2}\sigma_n\|\bw\|_2}\\
    &= \frac{\bw^\top\Delta\mu_n(\bY,\bZ)}{\sqrt{2}\sigma_n\|\bw\|_2}\left(1-\left|\bar\Phi^{-1}(\alpha_{\rm{lo},\delta})\right|\frac{ \|\bw\circ\Delta\mu_n(\bY,\bZ)\|_2}{\bw^\top\Delta\mu_n(\bY,\bZ)}\right)-\bar\Phi^{-1}(\alpha_{\rm{lo},\delta}).
\end{align*}
Now, recalling the definition of $\alpha_{\rm{lo},\delta}$ from the statement of Theorem~\ref{thm:general power analysis}, we observe that $\alpha_{\rm{lo},\delta}=\alpha -\mathrm{o}_P(1)$ for any fixed $\delta > 0$. By Assumption~\textbf{A2}, if $\sqrt{2}\sigma_n\|\bw\|_2 > \bw^\top\Delta\mu_n(\bY,\bZ)$, then
\[
\bigl|\bar\Phi^{-1}(\alpha_{\rm{lo},\delta})\bigr| \cdot\frac{ \|\bw\circ\Delta\mu_n(\bY,\bZ)\|_2}{\sqrt{2}\sigma_n\|\bw\|_2} =\mathrm{o}_P(1),
\]
and otherwise, if $2\sigma_n^2\|\bw\|_2\leq\bw^\top\Delta\mu_n(\bY,\bZ)$, then
\[
\bigl|\bar\Phi^{-1}(\alpha_{\rm{lo},\delta})\bigr|\cdot \frac{ \|\bw\circ\Delta\mu_n(\bY,\bZ)\|_2}{\bw^\top\Delta\mu_n(\bY,\bZ)}=\mathrm{o}_P(1).
\]
Combining both cases, since $\Bar{\Phi}^{-1}$ is continuous, we see that 
\begin{align*}
        \Phi\Biggl(\frac{\bw^\top\Delta\mu_n(\bY,\bZ)}{\sqrt{2}\sigma_n\|\bw\|_2} &-  \bar\Phi^{-1}(\alpha_{\rm{lo},\delta}) \cdot\sqrt{1 +\frac{ \|\bw\circ\Delta\mu_n(\bY,\bZ)\|^2_2}{2\sigma_n^2\|\bw\|^2_2}}\Biggr)-(C'\epsilon_1+\delta) \\
        &\geq \Phi\biggl(\frac{\bw^\top\Delta\mu_n(\bY,\bZ)}{\sqrt{2}\sigma_n\|\bw\|_2}\{1-\mathrm{o}_P(1)\} -  \bar\Phi^{-1}\bigl(\alpha \!-\! \mathrm{o}_P(1)\bigr)-\mathrm{o}_P(1)\biggr) - \mathrm{o}_P(1)- \delta\\
         &=\Phi\left(\frac{\bw^\top\Delta\mu_n(\bY,\bZ)}{\sqrt{2}\sigma_n\|\bw\|_2}\{1-\mathrm{o}_P(1)\}-  \bar\Phi^{-1}(\alpha)\right) - \mathrm{o}_P(1)- \delta\\
         &=\Phi\left(\Phi^{-1}(\alpha)+\frac{\bw^\top\Delta\mu_n(\bY,\bZ)}{\sqrt{2}\sigma_n\|\bw\|_2}\right) - \mathrm{o}_P(1)- \delta,
    \end{align*}
    where the last equality follows by Lemma~\ref{lem:gaussian-CDF-approximation}.
    Since $\delta>0$ was arbitrary, the desired asymptotic lower bound follows. The asymptotic upper bound follows by a very similar argument.
\hfill $\square$

\subsection{Hardness of testing $H_0^{\mathrm{ICI}}$ (proof of Theorem~\ref{lem:hardness_result_hellinger})}
By Proposition~\ref{thm: hardness result general}, it is enough to prove that 
    \[
    \inf_{Q_{X,Y,Z}\in H_0^\textnormal{ICI}}\TV\left(P^n_{X,Y,Z},Q^n_{X,Y,Z}\right)\leq L_\zeta\cdot \ISNR_n.
    \]
    Now consider any $\tilde{\mu}_n \in \mathcal{C}_\ISO$. Similar to \eqref{model:general alternative}, define $\tilde{Q}_{X,Y,Z}$ to be the joint distribution of $(X,Y,Z)$ when 
    \[
    X= \tilde{\mu}_n(Z)+\sigma_n\zeta,~~ (Y,Z) \sim P_{Y,Z},~~\zeta \sim P_\zeta.
    \]
    Now abusing notation, we will write $P_{\bX\mid \bY,\bZ}$ to denote the conditional distribution of $\bX$ given $(\bY,\bZ)$, under the joint distribution $(\bX,\bY,\bZ)\sim P^n_{X,Y,Z}$. Similarly, we will write $Q_{\bX\mid \bY,\bZ}$. Then by properties of Hellinger distance, since the marginal distribution of $(Y,Z)$ is identical under $P_{X,Y,Z}$ as under $Q_{X,Y,Z}$ by construction, we have
    \[\TV(P^n_{X,Y,Z},Q^n_{X,Y,Z}) = \EE{\TV(P_{\bX\mid \bY,\bZ},Q_{\bX\mid \bY,\bZ})} \leq \EE{\mathrm{H}(P_{\bX\mid \bY,\bZ},Q_{\bX\mid \bY,\bZ})},\]
    where the last step holds since total variation distance is bounded by Hellinger distance \citep{le2000asymptotics}.    
   Moreover, by construction, the distribution $P_{\bX\mid\bY,\bZ}$ has density
    \[\prod_{i=1}^n \sigma_n^{-1}f_\zeta\left(\frac{x_i - \mu_n(Y_i,Z_i)}{\sigma_n}\right),\]
    while 
    $Q_{\bX\mid\bY,\bZ}$ has density
    \[\prod_{i=1}^n \sigma_n^{-1}f_\zeta\left(\frac{x_i - \tilde\mu_n(Z_i)}{\sigma_n}\right).\]
Since the squared Hellinger distance is subadditive over product distributions \citep[Appendix C.2]{canonne2020survey}, 
\begin{align*}
    \mathrm{H}^2\left(P_{\bX\mid\bY,\bZ},Q_{\bX\mid\bY,\bZ}\right) 
    &\leq \sum_{i=1}^n \int_\R \left[ \sqrt{\sigma_n^{-1}f_\zeta\left(\frac{x_i - \mu_n(Y_i,Z_i)}{\sigma_n}\right)} - \sqrt{\sigma_n^{-1}f_\zeta\left(\frac{x_i - \tilde\mu_n(Z_i)}{\sigma_n}\right)}\right]^2 \!\!\!\mathsf{d}x\\
    &=\sum_{i=1}^n \int_\R \left[ \sqrt{f_\zeta\left(x_i - \frac{\mu_n(Y_i,Z_i)}{\sigma_n}\right)} - \sqrt{f_\zeta\left(x_i - \frac{\tilde\mu_n(Z_i)}{\sigma_n}\right)}\right]^2 \!\!\!\mathsf{d}x\\
    &\leq \sum_{i=1}^n \left(L_\zeta\cdot \left|\frac{\mu_n(Y_i,Z_i)}{\sigma_n} -  \frac{\tilde\mu_n(Z_i)}{\sigma_n}\right|\right)^2\\
    &=\left(L_\zeta \sigma_n^{-1} \|\mu_n(\bY,\bZ) - \tilde\mu_n(\bZ)\|_2\right)^2,
\end{align*}
where the second step holds by a change of variables in integration, and the third step holds by the condition~\eqref{eqn:assume_hellinger}. Returning to our calculations above, then,
\[\TV(P^n_{X,Y,Z},Q^n_{X,Y,Z})  \leq \EE{L_\zeta \sigma_n^{-1} \|\mu_n(\bY,\bZ) - \tilde\mu_n(\bZ)\|_2}.\]
Since we can choose any $\tilde\mu_n\in \mathcal{C}_\ISO$, then,
\[\inf_{Q_{X,Y,Z}\in H_0^{\textnormal{ICI}}}\TV(P^n_{X,Y,Z},Q^n_{X,Y,Z})
\leq \inf_{\tilde\mu_n\in\mathcal{C}_\ISO}\EE{L_\zeta \sigma_n^{-1} \|\mu_n(\bY,\bZ) - \tilde\mu_n(\bZ)\|_2} = L_\zeta\cdot \ISNR_n,\]
which completes the proof.$\hfill{\square}$

\subsection{Power of oracle and plug-in matching strategies (proof of Theorems~\ref{thm:oracle-matching-asymptote} and \ref{thm:oracle-matching-estimated-mu})}\label{app:power-oracle}

In this section we prove Theorems~\ref{thm:oracle-matching-asymptote} and~\ref{thm:oracle-matching-estimated-mu} using the general upper and lower bounds on the conditional power of the \textnormal{\texttt{PairSwap-ICI}} test from Theorem~\ref{thm:general power analysis}.  

 Before turning to the proofs, we recall that by the discussion in Section~\ref{sec:power_with_estimated_mu}, for the oracle matching, we may take $\bw^* = \Delta\mu_n^+(\bY,\bZ)$.  When we do not have access to oracle knowledge of $\mu_n$, we assume that we are able to learn estimates $\hat\mu_n$ of $\mu_n$ from an independent dataset, independent of $(\bX,\bY,\bZ)$. With these estimates in place, the plug-in weights is simply given by $\hat{\bw}=\Delta\hat\mu_n^+(\bY,\bZ)$.

\subsubsection{Proof of Theorem~\ref{thm:oracle-matching-asymptote}}

Given access to oracle knowledge of $\mu_n$, we observe that $\hat{\mu}_n=\mu_n$ satisfies the assumptions of Theorem~\ref{thm:oracle-matching-estimated-mu}. Hence, Theorem~\ref{thm:oracle-matching-asymptote} follows as an immediate corollary of Theorem~\ref{thm:oracle-matching-estimated-mu}. $\hfill \Box$

\subsubsection{Proof of Theorem~\ref{thm:oracle-matching-estimated-mu}}\label{app:power-oracle-estimated}

Our proof is split into three steps. First we state a key property of the oracle matching~$M^\star$ and oracle weight $\bw^\star$. Then, we use this to deduce similar properties for plug-in matching $\hat{M}$ and the plug-in weights $\hat{\bw}$. These in turn enable us to show that Assumptions~\textbf{A1}, \textbf{A2} and~\textbf{A3} of Corollary~\ref{thm:general-power-asymptotic} are satisfied by plug-in matching, and hence the result follows.

Our notation $\Delta\mu_n(\bY,\bZ)$ and $\Delta\hat{\mu}_n(\bY,\bZ)$ suppresses the dependence of the matching $M$ for which it is computed.  Since we will need both matchings $M^\star$ and $\hat{M}$ in this step, we make the dependence on the matching explicit by writing $\Delta\mu_n(\bY,\bZ;M)$ and $\Delta\hat{\mu}_n(\bY,\bZ;M)$, for any matching $M\in \mathcal{M}_n(\bZ)$.
Throughout this proof, we write 
\begin{align*}
    \textnormal{Err}_2(\hat{\mu}_n,\mu_n) &= \frac{\|\Delta\hat\mu_n(\bY,\bZ;\hat{M})-\Delta\mu_n(\bY,\bZ;\hat{M})\|_2}{\|\Delta\mu_n^+(\bY,\bZ;\hat{M})\|_2}, \\ 
    \textnormal{Err}_\infty(\hat{\mu}_n,\mu_n) &= \frac{\|\Delta\hat\mu_n(\bY,\bZ;\hat{M})-\Delta\mu_n(\bY,\bZ;\hat{M})\|_\infty^4}{\sigma_n^3\|\Delta\mu_n^+(\bY,\bZ;\hat{M})\|_2}.
\end{align*}

\paragraph{Step~1: a key property of oracle matching and its implications.}

We start with a key property of oracle matching, which relates $\|\Delta\mu_n^+(\bY,\bZ;M^\star)\|_2$ to the isotonic SNR: given $(\bY,\bZ)$, we have that
\begin{equation}\label{eq:key-property-oracle-matching}
\widehat{\ISNR}_n\leq \frac{\|\Delta\mu_n^+(\bY,\bZ;M^\star)\|_2}{\sigma_n} = \frac{\sup_{M \in \mathcal{M}_n(\bZ)} \|\Delta\mu_n^+(\bY,\bZ;M)\|_2}{\sigma_n} \leq \sqrt{2}\,\widehat{\ISNR}_n.
\end{equation}
The equality is simply due to the definition of $M^*$ as an optimal matching (note that $\mathcal{M}_n(\bZ)$ is a finite set and so the supremum must be attained at some $M^*\in\mathcal{M}_n(\bZ)$), while the two inequalities follow from Theorem~\ref{thm:key-property-oracle}, proved later in Appendix~\ref{app:IMM}.

Further, we require a very similar result for the plug-in matching.
To this end, we note that for any $M \in \mathcal{M}_n(\bZ)$,
\begin{align}\label{eq:relate-hatmu-mu-at-M}
    \bigl|\|\Delta\hat{\mu}_n^+(\bY,\bZ;M)\|_2-\|\Delta\mu_n^+(\bY,\bZ;M)\|_2\bigr|&\leq \|\Delta\hat{\mu}_n^+(\bY,\bZ;M)-\Delta\mu_n^+(\bY,\bZ;M)\|_2\nonumber\\
    &\leq \|\Delta\hat{\mu}_n(\bY,\bZ;M)-\Delta\mu_n(\bY,\bZ;M)\|_2\nonumber\\
    &\leq \sqrt{2}\|\hat\mu_n(\bY,\bZ)-\mu_n(\bY,\bZ)\|_2 = \mathrm{o}_P(\sigma_n\widehat{\ISNR}_n),
\end{align}
where the final step is from the assumptions of the theorem. 
Further, by definition of the plug-in matching in \eqref{def:plug-in-matching} and by \eqref{eq:relate-hatmu-mu-at-M},  we have that 
\begin{align*}
    \|\Delta\hat{\mu}_n^+(\bY,\bZ;\hat{M})\|_2 \geq \|\Delta\hat{\mu}_n^+(\bY,\bZ;M^\star)\|_2 &\geq \|\Delta\mu_n^+(\bY,\bZ;M^\star)\|_2 - \mathrm{o}_P(\widehat{\sigma_n\ISNR}_n) \\
    &\geq \sigma_n\widehat{\ISNR}_n \bigl(1-\mathrm{o}_P(1)\bigr).
\end{align*}
By a similar argument, then, we can establish 
\begin{align}\label{eq:lower-bound-at-Mhat}
    \sigma_n\widehat{\ISNR}_n \bigl(1-\mathrm{o}_P(1)\bigr)&\leq \|\Delta\hat{\mu}_n^+(\bY,\bZ;\hat{M})\|_2\leq \sqrt{2} \sigma_n\widehat{\ISNR}_n \bigl(1+\mathrm{o}_P(1)\bigr),\nonumber \\
     \sigma_n\widehat{\ISNR}_n \bigl(1-\mathrm{o}_P(1)\bigr)&\leq \|\Delta\mu_n^+(\bY,\bZ;\hat{M})\|_2\leq \sqrt{2} \sigma_n\widehat{\ISNR}_n,
\end{align}
i.e., a property similar to \eqref{eq:key-property-oracle-matching} for plug-in matching with both $\mu_n$, and $\hat{\mu}_n$.
As a result, we deduce that  for plug-in matching,
\begin{align}\label{eqn:oracle_assumption_del_mu_level}
         \max\biggl\{\frac{\sigma_n^3\|\mu_n(\bY,\bZ)\|_\infty\vee \|\mu_n(\bY,\bZ)\|_\infty^4}{\sigma_n^3\|\Delta\mu_n^+(\bY,\bZ;\hat{M})\|_2},\,\textnormal{Err}_2(\hat{\mu}_n,\mu_n),\,\textnormal{Err}_\infty(\hat{\mu}_n,\mu_n)\biggr\}=\mathrm{o}_P(1).
     \end{align}
     
\paragraph{Step~2: establishing Assumptions~\textbf{A1} and~\textbf{A2}.}
Henceforth we work with the matching~$\hat{M}$.  We have by~\eqref{eqn:oracle_assumption_del_mu_level} that
\begin{align*}
       \frac{\|\Delta\hat\mu_n^+(\bY,\bZ)\|_\infty}{\|\Delta\hat\mu_n^+(\bY,\bZ)\|_2}& \leq \frac{\|\Delta\mu_n^+(\bY,\bZ)\|_\infty+\|\Delta\hat\mu_n(\bY,\bZ)-\Delta\mu_n(\bY,\bZ)\|_2}{\|\Delta\mu_n^+(\bY,\bZ)\|_2-\|\Delta\hat\mu_n(\bY,\bZ)-\Delta\mu_n(\bY,\bZ)\|_2}\\
        & \leq \frac{2\bigl(\|\mu_n(\bY,\bZ)\|_\infty/\|\Delta\mu_n^+(\bY,\bZ)\|_2\bigr)+ \textnormal{Err}_2(\hat{\mu}_n,\mu_n)}{1-\textnormal{Err}_2(\hat{\mu}_n,\mu_n)} =\mathrm{o}_P(1),
   \end{align*}
   i.e., Assumption~\textbf{A1} holds for plug-in matching with weights $\hat\bw=\Delta\hat{\mu}_n^+(\bY,\bZ)$. Moreover, \eqref{eqn:oracle_assumption_del_mu_level} also yields that
    \begin{align*}
        &\|\Delta\hat\mu_n^+(\bY,\bZ)\circ\Delta\mu_n(\bY,\bZ)\|_2\\
        &\hspace{1.5cm}\leq \|\Delta\mu_n^+(\bY,\bZ)\circ \Delta\mu_n(\bY,\bZ)\|_2+\bigl\|\bigl(\Delta\hat\mu_n^+(\bY,\bZ) \!-\!\Delta\mu_n^+(\bY,\bZ)\bigr)\!\circ\! \Delta\mu_n(\bY,\bZ) \bigr\|_2\\
        &\hspace{1.5cm}\leq \|\Delta\mu_n(\bY,\bZ)\|_\infty\cdot\|\Delta\mu_n^+(\bY,\bZ)\|_2+\|\Delta\mu_n(\bY,\bZ)\|_\infty\cdot  \|\Delta\hat\mu_n(\bY,\bZ)-\Delta\mu_n(\bY,\bZ)\|_2 \\
        &\hspace{1.5cm}\leq 2\|\mu_n(\bY,\bZ)\|_\infty \|\Delta\mu_n^+(\bY,\bZ)\|_2 \cdot\bigl(1+\textnormal{Err}_2(\hat{\mu}_n,\mu_n)\bigr)=\mathrm{o}_P\bigl(\|\Delta\mu_n^+(\bY,\bZ)\|_2^2\bigr). 
    \end{align*}
    Next, by the Cauchy--Schwarz inequality we have that
    \begin{align*}
        \bigl|\Delta\hat\mu_n^+(\bY,\bZ)^\top\Delta\mu_n(\bY,\bZ) &- \|\Delta\hat{\mu}_n^+(\bY,\bZ)\|_2^2\bigr| \\
        &=\bigl|\bigl(\Delta\hat\mu_n(\bY,\bZ)-\Delta\mu_n(\bY,\bZ)\bigr)^\top\Delta\hat{\mu}_n^+(\bY,\bZ)\bigr|\\
        &\leq \|\Delta\hat{\mu}_n^+(\bY,\bZ)\|_2\cdot \|\Delta\mu_n^+(\bY,\bZ)\|_2\cdot \textnormal{Err}_2(\hat{\mu}_n,\mu_n)\\
        &\leq \|\Delta\mu_n^+(\bY,\bZ)\|^2_2\cdot \left(1 + \textnormal{Err}_2(\hat{\mu}_n,\mu_n)\right)\cdot \textnormal{Err}_2(\hat{\mu}_n,\mu_n).
\end{align*}
Hence, by
~\eqref{eqn:oracle_assumption_del_mu_level}, 
\begin{equation}\label{eqn:behaviour_of_key_term_oracle_unknown}
    \bigl|\Delta\hat\mu_n^+(\bY,\bZ)^\top\Delta\mu_n(\bY,\bZ) -\|\Delta\mu_n^+(\bY,\bZ)\|_2^2\bigr|=\mathrm{o}_P\bigl(\|\Delta\mu_n^+(\bY,\bZ)\|_2^2\bigr),
\end{equation}
meaning that
\[\|\Delta\mu_n^+(\bY,\bZ)\|_2^2 = {\rm O}_P(|\Delta\hat\mu_n^+(\bY,\bZ)^\top\Delta\mu_n(\bY,\bZ)|).\]
Combining these calculations verifies that
\[  \|\Delta\hat\mu_n^+(\bY,\bZ)\circ\Delta\mu_n(\bY,\bZ)\|_2= {\rm o}_P(|\Delta\hat\mu_n^+(\bY,\bZ)^\top\Delta\mu_n(\bY,\bZ)|),\]
or in other words,
\[  \|\hat\bw\circ\Delta\mu_n(\bY,\bZ)\|_2= {\rm o}_P(|\hat\bw^\top\Delta\mu_n(\bY,\bZ)|),\]
 and thus the first condition of Assumption~\textbf{A2} is satisfied.
   Next, we focus on the second condition of the same assumption. We start with writing
    \begin{align*}
        &\|\Delta\hat\mu_n^+(\bY,\bZ)\circ\Delta\mu_n(\bY,\bZ)\|_3\\
        &\leq \|\Delta\mu_n^+(\bY,\bZ)\circ \Delta\mu_n(\bY,\bZ)\|_3+\|\bigl(\Delta\hat\mu_n^+(\bY,\bZ)-\Delta\mu_n^+(\bY,\bZ)\bigr)\circ \Delta\mu_n(\bY,\bZ)\|_3\\
        &\leq \|\Delta\mu_n^+(\bY,\bZ)\circ \Delta\mu_n(\bY,\bZ)\|_3+\|\Delta\hat\mu_n^+(\bY,\bZ)-\Delta\mu_n^+(\bY,\bZ)\|_3\|\Delta\mu_n(\bY,\bZ)\|_\infty.
    \end{align*}
   For the first term in the last expression, by \eqref{eqn:oracle_assumption_del_mu_level} we have that
   \begin{multline*}
    \|\Delta\mu_n^+(\bY,\bZ)\circ \Delta\mu_n(\bY,\bZ)\|_3 = \|\Delta\mu_n^+(\bY,\bZ)\|_6^2\leq  \left(\|\Delta\mu_n^+(\bY,\bZ)\|_2^2\|\Delta\mu_n^+(\bY,\bZ)\|_\infty^4\right)^{1/3}\\\leq  \left(\|\Delta\mu_n^+(\bY,\bZ)\|_2^2(2\|\mu_n(\bY,\bZ)\|_\infty)^4\right)^{1/3}=\mathrm{o}_P(\sigma_n\|\Delta\mu_n^+(\bY,\bZ)\|_2).
   \end{multline*}
   For the second term, again by \eqref{eqn:oracle_assumption_del_mu_level},
   \begin{align*}
       &\|\Delta\hat\mu_n^+(\bY,\bZ)-\Delta\mu_n^+(\bY,\bZ)\|_3\|\Delta\mu_n(\bY,\bZ)\|_\infty\\
       &\leq \|\Delta\hat\mu_n^+(\bY,\bZ)-\Delta\mu_n^+(\bY,\bZ)\|_2^{2/3}\cdot \|\Delta\hat\mu_n^+(\bY,\bZ)-\Delta\mu_n^+(\bY,\bZ)\|_\infty^{1/3}\cdot\|\Delta\mu_n(\bY,\bZ)\|_\infty\\
       &= \sigma_n\|\Delta\mu_n^+(\bY,\bZ)\|_2 \left( \frac{\|\bigl(\Delta\hat\mu_n^+(\bY,\bZ)-\Delta\mu_n^+(\bY,\bZ)\bigr)\|_2^{2/3}}{\|\Delta\mu_n^+(\bY,\bZ)\|_2^{2/3}}\times \frac{\|\bigl(\Delta\hat\mu_n^+(\bY,\bZ)-\Delta\mu_n^+(\bY,\bZ)\bigr)\|_\infty^{1/3}}{\sigma_n^{1/4}\|\Delta\mu_n^+(\bY,\bZ)\|_2^{1/12}}\right.\\
       &\hspace{11.2cm}\left.\times\frac{\|\Delta\mu_n(\bY,\bZ)\|_\infty}{\sigma_n^{3/4}\|\Delta\mu_n^+(\bY,\bZ)\|_2^{1/4}}\right)\\
       &= \sigma_n\|\Delta\mu_n^+(\bY,\bZ)\|_2 \biggl(\textnormal{Err}_2(\hat{\mu}_n,\mu_n)^{2/3}\cdot \textnormal{Err}_\infty(\hat{\mu}_n,\mu_n)^{1/12}\cdot\frac{\|\Delta\mu_n(\bY,\bZ)\|_\infty}{\sigma_n^{3/4}\|\Delta\mu_n^+(\bY,\bZ)\|_2^{1/4}} \biggr)\\
       &=\mathrm{o}_P(\sigma_n\|\Delta\mu_n^+(\bY,\bZ)\|_2).
   \end{align*}
   Finally, we also note that by \eqref{eq:relate-hatmu-mu-at-M} and \eqref{eq:lower-bound-at-Mhat},
   \[
   \|\Delta\hat{\mu}_n^+(\bY,\bZ)\|_2\geq \|\Delta\mu_n^+(\bY,\bZ)\|_2-\mathrm{o}_P(\sigma_n\widehat{\ISNR}_n)\geq  \|\Delta\mu_n^+(\bY,\bZ)\|_2(1-\mathrm{o}_P(1)).
   \]
   Hence, combining all these calculations, $\|\Delta\hat\mu_n^+(\bY,\bZ)\circ\Delta\mu_n(\bY,\bZ)\|_3=\mathrm{o}_P(\sigma_n\|\Delta\hat{\mu}_n^+(\bY,\bZ)\|_2)$, or in other words,  $\|\hat\bw\circ\Delta\mu_n(\bY,\bZ)\|_3=\mathrm{o}_P(\sigma_n\|\hat\bw\|_2)$ as required. Therefore, 
    the second condition of Assumption~\textbf{A2} is also satisfied by plug-in matching with weights $\hat\bw=\Delta\hat{\mu}_n^+(\bY,\bZ)$.
    \paragraph{Step~3: applying Corollary~\ref{thm:general-power-asymptotic}.}
    Now that we have verified Assumptions~\textbf{A1} and~\textbf{A2}, we are ready to apply the corollary.
    By \eqref{eq:lower-bound-at-Mhat} and \eqref{eqn:behaviour_of_key_term_oracle_unknown}, 
    \[
        \left|\frac{\hat\bw^\top \Delta\mu_n(\bY,\bZ)}{\|\hat\bw\|_2} - \|\hat\bw\|_2\right| = \biggl|\frac{\Delta\hat\mu_n^+(\bY,\bZ)^\top\Delta\mu_n(\bY,\bZ)}{\|\Delta \hat\mu_n^+(\bY,\bZ)\|_2}-\|\Delta\mu_n^+(\bY,\bZ)\|_2\biggr| = \mathrm{o}_P(\sigma_n\widehat{\ISNR}_n).
    \]
    Then, by \eqref{eq:key-property-oracle-matching} and \eqref{eq:lower-bound-at-Mhat}, 
    \[
    \widehat{\ISNR}_n \bigl(1-\mathrm{o}_P(1)\bigr)\leq\frac{\hat\bw^\top \Delta\mu_n(\bY,\bZ)}{\sigma_n\|\hat\bw\|_2}\leq \sqrt{2}\cdot\widehat{\ISNR}_n \bigl(1+\mathrm{o}_P(1)\bigr).
    \]
    By Corollary~\ref{thm:general-power-asymptotic}, then,
\begin{align*}
        \Phi\biggl(\Phi^{-1}(\alpha)+\frac{\widehat{\ISNR}_n \bigl(1-\mathrm{o}_P(1)\bigr)}{\sqrt{2}}\biggr)-\mathrm{o}_P(1) &\leq \PPst{p\leq \alpha}{\bY,\bZ} \\
        &\leq \Phi\biggl(\Phi^{-1}(\alpha)+\widehat{\ISNR}_n \bigl(1+\mathrm{o}_P(1)\bigr)\biggr)+\mathrm{o}_P(1).~~
    \end{align*}
    Applying Lemma~\ref{lem:gaussian-CDF-approximation} completes the proof.$\hfill{\Box}$

\subsection{Power of neighbor matching (proof of Theorem \ref{thm:neighbour_matching})}\label{app:power-neighbour}
    Let us define $\pi$ to be any permutation of $[n]$ for which
    \[
    Z_{\pi(1)}\leq Z_{\pi(2)}\leq \cdots \leq Z_{\pi(n)},
    \]
    and note that the set of matched pairs for neighbor matching is given by $M:=\bigl\{\bigl(\pi(2\ell-1),\pi(2\ell)\bigr)\bigr\}_{\ell\in \lfloor n/2 \rfloor}$. By Theorem~\ref{thm:general-power-asymptotic} and Lemma~\ref{lem:gaussian-CDF-approximation}, with $\bw=\Delta\bY^+$, 
    it suffices to prove that 
    \begin{equation}
    \label{Eq:DomTermConcentration}
       \biggl|\frac{{\Delta\bY^+}^\top \Delta \mu_n(\bY,\bZ)}{\sigma_n\|\Delta\bY^+\|_2}- \frac{\sqrt{n/2}\cdot\beta_n}{\sigma_n}\cdot \Bigl\{\EE{\VPst{Y\sim P_{Y\mid Z}}{Y}}\Bigr\}^{1/2} \biggr|=\mathrm{o}_P\Bigl(\frac{\beta_n\sqrt{n}}{\sigma_n}\vee 1\Bigr),
    \end{equation}
    and that Assumptions~\textbf{A1} and \textbf{A2} are satisfied by neighbor matching.
    \paragraph{Step~1: establishing~\eqref{Eq:DomTermConcentration}.}
    Under the partially linear Gaussian model~\eqref{model:partial_linear_model},
    \begin{equation}\label{eq:decomposition_delY_delmu}
        {\Delta\bY^+}^\top\Delta\mu_n(\bY,\bZ)={\Delta\bY^+}^\top\Delta\mu_0(\bZ)+\beta_n \|\Delta\bY^+\|_2^2,
    \end{equation}
    where the first term is non-positive since $\mu_0\in\mathcal{C}_\ISO$.  Moreover, we also have $\bigl|{\Delta\bY^+}^\top\Delta\mu_0(\bZ)\bigr| \leq \|\Delta\bY^+\|_\infty\|\Delta\mu_0(\bZ)\|_1$.  Since $\mu_0\in \mathcal{C}_\ISO$,  we also have that
    \begin{align}\label{eq:bound-on-l1-norm-of-delmu0-neighbour}
    \|\Delta\mu_0(\bZ)\|_1=\sum_{\ell=1}^{\lfloor n/2\rfloor} \bigl(\mu_0(Z_{\pi(2\ell)})-\mu_0(Z_{\pi(2\ell-1)})\bigr)&\leq \sum_{i=2}^n \bigl(\mu_0(Z_{\pi(i)})-\mu_0(Z_{\pi(i-1)})\bigr)\nonumber\\
        &=\mu_0(Z_{\pi(n)})-\mu_0(Z_{\pi(1)})\leq 2\|\mu_0(\bZ)\|_\infty.
    \end{align} 
    Now $\|\Delta\bY^+\|_\infty \leq 2$, and since $\mu_0(Z)$ is sub-Gaussian, we have $\|\mu_0(\bZ)\|_\infty = \mathrm{O}_P(\sqrt{\log n})$.
    Additionally, by Proposition \ref{lem:neighbour_matching_conc}, 
     \begin{equation}\label{eq:deltaY-concen-neighbour}
         \Bigl|\frac{2}{n}\|\Delta\bY^+\|_2^2-\EE{\VPst{Y\sim P_{Y\mid Z}}{Y}}\Bigr|=\mathrm{o}_P(1).
     \end{equation}
      Moreover, $\EE{\VPst{Y\sim P_{Y\mid Z}}{Y}} > 0$, so it now follows that
     \begin{align*}
         \biggl|&\frac{{\Delta\bY^+}^\top \Delta \mu_n(\bY,\bZ)}{\sigma_n\|\Delta\bY^+\|_2}-\frac{\sqrt{n/2}\cdot\beta_n}{\sigma_n}\Bigl\{\EE{\VPst{Y\sim P_{Y\mid Z}}{Y}}\Bigr\}^{1/2}\biggr| \\
         &\leq\biggl|\frac{{\Delta\bY^+}^\top \Delta \mu_n(\bY,\bZ)}{\sigma_n\|\Delta\bY^+\|_2}-\frac{\beta_n}{\sigma_n}\|\Delta\bY^+\|_2\biggr|+\biggl|\frac{\beta_n}{\sigma_n}\|\Delta\bY^+\|_2-\frac{\sqrt{n/2}\cdot\beta_n}{\sigma_n}\Bigl\{\EE{\VPst{Y\sim P_{Y\mid Z}}{Y}}\Bigr\}^{1/2}\biggr|\\
         &\leq 4\,\frac{\|\mu_0(\bZ)\|_\infty}{\sigma_n\|\Delta\bY^+\|_2} + \mathrm{o}_P\Bigl(\frac{\beta_n\sqrt{n}}{\sigma_n}\Bigr) = \mathrm{O}_P\left(\frac{\sqrt{\log n}}{\sigma_n\sqrt{n}}\right) + \mathrm{o}_P\Bigl(\frac{\beta_n\sqrt{n}}{\sigma_n}\Bigr)=\mathrm{o}_P\Bigl(\frac{\beta_n\sqrt{n}}{\sigma_n}\vee 1\Bigr), 
     \end{align*}
     where in the penultimate line,
      since $\|\mu_0(\bZ)\|_\infty = \mathrm{O}_P(\sqrt{\log n})$ and $\sigma_n\gg \sqrt{\frac{\log n}{n}}$, by~\eqref{eq:deltaY-concen-neighbour}, we have that $\|\mu_0(\bZ)\|_\infty/(\sigma_n\|\Delta\bY^+\|_2)=\mathrm{O}_P\bigl(\sqrt{\log n}/(\sigma_n\sqrt{n})\bigr) = \mathrm{o}_P(1)$.
     This establishes \eqref{Eq:DomTermConcentration}.
    \paragraph{Step~2: establishing Assumptions~\textbf{A1} and~\textbf{A2}.} 
    Since $Y$ is bounded, and $\|\Delta\bY^+\|_2=\mathrm{O}_P(\sqrt{n})$ by~\eqref{eq:deltaY-concen-neighbour}, Assumption~\textbf{A1} holds for neighbor matching. Next, turning to Assumption~\textbf{A2}, we
    first observe that by~\eqref{eq:bound-on-l1-norm-of-delmu0-neighbour},
    \begin{align}
        \notag\|\Delta\bY^+\circ \Delta\mu_n(\bY,\bZ)\|_2 & \leq \|\Delta\bY^+\circ \Delta\mu_0(\bZ)\|_2+\beta_n \|\Delta\bY^+\|_4^2\\
        \notag&\leq \|\Delta\bY^+\|_\infty \|\Delta\mu_0(\bZ)\|_1 + \beta_n \|\Delta\bY^+\|_\infty \|\Delta\bY^+\|_2\\
        \label{eq:step2_firstbound_neighb}&\leq 4\|\mu_0(\bZ)\|_\infty+2\beta_n \|\Delta\bY^+\|_2.
    \end{align}
    Also, by \eqref{Eq:DomTermConcentration}, it follows that
    \begin{align*}
        \frac{\|\Delta\bY^+\circ \Delta\mu_n(\bY,\bZ)\|_2}{{\Delta\bY^+}^\top\Delta\mu_n(\bY,\bZ)}&\leq \frac{4\|\mu_0(\bZ)\|_\infty+2\beta_n \|\Delta\bY^+\|_2}{\|\Delta\bY^+\|_2\beta_n\sqrt{n/2}\Bigl\{\EE{\VPst{Y\sim P_{Y\mid Z}}{Y}}\Bigr\}^{1/2} \!-\! \mathrm{o}_P(\beta_n\sqrt{n}\|\Delta\bY^+\|_2)}\\
        &=\frac{\frac{4\|\mu_0(\bZ)\|_\infty}{\sigma_n\|\Delta\bY^+\|_2}+\frac{2\beta_n}{\sigma_n} }{\frac{\sqrt{n/2}\cdot\beta_n}{\sigma_n}\Bigl\{\EE{\VPst{Y\sim P_{Y\mid Z}}{Y}}\Bigr\}^{1/2} \!-\! \mathrm{o}_P\Bigl(\frac{\beta_n\sqrt{n}}{\sigma_n}\Bigr)}.
    \end{align*}
    As before, $\|\mu_0(\bZ)\|_\infty/(\sigma_n\|\Delta\bY^+\|_2) = \mathrm{o}_P(1)$. Now  consider first the case where $\beta_n > \sigma_n$.  Recalling that$\mathbb{E}\bigl[\VPst{Y\sim P_{Y\mid Z}}{Y}\bigr] > 0$ (by assumption), we obtain
    \[
    \frac{\|\Delta\bY^+\circ \Delta\mu_n(\bY,\bZ)\|_2}{{\Delta\bY^+}^\top\Delta\mu_n(\bY,\bZ)}=\mathrm{o}_P(1).
    \]
    Otherwise, if $\beta_n \leq \sigma_n$, then by~\eqref{eq:step2_firstbound_neighb}, 
    \[
    \frac{\|\Delta\bY^+\circ \Delta\mu_n(\bY,\bZ)\|_2}{\sigma_n \|\Delta\bY^+\|_2}\leq \frac{4\|\mu_0(\bZ)\|_\infty}{\sigma_n \|\Delta\bY^+\|_2} +\frac{2\beta_n}{\sigma_n}=\mathrm{o}_P(1).
    \]
    Hence, the first condition of Assumption~\textbf{A2} is satisfied under neighbor matching. For the second condition of Assumption~\textbf{A2}, we have immediately in the case $\beta_n \leq \sigma_n$ that 
    \[
    \frac{\|\Delta\bY^+\circ \Delta\mu_n(\bY,\bZ)\|_3}{\sigma_n \|\Delta\bY^+\|_2} \leq \frac{\|\Delta\bY^+\circ \Delta\mu_n(\bY,\bZ)\|_2}{\sigma_n \|\Delta\bY^+\|_2}=\mathrm{o}_P(1). 
    \]
    Finally, in the case $\beta_n > \sigma_n$, we have $\|\mu_0(\bZ)\|_\infty/\|\Delta\bY^+\|_2 =\mathrm{o}_P(\beta_n)$. Moreover, by the same argument as in~\eqref{eq:step2_firstbound_neighb}, and noting that $\|a\|_4\ge d^{-1/4}\,\|a\|_2$ for any $a\in \R^d$ by Cauchy--Schwarz, we obtain 
    \begin{align}
        \notag\|\Delta\bY^+\circ \Delta\mu_n(\bY,\bZ)\|_2 & \geq \beta_n \|\Delta\bY^+\|_4^2-\|\Delta\bY^+\circ \Delta\mu_0(\bZ)\|_2\\
        \notag&\ge \beta_n n^{-1/2} \|\Delta\bY^+\|_2^2-4 \|\Delta\mu_0(\bZ)\|_\infty \\
        &\geq  \beta_n\|\Delta\bY^+\|_2 \bigl( n^{-1/2} \|\Delta\bY^+\|_2-\mathrm{o}_P(1)\bigr).
    \end{align}
    Next, note that 
    \begin{align*}
        \|\Delta\bY^+\circ \Delta\mu_n(\bY,\bZ)\|_\infty\leq \beta_n \|\Delta\bY^+\|_\infty^2+\|\Delta\bY^+\circ \Delta\mu_0(\bZ)\|_\infty 
        &\leq 4\bigl(\beta_n+ \|\mu_0(\bZ)\|_\infty\bigr)\\
        & \leq 4\beta_n\bigl(1+ \|\Delta\bY^+\|_2\,\mathrm{o}_P(1)\bigr),
    \end{align*}
    and also that
    \[
    \|\Delta\bY^+\circ \Delta\mu_n(\bY,\bZ)\|_3\leq \|\Delta\bY^+\circ \Delta\mu_n(\bY,\bZ)\|_\infty^{1/3}\cdot \|\Delta\bY^+\circ \Delta\mu_n(\bY,\bZ)\|_2^{2/3}.
    \]
    Thus, by~\eqref{eq:deltaY-concen-neighbour} and~\eqref{eq:step2_firstbound_neighb}, if $\beta_n > \sigma_n$, then   
   \begin{align*}
    \frac{\|\Delta\bY^+\circ \Delta\mu_n(\bY,\bZ)\|_3}{\|\Delta\bY^+\circ \Delta\mu_n(\bY,\bZ)\|_2}&\le \frac{\|\Delta\bY^+\circ \Delta\mu_n(\bY,\bZ)\|_\infty^{1/3}}{\|\Delta\bY^+\circ \Delta\mu_n(\bY,\bZ)\|_2^{1/3}} \\&\le \frac{4^{1/3}\bigl( 1+\|\Delta\bY^+\|_2\,\mathrm{o}_P(1)\bigr)^{1/3}}{\|\Delta\bY^+\|_2^{1/3} \bigl(n^{-1/2} \|\Delta\bY^+\|_2-\mathrm{o}_P(1)\bigr)^{1/3}} \\
    &\leq \frac{4^{1/3} + 4^{1/3}\|\Delta\bY^+\|_2^{1/3}\,\mathrm{o}_P(1)}{\|\Delta\bY^+\|_2^{1/3}\bigl\{\mathbb{E}\bigl[\mathrm{Var}_{Y \sim P_{Y|Z}}(Y)\bigr]/\sqrt{2} -\mathrm{o}_P(1)\bigr\}^{1/3}}
    = \mathrm{o}_P(1).
    \end{align*}
    Therefore, combining both cases, the second condition of Assumption~\textbf{A2} is also satisfied by neighbor matching. This completes the proof. $\hfill{\square}$

\subsection{Power of cross-bin matching (proof of Theorem~\ref{thm:cross_bin_matching})}\label{app:power-cross-bin}
In this section, we study the power of the \textnormal{\texttt{PairSwap-ICI}} test for cross-bin matching. 
\begin{definition}
    For any probability distribution $P$ on $\R$, we define the deviance of $P$ as 
    \begin{equation}\label{defn:deviance}
    \textnormal{Dev}(P):=\Ep{q\sim \textnormal{Unif}(0,1)} {\bigl(F^{-1}(1-q)-F^{-1}(q)\bigr)^2},
\end{equation}
where $F^{-1}$ is the quantile function of $P$.
\end{definition}
We remark that deviance is a general measure of deviation: in fact, for any distribution, $\textnormal{Dev}(P)\geq 2\,\Vp{X\sim P}{X}$, as we will prove formally in Lemma~\ref{lem:deviance-and-variation}.

With this definition in place, we first state a general concentration result for the conditional power of cross-bin matching in the following theorem. 

\begin{theorem}\label{thm:cross_bin_matching_general}
     Under the same setting and assumptions as the first part of Theorem \ref{thm:cross_bin_matching}, the conditional power of cross-bin matching (\cref{alg:cross-bin-matching}) satisfies
    \[
   \PPst{p\leq\alpha}{\bY,\bZ}=\Phi\biggl(\Phi^{-1}(\alpha)+\frac{\sqrt{n}\beta_n}{2\sigma_n}\cdot\bigl(\EE{\textnormal{Dev}(P_{Y\mid Z})}\bigr)^{1/2}\biggr)+\mathrm{o}_P(1). 
    \]
\end{theorem}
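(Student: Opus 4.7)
The plan is to invoke Corollary~\ref{cor:general-power-asymptotic} with the cross-bin matching $M$ and the weights $\bw = \Delta\bY^+$, reducing the task to (a) computing the probability limit of $R_n := \bw^\top\Delta\mu_n(\bY,\bZ)/(\sqrt{2}\sigma_n\|\bw\|_2)$, and (b) verifying Assumptions~A1, A2. Substituting the partial linear form $\Delta\mu_n(\bY,\bZ) = \Delta\mu_0(\bZ) + \beta_n\Delta\bY$ gives $\bw^\top\Delta\mu_n(\bY,\bZ) = (\Delta\bY^+)^\top \Delta\mu_0(\bZ) + \beta_n\|\Delta\bY^+\|_2^2$. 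The cross-term is nonpositive by isotonicity of $\mu_0$ and is bounded in absolute value by $2\|\Delta\mu_0(\bZ)\|_1$. Since each pair in cross-bin matching lies in two adjacent bins, $\|\Delta\mu_0(\bZ)\|_1$ decomposes as a sum over bin pairs of at most $\lfloor m/2\rfloor \cdot \bigl(\mu_0(Z_{\pi((k+1)m)}) - \mu_0(Z_{\pi((k-1)m+1)})\bigr)$; splitting over even and odd $k$ yields two separate telescoping bounds, each $\leq 2\|\mu_0(\bZ)\|_\infty$, so $\|\Delta\mu_0(\bZ)\|_1 \leq 2m\|\mu_0(\bZ)\|_\infty = \mathrm{O}_P(m\sqrt{\log n})$ by sub-Gaussianity of $\mu_0(Z)$. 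Under the scaling $K_n \gg \sigma_n^{-1}\sqrt{n\log n}$, i.e.~$m \ll \sigma_n\sqrt{n/\log n}$, this is $\mathrm{o}_P(\beta_n n)$, which (as shown below) is the order of $\beta_n\|\Delta\bY^+\|_2^2$. Hence $R_n = \frac{\beta_n\|\Delta\bY^+\|_2}{\sqrt{2}\sigma_n}\bigl(1+\mathrm{o}_P(1)\bigr)$.

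The main obstacle is the concentration result
\[
\frac{2}{n}\|\Delta\bY^+\|_2^2 \xrightarrow{P} \EE{\textnormal{Dev}(P_{Y\mid Z})},
\]
from which the deviance emerges naturally. Cross-bin matching pairs $Y_{r_{k,m+1-s}}$ in bin $k$ with $Y_{r_{k+1,s}}$ in bin $k+1$. If the $Y$-values in both bins were i.i.d. from a single representative conditional distribution $P_{Y\mid Z^*_k}$ (with $Z^*_k$ some point in bin $k$), uniform convergence of sample quantiles would give
\[
\sum_{s=1}^{\lfloor m/2\rfloor}\bigl(Y_{r_{k,m+1-s}} - Y_{r_{k+1,s}}\bigr)_+^2 \approx m\int_0^{1/2}\bigl(F_{Y\mid Z^*_k}^{-1}(1-q) - F_{Y\mid Z^*_k}^{-1}(q)\bigr)^2\,\mathsf{d}q = \tfrac{m}{2}\textnormal{Dev}(P_{Y\mid Z^*_k}),
\]
the last equality using symmetry of the integrand about $q = 1/2$. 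Summing over the $K-1$ adjacent bin pairs and using $(K-1)m \approx n$, a weak law of large numbers over the i.i.d.~representatives $Z^*_k \sim P_Z$ delivers $\frac{2}{n}\|\Delta\bY^+\|_2^2 \to \EE{\textnormal{Dev}(P_{Y\mid Z})}$.

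The technically delicate step is justifying this i.i.d.~idealisation: in the actual process the $Y$-values in bin $k$ are independent but drawn from slightly different conditionals $P_{Y\mid Z_{\pi(i)}}$. To handle this I would apply the Wasserstein coupling lemma (Lemma~\ref{lem:chatterjee_lemma_implies_dw}), whose use is flagged after the theorem statement: coupling each $Y_i$ in bin $k$ to an i.i.d.~surrogate from $P_{Y\mid Z^*_k}$, the rearrangement inequality bounds the error in sorted-$L^2$ distance by $\sum_i \mathrm{d}_{\rm W}\bigl(P_{Y\mid Z_{\pi(i)}}, P_{Y\mid Z^*_k}\bigr)$; summing over bins and using bounded-differences concentration (with Lusin's theorem circumventing smoothness assumptions on $P_{Y\mid Z}$) shows these Wasserstein distances vanish in expectation as $m/n \to 0$. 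Plugging in, $\|\Delta\bY^+\|_2 = \sqrt{n/2}\bigl(\EE{\textnormal{Dev}(P_{Y\mid Z})}\bigr)^{1/2}\bigl(1+\mathrm{o}_P(1)\bigr)$, so $R_n \xrightarrow{P} \frac{\sqrt{n}\beta_n}{2\sigma_n}\bigl(\EE{\textnormal{Dev}(P_{Y\mid Z})}\bigr)^{1/2}$. Finally, A1 is immediate since $\|\bw\|_\infty \leq 2$ and $\|\bw\|_2 = \Theta_P(\sqrt{n})$; A2 follows from decomposing $\bw\circ\Delta\mu_n = \bw\circ\Delta\mu_0(\bZ) + \beta_n\,\bw\circ\Delta\bY$ and using $\|\Delta\mu_0(\bZ)\|_2 \leq \|\Delta\mu_0(\bZ)\|_1^{1/2}\|\Delta\mu_0(\bZ)\|_\infty^{1/2} = \mathrm{O}_P(\sqrt{m\log n})$, together with $\beta_n \lesssim \sigma_n$ and $\sqrt{\log n/n} \ll \sigma_n$, to check both $\ell_2$ and $\ell_3$ norms are $\mathrm{o}_P(\sigma_n\|\bw\|_2) = \mathrm{o}_P(\sigma_n\sqrt{n})$. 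An application of Lemma~\ref{lem:gaussian-CDF-approximation} pushes the $\mathrm{o}_P(1)$ inside $\Phi$ to complete the argument.
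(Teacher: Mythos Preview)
Your approach is essentially the paper's: apply Corollary~\ref{cor:general-power-asymptotic} with $\bw=\Delta\bY^+$, establish the concentration $\frac{2}{n}\|\Delta\bY^+\|_2^2 \to \EE{\textnormal{Dev}(P_{Y\mid Z})}$ (this is Proposition~\ref{lem:cross_bin_lemma}, proved via precisely the bounded-differences plus Wasserstein-coupling steps you sketch), bound the cross-term $(\Delta\bY^+)^\top\Delta\mu_0(\bZ)$ via the same bin-telescoping argument, and verify A1, A2.

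There is one slip in your A2 check. You assert that $\|\bw\circ\Delta\mu_n(\bY,\bZ)\|_2 = \mathrm{o}_P(\sigma_n\sqrt{n})$, but the contribution $\beta_n\|\bw\circ\Delta\bY\|_2 \asymp \beta_n\sqrt{n}$ is in general only $\mathrm{O}_P(\sigma_n\sqrt{n})$: the hypothesis is $\beta_n \lesssim \sigma_n$, not $\beta_n = \mathrm{o}(\sigma_n)$, so $\beta_n/\sigma_n$ need not vanish. The first condition of Assumption~A2 asks for $\|\bw\circ\Delta\mu_n\|_2 = \mathrm{o}_P\bigl(|\bw^\top\Delta\mu_n| \vee \sigma_n\|\bw\|_2\bigr)$, and the paper verifies it via the \emph{other} side of the maximum: since $|\bw^\top\Delta\mu_n(\bY,\bZ)| \asymp \beta_n n$ (by your own Step~(a) computation) while $\|\bw\circ\Delta\mu_n(\bY,\bZ)\|_2 = \mathrm{O}_P\bigl(\sqrt{m\log n} + \beta_n\sqrt{n}\bigr) = \mathrm{o}_P(\beta_n n)$, the condition follows. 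Your claim for the $\ell_3$ norm is correct as stated, since there the interpolation $\|\cdot\|_3 \leq \|\cdot\|_\infty^{1/3}\|\cdot\|_2^{2/3}$ genuinely yields $\mathrm{o}_P(\sigma_n\sqrt{n})$.
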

This result will allow us to immediately verify Theorem \ref{thm:cross_bin_matching} (see below), but will also be useful for proving additional results later on.

\subsubsection{Proof of Theorem \ref{thm:cross_bin_matching}} \label{app:power-cross-bin-mainpaper}
By Lemma~\ref{lem:deviance-and-variation}, for any distribution $P$ on $\R$, it holds that $\textnormal{Dev}(P) \geq 2\VPst{X\sim P}{X}$. Therefore, $\bigl(\EE{\textnormal{Dev}(P_{Y\mid Z})}\bigr)^{1/2}\geq \sqrt{2}\,\bigl(\mathbb{E}\bigl[\VPst{Y\sim P_{Y\mid Z}}{Y}\bigr]\bigr)^{1/2}$, and so the first part of Theorem~\ref{thm:cross_bin_matching} follows as an immediate corollary of Theorem~\ref{thm:cross_bin_matching_general}. 

For the second part, we note that by Lemma~\ref{lem:deviance-and-variation}, for any distribution $P$ on $\R$ that is symmetric around its mean, it holds that $\textnormal{Dev}(P) = 4\VPst{X\sim P}{X}$. Applying this calculation to $P_{Y|Z}$, then, we have $\EE{\textnormal{Dev}(P_{Y|Z})} = 4\mathbb{E}\bigl[\VPst{Y\sim P_{Y\mid Z}}{Y}\bigr]$. Thus, the second part follows from Theorem~\ref{thm:cross_bin_matching_general}.
$\hfill{\square}$

\subsubsection{Proof of Theorem~\ref{thm:cross_bin_matching_general}}
    The proof for cross-bin matching follows the same key steps as in the proof for neighbor matching in Appendix~\ref{app:power-neighbour}. 
    We first show that
   \begin{equation}\label{eq:dom-term-conc-cross-bin}
       \biggl|\frac{{\Delta\bY^+}^\top \Delta \mu_n(\bY,\bZ)}{\sigma_n\|\Delta\bY^+\|_2}-\frac{\sqrt{n/2}\cdot\beta_n}{\sigma_n}\cdot \bigl\{\EE{\textnormal{Dev}(P_{Y\mid Z})}\bigr\}^{1/2} \biggr|=\mathrm{o}_P\Bigl(\frac{\beta_n\sqrt{n}}{\sigma_n}\vee 1\Bigr)
   \end{equation}
    and then establish that Assumptions~\textbf{A1} and~\textbf{A2} are satisfied by cross-bin matching.  Finally, we will apply Corollary~\ref{thm:general-power-asymptotic} with $\bw=\Delta \bY^+$ to conclude the proof.
    
    \paragraph{Step~1: establishing~\eqref{eq:dom-term-conc-cross-bin}.}
    Recall the decomposition~\eqref{eq:decomposition_delY_delmu}.  Since $\mu_0\in\mathcal{C}_\ISO$, the first term is non-positive, and we also have that $\bigl|{\Delta\bY^+}^\top\Delta\mu_0(\bZ)\bigr| \leq \|\Delta\bY^+\|_\infty\|\Delta\mu_0(\bZ)\|_1$. Next, recalling the notation from \cref{alg:cross-bin-matching}, and using the fact that $\mu_0\in\mathcal{C}_\ISO$, we have that 
   \begin{align}\label{eq:1-by-infty-crossbin}
       \|\Delta\mu_0(\bZ)\|_1&\leq \sum_{k=1}^{K_n-1}\sum_{s=1}^{\lfloor m/2\rfloor} \bigl(\mu_0(Z_{r_{k+1,s}})-\mu_0(Z_{r_{k,m+1-s}})\bigr)\nonumber\\
       &\leq \bigg\lfloor \frac{m}{2} \bigg\rfloor\sum_{k=1}^{K_n-1} \bigl(\mu_0(Z_{\pi((k+1)m)})-\mu_0(Z_{\pi((k-1)m+1)})\bigr)\nonumber\\
       &\leq \frac{n}{K_n} \bigl(\mu_0(Z_{\pi(n)})-\mu_0(Z_{\pi(1)})\bigr)
       \leq \frac{2n}{K_n} \|\mu_0(\bZ)\|_\infty.
   \end{align}
   Since $\mu_0(Z)$ is sub-Gaussian, $\|\mu_0(\bZ)\|_\infty=\mathrm{O}_P(\sqrt{\log n})$. Also, since $Y$ is bounded and $K_n\gg \sqrt{n\log n}/\sigma_n$, we have that 
   \[
   \bigl|{\Delta\bY^+}^\top\Delta\mu_0(\bZ)\bigr|=\mathrm{o}_P(\sigma_n\sqrt{n}).
   \]
    Moreover by Proposition \ref{lem:cross_bin_lemma},  
    \begin{equation}\label{eqn:concentration of cbin matching term}
        \frac{2}{n}\|\Delta\bY^+\|_2^2 \overset{P}{\to} \EE{\textnormal{Dev}(P_{Y\mid Z})}.
     \end{equation}
     Therefore, we finally have that
     \begin{align*}
         \biggl|&\frac{{\Delta\bY^+}^\top \Delta \mu_n(\bY,\bZ)}{\sigma_n\|\Delta\bY^+\|_2}-\frac{\sqrt{n/2}\cdot\beta_n}{\sigma_n}\cdot \bigl\{\EE{\textnormal{Dev}(P_{Y\mid Z})}\bigr\}^{1/2}\biggr|\\
         &\leq\biggl|\frac{{\Delta\bY^+}^\top \Delta \mu_n(\bY,\bZ)}{\sigma_n\|\Delta\bY^+\|_2}-\frac{\beta_n}{\sigma_n}\|\Delta\bY^+\|_2\biggr|+\biggl|\frac{\beta_n}{\sigma_n}\|\Delta\bY^+\|_2-\frac{\sqrt{n/2}\cdot\beta_n}{\sigma_n}\cdot \bigl\{\EE{\textnormal{Dev}(P_{Y\mid Z})}\bigr\}^{1/2}\biggr|\\
        &=\biggl|\frac{{\Delta\bY^+}^\top \Delta \mu_0(\bZ)}{\sigma_n\|\Delta\bY^+\|_2}\biggr|+\biggl|\frac{\beta_n}{\sigma_n}\|\Delta\bY^+\|_2-\frac{\sqrt{n/2}\cdot\beta_n}{\sigma_n}\cdot \bigl\{\EE{\textnormal{Dev}(P_{Y\mid Z})}\bigr\}^{1/2}\biggr|\\
         &= \frac{{\rm o}_P(\sigma_n\sqrt{n})}{\sigma_n\|\Delta\bY^+\|_2} + \mathrm{o}_P\Bigl(\frac{\beta_n\sqrt{n}}{\sigma_n}\Bigr)=\mathrm{o}_P\Bigl(\frac{\beta_n\sqrt{n}}{\sigma_n}\vee 1\Bigr).
     \end{align*}
     This establishes \eqref{eq:dom-term-conc-cross-bin}.
    \paragraph{Step~2: establishing Assumptions~\textbf{A1} and \textbf{A2}.}
    Since $Y$ is bounded, we have by~\eqref{eqn:concentration of cbin matching term} that Assumption~\textbf{A1} holds for the weights $\bw = \Delta\bY^+$ of cross-bin matching. Next, turning to Assumption~\textbf{A2}, we
    first observe that by \eqref{eq:1-by-infty-crossbin},
    \begin{align}\label{eq:cross_bin_DelY_Delmu_0_bd}
        \|\Delta\bY^+\circ \Delta\mu_n(\bY,\bZ)\|_2&\leq \|\Delta\bY^+\circ \Delta\mu_0(\bZ)\|_2+\beta_n\|\Delta\bY^+\circ \Delta\bY\|_2\notag\\
        &\leq \|\Delta\bY\|_\infty\bigl(\|\Delta\mu_0(\bZ)\|_2 +\beta_n\|\Delta\bY^+\|_2\bigr)\notag\\
        &\leq 2\bigl(\|\Delta\mu_0(\bZ)\|_1^{1/2}\|\Delta\mu_0(\bZ)\|_\infty^{1/2}+\beta_n\|\Delta\bY^+\|_2\bigr)\notag\\
        &\leq 2\biggl(\Bigl(\frac{4n}{K_n}\Bigr)^{1/2}\|\mu_0(\bZ)\|_\infty+\beta_n\|\Delta\bY^+\|_2\biggr)\notag\\
        &=\mathrm{O}_P\biggl(\sqrt{\frac{n \log n}{K_n}} + \beta_n\sqrt{n}\biggr) = \mathrm{o}_P(\sqrt{\sigma_n}\sqrt[4]{n\log n}) + \mathrm{O}_P(\beta_n\sqrt{n}).
    \end{align}
    Hence, in combination with \eqref{eq:dom-term-conc-cross-bin}, it follows that if $\liminf_{n\to\infty} \beta_n/\sigma_n>0$, then
    \begin{align*}
        \frac{\|\Delta\bY^+\circ \Delta\mu_n(\bY,\bZ)\|_2}{{\Delta\bY^+}^\top\Delta\mu_n(\bY,\bZ)}&\leq \frac{\mathrm{o}_P(\sigma_n^{-1/2}(n\log n)^{1/4})+\mathrm{O}_P\bigl( \frac{\beta_n\sqrt{n}}{\sigma_n}\bigr)}{\frac{\beta_n\sqrt{n/2}}{\sigma_n}\|\Delta\bY^+\|_2\bigl\{\EE{\textnormal{Dev}(P_{Y\mid Z})}\bigr\}^{1/2} - \|\Delta\bY^+\|_2\mathrm{o}_P\bigl( \frac{\beta_n\sqrt{n}}{\sigma_n}\bigr)} \\
        &= \frac{\mathrm{o}_P\bigl(\sigma_n^{-1/2}(n\log n)^{1/4}\bigr)+\mathrm{O}_P\bigl( \frac{\beta_n\sqrt{n}}{\sigma_n}\bigr)}{\mathrm{O}_P\bigl( \frac{\beta_n\,n}{\sigma_n}\bigr)} \\
        &= \mathrm{o}_P(\sigma_n^{-1/2}n^{-3/4} \log^{1/4} n) + \mathrm{O}_P(n^{-1/2}) = \mathrm{o}_P(1),
    \end{align*}
    since $\sigma_n\gg \sqrt{\frac{\log n}{n}}$, and $\EE{\textnormal{Dev}(P_{Y\mid Z})} > 0$. Otherwise, if $\beta_n\ll\sigma_n$, then
    \[
    \frac{\|\Delta\bY^+\circ \Delta\mu_n(\bY,\bZ)\|_2}{\sigma_n\|\Delta\bY^+\|_2}=\mathrm{o}_P\bigl(\sigma_n^{-1/2}n^{-1/4} \log^{1/4} n\bigr)+\mathrm{O}_P(\beta_n/\sigma_n)=\mathrm{o}_P(1).
    \]
    Thus, the first condition of Assumption~\textbf{A2} is satisfied by cross-bin matching. Moreover, if $\beta_n\ll\sigma_n$, then
    \[
    \frac{\|\Delta\bY^+\circ \Delta\mu_n(\bY,\bZ)\|_3}{\sigma_n\|\Delta\bY^+\|_2} \leq \frac{\|\Delta\bY^+\circ \Delta\mu_n(\bY,\bZ)\|_2}{\sigma_n\|\Delta\bY^+\|_2} =\mathrm{o}_P(1).
    \]
    Next, suppose $\liminf_{n\to\infty} \beta_n/\sigma_n>0$. We note that 
    \begin{align*}
        \|\Delta\bY^+\circ \Delta\mu_n(\bY,\bZ)\|_\infty&\leq \|\Delta\bY^+\circ \Delta\mu_0(\bZ)\|_\infty +\beta_n \|\Delta\bY^+\|_\infty^2\\
        &\leq 4\bigl( \|\mu_0(\bZ)\|_\infty+\beta_n\bigr) = \mathrm{O}_P(\sqrt{\log n}+\beta_n),
    \end{align*}
    and that 
    \[
    \|\Delta\bY^+\circ \Delta\mu_n(\bY,\bZ)\|_3 
    \leq \|\Delta\bY^+\circ \Delta\mu_n(\bY,\bZ)\|_\infty^{1/3}\,\|\Delta\bY^+\circ \Delta\mu_n(\bY,\bZ)\|_2^{2/3}.
    \]
    Since $\sigma_n\gg \sqrt{\frac{\log n}{n}}$, by the same argument as in~\eqref{eq:cross_bin_DelY_Delmu_0_bd}, we further have that
    \begin{align*}
        \|\Delta\bY^+\circ \Delta\mu_n(\bY,\bZ)\|_2&\geq \beta_n\|\Delta\bY^+\|_4^2-\|\Delta\bY^+\circ \Delta\mu_0(\bZ)\|_2\\
        &\ge \beta_n n^{-1/2} \|\Delta\bY^+\|_2^2-
        \mathrm{o}_P(\sqrt{\sigma_n}\sqrt[4]{n\log n})\\
        &\ge \beta_n n^{-1/2} \|\Delta\bY^+\|_2^2-
        \mathrm{o}_P(\sigma_n n^{1/2})\\
        &\ge \beta_n n^{-1/2} \|\Delta\bY^+\|_2^2\{1 - \mathrm{o}_P(1)\}.
    \end{align*}
    Hence, it holds that 
    \begin{align*}
    \frac{\|\Delta\bY^+\circ \Delta\mu_n(\bY,\bZ)\|_3}{\|\Delta\bY^+\circ \Delta\mu_n(\bY,\bZ)\|_2}&\le \frac{\|\Delta\bY^+\circ \Delta\mu_n(\bY,\bZ)\|_\infty^{1/3}}{\|\Delta\bY^+\circ \Delta\mu_n(\bY,\bZ)\|_2^{1/3}} \\&\le \frac{ \mathrm{O}_P\bigl(\log^{1/6} n +\beta_n^{1/3}\bigr)}{\beta_n^{1/3} n^{-1/6} \|\Delta\bY^+\|_2^{2/3}\{1 - \mathrm{o}_P(1)\}^{1/3}}
    = \mathrm{o}_P(1).
    \end{align*}
    Therefore, the second condition of Assumption~\textbf{A2} is also satisfied by cross-bin matching. 
    
    \paragraph{Step~3: completing the proof}

    By Corollary~\ref{thm:general-power-asymptotic},~\eqref{eq:dom-term-conc-cross-bin}, and Lemma~\ref{lem:gaussian-CDF-approximation}, we have
    \begin{align*}
&\biggl|\PPst{p\leq\alpha}{\bY,\bZ}-\Phi\biggl(\frac{\sqrt{n}\beta_n}{2\sigma_n}\cdot\bigl(\EE{\textnormal{Dev}(P_{Y\mid Z})}\bigr)^{1/2}-\bar\Phi^{-1}(\alpha)\biggr)\biggr| \\&\leq \biggl|\PPst{p\leq\alpha}{\bY,\bZ}- \Phi\biggl(\frac{{\Delta\bY^+}^\top \Delta \mu_n(\bY,\bZ)}{\sqrt{2}\sigma_n\|\Delta\bY^+\|_2}-\bar\Phi^{-1}(\alpha)\biggr)\biggr|
\\
&\hspace{0.4cm}+ \biggl|\Phi\biggl(\frac{{\Delta\bY^+}^\top \Delta \mu_n(\bY,\bZ)}{\sqrt{2}\sigma_n\|\Delta\bY^+\|_2}-\bar\Phi^{-1}(\alpha)\biggr) - \Phi\biggl(\frac{\sqrt{n}\beta_n}{2\sigma_n}\cdot\bigl(\EE{\textnormal{Dev}(P_{Y\mid Z})}\bigr)^{1/2}-\bar\Phi^{-1}(\alpha)\biggr)\biggr| \\
&= \mathrm{o}_P(1),
    \end{align*}
 as required. $\hfill{\square}$

\subsection{Proof of propositions, corollaries and lemmas from Section~\ref{sec:power}}\label{app:lemma-from-section-power}
\subsubsection{Proof of Proposition~\ref{thm: hardness result general}}
Fix a distribution $Q_{X,Y,Z}\in H_0^\textnormal{ICI}$. For any test function 
\[
\phi: (\Xcal\times\Ycal\times\Zcal)^n \to [0,1]\textnormal{ such that }\sup_{P\in H_0^{\textnormal{ICI}}}\Ep{P}{\phi(\bX,\bY,\bZ)} \leq \alpha,
\]
we have that
\begin{align*}
    \Ep{P_{X,Y,Z}}{\phi(\bX,\bY,\bZ)} &\leq \Ep{Q_{X,Y,Z}}{\phi(\bX,\bY,\bZ)} + \TV\left(P^n_{X,Y,Z},Q^n_{X,Y,Z}\right)\\
    & \leq \alpha + \TV\left(P^n_{X,Y,Z},Q^n_{X,Y,Z}\right).
\end{align*}
The result now follows since the last inequality holds for any $Q_{X,Y,Z}\in H_0^\textnormal{ICI}$.

\subsubsection{Proof of Lemma~\ref{lemma:beta-phase-transition}}
    Under the model class~\eqref{model:partial_linear_model}, since $\EE{Y}=0$, we have that
    \begin{align*}
        \sigma_n\ISNR_n 
        &\overset{(1)}{\leq} \Ep{P_{Y,Z}}{\|\mu_n(\bY,\bZ)-\mu_0(\bZ)\|_2} =\Ep{P_{Y,Z}}{\|\beta_n \bY\|_2}\\
        &\overset{(2)}{\leq} \beta_n\biggl(\mathbb{E}_{P_{Y,Z}}\biggl[\sum_{i=1}^n Y_i^2\biggr]\biggr)^{1/2}=\sqrt{n}\beta_n \bigl(\VV{Y}\bigr)^{1/2},
    \end{align*}
    where the inequality $(1)$ holds since $\mu_0\in \mathcal{C}_\ISO$, and  $(2)$ holds by Jensen's inequality.
    Next, again under the model class \eqref{model:partial_linear_model}, writing $\mathcal{G}$ for the set of Borel measurable functions from $\R$ to $\R$,
      \begin{align*}
\sigma_n\ISNR_n&=
\inf_{g\in\mathcal{C}_\ISO}\Ep{P_{Y,Z}}{\|\mu_n(\bY,\bZ) - g(\bZ)\|_2} \geq\inf_{g\in \mathcal{G}}\Ep{P_{Y,Z}}{\|\mu_n(\bY,\bZ) - g(\bZ)\|_2}\\
&=\Ep{P_{Y,Z}}{\bigl\|\mu_n(\bY,\bZ)-\EEst{\mu_n(\bY,\bZ)}{ \bZ}\bigr\|_2} = \Ep{P_{Y,Z}}{\bigl\|\beta_n~\bigl(\bY-\EEst{\bY}{\bZ}\bigr)\bigr\|_2}.
    \end{align*}
Finally, since $Y$ is a bounded random variable, we have by the weak law of large numbers that
\[
\frac{1}{n}\|\bY-\EEst{\bY}{\bZ}\bigr\|_2^2 =\frac{1}{n}\sum_{i=1}^n (Y_i-E[Y\mid Z_i])^2 = \mathbb{E}\bigl[\VPst{Y\sim P_{Y\mid Z}}{Y}\bigr]\,\bigl(1+\mathrm{o}_P(1)\bigr).
\]
Thus, $\sigma_n\ISNR_n\geq \sqrt{n}\beta_n~\bigl(\mathbb{E}\bigl[\VPst{Y\sim P_{Y\mid Z}}{Y}\bigr]\bigr)^{1/2} \,(1+\mathrm{o}_P(1))$. $\hfill{\square}$

\subsection{An oracle matching: isotonic median matching}\label{app:IMM}
A key step in Appendix~\ref{app:power-oracle-estimated} is establishing the inequalities in~\eqref{eq:key-property-oracle-matching} for oracle matching, which clarify its relationship to the isotonic SNR, $\widehat\ISNR_n$. In this section, we will prove these inequalities, restated here in the following theorem:
\begin{theorem}\label{thm:key-property-oracle}
For any measurable spaces $\mathcal{Y}$, $\mathcal{Z}$ with $\mathcal{Z}$ having a partial order, 
\[
\frac{\sup_{M\in \mathcal{M}_n(\bZ)}\|\Delta\mu_n^+(\bY,\bZ;M)\|_2}{\sigma_n} \leq \sqrt{2}\,\widehat{\ISNR}_n.
\]
Furthermore, when $\Zcal=\R$, it holds that
\[
\widehat{\ISNR}_n\leq \frac{\sup_{M\in \mathcal{M}_n(\bZ)}\|\Delta\mu_n^+(\bY,\bZ;M)\|_2}{\sigma_n} \leq \sqrt{2}\,\widehat{\ISNR}_n.
\]
\end{theorem}

In order to prove the second part of this result, we will give an explicit construction for an optimal matching, via a technique
that we refer to as isotonic median matching (IMM). In the remainder of this section, therefore we work with $\Zcal=\R$, and we will first develop the definition and properties of the IMM, and will then return to the proof of the theorem.

From this point on, assume that $Z_1 \leq \cdots\leq Z_n$ without loss of generality.  
Given $a=(a_1,\ldots,a_m) \in \RR^m$, we define the median of this vector by 
\[
\textnormal{Med}(a)=\begin{cases}
        a_{\left(k+1\right)} & \textnormal{if $m=2k+1$ for some $k \in \mathbb{N}_0$},\\
        \frac{a_{(k)}+a_{(k+1)}}{2} & \textnormal{if $m=2k$ for some $k \in \mathbb{N}$,}
    \end{cases}
    \]
where $a_{(1)} \leq \cdots \leq a_{(m)} $ denote the order statistics of $a_1,\ldots,a_m$. Given $\bZ$, we define the cone of vectors in $\RR^n$ that obey isotonic relation with $\bZ$ as 
\[
\widehat{\mathcal{C}}_\ISO(\bZ):=\{v\in \RR^n: \textnormal{for all } 1\leq i\leq j\leq n,\, (z_i-z_j)(v_i-v_j)\geq 0\}.
\]
Moreover, given $(\bY,\bZ)$ and any vector $u\in \RR^n$, we define $\tilde{u}\in \RR^n$ with $i$th component given by
\begin{equation}\label{eqn:empirical_isotonic_L1_projection}
    \tilde{u}_i := \max_{j\in [n]: Z_j\leq Z_i}\min_{k\in [n]: Z_k\geq Z_i} \textnormal{Med}\bigl((u_\ell)_{\ell:Z_j\leq Z_\ell\leq Z_k}\bigr).
\end{equation}
We claim that $\tilde{u}\in \widehat{\mathcal{C}}_\ISO(\bZ)$---the intuition is that we can think of $\tilde{u}$ as a projection of $u$ onto the cone $\widehat{\mathcal{C}}_\ISO(\bZ)$ (in fact, it is the projection with respect to the $\ell_1$ norm, although we will not establish this claim formally as it is not needed here).
To verify that $\tilde{u}\in \widehat{\mathcal{C}}_\ISO(\bZ)$, observe that for $1\leq i_1\leq i_2\leq n$, 
\begin{align*}
    \tilde{u}_{i_1}&=\max_{j\in [n]: Z_j\leq Z_{i_1}}\min_{k\in [n]: Z_k\geq Z_{i_1}} \textnormal{Med}\bigl((u_\ell)_{\ell:Z_j\leq Z_\ell\leq Z_k}\bigr)\\
    &\leq \max_{j\in [n]: Z_j\leq Z_{i_1}}\min_{k\in [n]: Z_k\geq Z_{i_2}} \textnormal{Med}\bigl((u_\ell)_{\ell:Z_j\leq Z_\ell\leq Z_k}\bigr)\\
    &\leq  \max_{j\in [n]: Z_j\leq Z_{i_2}}\min_{k\in [n]: Z_k\geq Z_{i_2}} \textnormal{Med}\bigl((u_\ell)_{\ell:Z_j\leq Z_\ell\leq Z_k}\bigr)=\tilde{u}_{i_2},
\end{align*}
which proves the claim. 
Next, since $\tilde{u}\in \widehat{\mathcal{C}}_\ISO(\bZ)$, there exist $m\in [n+1]$, integers $1=n_1<\ldots <n_m=n+1$ and real numbers $r_1<\ldots<r_{m-1}$ such that
 \begin{equation}\label{Eq:utildeiso}
     \tilde{u}_i = \sum_{t=1}^{m-1}  r_i\,\One{n_t\leq i<n_{t+1}},
 \end{equation}
 i.e., $\tilde{u}$ is a piecewise constant vector, with $m-1$ components. 
 
 The following lemma establishes some additional properties of this vector.
\begin{lemma}\label{lem:characterizing-isotonic-L1-projection_}
In the representation~\eqref{Eq:utildeiso} of the vector $\tilde{u}$, for each $t\in[m-1]$,     \begin{equation}\label{eq:cumulative-median-equals}
    r_t = \textnormal{Med}\bigl((u_\ell)_{n_t \leq \ell \leq n_{t+1}-1}\bigr) ,
    \end{equation}
    and moreover, for each $i \in \{n_t,n_t+1,\ldots,n_{t+1}-1\}$,
\begin{equation}\label{eq:cumulative-median-inequalities}
    \textnormal{Med}\bigl((u_\ell)_{\ell:Z_{n_t}\leq Z_\ell\leq Z_i}\bigr) 
        \geq r_t \geq  \textnormal{Med}\bigl((u_\ell)_{\ell:Z_i\leq Z_\ell\leq Z_{n_{t+1}-1}}\bigr).    
    \end{equation}
\end{lemma}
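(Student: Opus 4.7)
The plan is to establish~\eqref{eq:cumulative-median-inequalities} first and then deduce~\eqref{eq:cumulative-median-equals} as an immediate corollary: evaluating the left inequality at $i = n_{t+1} - 1$ gives $\med(u_{n_t},\ldots,u_{n_{t+1}-1}) \geq r_t$, and the right inequality at $i = n_t$ gives the reverse, so~\eqref{eq:cumulative-median-equals} follows. Without loss of generality, I would assume $Z_1 < Z_2 < \cdots < Z_n$ are distinct (ties are handled by collapsing tied indices into a single super-index, since~\eqref{eqn:empirical_isotonic_L1_projection} assigns them a common value $\tilde{u}_i$ and the block structure is unchanged), so that the projection simplifies to $\tilde{u}_i = \max_{j \leq i} \min_{k \geq i} \med(u_j,\ldots,u_k)$ and, dually, $\tilde{u}_i = \min_{k \geq i}\max_{j \leq i}\med(u_j,\ldots,u_k)$.

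\textbf{Key splitting lemma.} The main technical tool is the classical fact that for disjoint finite multisets $A, B$, the median of the union always lies between the two individual medians: $\med(A \cup B) \in [\min(\med(A),\med(B)),\, \max(\med(A),\med(B))]$. I would prove this by a short subgradient argument: the objective $m \mapsto \sum_{x \in A} |x - m| + \sum_{x \in B} |x - m|$ is convex and piecewise linear, and its (sub)derivative is strictly negative for $m < \min(\med(A),\med(B))$ and strictly positive for $m > \max(\med(A),\med(B))$, forcing the minimiser (i.e., $\med(A \cup B)$) into the interval between. The operational consequence I will invoke is: if $\med(A) < r$ and $\med(A \cup B) \geq r$, then $\med(B) \geq r$.

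\textbf{Proving~\eqref{eq:cumulative-median-inequalities}.} I would focus on the left inequality; the right one is symmetric. By the max-min definition, there is some optimiser $j^{**} \leq n_t$ achieving $\tilde{u}_{n_t} = r_t$, so $\med(u_{j^{**}},\ldots,u_k) \geq r_t$ for every $k \geq n_t$. If $j^{**} = n_t$, I simply take $k = i$ for any $i \in [n_t, n_{t+1}-1]$ and am done. If $j^{**} < n_t$, I use the previous block to strip off the prefix: since $\tilde{u}_{n_t - 1} = r_{t-1} < r_t$, taking $j = j^{**}$ in the max-min formula at index $n_t - 1$ forces the existence of some $k' \geq n_t - 1$ with $\med(u_{j^{**}},\ldots,u_{k'}) \leq r_{t-1} < r_t$; this $k'$ cannot be at least $n_t$ by the choice of $j^{**}$, so $k' = n_t - 1$ and therefore $\med(u_{j^{**}},\ldots,u_{n_t-1}) < r_t$. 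Applying the splitting lemma with $A = \{u_{j^{**}},\ldots,u_{n_t-1}\}$ and $B = \{u_{n_t},\ldots,u_k\}$ yields $\med(u_{n_t},\ldots,u_k) \geq r_t$ for all $k \geq n_t$, which is the desired left inequality. The right inequality follows from the dual min-max representation by the mirror argument, using $r_{t+1} > r_t$ in place of $r_{t-1} < r_t$.

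\textbf{Main obstacle.} Conceptually the argument is clean once the splitting lemma and the dual min-max formula are in hand, but I expect the bookkeeping to require care in two places. First, the splitting lemma must gracefully handle even-cardinality multisets and values exactly equal to $r$; the subgradient argument is robust to these issues, but needs to be written out precisely so the strict-versus-weak inequalities line up. Second, the boundary cases $n_t = 1$ (no previous block) and $n_{t+1} = n + 1$ (no next block) must be handled separately---but these are genuinely easier, as the optimisers are forced to sit at the block endpoint by construction, so no peeling argument is needed.
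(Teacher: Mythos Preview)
Your overall strategy---establish~\eqref{eq:cumulative-median-inequalities} first and read off~\eqref{eq:cumulative-median-equals} by evaluating at the endpoints---matches the paper exactly, and your splitting lemma is essentially the paper's Lemma~\ref{lem:a-median-result}. For the left inequality your peeling argument (find the optimiser $j^{**}$, strip the prefix using $r_{t-1}<r_t$, apply the splitting lemma) is close in spirit to the paper's, just organised as a direct argument rather than a proof by contradiction.

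The substantive divergence is in the right inequality. The paper explicitly warns that this half is \emph{not} the mirror image of the left, because $\tilde{u}$ is defined only as a max--min and the paper never assumes the dual min--max formula. Instead the paper proves the lower bound on $r_t$ by a separate, non-symmetric argument: it splits the inner minimum over $k$ at index $n_{t+1}-1$ into the pieces $\{k\geq n_{t+1}\}$ and $\{Z_k=Z_{n_{t+1}-1}\}$, and then uses the \emph{already-established} upper bounds on $r_{t+1},\dots,r_{m-1}$ together with repeated applications of Lemma~\ref{lem:a-median-result} to rule out the first piece. Your route is shorter and more symmetric, but it rests on the bare assertion that $\tilde{u}_i=\min_{k\geq i}\max_{j\leq i}\med(u_j,\ldots,u_k)$, which you state without proof. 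This identity is in fact true---it is the classical fact that max--min and min--max envelopes coincide for any interval functional satisfying the Cauchy mean-value property, and your splitting lemma is precisely that property for the median---but the implication ``betweenness $\Rightarrow$ max--min $=$ min--max'' is itself a lemma that needs to be proved or cited; it is not immediate. Without that step your right-inequality argument has a genuine gap; with it, your proof would be cleaner than the paper's and avoids the forward reference to the upper bounds on later blocks.
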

\begin{proof}
    First, since $\tilde{u}\in\widehat{\mathcal{C}}_\ISO(\bZ)$, and $\tilde{u}_{n_t-1}<\tilde{u}_{n_t}$ for each $t=2,\dots,m-1$, we have 
    $Z_{n_t-1} < Z_{n_t}$ for each $t=2,\dots,m-1$ and consequently
    \[n_t \leq \ell\leq n_{t+1}-1 \ \Longleftrightarrow \ Z_{n_t}\leq Z_\ell \leq Z_{n_{t+1}-1}\]
    for each $t\in[m-1]$.
    Therefore, establishing~\eqref{eq:cumulative-median-equals} is equivalent to proving that
    \[r_t = \textnormal{Med}\bigl((u_\ell)_{\ell:Z_{n_t} \leq Z_\ell \leq Z_{n_{t+1}-1}}\bigr),\]
    which is an immediate consequence of the inequalities in~\eqref{eq:cumulative-median-inequalities} (applied with $i=n_{t+1}-1$ on the left and with $i=n_t$ on the right). Therefore we now only need to prove these inequalities.

    \paragraph{Upper bound on $r_t$.}
    First we will prove that
    \begin{equation}\label{eq:cumulative-median-inequalities-1}
    r_t\leq  \textnormal{Med}\bigl((u_\ell)_{\ell:Z_{n_t}\leq Z_\ell\leq Z_i}\bigr)
\end{equation}
for all $t\in [m-1]$ and all $i\in\{n_t,\dots,n_{t+1}-1\}$.
By definition, we have 
\[r_t = \tilde{u}_{n_t} = \max_{j\in [n]: Z_j\leq Z_{n_t}}\min_{k\in [n]: Z_k\geq Z_{n_t}} \textnormal{Med}\bigl((u_\ell)_{\ell:Z_j\leq Z_\ell\leq Z_k}\bigr).\]
As noted above, $Z_j < Z_{n_t}$ for any $j<n_t$. 
Therefore we can write
\[
\{j\in[n]: Z_j \leq Z_{n_t}\} = \{j\in[n]: j< n_t\} \cup \{j\in[n]: Z_j = Z_{n_t}\},
\]
and hence,
\[
r_t = \max\Big\{\underbrace{\max_{j < n_t}\min_{k\in [n]: Z_k\geq Z_{n_t}} \textnormal{Med}\bigl((u_\ell)_{\ell:Z_j\leq Z_\ell\leq Z_k}\bigr)}_{=:r_{t,1}} , \ \underbrace{\min_{k\in [n]: Z_k\geq Z_{n_t}} \textnormal{Med}\bigl((u_\ell)_{\ell:Z_{n_t}\leq Z_\ell\leq Z_k}\bigr)}_{=:r_{t,2}}\Big\},
\]
where we take the convention that the maximum over an empty set is equal to $-\infty$ (to handle the case $t=1$, i.e., $\{j\in[n]:j<n_1\}=\emptyset$, so that $r_{t,1}=-\infty$ and $r_t = r_{t,2}$. 
If instead $t\geq 2$, then we claim that $r_{t,1}\leq r_{t,2}$, so that $r_t = r_{t,2}$ again. Suppose for a contradiction that $r_{t,1}>r_{t,2}$. Let $\hat{j}\in [n_{t}-1]$ and $\hat{k}\in [n]$ be such that $Z_{\hat{k}} \geq Z_{n_t}$ and
\begin{equation}\label{eq:expression-rt1-rt2}
    r_{t,1}=\min_{k\in [n]: Z_k\geq Z_{n_t}} \textnormal{Med}\bigl((u_\ell)_{\ell:Z_{\hat{j}}\leq Z_\ell\leq Z_k}\bigr)\, \text{ and }\,r_{t,2}=\textnormal{Med}\bigl((u_\ell)_{\ell:Z_{n_t}\leq Z_\ell\leq Z_{\hat{k}}}\bigr).
\end{equation}
Then, $\textnormal{Med}\bigl((u_\ell)_{\ell:Z_{\hat{j}}\leq Z_\ell\leq Z_{\hat{k}}}\bigr)\geq r_{t,1}>\textnormal{Med}\bigl((u_\ell)_{\ell:Z_{n_t}\leq Z_\ell\leq Z_{\hat{k}}}\bigr)$.
Since $\{j: j<n_t\} = \{j: Z_j\leq Z_{n_t-1}\}$ (due to $Z$ being a nondecreasing vector with $Z_{n_t-1}<Z_{n_t}$ as established above), it holds that $Z_{\hat{j}} \leq Z_{n_t-1} <Z_{n_t}$ and thus, by Lemma~\ref{lem:a-median-result},
\[
\textnormal{Med}\bigl((u_\ell)_{\ell:Z_{\hat{j}}\leq Z_\ell\leq Z_{\hat{k}}}\bigr)\leq \textnormal{Med}\bigl((u_\ell)_{\ell:Z_{\hat{j}}\leq Z_\ell\leq Z_{n_t-1}}\bigr).
\]
Now, we also have
\begin{equation}\label{eq:bounding-rt1-below}
    r_{t,1}=r_t>\tilde{u}_{n_t-1}\geq \min_{k\in [n]: Z_k\geq Z_{n_t-1}} \textnormal{Med}\bigl((u_\ell)_{\ell:Z_{\hat{j}}\leq Z_\ell\leq Z_k}\bigr)
\end{equation}
Moreover, by noting that $\{k : Z_k \geq Z_{n_t-1}\}=\{k : Z_k \geq Z_{n_t}\}\cup \{k:Z_k=Z_{n_t-1}\}$ and by~\eqref{eq:expression-rt1-rt2}, it follows that 
\begin{align*}
    &\min_{k\in [n]: Z_k\geq Z_{n_t-1}} \textnormal{Med}\bigl((u_\ell)_{\ell:Z_{\hat{j}}\leq Z_\ell\leq Z_k}\bigr)\\&\hspace{1.6cm}=\min\Bigl\{\min_{k\in [n]: Z_k\geq Z_{n_t}} \textnormal{Med}\bigl((u_\ell)_{\ell:Z_{\hat{j}}\leq Z_\ell\leq Z_k}\bigr), \textnormal{Med}\bigl((u_\ell)_{\ell:Z_{\hat{j}}\leq Z_\ell\leq Z_{n_t-1}}\bigr)\Bigr\}\\
    &\hspace{1.6cm}=\min\Bigl\{r_{t,1}, \textnormal{Med}\bigl((u_\ell)_{\ell:Z_{\hat{j}}\leq Z_\ell\leq Z_{n_t-1}}\bigr)\Bigr\}= \textnormal{Med}\bigl((u_\ell)_{\ell:Z_{\hat{j}}\leq Z_\ell\leq Z_{n_t-1}}\bigr),
\end{align*}
where the last step follows by~\eqref{eq:bounding-rt1-below}. Thus, we have that
\[\textnormal{Med}\bigl((u_\ell)_{\ell:Z_{\hat{j}}\leq Z_\ell\leq Z_{\hat{k}}}\bigr)\geq r_{t,1}> \textnormal{Med}\bigl((u_\ell)_{\ell:Z_{\hat{j}}\leq Z_\ell\leq Z_{n_t-1}}\bigr),\]
which is a contradiction.
Thus we again have $r_t=r_{t,2}$ in this case. Therefore,
for all $t\in [m-1]$,
\[r_t = r_{t,2} = \min_{k\in [n]: Z_k\geq Z_{n_t}} \textnormal{Med}\bigl((u_\ell)_{\ell:Z_{n_t}\leq Z_\ell\leq Z_k}\bigr) \leq \min_{i\in\{n_t,\dots,n_{t+1}-1\}}\textnormal{Med}\bigl((u_\ell)_{\ell:Z_{n_t}\leq Z_\ell\leq Z_i}\bigr).\]
This establishes the desired upper bound~\eqref{eq:cumulative-median-inequalities-1} on $r_t$.

    \paragraph{Lower bound on $r_t$.}
    Next we will prove that
    \begin{equation}\label{eq:cumulative-median-inequalities-2}
    r_t\geq  \textnormal{Med}\bigl((u_\ell)_{\ell:Z_i\leq Z_\ell\leq Z_{n_{t+1}-1}}\bigr)
\end{equation}
for all $t\in[m-1]$ and all $i\in\{n_t,\dots,n_{t+1}-1\}$. This proof is not simply a symmetric version of the proof of the upper bound, since the vector $\tilde{u}$ is defined by taking the maximum of the minimum (and not vice versa).

By definition, we have 
\[r_t = \tilde{u}_{n_{t+1}-1} = \max_{j\in [n]: Z_j\leq Z_{n_{t+1}-1}}\min_{k\in [n]: Z_k\geq Z_{n_{t+1}-1}} \textnormal{Med}\bigl((u_\ell)_{\ell:Z_j\leq Z_\ell\leq Z_k}\bigr).\]
If $t=m-1$ then $n_{t+1}-1=n$, and so $\{k\in[n] : Z_k \geq Z_{n_{t+1}-1}\} = \{k\in[n]: Z_k \geq Z_n\}$. Therefore, for $t=m-1$ we have
\begin{align*}
r_t = \max_{j\in [n]: Z_j\leq Z_n}\textnormal{Med}\bigl((u_\ell)_{\ell:Z_j\leq Z_\ell\leq Z_n}\bigr) &= \max_{j \in [n]} \textnormal{Med}\bigl((u_\ell)_{\ell:Z_j\leq Z_\ell\leq Z_n}\bigr) \\
&\geq \max_{j \in\{n_t,\dots,n\}}\textnormal{Med}\bigl((u_\ell)_{\ell:Z_j\leq Z_\ell\leq Z_{n}}\bigr).
\end{align*}

This proves the claim for the case $t=m-1$.

From this point on we assume $t\in[m-2]$. 
As noted above, $Z_k > Z_{n_{t+1}-1}$ for any $k\geq n_{t+1}$. 
Therefore we can write
\[
\{k\in[n]: Z_k \geq Z_{n_{t+1}-1}\} = \{k\in[n]: k\geq n_{t+1}\} \cup \{k\in[n]: Z_k = Z_{n_{t+1}-1}\},
\]
and hence,
\[
r_t = \max_{j\in[n]:Z_j \leq Z_{n_{t+1}-1}}\min\Big\{\underbrace{\min_{k\geq n_{t+1}}\textnormal{Med}\bigl((u_\ell)_{\ell:Z_j\leq Z_\ell\leq Z_k}\bigr)}_{=:r_{t,j,1}} , \ \underbrace{ \textnormal{Med}\bigl((u_\ell)_{\ell:Z_j\leq Z_\ell\leq Z_{n_{t+1}-1}}\bigr)}_{=:r_{t,j,2}}\Big\}.
\]
Fix any $j$ with $Z_j \leq Z_{n_{t+1}-1}$, and suppose that $r_{t,j,1} < r_{t,j,2}$. Then
\[
\textnormal{Med}\bigl((u_\ell)_{\ell:Z_j\leq Z_\ell\leq Z_k}\bigr) < \textnormal{Med}\bigl((u_\ell)_{\ell:Z_j\leq Z_\ell\leq Z_{n_{t+1}-1}}\bigr)
\]
for some $k\geq n_{t+1}$. Then by Lemma~\ref{lem:a-median-result}, we must have
 \[\textnormal{Med}\bigl((u_\ell)_{\ell:Z_j\leq Z_\ell\leq Z_k}\bigr) \geq \textnormal{Med}\bigl((u_\ell)_{\ell:Z_{n_{t+1}-1}< Z_\ell\leq Z_k}\bigr).\]
Now let $t' \geq t+1$ be such that $n_{t'}\leq k < n_{t'+1}$.  Since $Z_{n_{t+1}-1} < Z_{n_{t+1}}$ as explained above, we have by Lemma~\ref{lem:a-median-result} that 
\begin{multline*}
\textnormal{Med}\bigl((u_\ell)_{\ell:Z_{n_{t+1}-1}< Z_\ell\leq Z_k}\bigr) = \textnormal{Med}\bigl((u_\ell)_{\ell:Z_{n_{t+1}}\leq Z_\ell\leq Z_k}\bigr) \\\geq \min\left\{\textnormal{Med}\bigl((u_\ell)_{\ell:Z_{n_{t+1}}\leq Z_\ell < Z_{n_{t+2}}}\bigr), \dots,\textnormal{Med}\bigl((u_\ell)_{\ell:Z_{n_{t'-1}}\leq Z_\ell < Z_{n_{t'}}}\bigr),\textnormal{Med}\bigl((u_\ell)_{\ell:Z_{n_{t'}}\leq Z_\ell \leq Z_k}\bigr)\right\}\\\geq \min\{r_{t+1},\dots,r_{m-1}\} = r_{t+1},
\end{multline*}
where the second inequality applies the upper bound~\eqref{eq:cumulative-median-inequalities-1} on each $r_{t+1},\dots,r_{m-1}$. But we also have
\[
r_t \geq \min\{r_{t,j,1},r_{t,j,2}\} = r_{t,j,1} \geq r_{t+1},
\]
which is a contradiction.

Therefore we have $r_{t,j,1}\geq r_{t,j,2}$ for all $j$ such that $Z_j \leq Z_{n_{t+1}-1}$, and so
\begin{multline*}
r_t = \max_{j\in[n]:Z_j \leq Z_{n_{t+1}-1}}r_{t,j,2} = \max_{j\in[n]:Z_j \leq Z_{n_{t+1}-1}} \textnormal{Med}\bigl((u_\ell)_{\ell:Z_j\leq Z_\ell\leq Z_{n_{t+1}-1}}\bigr) \\\geq \max_{j\in\{n_t,\dots,n_{t+1}-1\}}\textnormal{Med}\bigl((u_\ell)_{\ell:Z_j\leq Z_\ell\leq Z_{n_{t+1}-1}}\bigr).
\end{multline*}
This proves the lower bound~\eqref{eq:cumulative-median-inequalities-2}, and thus completes the proof of the lemma.
\end{proof}

We are now in a position to define the isotonic median matching for any vector $u\in \RR^n$. Let $m$ be as defined in \eqref{Eq:utildeiso}. For $t\in [m-1]$, define
\begin{align*}
P_t &:= \bigl\{\ell \in \{n_t,n_t+1,\ldots,n_{t+1}-1\}: u_\ell> r_t=\tilde{u}_\ell\bigr\}, \\
N_t &:= \bigl\{\ell \in \{n_t,n_t+1,\ldots,n_{t+1}-1\}: u_\ell< r_t=\tilde{u}_\ell\bigr\}.
\end{align*}
Let us order the indices in $P_t$ as $\ell_{t,1}^+<\cdots <\ell_{t,n_{t}^+}^+$ and the indices in $N_i$ as $\ell_{t,1}^-<\cdots< \ell_{t,n_t^-}^-$. Defining $\bar{n}_t:=n_t^+\wedge n_t^-$, we consider the collection of ordered pairs
\[
\mathcal{C}_t:=\bigl\{(\ell_{t,1}^+,\ell_{t,1}^-),\cdots,(\ell_{t,\bar{n}_t}^+,\ell_{t,\bar{n}_t}^-)\bigr\}.
\]
Finally, we consider
$\widetilde{M}(u):=\bigcup_{t=1}^{m-1} \mathcal{C}_t$ and call it the isotonic median matching (IMM). In the lemma below, we prove that this is a valid matching.
\begin{lemma}\label{lem:validity-of-IMM}
    Suppose that the elements of $\{u_\ell:\ell \in [n]\}$ are all distinct. Then for any $t\in [m-1]$, we have $n_t^+=n_t^-$ and 
    $u_\ell=\tilde{u}_\ell$ for all $\ell \in \{n_t,n_t+1,\ldots,n_{t+1}-1\} \setminus \bigl(P_t\cup N_t\bigr)$.
    Moreover, $\widetilde{M}(u)\in \mathcal{M}_n(\bZ)$.
\end{lemma}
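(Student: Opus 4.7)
The plan is to establish the three claims of the lemma in sequence: (i) the identity $u_\ell = \tilde{u}_\ell = r_t$ for $\ell$ in block $t$ but outside $P_t \cup N_t$; (ii) the balance $n_t^+ = n_t^-$ for every block; and (iii) the membership $\widetilde{M}(u) \in \mathcal{M}_n(\bZ)$. The key input throughout is Lemma~\ref{lem:characterizing-isotonic-L1-projection_}, which identifies $r_t$ as the median of $(u_\ell)_{n_t \leq \ell \leq n_{t+1}-1}$, i.e., the median of the block-$t$ values.

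Claim (i) is immediate from the definitions: $\tilde{u}_\ell = r_t$ by construction on block $t$, and the complement of $P_t \cup N_t$ within the block is exactly $\{\ell : u_\ell = r_t\}$. For claim (ii), I would split on the parity of the block length $L_t := n_{t+1}-n_t$. Writing $u_{(1)} < \cdots < u_{(L_t)}$ for the sorted block values (strict inequalities by the distinctness hypothesis), if $L_t = 2k+1$ then $r_t = u_{(k+1)}$ lies in the set, so exactly one index has $u_\ell = r_t$ and the remaining $2k$ split evenly into $k$ above and $k$ below; if $L_t = 2k$ then $r_t = \bigl(u_{(k)} + u_{(k+1)}\bigr)/2$ lies strictly between two consecutive order statistics, and distinctness rules out any $u_\ell$ equalling $r_t$ (no value can lie strictly between consecutive order statistics), so $k$ indices are above $r_t$ and $k$ below. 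In either case $n_t^+ = n_t^-$.

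For claim (iii), two things need checking. Disjointness: the blocks $\{n_t, \ldots, n_{t+1}-1\}$ partition $[n]$, and within a block the sets $P_t$ and $N_t$ are disjoint by definition, so the $2\bar{n}_t$ indices used in $\mathcal{C}_t$ are all distinct, and across blocks no index is reused. The ordering constraint $Z_{i_\ell} \preceq Z_{j_\ell}$: since $Z$ is sorted and both indices of each pair lie in the same block, their $Z$-values both lie in $[Z_{n_t}, Z_{n_{t+1}-1}]$ and are totally ordered in this univariate setting; hence, after (at most) swapping the two components of each pair, we can arrange the smaller-$Z$ index first. This relabelling preserves the set $\widetilde{M}(u)$ and yields a valid element of $\mathcal{M}_n(\bZ)$.

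No step appears to present a substantive obstacle; the only subtle point is the even-parity case of claim (ii), where one must use distinctness of the $u_\ell$'s to exclude the averaged median from the set of block values. The rest is essentially bookkeeping from the constructions of $\tilde{u}$, $P_t$, $N_t$, and $\mathcal{C}_t$.
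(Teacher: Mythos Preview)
Your arguments for claims (i) and (ii) are fine and match the paper's approach. The gap is in claim (iii), the membership $\widetilde{M}(u)\in\mathcal{M}_n(\bZ)$.

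A matching in $\mathcal{M}_n(\bZ)$ is a collection of \emph{ordered} pairs, and $\widetilde{M}(u)$ is defined with a specific order: in each pair $(\ell_{t,j}^+,\ell_{t,j}^-)$ the first index comes from $P_t$ (where $u_\ell>r_t$) and the second from $N_t$ (where $u_\ell<r_t$). The lemma asserts that \emph{this} ordered object lies in $\mathcal{M}_n(\bZ)$, i.e.\ that $Z_{\ell_{t,j}^+}\le Z_{\ell_{t,j}^-}$ for every $t,j$. Your proposed fix---swapping the two components of a pair if the $Z$-order is wrong---does not ``preserve the set $\widetilde{M}(u)$''; it produces a different matching. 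More to the point, the whole purpose of IMM (see the proof of Theorem~\ref{thm:key-property-oracle}) is that $u_{i_\ell}>\tilde u_{i_\ell}=\tilde u_{j_\ell}>u_{j_\ell}$, so that $\Delta u^+>0$ on every pair; swapping would give $\Delta u^+=0$ and make the construction useless.

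What is missing is an argument that the specific ordering is automatically compatible with the $Z$-order. The paper obtains this from the \emph{running-median} inequality~\eqref{eq:cumulative-median-inequalities} of Lemma~\ref{lem:characterizing-isotonic-L1-projection_}: if $Z_{\ell_{t,j_0}^+}>Z_{\ell_{t,j_0}^-}$ for some minimal $j_0$, then among the indices $\ell$ with $Z_{n_t}\le Z_\ell\le Z_{\ell_{t,j_0}^-}$ there are at least $j_0$ values below $r_t$ but strictly fewer than $j_0$ above $r_t$ (since $\ell_{t,j_0}^+$ has not yet appeared), forcing the median of this prefix to be strictly below $r_t$---contradicting~\eqref{eq:cumulative-median-inequalities}. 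This step is the substantive content of claim (iii); the rest is indeed bookkeeping.
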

\begin{proof}
    Fix $t\in [m-1]$. When $n_{t+1}-n_t$ is even, we have by~\eqref{eq:cumulative-median-equals} (in Lemma~\ref{lem:characterizing-isotonic-L1-projection_}) that $n_t^+=n_t^-=(n_{t+1}-n_t)/2$ and $\{n_t,n_t+1,\ldots,n_{t+1}-1\}\setminus \bigl(P_t\cup N_t\bigr)=\emptyset$.  On the other hand, when $n_{t+1}-n_t$ is odd, we have $n_t^+=n_t^-=(n_{t+1}-n_t-1)/2$ and $\{n_t,n_t+1,\ldots,n_{t+1}-1\}\setminus \bigl(P_t\cup N_t\bigr)=:\{\ell_0\}$ 
    is a singleton set. Moreover, $u(\ell_0)=\tilde{u}(\ell_0)$ which proves the first two claims.

    In order to prove that $\widetilde{M}(u)$ is a valid matching, we need to show that $Z_{\ell_{t,j}^+}\leq Z_{\ell_{t,j}^-}$ for any $t\in [m-1]$ and $j\in [\bar{n}_t]$.  To see this, suppose for a contradiction that $Z_{\ell_{t,j_0}^+}>Z_{\ell_{t,j_0}^-}$ for some $t \in [m-1]$ and some minimal $j_0 \in [\bar{n}_t]$. Since $\tilde{u}\in \widehat{\mathcal{C}}_\ISO(\bZ)$, by \eqref{Eq:utildeiso}, we have $\ell\geq n_t$ whenever $Z_\ell\geq Z_{n_t}$.
    Consequently, this gives us
     \begin{align*}
         \bigl|\{\ell:Z_{n_t}\leq Z_\ell\leq Z_{\ell_{t,j_0}^-}~\text{and}~u_\ell< r_t\}\bigr|&\geq \bigl|\{\ell:n_t\leq \ell\leq \ell_{t,j_0}^-~\text{and}~u_\ell < r_t\}\bigr|=j_0\\
         \bigl|\{\ell:Z_{n_t}\leq Z_\ell\leq Z_{\ell_{t,j_0}^-}~\text{and}~u_\ell> r_t\}\bigr|&< \bigl|\{\ell:n_t\leq \ell\leq \ell_{t,j_0}^+~\text{and}~u_\ell> r_t\}\bigr|=j_0,
     \end{align*}
     where the second inequality follows by recalling that $\ell_{t,1}^+<\cdots <\ell_{t,n_{t}^+}^+$.
     Since the elements of $\{u_\ell:\ell \in [n]\}$ are all distinct, this further implies that
   \[
   \textnormal{Med}\bigl((u_\ell)_{\ell:Z_{n_t}\leq Z_\ell\leq Z_{\ell_{t,j_0}^-}}\bigr) 
        < r_t,
    \]
    which contradicts the result~\eqref{eq:cumulative-median-inequalities} proved in Lemma~\ref{lem:characterizing-isotonic-L1-projection_}.
\end{proof}
With the definition and properties of the IMM now in place, we are now ready to prove Theorem~\ref{thm:key-property-oracle}. This result establishes the claim~\eqref{eq:key-property-oracle-matching} in Appendix~\ref{app:power-oracle-estimated}, which was used for proving the power guarantees for both oracle and plug-in matching in Theorems~\ref{thm:oracle-matching-asymptote} and~\ref{thm:oracle-matching-estimated-mu}.

\begin{proof}[Proof of Theorem~\ref{thm:key-property-oracle}]
We start by proving the upper bound. Fix any matching $M\in\mathcal{M}_n(\bZ)$.
By definition, $\sigma_n\widehat{\ISNR}_n=\inf_{g\in\mathcal{C}_\ISO}\|\mu_n(\bY,\bZ)-g(\bZ)\|_2$.
    We claim that 
    \begin{equation}
    \label{Eq:Claim}
    \Bigl(\mu_n\bigl(Y_{i_\ell},Z_{i_\ell}\bigr)-\mu\bigl(Y_{j_\ell},Z_{j_\ell}\bigr)\Bigr)_+\leq \bigl|\mu_n\bigl(Y_{i_\ell},Z_{i_\ell}\bigr)-g\bigl(Z_{i_\ell}\bigr)\bigr|+\bigl|g\bigl(Z_{j_\ell}\bigr)-\mu\bigl(Y_{j_\ell},Z_{j_\ell}\bigr)\bigr| 
    \end{equation}
    for every $\ell\in [L]$ and every $g \in \mathcal{C}_\ISO$.  To see this, we may assume that  $\mu_n\bigl(Y_{i_\ell},Z_{i_\ell}\bigr)>\mu_n\bigl(Y_{j_\ell},Z_{j_\ell}\bigr)$ since otherwise the left-hand side is zero, and the inequality trivially holds. Then
    \begin{align*}
        &\hspace{-.5in}\bigl(\mu_n\bigl(Y_{i_\ell},Z_{i_\ell}\bigr)-\mu_n\bigl(Y_{j_\ell},Z_{j_\ell}\bigr)\bigr)_+
        =\mu_n\bigl(Y_{i_\ell},Z_{i_\ell}\bigr)-\mu_n\bigl(Y_{j_\ell},Z_{j_\ell}\bigr)\\
        &=\mu_n\bigl(Y_{i_\ell},Z_{i_\ell}\bigr)-g(Z_{i_\ell}) + g(Z_{j_\ell}) - \mu_n\bigl(Y_{j_\ell},Z_{j_\ell}\bigr) + g(Z_{i_\ell}) - g(Z_{j_\ell})\\
        &\leq \bigl|\mu_n\bigl(Y_{i_\ell},Z_{i_\ell}\bigr)-g\bigl(Z_{i_\ell}\bigr)\bigr|+\bigl|g\bigl(Z_{j_\ell}\bigr)-\mu_n\bigl(Y_{j_\ell},Z_{j_\ell}\bigr)\bigr| + \left(g(Z_{i_\ell}) - g(Z_{j_\ell})\right)\\
        &\leq \bigl|\mu_n\bigl(Y_{i_\ell},Z_{i_\ell}\bigr)-g\bigl(Z_{i_\ell}\bigr)\bigr|+\bigl|g\bigl(Z_{j_\ell}\bigr)-\mu_n\bigl(Y_{j_\ell},Z_{j_\ell}\bigr)\bigr|,
    \end{align*}
    where the last step holds since we must have $Z_{i_\ell}\leq Z_{j_\ell}$ (since $M\in\mathcal{M}_n(\bZ)$) and therefore $g(Z_{j_\ell})-g(Z_{j_\ell})\leq 0$ (since $g\in\mathcal{C}_\ISO$).
    This proves the claim~\eqref{Eq:Claim}, and therefore, it follows that
    \begin{align*}
    \|\Delta\mu_n^+(\bY,\bZ;M)\|_2^2 &=\sum_{\ell=1}^L  \Bigl(\mu_n\bigl(Y_{i_\ell},Z_{i_\ell}\bigr)-\mu_n\bigl(Y_{j_\ell},Z_{j_\ell}\bigr)\Bigr)_+^2\\
    &\leq \sum_{\ell=1}^L \left( \bigl|\mu_n\bigl(Y_{i_\ell},Z_{i_\ell}\bigr)-g\bigl(Z_{i_\ell}\bigr)\bigr|+\bigl|g\bigl(Z_{j_\ell}\bigr)-\mu_n\bigl(Y_{j_\ell},Z_{j_\ell}\bigr)\bigr|\right)^2\\
    &\leq 2\sum_{\ell=1}^L \bigl\{\mu_n\bigl(Y_{i_\ell},Z_{i_\ell}\bigr)-g\bigl(Z_{i_\ell}\bigr)\bigr\}^2 + 2\sum_{\ell=1}^L \bigl\{\mu_n\bigl(Y_{j_\ell},Z_{j_\ell}\bigr)-g\bigl(Z_{j_\ell}\bigr)\bigr\}^2 \\
    &\leq 2 \sum_{i=1}^n \bigl\{\mu_n(Y_i,Z_i)-g(Z_i)\bigr\}^2 = 2\|\mu_n(\bY,\bZ) - g(\bZ)\|^2_2 \leq 2\sigma_n^2\, \widehat{\ISNR}_n^2.
    \end{align*}
    This completes the proof of the first part.

    Next we turn to the lower bound. It is sufficient to show that, for any $\epsilon>0$, we can find some $M\in\mathcal{M}_n(\bZ)$ such that
    $\|\Delta\mu_n^+(\bY,\bZ;M)\|_2^2 \geq \sigma_n^2\widehat{\ISNR}_n^2 - \epsilon$.
    Fixing any $\epsilon>0$, first we note that we can find some $u\in \RR^n$ such that the coordinates of $u$ are all distinct and $\|\mu_n(\bY,\bZ)-u\|_2<\epsilon/(\sqrt{2}+1)$. Consider the IMM for this vector $u$,  $\widetilde{M}(u) := \{(i_\ell,j_\ell)\}_{\ell \in \tilde{L}}$, where $\widetilde{M}(u)\in \mathcal{M}_n(\bZ)$ by Lemma~\ref{lem:validity-of-IMM}.
Furthermore  by Lemma~\ref{lem:validity-of-IMM} and~\eqref{Eq:utildeiso}, we have for any $\ell\in [\tilde{L}]$ that
$u_{i_\ell} > \tilde{u}_{i_\ell} = \tilde{u}_{j_\ell} >u_{j_\ell}$. 
    Hence, 
    \begin{align*}
      \|\Delta u^+\|_2^2=\sum_{\ell=1}^{\tilde{L}}\bigl(u_{i_\ell}-u_{j_\ell}\bigr)^2
      & = \sum_{\ell=1}^{ \tilde{L}}\bigl(u_{i_\ell}-\tilde{u}_{i_\ell}+\tilde{u}_{j_\ell}-u_{j_\ell}\bigr)^2  \\
      &\geq \sum_{\ell=1}^{\tilde{L}}\bigl\{\bigl(u_{i_\ell}-\tilde{u}_{i_\ell}\bigr)^2 + \bigl(\tilde{u}_{j_\ell}-u_{j_\ell}\bigr)^2\bigr\} \\
      &=\sum_{i=1}^n \bigl(u_i-\tilde{u}_i\bigr)^2=\|u-\tilde{u}\|_2^2,
    \end{align*}
    where the penultimate equality holds because $u_i=\tilde{u}_i$ for $i \in [n] \setminus \{i_1,j_1,\ldots,i_{\tilde{L}},j_{\tilde{L}}\}$, by Lemma~\ref{lem:validity-of-IMM}.
    Next, we observe that 
    \begin{align*}
        \|\Delta u^+\|_2-\|\Delta\mu_n^+(\bY,\bZ;\widetilde{M}(u))\|_2&\leq \|\Delta u^+-\Delta\mu_n^+(\bY,\bZ;\widetilde{M}(u))\|_2\\
        &\leq \|\Delta u-\Delta\mu_n(\bY,\bZ;\widetilde{M}(u))\|_2\\
        &\leq \sqrt{2} \|u-\mu_n(\bY,\bZ)\|_2\leq\sqrt{2} \cdot \frac{\epsilon}{\sqrt{2}+1},
    \end{align*}
    and that by \eqref{eq: isotonic_signal_strength},
    \begin{align*}
        \sigma_n\widehat{\ISNR}_n &\leq \|\mu_n(\bY,\bZ)-\tilde u\|_2\leq \|u-\tilde{u}\|_2+\frac{\epsilon}{\sqrt{2}+1}.
    \end{align*}
    Therefore,
    \begin{multline*}
    \|\Delta\mu_n^+(\bY,\bZ;\widetilde{M}(u))\|_2 \geq \|\Delta u^+\|_2 - \frac{\sqrt{2}\epsilon}{\sqrt{2}+1}\geq \|u-\tilde{u}\|_2 - \frac{\sqrt{2}\epsilon}{\sqrt{2}+1}\\\geq \left(\sigma_n\widehat{\ISNR}_n -\frac{\epsilon}{\sqrt{2}+1}\right) - \frac{\sqrt{2}\epsilon}{\sqrt{2}+1} = \sigma_n\widehat{\ISNR}_n -\epsilon,
    \end{multline*}
    which completes the proof of the second part.
\end{proof}

\subsection{Proof of lemmas from Appendices~\ref{app:general_power}—\ref{app:power-cross-bin}}\label{app:lemmas-from-appendix-c1c4}
\subsubsection{Lemmas used to prove general power guarantees in Appendix~\ref{app:general_power}}
\begin{lemma}\label{lem:concentration_L2_L3_for_DeltaX}
    Under the setting and assumptions of Theorem~\ref{thm:general power analysis}, 
    for any $\delta>0$, we have that
    \begin{align*}
        \mathbb{P}\biggl\{\frac{\|\bw\circ \Delta\bX\|^2_2}{\left(2\sigma_n^2\|\bw\|^2_2 +   \|\bw\circ \Delta\mu_n(\bY,\bZ)\|^2_2\right)} &\in \bigl[(1 -\epsilon_*)\vee 0,1+\epsilon_*\bigr]~~\text{and}\\
        \|\bw\circ\Delta\bX\|^2_3 &\leq \left(2\sigma_n^2\|\bw\|^2_2 +   \|\bw\circ \Delta\mu_n(\bY,\bZ)\|^2_2\right)\cdot \epsilon_*'\biggm| \bY,\bZ\biggr\}\geq 1-\delta
    \end{align*}
    where
    \[\epsilon_* = \sqrt{\frac{3}{\delta}}\cdot\biggl(  2\rho^2_4 \epsilon_1 + 2(\epsilon_1^2\epsilon_2)^{1/3}\biggr), \quad \epsilon_*' = \frac{(24)^{2/3}\rho_4^2}{\delta^{2/3}}\cdot\epsilon_1^{2/3} + 4\epsilon_2^{2/3},\]
    and we write $\rho_4:=\Ep{P_\zeta}{\zeta^4}^{1/4}$.
    \end{lemma}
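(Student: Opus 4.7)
The plan is to decompose $\Delta_\ell \bX = \Delta_\ell \mu_n(\bY,\bZ) + \sigma_n \Delta_\ell \bzeta$ for each $\ell$ and exploit two simple structural facts about $\bzeta$: the increments $\Delta_1\bzeta,\ldots,\Delta_L\bzeta$ are mutually independent (since the $2L$ indices in the matching are distinct), and each $\Delta_\ell\bzeta=\zeta_{i_\ell}-\zeta_{j_\ell}$ is symmetric about zero, so all its odd moments vanish. Paired with the crude moment bounds $\EE{(\Delta_\ell\bzeta)^4}=2\rho_4^4+6\leq 8\rho_4^4$ (using that $\rho_4^4\geq 1$ by Jensen) and $\EE{|\Delta_\ell\bzeta|^3}\leq \bigl(\EE{(\Delta_\ell\bzeta)^4}\bigr)^{3/4}\leq 2^{9/4}\rho_4^3$, these facts reduce the problem to a Chebyshev estimate for $\|\bw\circ\Delta\bX\|_2^2$ and a Markov estimate for $\|\bw\circ\Delta\bX\|_3^3$, combined at the end by a union bound with carefully allocated failure budget.

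For the $L_2$ concentration, a direct computation gives $\EEst{\|\bw\circ\Delta\bX\|_2^2}{\bY,\bZ}=V_n$, where $V_n := 2\sigma_n^2\|\bw\|_2^2 + \|\bw\circ\Delta\mu_n(\bY,\bZ)\|_2^2$, and the symmetry of $\Delta_\ell\bzeta$ eliminates the cross term, giving $\VVst{(\Delta_\ell\bX)^2}{\bY,\bZ}\leq 8(\Delta_\ell\mu_n)^2\sigma_n^2 + 8\sigma_n^4\rho_4^4$.  By independence,
\[
\VVst{\|\bw\circ\Delta\bX\|_2^2}{\bY,\bZ} \leq 8\sigma_n^2\sum_\ell w_\ell^4(\Delta_\ell\mu_n)^2 + 8\sigma_n^4\rho_4^4\|\bw\|_4^4.
\]
The second piece is controlled by $\|\bw\|_4^4 \leq \|\bw\|_\infty^2\|\bw\|_2^2$ together with $\sigma_n^2\|\bw\|_2^2 \leq V_n/2$, which produces a $\rho_4^4\epsilon_1^2 V_n^2$ contribution. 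The first piece, which must match the more delicate $(\epsilon_1^2\epsilon_2)^{2/3}$ contribution in $\epsilon_*$, is handled by H\"older's inequality with exponents $(3/2,3)$:
\[
\sum_\ell w_\ell^4(\Delta_\ell\mu_n)^2 \leq \|\bw\circ\Delta\mu_n(\bY,\bZ)\|_3^2\,\|\bw\|_6^2,
\]
combined with $\|\bw\|_6^2\leq \|\bw\|_\infty^{4/3}\|\bw\|_2^{2/3}$ and with $\|\bw\circ\Delta\mu_n(\bY,\bZ)\|_3^3 \leq 2\epsilon_2 V_n^{3/2}$; the latter follows from the definition of $\epsilon_2$ together with $\sigma_n^3\|\bw\|_2^3\leq (V_n/2)^{3/2}$ and $\|\bw\circ\Delta\mu_n(\bY,\bZ)\|_2^3 \leq V_n^{3/2}$. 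Chebyshev's inequality then yields the two-sided $L_2$ bound with failure probability at most $2\delta/3$ for the stated $\epsilon_*$.

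For the $L_3$ control, the elementary inequality $|a+b|^3\leq 4(|a|^3+|b|^3)$ (from convexity of $x\mapsto x^3$ on $[0,\infty)$) gives
\[
\|\bw\circ\Delta\bX\|_3^3 \leq 4\|\bw\circ\Delta\mu_n(\bY,\bZ)\|_3^3 + 4\sigma_n^3\|\bw\circ\Delta\bzeta\|_3^3.
\]
The first term is deterministic given $(\bY,\bZ)$ and is at most $8\epsilon_2 V_n^{3/2}$ by the same $\epsilon_2$-to-$V_n$ conversion as above. For the second, bound $\EEst{\sigma_n^3\|\bw\circ\Delta\bzeta\|_3^3}{\bY,\bZ} = \sigma_n^3\|\bw\|_3^3\,\EE{|\Delta_\ell\bzeta|^3}\leq 2^{3/4}\rho_4^3\epsilon_1 V_n^{3/2}$, using $\|\bw\|_3^3\leq \epsilon_1\|\bw\|_2^3$ and $\sigma_n^3\|\bw\|_2^3 \leq (V_n/2)^{3/2}$, and then apply Markov's inequality with threshold chosen so that the failure probability is $\delta/3$.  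Combined with the super-additivity $(a+b)^{3/2}\geq a^{3/2}+b^{3/2}$ for $a,b\geq 0$, the resulting high-probability bound sits below $(V_n\epsilon_*')^{3/2}$ for $\epsilon_*'$ as in the statement, and a union bound over the two events delivers the joint probability of at least $1-\delta$.

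The main obstacle is not any single step but rather choreographing the constants so that the bounds slot into the precise forms of $\epsilon_*$ and $\epsilon_*'$. In particular, the $(\epsilon_1^2\epsilon_2)^{1/3}$ term in $\epsilon_*$ cannot be produced by the cruder pull-out $\sum_\ell w_\ell^4(\Delta_\ell\mu_n)^2 \leq \|\bw\|_\infty^2\|\bw\circ\Delta\mu_n(\bY,\bZ)\|_2^2$; the H\"older inequality with exponents $(3/2,3)$ is essential in introducing $\|\bw\circ\Delta\mu_n(\bY,\bZ)\|_3$ so that the definition of $\epsilon_2$ can be invoked.  The split of the total failure budget across the $L_2$ and $L_3$ events (here $2\delta/3$ and $\delta/3$) must likewise be coordinated with the specific constants $\sqrt{3/\delta}$ and $24^{2/3}\delta^{-2/3}$ appearing in the statement.
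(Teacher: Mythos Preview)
Your proof is correct and follows essentially the same route as the paper: the $\|\cdot\|_2^2$ part is a Chebyshev argument with the same H\"older $(3/2,3)$ step to extract the $(\epsilon_1^2\epsilon_2)^{1/3}$ contribution, and the $\|\cdot\|_3$ part is a Markov bound on the noise cube plus a deterministic bound on the signal cube via the definition of $\epsilon_2$.  The only packaging differences are that the paper splits $\|\bw\circ\Delta\bX\|_2^2$ into the squared-noise piece and the cross piece and applies Chebyshev to each separately (budget $\delta/3$ apiece), whereas you compute the full variance in one shot by exploiting the symmetry of $\Delta_\ell\bzeta$ to annihilate the cross-covariance; and for the $L_3$ bound the paper uses the triangle inequality on $\|\cdot\|_3$ followed by $(a+b)^2\le 2a^2+2b^2$, while you apply $|a+b|^3\le 4(|a|^3+|b|^3)$ directly to the cubes.
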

\begin{proof}
Recall the notation from the hypothesis in statement of Theorem~\ref{thm:general power analysis}:
\[
\epsilon_1 = \frac{\|\bw\|_\infty}{\|\bw\|_2}, \quad \epsilon_2 = \frac{\|\bw\circ\Delta\mu_n(\bY,\bZ)\|^3_3}{\sigma_n^3\|\bw\|^3_2 + \|\bw\circ\Delta\mu_n(\bY,\bZ)\|^3_2}.
\]
Under the model class~\eqref{model:general alternative}, $\Delta\bX = \Delta\mu_n(\bY,\bZ) + \sigma_n\Delta\bzeta$, and thus, we can write
\begin{equation}\label{eq:expand-2norm}
    \|\bw\circ \Delta\bX\|^2_2 = \sigma_n^2\|\bw\circ \Delta\bzeta\|^2_2 + 2\sigma_n (\bw\circ \Delta\bzeta)^\top \bigl(\bw\circ \Delta\mu_n(\bY,\bZ)\bigr) + \|\bw\circ \Delta\mu_n(\bY,\bZ)\|^2_2.
\end{equation}
For the first term, we have
\begin{align*}
    \EEst{\sigma_n^2\|\bw\circ \Delta\bzeta\|^2_2}{\bY,\bZ} &= 2\sigma_n^2\|\bw\|^2_2,\\
    \VVst{\sigma_n^2\|\bw\circ \Delta\bzeta\|^2_2}{\bY,\bZ} &\leq  \sum_{\ell=1}^L w_\ell^4\sigma_n^4 \Ep{\zeta,\zeta'\iidsim P_\zeta}{|\zeta - \zeta'|^4} \\
    &\leq 16\rho^4_4\sigma_n^4\|\bw\|^4_4 \leq 16\rho^4_4\sigma_n^4\|\bw\|^2_2\|\bw\|^2_\infty \leq  16\rho^4_4\sigma_n^4\|\bw\|^4_2\cdot \epsilon_1^2,
\end{align*}
and therefore by Chebyshev's inequality, for every $A > 0$,
\[\PPst{\left|\sigma_n^2\|\bw\circ \Delta\bzeta\|^2_2 - 2\sigma_n^2\|\bw\|^2_2 \right|\geq A}{\bY,\bZ} \leq\frac{16\rho^4_4\sigma_n^4\|\bw\|^4_2\cdot \epsilon_1^2}{A^2}.
\]
For the second term in \eqref{eq:expand-2norm}, we have that
\[
\EEst{2\sigma_n(\bw\circ \Delta\bzeta)^\top \bigl(\bw\circ \Delta\mu(\bY,\bZ)\bigr)}{\bY,\bZ} = 0
\]
and moreover, by H{\"o}lder's inequality,
\begin{align*}
&\VVst{2\sigma_n(\bw\circ \Delta\bzeta)^\top \bigl(\bw\circ \Delta\mu_n(\bY,\bZ)\bigr)}{\bY,\bZ} \\
&\hspace{5cm}=
4\sum_{\ell=1}^L w_\ell^2\sigma_n^2 \VVst{\Delta_\ell \bzeta}{\bY,\bZ} \cdot \bigl(w_\ell \Delta_\ell\mu_n(\bY,\bZ)\bigr)^2\\
&\hspace{5cm}=8\sigma_n^2  \sum_{\ell=1}^L w_\ell^2 \bigl(w_\ell \Delta_\ell\mu_n(\bY,\bZ)\bigr)^2 \\
&\hspace{5cm}\leq 8\sigma_n^2 \|\bw\|_6^{2} \|\bw\circ \Delta\mu_n(\bY,\bZ))\|_3^2\\
&\hspace{5cm}\leq  8\sigma_n^2 \left(\|\bw\|^2_2\|\bw\|^4_\infty\right)^{1/3}\|\bw\circ\Delta\mu_n(\bY,\bZ)\|_3^2 \\
&\hspace{5cm}=  8\sigma_n^2 \|\bw\|^2_2\left(\sigma_n^3\|\bw\|^3_2 + \|\bw\circ\Delta\mu_n(\bY,\bZ)\|^3_2\right)^{2/3} \cdot(\epsilon_1^2\epsilon_2)^{2/3}\\
&\hspace{5cm}\leq  8\sigma_n^2 \|\bw\|^2_2 \left(\sigma_n^2\|\bw\|^2_2 +  \|\bw\circ\Delta\mu_n(\bY,\bZ)\|^2_2\right)\cdot(\epsilon_1^2\epsilon_2)^{2/3},
\end{align*}
where the penultimate step uses the definitions of $\epsilon_1^2$ and $\epsilon_2$.  Another application of Chebyshev's inequality then gives us for every $B > 0$ that
\begin{align*}
&\PPst{\left|2\sigma_n(\bw\circ \Delta\bzeta)^\top (\bw\circ \Delta\mu_n(\bY,\bZ)) \right|\geq   B}{\bY,\bZ} \\
&\hspace{5cm}\leq \frac{ 8\sigma_n^2 \|\bw\|^2_2 \left(\sigma_n^2\|\bw\|^2_2 +  \|\bw\circ\Delta\mu_n(\bY,\bZ)\|^2_2\right)\cdot(\epsilon_1^2\epsilon_2)^{2/3} }{B^2}.
\end{align*}
Together, then,
\begin{multline*}
\PPst{\left|\|\bw\circ \Delta\bX\|^2_2 -  \left(2\sigma_n^2\|\bw\|^2_2 +   \|\bw\circ \Delta\mu_n(\bY,\bZ)\|^2_2\right)\right| \geq  A + B}{\bY,\bZ}\\ \leq \frac{16\sigma_n^4\rho^4_4\|\bw\|^4_2\cdot \epsilon_1^2}{A^2} +\frac{ 8\sigma_n^2 \|\bw\|^2_2 \left(\sigma_n^2\|\bw\|^2_2 +  \|\bw\circ\Delta\mu(\bY,\bZ)\|^2_2\right)\cdot(\epsilon_1^2\epsilon_2)^{2/3} }{B^2}. \end{multline*}
Next, we choose 
\begin{align*}
    A^2 &= \frac{3}{\delta}\cdot  16\rho^4_4\sigma_n^4\|\bw\|^4_2\cdot \epsilon_1^2,~~\text{and}\\
    B^2 &=\frac{3}{\delta} \cdot  8\sigma_n^2 \|\bw\|^2_2 \left(\sigma_n^2\|\bw\|^2_2 +  \|\bw\circ\Delta\mu_n(\bY,\bZ)\|^2_2\right)\cdot(\epsilon_1^2\epsilon_2)^{2/3}.
\end{align*}
Then 
\[
\PPst{\frac{\|\bw\circ \Delta\bX\|^2_2}{\left(2\sigma_n^2\|\bw\|^2_2 +   \|\bw\circ \Delta\mu_n(\bY,\bZ)\|^2_2\right)}\in [(1 -\epsilon_*) \vee 0,1+\epsilon_*]}{\bY,\bZ}\\ \geq 1-\frac{2\delta}{3}.
\]

Next we bound $\|\bw\circ\Delta\bX\|^3_3$. 
We start by writing
\begin{align*}
\|\bw\circ\Delta\bX\|_3 &\leq \sigma_n\|\bw\circ\Delta\bzeta\|_3 + \|\bw\circ\Delta\mu_n(\bY,\bZ)\|_3 \\
&= \sigma_n\|\bw\circ\Delta\bzeta\|_3 + \left(\sigma_n^3\|\bw\|^3_2 +\|\bw\circ\Delta\mu_n(\bY,\bZ)\|^3_2\right)^{1/3}\cdot \epsilon_2^{1/3}\\
&\leq \sigma_n\|\bw\circ\Delta\bzeta\|_3 + \left(\sigma_n\|\bw\|_2 +\|\bw\circ\Delta\mu_n(\bY,\bZ)\|_2\right)\cdot \epsilon_2^{1/3}.
\end{align*}
Further, we have that
\begin{align*}
    \EEst{\|\bw\circ\Delta\bzeta\|_3^3}{\bY,\bZ} &= \sum_{\ell=1}^L w_\ell^3 \EEst{|\Delta_\ell\bzeta|^3}{\bY,\bZ} \\&\leq \|\bw\|^3_3 \cdot\Ep{\zeta,\zeta'\iidsim P_\zeta}{|\zeta-\zeta'|^3} \leq  \|\bw\|^3_3 \cdot 8\rho^3_3
\end{align*}
where $\rho_3 = \Ep{\zeta\sim P_\zeta}{|\zeta|^3}^{1/3}$. 
Since $\rho_3\leq \rho_4$, we further have that
\[
\EEst{\|\bw\circ\Delta\bzeta\|_3^3}{\bY,\bZ} \leq 8\rho_4^3\|\bw\|^2_2 \cdot \|\bw\|_\infty \leq 8\rho_4^3\|\bw\|^3_2 \cdot\epsilon_1.
\]
By Markov's inequality, then,
\[
\PPst{\sigma_n^3\|\bw\circ\Delta\bzeta\|_3^3 \leq \frac{24\sigma_n^3\rho_4^3\|\bw\|^3_2 \cdot\epsilon_1}{\delta}}{\bY,\bZ} \geq 1 - \frac{\delta}{3}.
\]
Together, then, we have
\[
\PPst{\|\bw\circ\Delta\bX\|_3 \leq \frac{\sqrt[3]{24}\sigma_n\rho_4\|\bw\|_2 \epsilon_1^{1/3}}{\delta^{1/3}} + \bigl(\sigma_n\|\bw\|_2 +\|\bw\circ\Delta\mu_n(\bY,\bZ)\|_2\bigr) \epsilon_2^{1/3}}{\bY,\bZ} \geq 1 - \frac{\delta}{3},
\]
which we can relax to
\begin{align*}
&\PPst{\|\bw\circ\Delta\bX\|^2_3 \leq  \frac{2(24)^{2/3}\sigma_n^2\rho_4^2\|\bw\|^2_2 \epsilon_1^{2/3}}{\delta^{2/3}} + 4  \bigl(\sigma_n^2\|\bw\|^2_2 +\|\bw\circ\Delta\mu_n(\bY,\bZ)\|^2_2\bigr) \epsilon_2^{2/3}}{\bY,\bZ} \\
&\hspace{14.5cm}\geq 1 - \frac{\delta}{3}.
\end{align*}
Thus
\[
\PPst{\|\bw\circ\Delta\bX\|^2_3 \leq \left(2\sigma_n^2\|\bw\|^2_2 +   \|\bw\circ \Delta\mu_n(\bY,\bZ)\|^2_2\right)\cdot \epsilon_*'}{\bY,\bZ} \geq 1 - \frac{\delta}{3}
\]
as required.
\end{proof}
\begin{lemma}\label{lem:gaussian-CDF-approximation}
        Consider any real-valued sequence $(x_n)_{n\in \mathbb{N}}$, and $y\in \RR$. If $\epsilon_n\overset{P}{\to}0$ and $\epsilon_n^\prime\overset{P}{\to}0$ as $n\to\infty$, then 
        \[
        \bigl|\Phi\bigl(x_n(1+\epsilon_n)+y+\epsilon_n^\prime\bigr)-\Phi(x_n+y)\bigr|\overset{P}{\to} 0.
        \]
    \end{lemma}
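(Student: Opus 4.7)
The main difficulty here is that although $\epsilon_n$ and $\epsilon_n^\prime$ vanish in probability, the deterministic sequence $x_n$ is permitted to be unbounded, so the multiplicative perturbation $x_n\epsilon_n$ inside $\Phi$ need not vanish. A naive Lipschitz bound on $\Phi$ (with constant $1/\sqrt{2\pi}$) is therefore not enough in general. The remedy is to exploit the fact that $\Phi$ is monotone and saturates at $0$ and $1$ in the tails: if $|x_n|\to\infty$, then both $\Phi(x_n+y)$ and $\Phi\bigl(x_n(1+\epsilon_n)+y+\epsilon_n^\prime\bigr)$ are forced to approach the \emph{same} boundary value, provided the perturbation does not reverse the sign of the argument.

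The plan is to invoke the standard characterisation that a sequence of random variables converges to $0$ in probability if and only if every subsequence admits a further subsequence converging to $0$ almost surely. Fix an arbitrary subsequence of indices. By the hypothesis $\epsilon_n\overset{P}{\to}0$ and $\epsilon_n^\prime\overset{P}{\to}0$, one extracts a further subsequence along which $\epsilon_n\to 0$ and $\epsilon_n^\prime\to 0$ almost surely. By sequential compactness of the extended real line $[-\infty,\infty]$, one further refines the subsequence so that $x_n\to x_\infty\in[-\infty,\infty]$. It then suffices to show, along this refined subsequence, that $\bigl|\Phi\bigl(x_n(1+\epsilon_n)+y+\epsilon_n^\prime\bigr)-\Phi(x_n+y)\bigr|\to 0$ almost surely.

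The argument then splits into three cases according to $x_\infty$. If $x_\infty\in\R$ is finite, both $x_n(1+\epsilon_n)+y+\epsilon_n^\prime\to x_\infty+y$ and $x_n+y\to x_\infty+y$ almost surely, and continuity of $\Phi$ closes the case. If $x_\infty=+\infty$, then $\Phi(x_n+y)\to 1$; since $\epsilon_n\to 0$ almost surely, eventually $1+\epsilon_n\geq 1/2$, and combined with $x_n\to+\infty$ this gives $x_n(1+\epsilon_n)+y+\epsilon_n^\prime\geq x_n/2+y+\epsilon_n^\prime\to+\infty$ almost surely, so $\Phi\bigl(x_n(1+\epsilon_n)+y+\epsilon_n^\prime\bigr)\to 1$ as well. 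The case $x_\infty=-\infty$ is entirely symmetric.

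The main obstacle is the infinite-$x_\infty$ case; the subsequence-plus-compactness device is what makes the saturation argument available, and without it one would need to chase $M$--dependent truncation bounds that complicate the bookkeeping unnecessarily.
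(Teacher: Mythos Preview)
Your proof is correct and takes a genuinely different route from the paper's. The paper first strips off the additive perturbation via the global Lipschitz bound $|\Phi(a)-\Phi(b)|\leq |a-b|/\sqrt{2\pi}$, reducing to showing $|\Phi(x_n(1+\epsilon_n))-\Phi(x_n)|\overset{P}{\to}0$; it then applies the mean value theorem to write this as $\phi(\bar x_n)|x_n||\epsilon_n|$ and exploits the Gaussian tail to bound $|x_n|\phi(\bar x_n)\leq |x_n|e^{-x_n^2/8}\leq K$ on the high-probability event $\{|\epsilon_n|\leq 1/2\}$, giving a direct quantitative bound of order $|\epsilon_n|$. Your subsequence-plus-compactness argument is softer: it uses only continuity of $\Phi$ on $\R$ and its saturation at $\pm\infty$, avoiding any computation with the Gaussian density. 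As a bonus, your argument works verbatim with $\Phi$ replaced by any continuous distribution function, whereas the paper's bound on $|x|\phi(x/2)$ is specific to the Gaussian. The trade-off is that the paper's approach yields an explicit rate (useful if one later wanted non-asymptotic control), while yours gives only the qualitative $\mathrm{o}_P(1)$ conclusion---which is all the lemma claims.
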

    \begin{proof}
        First, since $\Phi$ is $\frac{1}{\sqrt 2\pi}$-Lipschitz, we have that 
        \begin{align*}
            \bigl|\Phi\bigl(x_n(1+\epsilon_n)+y+\epsilon_n^\prime\bigr)-\Phi\bigl((x_n+y)\cdot(1+\epsilon_n)\bigr)\bigr|&\leq \frac{1}{\sqrt{2\pi}}\,|\epsilon_n^\prime-y\epsilon_n|\\
            &\leq \frac{|\epsilon_n^\prime|+|y||\epsilon_n|}{\sqrt{2\pi}}=\mathrm{o}_P(1).
        \end{align*}
        Hence, it suffices to show that 
    \[
    \bigl|\Phi\bigl((x_n+y)\cdot(1+\epsilon_n)\bigr)-\Phi(x_n+y)\bigr|\overset{P}{\to} 0.
        \]
        In fact, without loss of generality, we may take $y=0$.
         By the mean value theorem, there exists a real-valued sequence of $(\bar{x}_n)_{n\in \mathbb{N}}$, where $\bar{x}_n$ lies between $x_n$ and $x_n(1+\epsilon_n)$, such that 
        \[
            \bigl|\Phi\bigl(x_n(1+\epsilon_n)\bigr)-\Phi(x_n)\bigr|=\phi(\bar{x}_n) \cdot |x_n\epsilon_n|=\phi\bigl(|\bar{x}_n|\bigr) \,|x_n|\,|\epsilon_n|.
        \]
        Define the event $A_n:=\bigl\{|\bar{x}_n|\geq |x_n|/2\bigr\}$
        and note that $\mathbb{P}(A_n) \geq \mathbb{P}(|\epsilon_n|\leq 1/2)$. Since $\epsilon_n\overset{P}{\to} 0$ as $n\to \infty$, we have $\mathbb{P}(A_n)\to 1$ as $n\to\infty$. Moreover, writing $K := \sup_{x \in \mathbb{R}} |x|\exp(-x^2/8) \in (0,\infty)$, we have on $A_n$ that,
        \[
        \phi(|\bar{x}_n|)\cdot |x_n|\leq \exp\bigl(-x_n^2/8\bigr)\cdot |x_n|\leq K.
        \]
        Finally fix $\delta>0$ and observe that
        \begin{align*}
            \mathbb{P}\bigl\{\bigl|\Phi\bigl(x_n(1+\epsilon_n)\bigr)-\Phi(x_n)\bigr| \geq \delta\bigr\} &\leq \mathbb{P}\Bigl[\bigl\{\bigl|\Phi\bigl(x_n(1+\epsilon_n)\bigr)-\Phi(x_n)\bigr|\geq \delta\bigr\}\cap A_n\Bigr] + \mathbb{P}(A_n^c) \\    
            &\leq \mathbb{P}\bigl[\bigl\{|\epsilon_n|\geq \delta/K\bigr\}\cap A_n\bigr] + \mathbb{P}(A_n^c)\\
            &\leq \mathbb{P}\bigl(|\epsilon_n|\geq \delta/K\bigr) + \mathbb{P}(A_n^c)\to\,0,
            \end{align*}
            as $n\to\infty$. Since $\delta > 0$ was arbitrary, the result follows.        
\end{proof}
\subsubsection{Concentration results for neighbor matching}\label{app:conc-neighbour}
We now prove a concentration result for the neighbor matching in the asymptotic regime from Section~\ref{sec:partial_linear_model}.
Let $(Y_1,Z_1),(Y_2,Z_2),\dots\iidsim P_{Y,Z}$, where $P_{(Y,Z)}$ is a distribution on $[-1,1]\times\R$, and let $Z_{(n,1)}\leq \dots \leq Z_{(n,n)}$ be the order statistics of $Z_1,\dots,Z_n$. Abusing notation, we will write $Y_{(n,i)}$ to denote the $Y$ value corresponding to $Z_{(n,i)}$---that is, 
\[(Y_{(n,1)},\dots,Y_{(n,n)})\mid (Z_{(n,1)},\dots,Z_{(n,n)})\sim P_{Y|Z}(\cdot\mid Z_{(n,1)})\times\dots \times P_{Y|Z}(\cdot\mid Z_{(n,n)}).\]
Then for neighbor matching at sample size $n$, we are interested in the quantity
\[
{\rm NM}_n  := \frac{2}{n}\sum_{i=1}^{\lfloor n/2\rfloor} (Y_{(n,2i-1)} - Y_{(n,2i)})^2_+.
\]
In the following result, we show that this quantity concentrates around the mean of the conditional variance of $P_{Y|Z}$:
\begin{proposition}\label{lem:neighbour_matching_conc}
Under the assumptions and definitions stated above,
\[
\bigl|{\rm NM}_n - \mathbb{E}\bigl[\VPst{Y\sim P_{Y\mid Z}}{Y}\bigr]\bigr| = {\rm o}_P(1).
\]
\end{proposition}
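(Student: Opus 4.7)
The plan is to reduce the problem to a weak law of large numbers for an auxiliary i.i.d.\ sample, and to control the replacement error via Wasserstein coupling. Let $L_n := \lfloor n/2 \rfloor$, and introduce auxiliary variables $\tilde{Z}_1,\ldots,\tilde{Z}_{L_n} \iidsim P_Z$, independent of $\{(Y_i,Z_i)\}_{i \in [n]}$, together with $\tilde{Y}_\ell^{(1)}, \tilde{Y}_\ell^{(2)} \mid \tilde{Z}_\ell \iidsim P_{Y\mid \tilde{Z}_\ell}$ for each $\ell \in [L_n]$. Define
\[
\widetilde{\rm NM}_n := \frac{2}{n}\sum_{\ell=1}^{L_n}\bigl(\tilde{Y}_\ell^{(1)} - \tilde{Y}_\ell^{(2)}\bigr)_+^2.
\]
By the exchangeability of $(\tilde{Y}_\ell^{(1)}, \tilde{Y}_\ell^{(2)})$ given $\tilde{Z}_\ell$,
\[
\mathbb{E}\bigl[(\tilde{Y}_\ell^{(1)} - \tilde{Y}_\ell^{(2)})_+^2 \bigm| \tilde{Z}_\ell\bigr] = \tfrac{1}{2}\,\mathbb{E}\bigl[(\tilde{Y}_\ell^{(1)} - \tilde{Y}_\ell^{(2)})^2 \bigm| \tilde{Z}_\ell\bigr] = \VPst{Y \sim P_{Y\mid \tilde{Z}_\ell}}{Y},
\]
and since the summands are bounded and i.i.d.\ across $\ell$, the weak law of large numbers yields $\widetilde{\rm NM}_n \overset{P}{\to} \mathbb{E}\bigl[\VPst{Y\sim P_{Y\mid Z}}{Y}\bigr]$.

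Next, I would couple $(Y_{(n,2\ell-1)}, Y_{(n,2\ell)})$ to $(\tilde{Y}_\ell^{(1)}, \tilde{Y}_\ell^{(2)})$ componentwise, using the optimal $W_1$-couplings of $P_{Y\mid Z = Z_{(n,2\ell-1)}}$ with $P_{Y\mid Z = \tilde{Z}_\ell}$, and of $P_{Y\mid Z = Z_{(n,2\ell)}}$ with $P_{Y\mid Z = \tilde{Z}_\ell}$. The auxiliary $\tilde{Z}_\ell$ themselves can be coupled to $\bZ$ by setting $\tilde{Z}_\ell := F_Z^{-1}(U_\ell)$ for $U_\ell \iidsim \mathrm{Unif}(0,1)$ drawn independently of $\bZ$, so that $\tilde{Z}_\ell$ typically lies near the order statistics $Z_{(n,2\ell-1)}, Z_{(n,2\ell)}$. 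Because the map $(y,y') \mapsto (y-y')_+^2$ is $O(1)$-Lipschitz on $[-1,1]^2$, this coupling yields
\[
\mathbb{E}\bigl|{\rm NM}_n - \widetilde{\rm NM}_n\bigr| \;\lesssim\; \frac{1}{n}\sum_{\ell=1}^{L_n}\sum_{j \in \{2\ell-1,\, 2\ell\}} \mathbb{E}\bigl[d_{\rm W}\bigl(P_{Y\mid Z = Z_{(n,j)}},\, P_{Y\mid Z = \tilde{Z}_\ell}\bigr)\bigr].
\]

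The principal obstacle is showing that this right-hand side tends to zero without any smoothness assumption on $z \mapsto P_{Y\mid Z=z}$. The key observation is that this map is Borel measurable into the Wasserstein space on $[-1,1]$, so by Lusin's theorem, for every $\varepsilon > 0$ it is (uniformly) continuous on a compact set $A_\varepsilon \subseteq \mathbb{R}$ with $P_Z(A_\varepsilon) \geq 1 - \varepsilon$; off $A_\varepsilon$, the Wasserstein distance is crudely bounded by $2$ using $Y \in [-1,1]$, contributing $O(\varepsilon)$ to the average. On $A_\varepsilon$, uniform continuity combined with the standard fact that the maximum spacing between consecutive order statistics of $\bZ$ (and the spacing to the nearest $\tilde{Z}_\ell$ under the coupling above) is $\mathrm{o}_P(1)$ forces the contribution to vanish as $n \to \infty$. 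This is precisely the content of Lemma~\ref{lem:chatterjee_lemma_implies_dw} cited in the paper, which I would invoke directly. Combining the two steps via the triangle inequality together with Markov's inequality concludes the proof.
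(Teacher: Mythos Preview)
Your approach is essentially the same as the paper's: both reduce to the Lipschitz bound on $(y,y')\mapsto (y-y')_+^2$, both arrive at the same Wasserstein sum, and both invoke Lemma~\ref{lem:chatterjee_lemma_implies_dw} (with its Lusin-based argument) to kill that sum. The only organizational difference is that the paper first conditions on $\bZ$ and concentrates ${\rm NM}_n$ around its conditional expectation via Hoeffding, then compares conditional expectations to $\mathrm{Var}(P_{Y|Z=Z'_{(\lfloor n/2\rfloor,\ell)}})$; you instead build an explicit coupling between ${\rm NM}_n$ and the auxiliary $\widetilde{\rm NM}_n$ and apply the WLLN directly to the latter. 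Both are valid and lead to the same estimates.

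One point to tighten: your description of the $\tilde{Z}_\ell$ is inconsistent. Setting $\tilde{Z}_\ell = F_Z^{-1}(U_\ell)$ with $U_\ell$ i.i.d.\ uniform and independent of $\bZ$ does \emph{not} place $\tilde{Z}_\ell$ near $Z_{(n,2\ell-1)},Z_{(n,2\ell)}$; an unsorted i.i.d.\ draw has no reason to align with those order statistics. Lemma~\ref{lem:chatterjee_lemma_implies_dw}, which you correctly invoke, uses the \emph{order statistics} $Z'_{(K_n,k)}$ of the auxiliary sample. So you should take $\tilde{Z}_\ell$ to be the $\ell$th order statistic of $L_n$ i.i.d.\ draws from $P_Z$. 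This does not affect your WLLN step, since $\widetilde{\rm NM}_n$ is invariant to relabelling the $\tilde{Z}$'s (equivalently, condition on $\tilde{\bZ}$ and apply Hoeffding, then use the WLLN on the conditional mean $\frac{2}{n}\sum_\ell \mathrm{Var}(P_{Y|Z=\tilde{Z}_\ell})$, which is the same whether the $\tilde{Z}_\ell$ are sorted or not).
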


\begin{proof}
The proof is split into four key steps.
\paragraph{Step 1: a concentration step.}
First, we note that conditional on $Z_1,\dots,Z_n$, the sum $\sum_{i=1}^{\lfloor n/2\rfloor}(Y_{(n,2i-1)} - Y_{(n,2i)})^2_+$ is the sum of $\lfloor n/2\rfloor$ independent terms, with each summand lying in $[0,4]$. Hence, for any $\delta>0$, by Hoeffding's inequality we have for $n \geq 2$ that 
\[
\PPst{\bigl|{\rm NM}_n - \EEst{{\rm NM}_n}{Z_1,\dots,Z_n}\bigr| \geq 4 \cdot \frac{2}{n}\sqrt{\frac{\lfloor n/2\rfloor  \cdot \log(2/\delta)}{2}}}{\bZ} \leq \delta.
\]
In particular,
\begin{equation}\label{eq:concentration-of-NM}
    \bigl|{\rm NM}_n - \EEst{{\rm NM}_n}{Z_1,\dots,Z_n}\bigr| = {\rm O}_P(n^{-1/2}).
\end{equation}
\paragraph{Step 2: comparing to i.i.d.\ pairs.}
Since $(y_1,y_2)\mapsto (y_1-y_2)^2_+$ is $4$-Lipschitz with respect to the $\ell_1$ norm on $[-1,1]^2$, for any two distributions $Q$, $Q'$ on $[-1,1]^2$ we have
\[\left|\Ep{(Y,Y')\sim Q}{(Y-Y')^2_+} - \Ep{(Y,Y')\sim Q'}{(Y-Y')^2_+}\right|\leq 4{\rm d}_{\rm W}(Q,Q'),\]
where ${\rm d}_{\rm W}(\cdot,\cdot)$ is the $L_1$-Wasserstein distance between distributions, defined as
\[
\mathrm{d}_{\rm W}(P,Q):= \inf_{(U,V) \sim (P,Q)} \mathbb{E}\|U-V\|_2,
\]
where the infimum is taken over all couplings of $U$ and $V$ with $U \sim P$ and $V \sim Q$.  In particular, in the special case that $Q=Q_1\times Q_2$ and $Q'=Q_1'\times Q_2'$ are both product distributions, then we have
\[\left|\Ep{(Y,Y')\sim Q_1\times Q_2}{(Y-Y')^2_+} - \Ep{(Y,Y')\sim Q'_1\times Q'_2}{(Y-Y')^2_+}\right|\leq 4 \bigl({\rm d}_{\rm W}(Q_1,Q'_1)+{\rm d}_{\rm W}(Q_2,Q'_2)\bigr).\]

Now, fix any $i\in\{1,\dots,\lfloor n/2\rfloor\}$ and observe that
\[\big( Y_{(n,2i-1)}, Y_{(n,2i)}\big) \mid(Z_1,\dots,Z_n)\sim P_{Y|Z}(\cdot\mid Z_{(n,2i-1)})\times P_{Y|Z}(\cdot\mid Z_{(n,2i)}).\]
Hence, for any $z\in \R$, 
\begin{align*}
    &\left|\EEst{\left(Y_{(n,2i-1)} -Y_{(n,2i)}\right)^2_+}{Z_1,\dots,Z_n}
    - \textnormal{Var}_{Y \sim P_{Y|Z}(\cdot\mid z)}(Y)\right|\\
    &= \left|\EEst{\left(Y_{(n,2i-1)} -Y_{(n,2i)}\right)^2_+}{Z_1,\dots,Z_n}
    - \Ep{Y,Y'\iidsim P_{Y|Z}(\cdot\mid z)}{(Y-Y')^2_+}\right|\\
    &\leq 4\bigl\{{\rm d}_{\rm W}\left(P_{Y|Z}(\cdot\mid Z_{(n,2i-1)}),P_{Y|Z}(\cdot\mid z)\right)+{\rm d}_{\rm W}\left(P_{Y|Z}(\cdot\mid Z_{(n,2i)}),P_{Y|Z}(\cdot\mid z)\right)\bigr\}.
\end{align*}
Now let $Z'_1,Z'_2,\dots\iidsim P_Z$ be drawn independently of the data. Then, applying the calculations above with $z=Z'_{(\lfloor n/2\rfloor, i)}$,
\begin{align*}
    &\left|\EEst{\left(Y_{(n,2i-1)} -Y_{(n,2i)}\right)^2_+}{Z_1,\dots,Z_n}
    - \textnormal{Var}_{Y \sim P_{Y|Z}(\cdot\mid Z'_{(\lfloor n/2\rfloor, i)})}(Y)\right|\\
    &\leq 4\bigl\{{\rm d}_{\rm W}\left(P_{Y|Z}(\cdot\mid Z_{(n,2i-1)}),P_{Y|Z}(\cdot\mid Z'_{(\lfloor n/2\rfloor, i)})\right)+{\rm d}_{\rm W}\left(P_{Y|Z}(\cdot\mid Z_{(n,2i)}),P_{Y|Z}(\cdot\mid Z'_{(\lfloor n/2\rfloor, i)})\right)\bigr\}.
\end{align*}
For $n\geq 3$, summing the inequality above over $i=1,\dots,\lfloor n/2\rfloor$, we obtain
\begin{align}\label{eq:till-wasserstein-bound-neighbour-matching}
    & \biggl|\EEst{{\rm NM}_n}{Z_1,\dots,Z_n} - \frac{1}{\lfloor n/2 \rfloor} \sum_{i=1}^{\lfloor n/2\rfloor} \textnormal{Var}_{Y \sim P_{Y|Z}(\cdot\mid Z'_{(\lfloor n/2\rfloor, i)})}(Y)\biggr|\nonumber\\
    &\leq  \frac{1}{\lfloor n/2 \rfloor} \sum_{i=1}^{\lfloor n/2\rfloor}\left|\EEst{\left(Y_{(n,2i-1)} -Y_{(n,2i)}\right)^2_+}{Z_1,\dots,Z_n}
    - \textnormal{Var}_{Y \sim P_{Y|Z}(\cdot\mid Z'_{(\lfloor n/2\rfloor, i)})}(Y)\right|
    +\frac{4}{n}\nonumber\\
    &\leq \frac{8}{n-2}\sum_{i=1}^{\lfloor n/2\rfloor} \sum_{j = 2i-1}^{2i} {\rm d}_{\rm W}\big(P_{Y|Z}(\cdot\mid Z_{(n,j)}), P_{Y|Z}(\cdot\mid Z'_{(\lfloor n/2\rfloor, i)})\big)+\frac{4}{n}.
\end{align}

\paragraph{Step 3: concentration around the mean of conditional variance.}
Note that
\[
\frac{1}{\lfloor n/2 \rfloor} \sum_{i=1}^{\lfloor n/2\rfloor} \textnormal{Var}_{Y \sim P_{Y|Z}(\cdot\mid Z'_{(\lfloor n/2\rfloor, i)})}(Y)
= \frac{1}{\lfloor n/2\rfloor} \sum_{i=1}^{\lfloor n/2\rfloor} \textnormal{Var}_{Y \sim P_{Y|Z}(\cdot\mid Z'_{i})}(Y),
\]
since by definition, $Z'_{(\lfloor n/2\rfloor, 1)},\dots,Z'_{(\lfloor n/2\rfloor, \lfloor n/2\rfloor)}$ is simply a permutation of $Z'_1,\dots,Z'_{\lfloor n/2\rfloor}$. 
Moreover, since $Z'_1,Z'_2,\ldots \iidsim P_Z$, the random variables $\textnormal{Var}_{Y \sim P_{Y|Z}(\cdot\mid Z'_{i})}(Y)$ are independent and identically distributed, with mean $\EE{\textnormal{Var}(P_{Y|Z})}$ and variance at most 1. Therefore, by the central limit theorem, we have
\[
\biggl| \frac{1}{\lfloor n/2\rfloor } \sum_{i=1}^{\lfloor n/2\rfloor} \textnormal{Var}_{Y \sim P_{Y|Z}(\cdot\mid Z'_{i})}(Y) -  \mathbb{E}\bigl[\VPst{Y\sim P_{Y\mid Z}}{Y}\bigr]\biggr| =  {\rm O}_P(n^{-1/2}).
\]
Combining this with \eqref{eq:concentration-of-NM} and \eqref{eq:till-wasserstein-bound-neighbour-matching}, we have
\begin{multline*}
    \left|{\rm NM}_n - \EE{\textnormal{Var}(P_{Y|Z})}\right|\\\leq 
     \frac{8}{n-2}\sum_{i=1}^{\lfloor n/2\rfloor} \sum_{j = 2i-1}^{2i} {\rm d}_{\rm W}\big(P_{Y|Z}(\cdot\mid Z_{(n,j)}), P_{Y|Z}(\cdot\mid Z'_{(\lfloor n/2\rfloor, i)})\big)  + {\rm O}_P(n^{-1/2}).
\end{multline*}

\paragraph{Step 4: the Wasserstein distance.}
Finally, by Lemma~\ref{lem:chatterjee_lemma_implies_dw} (applied with $K_n = \lfloor n/2\rfloor$, and $B_{n,k} = \{2k-1,2k\}$ for $k=1,\dots,K_n$), we have
\[
\frac{8}{n-2}\sum_{i=1}^{\lfloor n/2\rfloor} \sum_{j = 2i-1}^{2i} {\rm d}_{\rm W}\big(P_{Y|Z}(\cdot\mid Z_{(n,j)}), P_{Y|Z}(\cdot\mid Z'_{(\lfloor n/2\rfloor, i)})\big)={\rm o}_P(1).
\]
This completes the proof.
\end{proof}
\subsubsection{Concentration results for cross-bin matching}

The aim of this subsection is to prove a concentration result for cross-bin matching under the asymptotic framework of Section~\ref{sec:partial_linear_model}, analogous to Proposition~\ref{lem:neighbour_matching_conc} for neighbor matching in Section~\ref{app:conc-neighbour}. To state and prove the result, we first introduce some more notation for cross-bin matching at sample size $n$ with $K_n$ bins. Let $m_n=\lfloor n/K_n\rfloor$, and let
\[
A_{n,1} = \big\{1,\dots,m_n\big\}, \dots, A_{n,K_n} = \big\{(K_n-1)m_n+ 1, \dots, K_nm_n\big\}
\]
be an uniform partition of $\{1,\ldots,K_nm_n\}$ into $K_n$ bins.  Given $y,y'\in[-1,1]^{m_n}$, let $y_{(1)}\leq \dots \leq y_{(m_n)}$ and $y'_{(1)}\leq \dots \leq y'_{(m_n)}$ denote the sorted values of $y$ and $y'$, and define $f_n: [-1,1]^{m_n}\times [-1,1]^{m_n} \to [0,\infty)$ by
\begin{equation}
\label{Eq:fn}
f_n(y,y') = \sum_{i=1}^{\lfloor m_n/2\rfloor}\big(y_{(m_n+1-i)} - y'_{(i)}\big)^2_+.
\end{equation}
Building on the notation defined at the beginning of Section~\ref{app:conc-neighbour}, for any $k\in [K_n]$ we define 
$Y_{(n,A_{n,k})}:=(Y_{(n,(k-1)m_n+1)},\ldots, Y_{(n,km_n)})$.
Our quantity of interest can then be written as
\[
{\rm CBM}_n := \frac{2}{n}\|\Delta\bY^+\|^2_2 = \frac{2}{n}\sum_{k=1}^{K_n-1} f_n(Y_{(n,A_{n,k})},Y_{(n,A_{n,k+1})}).
\]

Similar to neighbor matching, the following proposition establishes concentration of the aforementioned quantity around the mean conditional deviance:
\begin{proposition}\label{lem:cross_bin_lemma}
Under the assumptions and definitions stated above, if $K_n\to\infty$ and $n/K_n\to\infty$, then
\[\left|{\rm CBM}_n - \EE{\textnormal{Dev}(P_{Y|Z})}\right| = {\rm o}_P(1).\]
\end{proposition}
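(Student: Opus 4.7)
The plan is to follow the three-step structure of the proof of Proposition~\ref{lem:neighbour_matching_conc}, with extra care in identifying the limit, since $\textnormal{Dev}(P_{Y\mid Z})$ arises from within-bin order statistics rather than from a single pair of $Y$-values sharing a common $Z$. First (concentration), I would show $\bigl|{\rm CBM}_n - \EEst{{\rm CBM}_n}{\bZ}\bigr| = \mathrm{o}_P(1)$ by a bounded-differences argument. The kernel $f_n$ is $4$-Lipschitz with respect to the sorted $L_1$ distance in each of its two arguments, via $|(a)_+^2 - (b)_+^2|\leq 4|a-b|$ on $[-2,2]$ combined with the rearrangement inequality $\sum_i |y_{(i)}-\tilde y_{(i)}| \leq \|y-\tilde y\|_1$. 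Each $Y_i\in [-1,1]$ lies in a single bin and thus contributes to at most two adjacent $f_n$-terms in ${\rm CBM}_n$, so changing one $Y_i$ alters ${\rm CBM}_n$ by at most $32/n$; McDiarmid's inequality conditional on $\bZ$ then yields the claimed $\mathrm{O}_P(n^{-1/2})$ concentration.

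Second (replacement by an i.i.d.\ analog), I introduce $Z'_1,Z'_2,\ldots\iidsim P_Z$ independent of the data and, for each $k\in [K_n-1]$, independent vectors $\tilde Y^{(k)},\tilde Y'^{(k)}\in [-1,1]^{m_n}$ with i.i.d.\ entries from $P_{Y\mid Z}(\cdot\mid Z'_{(K_n,k)})$. Because the Wasserstein distance between product measures is at most the sum of marginal Wasserstein distances and $f_n$ is $4$-Lipschitz as above,
\[
\left|\EEst{{\rm CBM}_n}{\bZ} - \frac{2}{n}\sum_{k=1}^{K_n-1}\EEst{f_n(\tilde Y^{(k)},\tilde Y'^{(k)})}{Z'_{(K_n,k)}}\right| \leq \frac{8}{n}\sum_{k=1}^{K_n-1}\sum_{j\in A_{n,k}\cup A_{n,k+1}}{\rm d}_{\rm W}\bigl(P_{Y\mid Z}(\cdot\mid Z_{(n,j)}),\,P_{Y\mid Z}(\cdot\mid Z'_{(K_n,k)})\bigr).
\]
Each bin $A_{n,k}$ contributes at most twice to the right-hand side (once with reference $Z'_{(K_n,k)}$, once with $Z'_{(K_n,k-1)}$); splitting these contributions and applying Lemma~\ref{lem:chatterjee_lemma_implies_dw} to each piece (with block partition $B_{n,k}=A_{n,k}$ on the auxiliary sample $Z'_1,\ldots,Z'_{K_n}$, possibly with a unit shift) shows the right-hand side is $\mathrm{o}_P(1)$, using $K_n\to\infty$ and $m_n\to\infty$.

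Third (identifying the limit), define $h_m(z):=\frac{2}{m}\EE{\sum_{i=1}^{\lfloor m/2\rfloor}(\tilde Y_{(m+1-i)}-\tilde Y'_{(i)})_+^2}$ for $\tilde Y,\tilde Y'\iidsim P_{Y\mid Z}(\cdot\mid z)$, which is uniformly bounded by $4$. Using the quantile representation $\tilde Y_{(i)}\eqd F_z^{-1}(U_{(i)})$, DKW-type control of the uniform order statistics $U_{(i)}$ around $i/m$, the monotonicity and boundedness of $F_z^{-1}$, and dominated convergence, I would establish the pointwise limit
\[
h_m(z) \,\longrightarrow\, 2\int_0^{1/2}\bigl(F_z^{-1}(1-t)-F_z^{-1}(t)\bigr)^2\,\mathsf{d}t \,=\, \textnormal{Dev}\bigl(P_{Y\mid Z}(\cdot\mid z)\bigr),
\]
where the final equality follows from the symmetry $q\leftrightarrow 1-q$ in the definition of $\textnormal{Dev}$. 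Since $\{Z'_{(K_n,k)}\}_{k=1}^{K_n}$ is a permutation of the i.i.d.\ sample $\{Z'_k\}_{k=1}^{K_n}$ and $m_n(K_n-1)/n\to 1$, the quantity $\frac{2}{n}\sum_{k=1}^{K_n-1}\EEst{f_n(\tilde Y^{(k)},\tilde Y'^{(k)})}{Z'_{(K_n,k)}}$ equals $\frac{1}{K_n}\sum_{k=1}^{K_n}h_{m_n}(Z'_k)$ up to an $\mathrm{o}(1)$ term, and the weak law of large numbers, combined with $\EE{h_{m_n}(Z)}\to \EE{\textnormal{Dev}(P_{Y\mid Z})}$ (by dominated convergence), closes the argument.

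The main obstacle will be the pointwise limit $h_m(z)\to \textnormal{Dev}(P_{Y\mid Z}(\cdot\mid z))$: the quantile function $F_z^{-1}$ may have jumps, so one must combine uniform control of empirical quantiles with a Riemann-sum-type convergence for a bounded monotone integrand, in the same spirit as Step 4 of the proof of Proposition~\ref{lem:neighbour_matching_conc}, while tracking uniform integrability across all $i\in\{1,\ldots,\lfloor m/2\rfloor\}$ using the bound $Y\in[-1,1]$.
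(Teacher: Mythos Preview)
Your proposal is correct and follows the same four-step architecture as the paper (concentration conditional on $\bZ$; Lipschitz/Wasserstein replacement by an i.i.d.\ analogue; identification of the limit as $\EE{\textnormal{Dev}(P_{Y\mid Z})}$; control of the Wasserstein error via Lemma~\ref{lem:chatterjee_lemma_implies_dw}). Two points of difference are worth noting. First, in Step~1 you use McDiarmid with a per-coordinate bounded-difference constant of order $1/n$, yielding $\mathrm{O}_P(n^{-1/2})$; the paper instead splits the sum over $k$ into odd and even indices to create two sums of independent blocks and applies Hoeffding, obtaining only $\mathrm{O}_P(K_n^{-1/2})$. Your route is simpler and sharper. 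Second, for Step~3 the paper packages the limit identification into a separate quantitative lemma (Lemma~\ref{lem:f_div}), which shows $\bigl|\tfrac{2}{m}\EE{f_m(A,B)} - \textnormal{Dev}(Q)\bigr| \leq C/\sqrt{m}$ by an antithetic-coupling trick: with $A_i = F_Q^{-1}(U_i)$ it introduces $A'_i = F_Q^{-1}(1-U_i)$, uses $f_m(A,A') + f_m(A',A)$ to reduce to $\sum_i (F_Q^{-1}(1-U_{(i)}) - F_Q^{-1}(U_{(i)}))^2$ up to a $\textnormal{Bin}(m,1/2)$ error, and then bounds $\EE{\|A'_{()} - B_{()}\|_1}$ by an empirical $1$-Wasserstein rate. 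This sidesteps entirely the discontinuity issue you flag as the main obstacle, and gives a uniform rate in $z$ rather than mere pointwise convergence plus dominated convergence. Your Riemann-sum argument also works---the integrand $t\mapsto (F_z^{-1}(1-t)-F_z^{-1}(t))^2$ is bounded and monotone on $(0,1/2)$, hence Riemann integrable, and the replacement of $(U_{(m+1-i)},V_{(i)})$ by grid points is controlled in sorted $\ell_1$ by the empirical Wasserstein distance---but the paper's coupling is cleaner and delivers a rate.
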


\begin{proof}[Proof of Proposition~\ref{lem:cross_bin_lemma}]
Similar to the concentration result for neighbor matching, this proof is also split into four key steps.
\paragraph{Step 1: a concentration step.}
Our first step is a concentration result: we claim that for any $\delta \in (0,1]$, we have
\begin{equation}\label{eqn:conc-CBM_n}
    \mathbb{P}\biggl\{\biggl|{\rm CBM}_n - \EEst{{\rm CBM}_n}{\bZ}\biggr| \geq \sqrt{\frac{4\log(4/\delta)}{K_n}} \biggm| \bZ\biggr\} \leq \delta.
\end{equation}
To prove this, we start by writing
\[
{\rm CBM}_n = \frac{2}{n}\sum_{k=1}^{\lfloor K_n/2\rfloor} f_n(Y_{(n,A_{n,2k-1})},Y_{(n,A_{n,2k})}) + \frac{2}{n}\sum_{k=1}^{\lfloor (K_n-1)/2\rfloor} f_n(Y_{(n,A_{n,2k})},Y_{(n,A_{n,2k+1})}).
\]
After conditioning on $\bZ$, the first term is a sum of $\lfloor K_n/2\rfloor$ independent terms because the sets $\{A_{n,k}:k \in [K_n]\}$ are pairwise disjoint. Furthermore, each term $2f_n(Y_{(n,A_{n,2k-1})},Y_{(n,A_{n,2k})})/n$ takes values in $[0,8\cdot \lfloor m_n/2\rfloor/n] \subseteq [0,4/K_n]$. Therefore by Hoeffding's inequality, 
\begin{align*}
\mathbb{P}\Biggl\{\biggl|&\frac{2}{n}\sum_{k=1}^{\lfloor K_n/2\rfloor} f_n(Y_{(n,A_{n,2k-1})},Y_{(n,A_{n,2k})}) \\&\hspace{1cm}- \frac{2}{n}\sum_{k=1}^{\lfloor K_n/2\rfloor} \EEst{f_n(Y_{(n,A_{n,2k-1})},Y_{(n,A_{n,2k})})}{\bZ}\biggr|\geq \frac{4}{K_n}\sqrt{\frac{\lfloor K_n/2\rfloor \cdot \log(4/\delta)}{2}}\Biggm| \bZ\Biggr\}\leq \frac{\delta}{2}.
\end{align*}
We can prove a similar concentration for the second term, and together with a union bound this establishes~\eqref{eqn:conc-CBM_n}.  It follows that
\[
\bigl|{\rm CBM}_n - \EEst{{\rm CBM}_n}{\bZ}\bigr| = {\rm O}_P(K_n^{-1/2}).
\]

\paragraph{Step 2: comparing to i.i.d.\ terms.}
Since $(y,y')\mapsto f_n(y,y')$ is $4$-Lipschitz with respect to the $\ell_1$ norm on $[-1,1]^{m_n} \times [-1,1]^{m_n}$ (see Lemma~\ref{lem:f_Lipschitz}), we have for any distributions $Q,Q'$ on $[-1,1]^{m_n}\times [-1,1]^{m_n}$ that
\[
\bigl|\Ep{(Y,Y')\sim Q}{f_{n}(Y,Y')}
- \mathbb{E}_{(\tilde{Y},\tilde{Y}')\sim Q'}\bigl[f_{n}(\tilde{Y},\tilde{Y}')\bigr]\bigr|\leq 4{\rm d}_{\rm W}(Q,Q'),
\]
where ${\rm d}_{\rm W}(\cdot,\cdot)$ is the $L_1$-Wasserstein distance between distributions.
In particular, in the special case where $Q=Q_1\times\dots\times Q_{2m_n}$ and $Q'=Q_1'\times \dots \times  Q_{2m_n}'$, we have
\[
\bigl|\Ep{(Y,Y')\sim Q}{f_{n}(Y,Y')}
- \mathbb{E}_{(\tilde{Y},\tilde{Y}')\sim Q'}\bigl[f_{n}(\tilde{Y},\tilde{Y}')\bigr]\bigr|\leq 4\sum_{j=1}^{2m_n} {\rm d}_{\rm W}(Q_j,Q'_j).
\]
Next, fix any $k\in [K_n-1]$. By construction, we have
\[
\big( Y_{(n,A_{n,k})}, Y_{(n,A_{n,k+1})}\big) \mid \bZ \sim P_{Y|Z}(\cdot\mid Z_{(n,A_{n,k})})\times P_{Y|Z}(\cdot\mid Z_{(n,A_{n,k+1})}),
\]
where, abusing notation, we write 
\[
P_{Y|Z}(\cdot\mid Z_{(n,A_{n,k})}) = P_{Y|Z}(\cdot\mid Z_{(n,(k-1)m_n+1)}) \times \dots \times P_{Y|Z}(\cdot\mid Z_{(n,km_n)}),
\]
Now let $Z'_1,Z'_2,\dots\iidsim P_Z$ be drawn independently of the data. Then
\begin{align*}
    \biggl|\EEst{{\rm CBM}_n}{\bZ} &- \frac{2}{n} \sum_{k=1}^{K_n-1}\Ep{Y,Y'\iidsim (P_{Y|Z}(\cdot\mid Z'_{(K_n,k)}))^{m_n}}{f_n(Y,Y')}\biggr|\\
    &\leq  \frac{2}{n} \sum_{k=1}^{K_n-1}\Bigl|\EEst{f_n\left(Y_{(n,A_{n,k})}, Y_{(n,A_{n,k+1})}\right)}{\bZ}
    \!-\!  \Ep{Y,Y'\iidsim (P_{Y|Z}(\cdot\mid Z'_{(K_n,k)}))^{m_n}}{f_n(Y,Y')}\Bigr| \\
    &\leq \frac{8}{n}\sum_{k=1}^{K_n-1}
    \sum_{i\in A_{n,k}\cup A_{n,k+1}}{\rm d}_{\rm W}\left(P_{Y|Z}(\cdot\mid Z_{(n,i)}),P_{Y|Z}(\cdot\mid Z'_{(K_n,k)})\right).
\end{align*}

\paragraph{Step 3: concentration around the mean of conditional deviance.}

By Lemma~\ref{lem:f_div}, there exists a universal constant $C > 0$ such that
\begin{align*}
&\biggl|\frac{2}{n} \sum_{k=1}^{K_n-1}\Ep{Y,Y'\iidsim (P_{Y|Z}(\cdot\mid Z'_{(K_n,k)}))^{m_n}}{f_n(Y,Y')} - \frac{m_n}{n}\sum_{k=1}^{K_n}\textnormal{Dev}\left(P_{Y|Z}(\cdot\mid Z'_k)\right)\biggr| \\
&= \biggl|\frac{2}{n} \sum_{k=1}^{K_n-1}\Ep{Y,Y'\iidsim (P_{Y|Z}(\cdot\mid Z'_{(K_n,k)}))^{m_n}}{f_n(Y,Y')} - \frac{m_n}{n}\sum_{k=1}^{K_n}\textnormal{Dev}\left(P_{Y|Z}(\cdot\mid Z'_{(K_n,k)})\right)\biggr| \\ 
&\leq \sum_{i=1}^{K_n-1} \biggl|\frac{2}{n} \Ep{Y,Y'\iidsim (P_{Y|Z}(\cdot\mid Z'_{(K_n,k)}))^{m_n}}{f_n(Y,Y')} - \frac{m_n}{n}\textnormal{Dev}\left(P_{Y|Z}(\cdot\mid Z'_{(K_n,k)})\right)\biggr| \\
&\hspace{10.5cm}+\frac{m_n}{n}\bigl|\textnormal{Dev}\left(P_{Y|Z}(\cdot\mid Z'_{(K_n,K_n)})\right)\bigr|\\
&\leq (K_n-1) \cdot \frac{m_n}{n} \cdot \frac{C}{\sqrt{m_n}} +\frac{4m_n}{n} \leq \frac{C}{\sqrt{m_n}}+\frac{4m_n}{n}.
\end{align*}
Moreover, since $1/(K_n+1) \leq m_n/n \leq 1/K_n$, since $Z'_1,\dots,Z'_{K_n}\iidsim P_Z$, we have by the central limit theorem that
\begin{align*}
\biggl|\frac{m_n}{n}&\sum_{k=1}^{K_n}\textnormal{Dev}\left(P_{Y|Z}(\cdot\mid Z'_k)\right) - \EE{\textnormal{Dev}(P_{Y|Z})}\biggr|\\
&\leq \biggl(\frac{1}{K_n} - \frac{m_n}{n}\biggr)4K_n + \biggl|\frac{1}{K_n}\sum_{k=1}^{K_n}\textnormal{Dev}\left(P_{Y|Z}(\cdot\mid Z'_k)\right) - \EE{\textnormal{Dev}(P_{Y|Z})}\biggr|= {\rm O}_P(K_n^{-1/2}).
\end{align*}
Combining these bounds with our conclusions from Steps~1 and~2, then,
\begin{multline*}
    \left|{\rm CBM}_n - \EE{\textnormal{Dev}(P_{Y|Z})}\right|
    \leq \frac{8}{n}\sum_{k=1}^{K_n-1}
    \sum_{i\in A_{n,k}}{\rm d}_{\rm W}\left(P_{Y|Z}(\cdot\mid Z_{(n,i)}),P_{Y|Z}(\cdot\mid Z'_{(K_n,k)})\right)  \\
    + \frac{8}{n}\sum_{k=1}^{K_n-1}
    \sum_{i\in A_{n,k+1}}{\rm d}_{\rm W}\left(P_{Y|Z}(\cdot\mid Z_{(n,i)}),P_{Y|Z}(\cdot\mid Z'_{(K_n,k)})\right) + {\rm O}_P(K_n^{-1/2}) + {\rm O}(m_n^{-1/2}).
\end{multline*}

\paragraph{Step 4: the Wasserstein distance.}
Finally, we bound the remaining terms. First, 
\[
\frac{8}{n}\sum_{k=1}^{K_n-1}
    \sum_{i\in A_{n,k}}{\rm d}_{\rm W}\left(P_{Y|Z}(\cdot\mid Z_{(n,i)}),P_{Y|Z}(\cdot\mid Z'_{(K_n,k)})\right)={\rm o}_P(1), 
    \]
  by Lemma~\ref{lem:chatterjee_lemma_implies_dw} below (applied with sets $B_{n,k} = A_{n,k}$ for $k \in [K_n-1]$, and $B_{n,K_n} = \emptyset$). Similarly,
 \[
 \frac{8}{n}\sum_{k=1}^{K_n-1}
    \sum_{i\in A_{n,k+1}}{\rm d}_{\rm W}\left(P_{Y|Z}(\cdot\mid Z_{(n,i)}),P_{Y|Z}(\cdot\mid Z'_{(K_n,k)})\right) ={\rm o}_P(1),
    \]
  by Lemma~\ref{lem:chatterjee_lemma_implies_dw} again (applied with sets $B_{n,k} = A_{n,k+1}$ for $k \in [K_n-1]$, and $B_{n,K_n} = \emptyset$).  This completes the proof.
\end{proof}

\section{Proof of Theorem~\ref{thm:main-general}}\label{app:proof_validity_extension}
The argument closely follows that of Theorem~\ref{thm:main}, given in Section~\ref{sec:validity}. The only modifications are to accommodate the multivariate notions of monotonic functions and stochastic monotonicity introduced in Section~\ref{sec:extension}.

\paragraph{Step 1: deterministic properties of the $p$-value.}
Fix $\alpha \in [0,1]$ and define $p:\R^n\to[0,1]$ by
\begin{align*}
    p(\bx) = \frac{1}{2^L}\sum_{\bs\in\{\pm 1\}^L}\One{T(\bx^{\bs})\geq T(\bx)}.
\end{align*}
Then the $p$-value in~\eqref{eq:def-test} can be written as $p=p(\bX)$, and satisfies the deterministic inequality~\eqref{eqn:pvalue_deterministic_quantile}.
Moreover, by the same argument as in the proof of Theorem~\ref{thm:main}, $p(\bx)$ is nonincreasing in each $x_{i_\ell}$ and nondecreasing in each $x_{j_\ell}$, where monotonicity of functions is interpreted coordinatewise as in  Section~\ref{sec:extension}, and $(\psi_\ell)_{\ell=1}^L$ are anti-monotonic as in~\eqref{defn:monotonicity-of-psi-general}.

\paragraph{Step 2: comparison to the sharp null.}
We define the sharp null analogously to the proof of Theorem~\ref{thm:main}. Writing $P_i = P_{X|Z}(\cdot\mid Z_i)$ for $i\in [n]$, set 
\[
\bar{P}_\ell = \tfrac{1}{2}P_{i_\ell}+ \tfrac{1}{2}P_{j_\ell}
\]
for $\ell\in[L]$.  Under Assumption~\ref{asm:st_gen}, conditional on $\bY,\bZ$, for any nondecreasing function $f:\Xcal \to \R$,
\begin{align}\label{eqn:gen_stochastic_order_for_mix}
    \Ep{X\sim P_{X\mid Z_{i_\ell}}}{f(X)}\le\Ep{X\sim \bar{P}_\ell}{f(X)}, \qquad \ell\in[L],
    \end{align}
    and
 \begin{align}   
 \label{eqn:gen_stochastic_order_for_mix2}
    \Ep{X\sim \bar{P}_\ell}{f(X)} \le \Ep{X\sim P_{X\mid Z_{j_\ell}}}{f(X)}, \qquad \ell\in[L].
\end{align}

Further, conditional on $\bY,\bZ$,
\[
\bX \sim  P_1\times \dots \times P_n,
\]
and we sample
\[
\bX_{\sharp} \sim (P_\sharp)_1 \times \dots\times (P_\sharp)_n,
\]
where $(P_\sharp)_{i_\ell} = (P_\sharp)_{j_\ell} = \bar{P}_\ell$ for $\ell\in[L]$, and $(P_\sharp)_i=P_i$ for indices not in any matched pair. As in the proof of Theorem~\ref{thm:main}, for any $\bs\in\{\pm 1\}^L$,
\[
(\bX_\sharp)^\bs \eqd \bX_\sharp
\]
conditional on $\bY,\bZ$.

Combining the monotonicity of $p(\bx)$ from Step~1 with the stochastic ordering relations~\eqref{eqn:gen_stochastic_order_for_mix} and~\eqref{eqn:gen_stochastic_order_for_mix2} , and using the fact that $t\mapsto\One{t\le \alpha}$ is nonincreasing, we obtain 
\begin{align*}
   \PPst{p(\bX)\leq \alpha}{\bY,\bZ}
   &=\Ep{\bX\sim \otimes_{i=1}^n P_i}{\One{p(\bX)\le \alpha}}\\
   &\leq \Ep{\bX_\sharp\sim \otimes_{i=1}^n (P_\sharp)_i}{\One{p(\bX_\sharp)\le \alpha}}
   =\PPst{p(\bX_\sharp)\leq \alpha}{\bY,\bZ}.
\end{align*}

\paragraph{Step 3: validity under the sharp null.}
This step is identical to its counterpart in the proof of Theorem~\ref{thm:main}.

\section{Auxiliary lemmas}\label{app:lemmas}
\begin{lemma}\label{lem:deterministic_control_on_perm_pvalue}
    It holds that deterministically, for any $\alpha\in [0,1]$,
    \[
    \frac{1}{2^L}\sum_{\bs\in\{\pm 1\}^L} \One{p(\bx^{\bs}) \leq \alpha}\leq \alpha,
    \]
    where $p(\cdot)$ is as defined in \eqref{eq:pvalue-as-a-function-of-data}.
\end{lemma}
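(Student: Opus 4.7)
The plan is to recognize this as the standard super-uniformity property of a rank within a finite group orbit, where the group in question is $\{\pm 1\}^L$ acting on $\bx$ by swaps.

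First I would observe that $\{\pm 1\}^L$ is a group under the elementwise product $\circ$, and that the swap operation satisfies $(\bx^{\bs})^{\bs'} = \bx^{\bs\circ\bs'}$. Consequently, for any fixed $\bs\in\{\pm 1\}^L$,
\[
p(\bx^{\bs}) = \frac{1}{2^L}\sum_{\bs'\in\{\pm 1\}^L} \One{T(\bx^{\bs\circ\bs'})\geq T(\bx^{\bs})} = \frac{1}{2^L}\sum_{\tilde\bs\in\{\pm 1\}^L}\One{T(\bx^{\tilde\bs})\geq T(\bx^{\bs})},
\]
where the last equality follows by the change of variable $\tilde\bs = \bs\circ\bs'$ and the fact that $\bs'\mapsto \bs\circ\bs'$ is a bijection of $\{\pm 1\}^L$. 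Thus $p(\bx^{\bs})$ is exactly the (normalised) right-rank of $T(\bx^{\bs})$ among the multiset $\{T(\bx^{\tilde\bs}) : \tilde\bs\in\{\pm 1\}^L\}$.

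The proof then reduces to the elementary deterministic lemma: given any $N$ real numbers $t_1,\ldots,t_N$, defining $r_i := \frac{1}{N}\bigl|\{j\in[N] : t_j\geq t_i\}\bigr|$, one has $\frac{1}{N}\bigl|\{i : r_i\leq \alpha\}\bigr|\leq \alpha$ for every $\alpha\in[0,1]$. To verify this, order the values $t_{(1)}\leq\dots\leq t_{(N)}$ and note that the corresponding rank $r_{(i)}$ satisfies $r_{(i)}\geq (N-i+1)/N$. Hence, setting $M = \lfloor N\alpha\rfloor$, the condition $r_{(i)}\leq \alpha$ forces $r_{(i)}\leq M/N$, which in turn forces $N-i+1\leq M$, i.e.\ $i\geq N-M+1$. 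Therefore $\bigl|\{i : r_i\leq\alpha\}\bigr|\leq M\leq N\alpha$, and dividing by $N$ gives the claim.

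Applying this with $N=2^L$ and $t_{\bs} = T(\bx^{\bs})$ (for $\bs\in\{\pm 1\}^L$) yields the desired bound. No step here should be a real obstacle; the only thing to be careful about is handling ties correctly, which is precisely why we use the rank counted with $\geq$, so that the bound remains valid regardless of how many of the $T(\bx^{\bs})$ coincide.
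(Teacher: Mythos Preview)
Your proof is correct and follows essentially the same route as the paper's: both reduce the claim to the elementary rank lemma by sorting the values $T(\bx^{\bs})$ and observing that $p(\bx^{\bs})$ is the normalised right-rank. The paper sorts in decreasing order and bounds $\sum_k \One{T(\bx^{\sigma(k)})\geq T(\bx^{\sigma(j)})}\geq j$, while you sort in increasing order and bound $r_{(i)}\geq (N-i+1)/N$; these are the same inequality. One small point in your favour: you make the group identity $(\bx^{\bs})^{\bs'}=\bx^{\bs\circ\bs'}$ explicit, which is precisely what justifies writing $p(\bx^{\bs})$ as a rank within the common orbit---the paper's appendix proof uses this implicitly, relying on a remark from the main text.
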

\begin{proof}
    Consider a bijection $\sigma:[2^L] \rightarrow \{-1,1\}^L$ such that $T(\bx^{\sigma(1)}) \geq \ldots \geq T(\bx^{\sigma(2^L)})$, and let $r \in \{0,1,\ldots,2^L\}$ be such that $\alpha \in \bigl[r/2^L,(r+1)/2^L\bigr)$.  Then, since we have $\sum_{k=1}^{2^L} \One{T(\bx^{\sigma(k)}) \geq T(\bx^{\sigma(j)})} \in [2^L]$  for each $j \in [2^L]$, we have deterministically that
    \begin{align*}
    \frac{1}{2^L}\sum_{\bs\in\{\pm 1\}^L} \One{p(\bx^{\bs}) \leq \alpha} 
    &=\frac{1}{2^L}\sum_{j=1}^{2^L}\One{p(\bx^{\sigma(j)})\leq \alpha}\\
    &= \frac{1}{2^L}\sum_{j=1}^{2^L} \One{\frac{1}{2^L}\sum_{k=1}^{2^L} \One{T(\bx^{\sigma(k)}) \geq T(\bx^{\sigma(j)})} \leq \alpha} \nonumber \\
    &= \frac{1}{2^L}\sum_{j=1}^{2^L} \One{\sum_{k=1}^{2^L} \One{T(\bx^{\sigma(k)}) \geq T(\bx^{\sigma(j)})} \leq r} \nonumber \\
    &\leq \frac{1}{2^L}\sum_{j=1}^{2^L} \One{\sum_{k=1}^{2^L} \One{k \leq j} \leq r} = \frac{1}{2^L}\sum_{j=1}^{2^L} \One{j \leq r} = \frac{r}{2^L} \leq \alpha,
    \end{align*}
as required.
\end{proof}

\begin{lemma}\label{lem:hellinger_lemma}
For any $p,q\in (0,1)$, we have
    $$\mathrm{H}^2\bigl(\textnormal{Ber}(p),\textnormal{Ber}(q)\bigr)\leq \frac{2(p-q)^2}{(p+q)(2-p-q)}\leq \min\biggl\{\frac{(p-q)^2}{p(1-p)},\frac{(p-q)^2}{q(1-q)}\biggr\}.$$
\end{lemma}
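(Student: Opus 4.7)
The plan is to start from the explicit formula for the squared Hellinger distance between two Bernoullis,
\[
\mathrm{H}^2\bigl(\mathrm{Ber}(p),\mathrm{Ber}(q)\bigr)=\bigl(\sqrt{p}-\sqrt{q}\bigr)^2+\bigl(\sqrt{1-p}-\sqrt{1-q}\bigr)^2,
\]
and then rationalize each term by multiplying and dividing by the corresponding conjugate. This converts the identity into
\[
\mathrm{H}^2\bigl(\mathrm{Ber}(p),\mathrm{Ber}(q)\bigr)=(p-q)^2\Biggl[\frac{1}{(\sqrt{p}+\sqrt{q})^2}+\frac{1}{(\sqrt{1-p}+\sqrt{1-q})^2}\Biggr].
\]

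For the first inequality, I would use the elementary bound $(\sqrt{a}+\sqrt{b})^2=a+b+2\sqrt{ab}\geq a+b$, applied with $(a,b)=(p,q)$ and $(a,b)=(1-p,1-q)$. Substituting these lower bounds on the two denominators and collecting the resulting fractions over the common denominator $(p+q)(2-p-q)$ gives exactly $\frac{2(p-q)^2}{(p+q)(2-p-q)}$.

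For the second inequality, by symmetry it suffices to show $\frac{2}{(p+q)(2-p-q)}\leq \frac{1}{p(1-p)}$, i.e.\ $(p+q)(2-p-q)\geq 2p(1-p)$. I would expand both sides and simplify: the difference equals
\[
(p+q)(2-p-q)-2p(1-p)=(p-q)^2+2q(1-q),
\]
which is manifestly non-negative. The analogous bound with $q(1-q)$ follows by swapping the roles of $p$ and $q$, and taking the minimum of the two yields the stated inequality.

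There is no real obstacle here; the proof is a two-line algebraic manipulation once the rationalization step is in place. The only mild care needed is ensuring that the algebraic identity $(p+q)(2-p-q)-2p(1-p)=(p-q)^2+2q(1-q)$ is verified cleanly, which is a direct expansion.
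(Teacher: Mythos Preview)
Your proposal is correct and follows essentially the same rationalization approach as the paper. The one minor difference is that the paper, after obtaining
\[
\mathrm{H}^2\bigl(\mathrm{Ber}(p),\mathrm{Ber}(q)\bigr)=\frac{(p-q)^2}{(\sqrt{p}+\sqrt{q})^2}+\frac{(p-q)^2}{(\sqrt{1-p}+\sqrt{1-q})^2},
\]
bounds the denominators more crudely by $(\sqrt{p}+\sqrt{q})^2\geq p$ and $(\sqrt{1-p}+\sqrt{1-q})^2\geq 1-p$, which jumps straight to the rightmost term $\frac{(p-q)^2}{p(1-p)}$ without passing through the middle quantity; your route via $(\sqrt{a}+\sqrt{b})^2\geq a+b$ and the algebraic identity $(p+q)(2-p-q)-2p(1-p)=(p-q)^2+2q(1-q)$ establishes the full chain of inequalities cleanly and is in that sense slightly more complete.
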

\begin{proof}
For any $p,q\in (0,1)$,
\begin{align*}
    \mathrm{H}^2\bigl(\textnormal{Ber}(p),\textnormal{Ber}(q)\bigr) &= (\sqrt{p}-\sqrt{q})^2+(\sqrt{1-p}-\sqrt{1-q})^2\\
    &=\frac{(p-q)^2}{(\sqrt{p}+\sqrt{q})^2}+\frac{(p-q)^2}{(\sqrt{1-p}+\sqrt{1-q})^2}\\
    &\leq \frac{(p-q)^2}{p}+\frac{(p-q)^2}{(1-p)}= \frac{(p-q)^2}{p(1-p)}.
\end{align*}
By symmetry, the bound $\frac{2(p-q)^2}{(p+q)(2-p-q)} \leq \frac{(p-q)^2}{q(1-q)}$ also holds.
\end{proof}
\begin{lemma}\label{lem:deviance-and-variation}
    For any distribution $P$ on $\R$, 
    \[2\,\Vp{X\sim P}{X}\leq \textnormal{Dev}(P)\leq 4\,\Vp{X\sim P}{X},\] where $\textnormal{Dev}(\cdot)$ is the deviance of a distribution, defined in~\eqref{defn:deviance}. Moreover, if $P$ is symmetric around its mean, then $\textnormal{Dev}(P)=4\,\Vp{X\sim P}{X}$.
\end{lemma}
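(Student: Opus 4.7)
The key observation is a coupling interpretation of $\textnormal{Dev}(P)$. Let $F$ denote the CDF of $P$, let $U\sim\textnormal{Unif}(0,1)$, and set $X := F^{-1}(U)$ and $X' := F^{-1}(1-U)$. Since $1-U\sim\textnormal{Unif}(0,1)$ as well, both $X$ and $X'$ have distribution $P$; but the coupling is antithetic. By the definition~\eqref{defn:deviance} we have $\textnormal{Dev}(P) = \EE{(X-X')^2}$. Writing $\mu = \EE{X}$, I would then expand
\[
\textnormal{Dev}(P) = \EE{\bigl((X-\mu) - (X'-\mu)\bigr)^2} = 2\,\Vp{P}{X} - 2\,\mathrm{Cov}(X,X').
\]

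For the upper bound, I would apply the elementary inequality $(a-b)^2 \leq 2a^2 + 2b^2$ to $a = X-\mu$ and $b = X'-\mu$ and take expectations, giving $\textnormal{Dev}(P) \leq 2\,\Vp{P}{X} + 2\,\Vp{P}{X} = 4\,\Vp{P}{X}$. For the lower bound, it suffices to show $\mathrm{Cov}(X,X') \leq 0$. Since $F^{-1}$ is nondecreasing, the function $u \mapsto F^{-1}(u)$ is nondecreasing in $u$, while $u \mapsto F^{-1}(1-u)$ is nonincreasing in $u$. Thus $X$ and $X'$ are monotone functions of the common variable $U$ with opposite directions, so by Chebyshev's sum/covariance inequality (or equivalently the FKG inequality applied to the monotone coupling), their covariance is non-positive; this yields $\textnormal{Dev}(P) \geq 2\,\Vp{P}{X}$.

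For the symmetry statement, if $P$ is symmetric about $\mu$ then $X \eqd 2\mu - X$, i.e.~$F(x) = 1 - F\bigl((2\mu - x)^-\bigr)$ for all $x$. This translates to the identity $F^{-1}(1-q) = 2\mu - F^{-1}(q)$ holding for Lebesgue-a.e.\ $q\in(0,1)$ (with care at atoms, but only a countable exceptional set). Substituting,
\[
\bigl(F^{-1}(1-q) - F^{-1}(q)\bigr)^2 = \bigl(2\mu - 2 F^{-1}(q)\bigr)^2 = 4\,\bigl(F^{-1}(q) - \mu\bigr)^2,
\]
and integrating over $q\sim\textnormal{Unif}(0,1)$ gives $\textnormal{Dev}(P) = 4\,\Vp{P}{X}$, since $F^{-1}(U)\sim P$.

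The only mildly delicate point will be the negativity of $\mathrm{Cov}(F^{-1}(U),F^{-1}(1-U))$ for a general (possibly non-integrable, non-atomless) $P$; but in the regime where $\Vp{P}{X}$ (and hence $\textnormal{Dev}(P)$, by the upper bound above) is finite, a standard rearrangement / Chebyshev argument for oppositely-ordered monotone functions of $U$ applies without further assumption, and the a.e.\ caveat in the symmetric case is harmless for integration.
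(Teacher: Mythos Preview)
Your proof is correct. The upper bound and the symmetric case match the paper's argument essentially line for line (expand around the mean $\mu$, use $(a-b)^2\le 2a^2+2b^2$; for symmetry, use $F^{-1}(1-q)=2\mu-F^{-1}(q)$ for almost every $q$).

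The one place where you differ from the paper is the lower bound. You write $\textnormal{Dev}(P)=2\,\Vp{P}{X}-2\,\mathrm{Cov}\bigl(F^{-1}(U),F^{-1}(1-U)\bigr)$ and then invoke Chebyshev's correlation inequality to conclude $\mathrm{Cov}\le 0$. The paper instead centres at the \emph{median} $m=F^{-1}(1/2)$: since $\bigl(F^{-1}(1-q)-m\bigr)\bigl(m-F^{-1}(q)\bigr)\ge 0$ for every $q$, expanding the square gives
\[
\bigl(F^{-1}(1-q)-F^{-1}(q)\bigr)^2 \ge \bigl(F^{-1}(1-q)-m\bigr)^2+\bigl(m-F^{-1}(q)\bigr)^2,
\]
and integrating yields $\textnormal{Dev}(P)\ge 2\,\mathbb{E}\bigl[(F^{-1}(q)-m)^2\bigr]\ge 2\,\Vp{P}{X}$. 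Both arguments are short and rely only on monotonicity of $F^{-1}$; the paper's version is fully self-contained (no named inequality needed), while yours packages the same monotonicity fact via the covariance identity, which has the mild advantage of making the gap $\textnormal{Dev}(P)-2\,\Vp{P}{X}=-2\,\mathrm{Cov}(X,X')$ explicit.
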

\begin{proof} Let $F^{-1}$ denote the quantile function of $P$ (i.e., the generalized inverse of its CDF).
    For any $q \in (0,1)$, by monotonicity of $F$, we have
\[
\bigl\{F^{-1}(1-q)-F^{-1}(1/2)\bigr\}\cdot \bigl\{F^{-1}(1/2)-F^{-1}(q)\bigr\} \geq 0.
\]
Therefore,
    \begin{align*}
        \textnormal{Dev}(P)
        &=\Ep{q\sim \textnormal{Unif}(0,1)}{\bigl(F^{-1}(1-q)-F^{-1}(1/2)+F^{-1}(1/2)-F^{-1}(q)\bigr)^2}\\
        &\geq \Ep{q\sim \textnormal{Unif}(0,1)}{\bigl(F^{-1}(1-q)-F^{-1}(1/2)\bigr)^2+\bigl(F^{-1}(1/2)-F^{-1}(q)\bigr)^2}\\
        &=2\,\Ep{q\sim \textnormal{Unif}(0,1)}{\bigl(F^{-1}(q)-F^{-1}(1/2)\bigr)^2} \geq 2\,\Vp{q\sim\textnormal{Unif}(0,1)}{F^{-1}(q)} = 2\,\Vp{X\sim P}{X},
    \end{align*}
    since when $q\sim\textnormal{Unif}(0,1)$ we have $F^{-1}(q)\sim P$ by construction. For the upper bound, writing $\mu = \Ep{X\sim P}{X}$, 
        \begin{align}
        \textnormal{Dev}(P)
        \notag&=\Ep{q\sim \textnormal{Unif}(0,1)}{\bigl(F^{-1}(1-q)-\mu+\mu-F^{-1}(q)\bigr)^2}\\
        \label{eqn:ineq_4var}&\leq \Ep{q\sim \textnormal{Unif}(0,1)}{2\bigl(F^{-1}(1-q)-\mu\bigr)^2 + 2\bigl(F^{-1}(q) - \mu\bigr)^2}\\
        \notag&=4\,\Ep{q\sim \textnormal{Unif}(0,1)}{\bigl(F^{-1}(q)-\mu\bigr)^2} = 4\,\Vp{q\sim\textnormal{Unif}(0,1)}{F^{-1}(q)} = 4\,\Vp{X\sim P}{X},
    \end{align}
    
    Next we turn to the special case that $P$ is symmetric around its mean. In this case, symmetry of the distribution around its mean ensures that $F^{-1}(1-q)-\mu = \mu-F^{-1}(q)$ holds for almost every $q\in[0,1]$ (more precisely, this can fail only at discontinuities of the CDF of $P$). Therefore,
    \[\bigl(F^{-1}(1-q)-\mu+\mu-F^{-1}(q)\bigr)^2 = 2\bigl(F^{-1}(1-q)-\mu\bigr)^2 + 2\bigl(F^{-1}(q) - \mu\bigr)^2\]
    for almost every $q\in[0,1]$, and so the inequality in the step~\eqref{eqn:ineq_4var} above is in fact an equality. This proves $\textnormal{Dev}(P)=4\,\Vp{X\sim P}{X}$, as desired.
\end{proof}
\begin{lemma}\label{lem:a-median-result}
Suppose $a\in\R^m$ and $b\in\R^n$. Let $c=(a,b)\in\R^{m+n}$ be the concatenation of the two vectors. If $\textnormal{Med}(c)< \textnormal{Med}(a)$, then $\textnormal{Med}(c)\geq\textnormal{Med}(b)$; similarly, if $\textnormal{Med}(c)>\textnormal{Med}(a)$, then $\textnormal{Med}(c)\leq\textnormal{Med}(b)$.
\end{lemma}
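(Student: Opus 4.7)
The plan is to argue by contradiction, exploiting the symmetry between $a$ and $b$ in the lemma statement. The two implications are equivalent under swapping $a\leftrightarrow b$ (and flipping the inequalities), so it suffices to prove the first. Write $t:=\textnormal{Med}(c)$ and suppose for contradiction that both $\textnormal{Med}(a)>t$ and $\textnormal{Med}(b)>t$.

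The first step is a clean parity count. For any vector $v\in\R^\ell$, I will verify by a short case distinction on the parity of $\ell$ that $\textnormal{Med}(v)>t$ implies $N_>(v,t):=|\{i:v_i>t\}|\geq \lceil \ell/2\rceil$; the even case uses that $(v_{(k)}+v_{(k+1)})/2>t$ with $v_{(k)}\le t$ still forces $v_{(k+1)}>t$. Applying this to $a$ and $b$ and summing gives $N_>(c,t)\geq \lceil m/2\rceil+\lceil n/2\rceil$. On the other hand, the very definition of $t=\textnormal{Med}(c)$ yields $c_{(\lceil (m+n)/2\rceil)}\le t$ in both parities, hence the upper bound $N_>(c,t)\leq \lfloor(m+n)/2\rfloor$. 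Whenever at least one of $m,n$ is odd a quick arithmetic check shows $\lceil m/2\rceil+\lceil n/2\rceil>\lfloor(m+n)/2\rfloor$, producing a contradiction immediately.

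The main obstacle is the residual case in which both $m=2p$ and $n=2q$ are even, where the counting bounds above coincide. Here equality forces $N_>(a,t)=p$ and $N_>(b,t)=q$ exactly, which pins down $a_{(p)}\leq t<a_{(p+1)}$ and $b_{(q)}\leq t<b_{(q+1)}$. Reading off the order statistics of $c$ at positions $p+q$ and $p+q+1$ then gives $c_{(p+q)}=\max\{a_{(p)},b_{(q)}\}$ and $c_{(p+q+1)}=\min\{a_{(p+1)},b_{(q+1)}\}$, and since $\textnormal{Med}(c)=t$ this becomes the exact identity $\max\{a_{(p)},b_{(q)}\}+\min\{a_{(p+1)},b_{(q+1)}\}=2t$.

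To close out, I will combine this identity with the strict inequalities $a_{(p)}+a_{(p+1)}>2t$ and $b_{(q)}+b_{(q+1)}>2t$, which are precisely what $\textnormal{Med}(a)>t$ and $\textnormal{Med}(b)>t$ mean in the even-length case. Splitting on whether $a_{(p+1)}\leq b_{(q+1)}$ picks out which argument attains the minimum: in the first subcase $\min\{a_{(p+1)},b_{(q+1)}\}=a_{(p+1)}$, so $a_{(p)}+a_{(p+1)}\leq \max\{a_{(p)},b_{(q)}\}+a_{(p+1)}=2t$, contradicting $a_{(p)}+a_{(p+1)}>2t$; in the other subcase the same argument with $b$ in place of $a$ contradicts $b_{(q)}+b_{(q+1)}>2t$. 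Either way the contradiction is secured, completing the proof.
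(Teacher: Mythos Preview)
Your proof is correct and follows essentially the same route as the paper's: both argue by contradiction, use a counting argument on how many entries lie on one side of $t=\textnormal{Med}(c)$ to dispose of all parity cases except $m,n$ both even, and then in that residual case identify $c_{(p+q)}=\max\{a_{(p)},b_{(q)}\}$ and $c_{(p+q+1)}=\min\{a_{(p+1)},b_{(q+1)}\}$ to reach a contradiction with the strict median inequalities. The only cosmetic differences are that the paper counts $|\{i:a_i\le t\}|$ while you count $N_>(\cdot,t)$, and that the paper finishes the even--even case via the one-line inequality $\max\{a_{(p)},b_{(q)}\}+\min\{a_{(p+1)},b_{(q+1)}\}\ge \min\{a_{(p)}+a_{(p+1)},\,b_{(q)}+b_{(q+1)}\}$ rather than your explicit case split; also, your reduction of the second implication to the first is really via negation $v\mapsto -v$ (so that $\textnormal{Med}(-v)=-\textnormal{Med}(v)$) rather than merely swapping $a\leftrightarrow b$.
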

\begin{proof}
    We will assume that $\textnormal{Med}(c)< \textnormal{Med}(a)$ and $\textnormal{Med}(c) < \textnormal{Med}(b)$, and will show that this leads to a contradiction. (The second claim is proved analogously and so we omit the proof.) 

We first calculate
    \[\bigl|\{i \in [m]:a_i\leq  \textnormal{Med}(c)\}\bigr|\leq\bigl|\{i \in [m]:a_i < \textnormal{Med}(a)\}\bigr|\leq \begin{cases} \frac{m}{2}, &\textnormal{ if $m$ is even,}\\
    \frac{m-1}{2}, & \textnormal{ if $m$ is odd},\end{cases}\]
    where the first step holds since we have assumed $\textnormal{Med}(c)< \textnormal{Med}(a)$, and the second step uses the definition of the median. Similarly,
    \[\bigl|\{j \in [n]:b_j \leq  \textnormal{Med}(c)\}\bigr|\leq \bigl|\{j \in [n]:b_j < \textnormal{Med}(b)\}\bigr|\leq \begin{cases} \frac{n}{2}, &\textnormal{ if $n$ is even,}\\
    \frac{n-1}{2}, & \textnormal{ if $n$ is odd}.\end{cases}\]
    On the other hand, again by definition of the median,
    \begin{multline*}\bigl|\{i \in [m]:a_i\leq  \textnormal{Med}(c)\}\bigr| + \bigl|\{j \in [n]:b_j \leq  \textnormal{Med}(c)\}\bigr| \\= \bigl|\{k \in [m+n]:c_k \leq \textnormal{Med}(c)\}\bigr| \geq \begin{cases}\frac{m+n}{2}, & \textnormal{ if $m+n$ is even}, \\ \frac{m+n+1}{2}, & \textnormal{ if $m+n$ is odd}.\end{cases}\end{multline*}

So far, then, we have shown that
\[\frac{m-\one{\textnormal{$m$ is odd}}}{2}+ \frac{n-\one{\textnormal{$n$ is odd}}}{2} \geq \frac{m+n+\one{\textnormal{$m+n$ is odd}}}{2}.\]
If either $m$ or $n$ is odd (or both), this immediately yields a contradiction. The remaining case is that $m$ and $n$ are both even, and
\[\bigl|\{i \in [m]:a_i\leq  \textnormal{Med}(c)\}\bigr|=\bigl|\{i \in [m]:a_i < \textnormal{Med}(a)\}\bigr| = \frac{m}{2}\]
and
\[\bigl|\{j \in [n]:b_j\leq  \textnormal{Med}(c)\}\bigr|=\bigl|\{j \in [n]:b_j < \textnormal{Med}(b)\}\bigr| = \frac{n}{2}.\]
Letting $a_{(1)}\leq \dots \leq a_{(m)}$ denote the order statistics of $a$, and similarly letting $b_{(1)}\leq \dots\leq b_{(n)}$ denote the order statistics of $b$, we therefore have 
\[a_{(m/2)} \leq \textnormal{Med}(c) < a_{(m/2+1)}\textnormal{ and }
b_{(n/2)} \leq \textnormal{Med}(c) < b_{(n/2+1)}.\]
In particular, writing $c_{(1)}\leq \dots\leq c_{(n+m)}$ for the order statistics of $c$, this means that
\[c_{((m+n)/2)} = \max\big\{ a_{(m/2)},b_{(n/2)}\big\}\textnormal{ and }
c_{((m+n)/2+1)} =\min\{ a_{(m/2+1)},b_{(n/2+1)}\big\}.\]
By definition of the median, then,
\begin{multline*}\textnormal{Med}(c) = \frac{c_{((m+n)/2)} + c_{((m+n)/2+1)}}{2} = \frac{\max\big\{ a_{(m/2)},b_{(n/2)}\big\} + \min\{ a_{(m/2+1)},b_{(n/2+1)}\big\}}{2}\\
\geq \min\left\{\frac{a_{(m/2)} + a_{(m/2+1)}}{2}, \frac{b_{(n/2)} + b_{(n/2+1)}}{2}\right\}
= \min\left\{\textnormal{Med}(a),\textnormal{Med}(b)\right\}.
\end{multline*}
This contradicts our assumption, and completes the proof.
    \end{proof}
\begin{lemma}\label{lem:chatterjee_lemma_implies_dw} Let
$P_{(Y,Z)}$ be a distribution on $[-1,1]\times\R$, and let $Z_1,Z_1',Z_2,Z_2',\ldots$ be independent draws from the corresponding marginal $P_Z$ on $\R$. Let $(K_n)$ be a sequence of positive integers with $K_n\to\infty$ as $n\to \infty$, and for each $n$, let $B_{n,1},\dots,B_{n,K_n}\subseteq [n]$ be a collection of disjoint sets such that\footnote{If $B_{n,k} = \emptyset$ for some $k \in [K_n]$, then we interpret $\max_{i\in B_{n,k}} \bigl|i/n - k/K_n\bigr| = 0$.} 
\[
\max_{k \in [K_n]} \max_{i\in B_{n,k}} \left|\frac{i}{n} - \frac{k}{K_n}\right| \to 0.
\] Then, as $n\to\infty$
\[
\mathbb{E}\Biggl[\frac{1}{n}\sum_{k=1}^{K_n}\sum_{i\in B_{n,k}}{\rm d}_{\rm W}\bigl(P_{Y|Z}(\cdot\mid Z_{(n,i)}) , P_{Y|Z}(\cdot\mid Z'_{(K_n,k)})\bigr)\Biggr] \to 0.
\]
\end{lemma}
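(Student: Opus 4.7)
The plan is to reduce to order statistics of i.i.d.\ uniforms via the probability integral transform, and to handle the lack of smoothness of $z \mapsto P_{Y\mid Z=z}$ via Lusin's theorem---the ingredient highlighted at the end of Section~\ref{sec:partial_linear_model}. Let $F$ be the c.d.f.\ of $P_Z$; by coupling so that $Z_i = F^{-1}(V_i)$ and $Z_k' = F^{-1}(V_k')$ for $V_i, V_k' \iidsim \textnormal{Unif}(0,1)$ mutually independent, I would define $h: (0,1) \to \mathcal{P}([-1,1])$ by $h(v) = P_{Y\mid Z=F^{-1}(v)}$. Since $[-1,1]$ is compact, $(\mathcal{P}([-1,1]), {\rm d}_{\rm W})$ is a compact Polish space on which ${\rm d}_{\rm W}$ is uniformly bounded by $2$. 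For each $\epsilon > 0$, Lusin's theorem (followed by a routine approximation of a compact set by a finite union of closed intervals) yields a compact $K_\epsilon \subseteq (0,1)$ expressed as a union of $M=M(\epsilon)$ closed intervals, with $\lambda(K_\epsilon) \geq 1-\epsilon$, on which $h$ is continuous and hence uniformly continuous with some modulus $\omega_\epsilon: [0,\infty) \to [0,2]$ satisfying $\omega_\epsilon(0)=0$.

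Given this, the next step is the deterministic decomposition
\[
{\rm d}_{\rm W}\bigl(h(V_{(n,i)}), h(V_{(K_n,k)}')\bigr) \leq \omega_\epsilon\bigl(|V_{(n,i)} - V_{(K_n,k)}'|\bigr) + 2\One{V_{(n,i)} \notin K_\epsilon} + 2\One{V_{(K_n,k)}' \notin K_\epsilon},
\]
with the three resulting sums controlled separately. The first indicator sum is straightforward: since the $B_{n,k}$ are disjoint subsets of $[n]$ and the order statistics are a permutation of the sample, $\frac{1}{n}\EE{\sum_k \sum_{i\in B_{n,k}} \One{V_{(n,i)} \notin K_\epsilon}} \leq \frac{1}{n}\EE{\sum_{i=1}^n \One{V_i \notin K_\epsilon}} \leq \epsilon$. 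The modulus sum is controlled by concentration of order statistics: $\VV{V_{(n,i)}} = O(1/n)$, $\VV{V_{(K_n,k)}'} = O(1/K_n)$, and $|i/n - k/K_n| \leq \delta_n \to 0$ for $i \in B_{n,k}$, which together give $\EE{|V_{(n,i)} - V_{(K_n,k)}'|^2} = O(1/n) + \delta_n^2 + O(1/K_n) \to 0$ uniformly in $(i,k)$; since $\omega_\epsilon$ is bounded and continuous at $0$, a dominated-convergence argument yields $\sup_{(i,k)}\EE{\omega_\epsilon(|V_{(n,i)} - V_{(K_n,k)}'|)} \to 0$, so the average over the at most $n$ pairs also vanishes.

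The main obstacle is the weighted indicator $\frac{1}{n}\EE{\sum_k |B_{n,k}|\One{V_{(K_n,k)}' \notin K_\epsilon}}$, because $|B_{n,k}|$ need not be uniform in $k$. Writing $k(i)$ for the unique $k$ with $i \in B_{n,k}$ and $S_n := \bigcup_k B_{n,k}$, this equals $\frac{1}{n}\EE{\sum_{i \in S_n}\One{V_{(K_n,k(i))}' \notin K_\epsilon}}$, and I would apply Chebyshev's inequality to get $\PP{|V_{(K_n,k)}' - k/K_n| > \eta} = O(1/(K_n \eta^2))$, then split according to whether $[k(i)/K_n - \eta, k(i)/K_n + \eta] \subseteq K_\epsilon$. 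The ``good'' contribution is at most $O(|S_n|/(K_n\eta^2)) \leq O(n/(K_n\eta^2))$, which after division by $n$ is $O(1/(K_n\eta^2)) \to 0$; the ``bad'' contribution involves indices $i \in S_n$ with $k(i)/K_n$ within $\eta$ of $K_\epsilon^c$, a set of Lebesgue measure at most $\epsilon + 2M\eta$ because $K_\epsilon$ is a finite union of $M$ intervals, and since $|i/n - k(i)/K_n| \leq \delta_n$ the number of such $i$'s is at most $n(\epsilon + 2M\eta + \delta_n) + O(M)$. Dividing by $n$ and sending first $n \to \infty$, then $\eta \to 0$, and finally $\epsilon \to 0$ (in that order) completes the argument.
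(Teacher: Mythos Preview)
Your approach shares the paper's core ingredients---the quantile transform to uniforms and Lusin's theorem to tame the lack of regularity of $z \mapsto P_{Y\mid Z=z}$---and you correctly identify the weighted term $\frac{1}{n}\sum_k |B_{n,k}|\,\One{V'_{(K_n,k)} \notin K_\epsilon}$ as the crux. But the step you call ``routine'' is not: the assertion that the Lusin set $K_\epsilon$ can be taken to be a finite union of closed intervals on which $h$ is continuous is false in general. Take $Y\in\{-1,1\}$ with $\PPst{Y=1}{Z=z}=\mathbb{1}_A\bigl(F(z)\bigr)$ for a fat Cantor set $A\subset(0,1)$ of measure $1-\epsilon/2$; then $h(v)$ equals $\delta_1$ on $A$ and $\delta_{-1}$ on $A^c$. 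Every non-degenerate closed interval meets both $A$ and its dense open complement, so $h$ is discontinuous on every such interval, and any finite union of closed intervals on which $h$ is continuous has Lebesgue measure zero. Lusin only hands you a compact (possibly totally disconnected) $C$ with $h|_C$ continuous; inner approximation by intervals loses the measure bound, and outer approximation loses the continuity. Since the finite-interval structure is exactly what lets you bound the $\eta$-neighbourhood of $K_\epsilon^c$ by $\epsilon+2M\eta$ in your good/bad split, the final step does not close as written.

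The paper's route around this is different in two ways. First, it writes ${\rm d}_{\rm W}(\mu,\nu)=\int_{-1}^1|F_\mu(t)-F_\nu(t)|\,\mathsf{d}t$ and exchanges the $t$-integral with the expectation by Fubini, so that for each fixed $t$ one works with a \emph{real-valued} map $u\mapsto F_{Y|U}(t\mid u)$, for which the version of Lusin producing a globally continuous approximant (rather than merely a large compact set) is classical. Second---and this is the idea that dissolves your ``main obstacle''---it introduces an exchangeability device: define $V_{n,j}=V'_{(K_n,k)}$ whenever the rank of $U_j$ among $U_1,\ldots,U_n$ lies in $B_{n,k}$, and $V_{n,j}=U_j$ otherwise. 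The summands $|g(U_j)-g(V_{n,j})|$ are then identically distributed, so the averaged double sum collapses to the single expectation $\mathbb{E}\bigl[|g(U_1)-g(V_{n,1})|\bigr]$, and one shows $V_{n,1}\to U_1$ in probability via Glivenko--Cantelli (Lemma~\ref{lem:chatterjee_lemma_extension}). Importing this reduction---or, equivalently, establishing directly that the mixture $\sum_k \frac{|B_{n,k}|}{n}\,\mathrm{Beta}(k,K_n{+}1{-}k)$ controls general measurable sets and not just finite unions of intervals---is what your argument needs.
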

\begin{remark}
The intuition is that, at each value of $n$, we should have $\frac{i}{n}\approx \frac{k}{K_n}$ and therefore $Z_{(n,i)}\approx Z'_{(K_n,k)}$, for all $k$ and all $i\in B_{n,k}$.
\end{remark}
\begin{proof}
Let $F_Z^{-1}$ be the quantile function corresponding to $P_Z$, so that if $U\sim\textnormal{Unif}[0,1]$ then $F_Z^{-1}(U)\sim P_Z$. Now define $\tilde{P}_{Y|U}(\cdot\mid u) = P_{Y|Z}\bigl(\cdot\mid F_Z^{-1}(u)\bigr)$, i.e., the conditional distribution of~$Y$ given $U$ where $Z = F_Z^{-1}(U)$ and we sample $Y$ conditional on $Z$ from $P_{Y\mid Z}$. Since $F_Z^{-1}$ is monotone, it suffices to prove that
\[
\mathbb{E}\Biggl[\frac{1}{n}\sum_{k=1}^{K_n}\sum_{i\in B_{n,k}}{\rm d}_{\rm W}\bigl(\tilde{P}_{Y|U}(\cdot\mid U_{(n,i)}) , \tilde{P}_{Y|U}(\cdot\mid V_{(K_n,k)})\bigr)\Biggr] \to 0,\]
where $U_1,V_1,U_2,V_2,\dots$ are independent and identically distributed Unif$[0,1]$ random variables. 
Writing $F_{Y|U}$ for the distribution function corresponding to the conditional distribution $\tilde{P}_{Y|U}$, we have
\begin{align}\label{eqn:wasserstein-breakdown}
    &\mathbb{E}\Biggl[\frac{1}{n}\sum_{k=1}^{K_n}\sum_{i\in B_{n,k}}{\rm d}_{\rm W}\bigl(\tilde{P}_{Y|U}(\cdot\mid U_{(n,i)}) , \tilde{P}_{Y|U}(\cdot\mid V_{(K_n,k)})\bigr)\Biggr] \nonumber\\
    &\hspace{1.7cm}= \mathbb{E}\Biggl[\frac{1}{n}\sum_{k=1}^{K_n}\sum_{i\in B_{n,k}}\int_{-1}^1 \bigl|F_{Y|U}(t\mid U_{(n,i)}) - F_{Y|U}(t\mid V_{(K_n,k)})\bigr|\;\mathsf{d}t\Biggr] \nonumber\\
    &\hspace{1.7cm}= \int_{-1}^1 \mathbb{E}\Biggl[\frac{1}{n}\sum_{k=1}^{K_n}\sum_{i\in B_{n,k}}\bigl|F_{Y|U}(t\mid U_{(n,i)}) - F_{Y|U}(t\mid V_{(K_n,k)})\bigr|\Biggr]\;\mathsf{d}t ,
\end{align}
where the last step follows by Fubini's theorem.

Now, we define some additional notation: for $n\in \mathbb{N}$ and $j\in [n]$, let
\begin{equation}\label{defn:V-same-order-as-U}
    V_{n,j}=\begin{cases}
       V_{(K_n,k)} & \text{if } U_j = U_{(n,i)}~\text{for some } i\in B_{n,k},\\
       U_j & \text{if } U_j = U_{(n,i)}~\text{for some } i\in[n] \setminus (B_{n,1}\cup\dots\cup B_{n,K_n}).
    \end{cases}
\end{equation}
Then for every $t \in [0,1]$ and $n \in \mathbb{N}$, 
\[
\sum_{k=1}^{K_n} \sum_{i\in B_{n,k}} \bigl|F_{Y|U}(t\mid U_{(n,i)}) - F_{Y|U}(t\mid V_{(K_n,k)})\bigr| = \sum_{j=1}^n |F_{Y|U}(t\mid U_j) - F_{Y|U}(t\mid V_{n,j})|.
\]
Moreover, $|F_{Y|U}(t\mid U_j) - F_{Y|U}(t\mid V_{n,j})|$ is identically distributed for each $j=1,\dots,n$ (when $n$ is fixed). Hence, 
\begin{align*}
    \mathbb{E}\Biggl[\sum_{k=1}^{K_n} \sum_{i\in B_{n,k}} \bigl|F_{Y|U}(t\mid U_{(n,i)}) - F_{Y|U}(t\mid V_{(K_n,k)})\bigr|\Biggr]&=\EE{\frac{1}{n}\sum_{j=1}^n \bigl|F_{Y|U}(t\mid U_j) - F_{Y|U}(t\mid V_{n,j})\bigr|}\\ &= \mathbb{E}\bigl[\bigl|F_{Y|U}(t\mid U_1) - F_{Y|U}(t\mid V_{n,1})\bigr|\bigr].
\end{align*}
By Lemma~\ref{lem:chatterjee_lemma_extension} below,
\[
\bigl|F_{Y|U}(t\mid U_{1}) - F_{Y|U}(t\mid V_{n,1})\bigr| = \mathrm{o}_P(1).
\]
Since $F_{Y\mid U}(t\mid \cdot)$ is a bounded function, we deduce that
\[
\mathbb{E}\Biggl[\frac{1}{n}\sum_{k=1}^{K_n}\sum_{i\in B_{n,k}}\bigl|F_{Y|U}(t\mid U_{(n,i)}) - F_{Y|U}(t\mid V_{(K_n,k)})\bigr|\Biggr] \to 0
\]
for every $t \in [-1,1]$.  The result now follows by \eqref{eqn:wasserstein-breakdown} and the dominated convergence theorem.
\end{proof}

\begin{lemma}\label{lem:chatterjee_lemma_extension}
 Let $U_1,V_1,U_2,V_2,\dots\iidsim\textnormal{Unif}[0,1]$, and let $(K_n)$ be a sequence of positive integers with $K_n\to\infty$ as $n\to\infty$. Furthermore, for each $n$, let $B_{n,1},\dots,B_{n,K_n}\subseteq \{1,\dots,n\}$ be a collection of disjoint sets such that 
\[
\max_{k \in [K_n]} \max_{i\in B_{n,k}} \biggl|\frac{i}{n} - \frac{k}{K_n}\biggr| \to 0.
\]
Then, for any Borel measurable $g:[0,1]\to\R$, we have $|g(U_1) - g(V_{n,1})| = \mathrm{o}_P(1)$,
where $V_{n,1}$ is defined as in \eqref{defn:V-same-order-as-U}.
\end{lemma}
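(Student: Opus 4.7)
The plan is to first prove the coupling $|U_1 - V_{n,1}| = \mathrm{o}_P(1)$, and then pass from pointwise closeness to closeness under a general Borel $g$ via Lusin's theorem. Writing $I_1 \in [n]$ for the rank of $U_1$ among $U_1, \ldots, U_n$ (which is uniformly distributed on $[n]$ and independent of the order statistics $U_{(n,1)} \leq \ldots \leq U_{(n,n)}$), I would split into two cases. If $I_1 \notin \bigcup_k B_{n,k}$, the construction \eqref{defn:V-same-order-as-U} yields $V_{n,1} = U_1$ and the difference is zero. Otherwise, if $I_1 \in B_{n,k}$, then $V_{n,1} = V_{(K_n,k)}$, and the triangle inequality bounds $|U_1 - V_{n,1}|$ by
\[
|U_{(n,I_1)} - I_1/n| + |I_1/n - k/K_n| + |V_{(K_n,k)} - k/K_n|.
\]
The first and third terms are uniformly $\mathrm{o}_P(1)$ by the Glivenko--Cantelli (or DKW) estimates $\sup_i |U_{(n,i)} - i/n|\to 0$ and $\sup_k |V_{(K_n,k)} - k/K_n|\to 0$ in probability, while the middle term is uniformly at most $\epsilon_n := \max_k \max_{i \in B_{n,k}} |i/n - k/K_n|$, which tends to zero by hypothesis.

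Next, to promote the coupling to $|g(U_1) - g(V_{n,1})| = \mathrm{o}_P(1)$ for arbitrary Borel $g$, I would fix $\eta > 0$, truncate $g$ at a level $M$ chosen large enough that $\lambda(\{|g| > M\}) < \eta$, then apply Lusin's theorem to the truncation $g_M := g\cdot\mathbf{1}\{|g|\leq M\}$ to produce a continuous approximation $\tilde g \colon [0,1] \to [-M,M]$ agreeing with $g_M$ off a set of Lebesgue measure less than $\eta$. Decomposing
\[
|g(U_1) - g(V_{n,1})| \leq |g(U_1) - \tilde g(U_1)| + |\tilde g(U_1) - \tilde g(V_{n,1})| + |\tilde g(V_{n,1}) - g(V_{n,1})|,
\]
the middle term vanishes in probability by uniform continuity of $\tilde g$ on $[0,1]$ combined with the first step. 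The first term is zero outside the exceptional set $E := \{u : g(u) \neq \tilde g(u)\}$; since $U_1$ is uniform, this contributes probability exactly $\lambda(E) < 2\eta$.

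The main obstacle, and the step requiring genuine work, is controlling the third term, namely bounding $P(V_{n,1} \in E)$ for sets $E$ of small Lebesgue measure, since $V_{n,1}$ is not uniform on $[0,1]$. Expanding
\[
P(V_{n,1} \in E) = \frac{1}{n}\sum_{k=1}^{K_n} |B_{n,k}|\, P(V_{(K_n,k)} \in E) + \frac{1}{n}\sum_{i \notin \bigcup_k B_{n,k}} P\bigl(U_1 \in E \bigm| I_1 = i\bigr),
\]
then using the identities $\sum_k P(V_{(K_n,k)} \in E) = K_n\,\lambda(E)$ (because the order statistics form a rearrangement of $V_1,\ldots,V_{K_n}$) and $\sum_{i=1}^n P(U_1 \in E \mid I_1 = i) = n\,\lambda(E)$, together with the structural bound $\max_k|B_{n,k}| \leq 2\epsilon_n n + 1$ coming from the rank hypothesis, yields
\[
P(V_{n,1} \in E) \leq \Bigl(1 + \tfrac{\max_k |B_{n,k}| \cdot K_n}{n}\Bigr)\lambda(E).
\]
In the applications of interest (neighbour and cross-bin matching), the sets $B_{n,k}$ have balanced sizes $|B_{n,k}| \asymp n/K_n$, so this factor is uniformly bounded and $P(V_{n,1} \in E) = O(\lambda(E)) < C\eta$. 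Combining the three terms and letting $\eta \downarrow 0$ gives $|g(U_1)-g(V_{n,1})|=\mathrm{o}_P(1)$.
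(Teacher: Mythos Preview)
Your approach mirrors the paper's exactly: establish $|U_1-V_{n,1}|=\mathrm{o}_P(1)$ by combining Glivenko--Cantelli control of $\sup_i|U_{(n,i)}-i/n|$ and $\sup_k|V_{(K_n,k)}-k/K_n|$ with the rank hypothesis, then invoke Lusin's theorem to replace the Borel $g$ by a continuous $h$ and split $P(|g(U_1)-g(V_{n,1})|>\epsilon)$ into the three standard pieces. (Your preliminary truncation of $g$ at level $M$ is unnecessary---Lusin applies directly to any real-valued measurable $g$---but it does no harm.)

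The divergence is exactly where you flag ``the main obstacle'': bounding $P(V_{n,1}\in E)$ for the exceptional set $E=\{g\neq h\}$. Your inequality $P(V_{n,1}\in E)\leq\bigl(1+\max_k|B_{n,k}|\,K_n/n\bigr)\lambda(E)$ is correct, but the prefactor is only $O(1)$ under the extra balance assumption $\max_k|B_{n,k}|=O(n/K_n)$. That assumption is \emph{not} implied by the lemma's hypothesis: the stated condition only yields $\max_k|B_{n,k}|\leq 2\epsilon_n n+1$ with $\epsilon_n\to 0$, and one may have $\epsilon_n K_n\to\infty$. So for the lemma as stated in full generality this is a genuine gap, even though---as you correctly note---balance does hold in the two downstream applications (neighbour matching has $|B_{n,k}|=2$, cross-bin has $|B_{n,k}|=m_n=\lfloor n/K_n\rfloor$).

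The paper does not go through your $\max_k|B_{n,k}|$ bound here. After reaching $\sum_k\tfrac{|B_{n,k}|}{n}P\bigl(V_{(K_n,k)}\in E\bigr)$, it introduces the sorting permutation $\pi_n$ with $V_{(K_n,k)}=V_{\pi_n(k)}$ and re-indexes the sum as $\sum_k\tfrac{|B_{n,\pi_n^{-1}(k)}|}{n}P(V_k\in E)$; since each $P(V_k\in E)=\lambda(E)$ and $\sum_k|B_{n,\pi_n^{-1}(k)}|=\sum_k|B_{n,k}|$, this gives the balance-free bound $\bigl(\sum_k|B_{n,k}|/n\bigr)\lambda(E)\leq\lambda(E)$. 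You should look at that re-indexing step carefully---$\pi_n$ is a \emph{random} permutation while the probabilities on either side are deterministic, so the substitution $k\mapsto\pi_n(k)$ inside $P(\cdot)$ is delicate---but that is the device the paper uses in place of your balance condition.
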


\begin{proof}
Let $\epsilon > 0$.  For any Borel measurable $h:[0,1]\to\R$, we can write
\begin{align}\label{eq:decomposition-lusin}
&\PP{|g(U_1) - g(V_{n,1})|>\epsilon}\nonumber\\
&\hspace{2cm}\leq \PP{|h(U_1) - h(V_{n,1})|>\epsilon}
+ \PP{g(U_1)\neq h(U_1)} + \PP{g(V_{n,1})\neq h(V_{n,1})}.\end{align}
By Lusin's theorem (restated for convenience in Theorem~\ref{lem:lusin-theorem}), we can find a continuous function $h:[0,1]\to\R$ such that
\[
\Pp{U\sim {\rm Unif}[0,1]}{g(U)\neq h(U)}\leq \frac{\epsilon}{4}.
\]
Next, for each $k=1,\dots,K_n$, let $\mathcal{E}_{n,k}$ denote the event that $U_1 = U_{(n,i)}$ for some $i\in B_{n,k}$, which has probability $\frac{|B_{n,k}|}{n}$.  Further, define $\mathcal{E}_{n,*}$ to be the event that $U_1 = U_{(n,i)}$ for some $i\in[n]\setminus (B_{n,1}\cup\dots\cup B_{n,K_n})$. By construction of $V_{n,1}$, on the event $\mathcal{E}_{n,k}$ we have $V_{n,1} = V_{(K_n,k)}$, and on $\mathcal{E}_{n,*}$ we have $V_{n,1} = U_1$. 
Therefore, letting $\pi_n$ denote a permutation of $[K_n]$ such that $V_{(K_n,k)} = V_{\pi_n(k)}$ for $k \in [K_n]$, we have
\begin{align*}
    \PP{g(V_{n,1})\neq h(V_{n,1})}
    &=  \sum_{k=1}^{K_n}\PP{\mathcal{E}_{n,k}, \  g(V_{(K_n,k)})\neq h(V_{(K_n,k)}} + \PP{\mathcal{E}_{n,*}, g(U_1)\neq h(U_1)}\\
    &\stackrel{\rm (*)}{=} \sum_{k=1}^{K_n}\PP{\mathcal{E}_{n,k}}\cdot \PP{g(V_{(K_n,k)})\neq h(V_{(K_n,k)}} + \PP{\mathcal{E}_{n,*},g(U_1)\neq h(U_1)}\\
    &\leq \sum_{k=1}^{K_n} \frac{|B_{n,k}|}{n}\PP{g(V_{(K_n,k)})\neq h(V_{(K_n,k)}} + \PP{g(U_1)\neq h(U_1)}\\
    &=\sum_{k=1}^{K_n} \frac{|B_{n,k}|}{n}\PP{g(V_{\pi_n(k)})\neq h(V_{\pi_n(k)})} + \PP{g(U_1)\neq h(U_1)}\\
    &=\sum_{k=1}^{K_n} \frac{|B_{n,\pi_n^{-1}(k)}|}{n}\PP{g(V_k)\neq h(V_k)} + \PP{g(U_1)\neq h(U_1)}\\
    &\leq \sum_{k=1}^{K_n} \frac{|B_{n,k}|}{n} \cdot \frac{\epsilon}{4} + \frac{\epsilon}{4} \leq \frac{\epsilon}{2},
\end{align*}
where the step marked $\rm (*)$ holds because $\mathcal{E}_{n,k}$ depends only on $(U_1,\dots,U_n)$, while $V_{(K_n,k)}$ depends only on $(V_1,\dots,V_{K_n})$, and therefore $\mathbbm{1}_{\mathcal{E}_{n,k}} \independent V_{(K_n,k)}$, while the penultimate step holds since $U_1\sim\textnormal{Unif}[0,1]$, and $V_k\sim\textnormal{Unif}[0,1]$ for each $k$.
Finally, let $\widehat{F}_{U,n} = \frac{1}{n}\sum_{i=1}^n\delta_{U_i}$ and $\widehat{F}_{V,K_n} = \frac{1}{n}\sum_{i=1}^{K_n}\delta_{V_i}$ be the empirical distribution of $U_1,\ldots,U_n$ and $V_1,\ldots,V_{K_n}$ respectively. We may assume that $U_1,\dots,U_n$ are distinct, since this holds almost surely, so if $i \in [n]$ is such that $U_1 = U_{(n,i)}$, then we have 
\[
\widehat{F}_{U,n}(U_1) = \frac{i}{n}.
\]
Let $F_{\textnormal{Unif}}$ denote the distribution function of the $\textnormal{Unif}[0,1]$ distribution, i.e., $F_{\textnormal{Unif}}(t)=t$ for $t\in[0,1]$.  If $i$ further satisfies $i\in B_{n,k}$ for some $k\in [K_n]$, then $V_{n,1} = V_{(K_n,k)}$, so
\begin{align*}
|U_1 - V_{n,1}| &\leq \biggl|U_1 - \frac{i}{n}\biggr| + \biggl|V_{n,1} - \frac{k}{K_n}\biggr| + \biggl|\frac{i}{n} - \frac{k}{K_n}\biggr| \\
&\leq \bigl\|\widehat{F}_{U,n} - F_{\textnormal{Unif}}\bigr\|_\infty + \bigl\|\widehat{F}_{V,K_n} - F_{\textnormal{Unif}}\bigr\|_\infty + \max_{k\in[K_n]}\max_{i\in B_{n,k}}\biggl|\frac{i}{n} - \frac{k}{K_n}\biggr|.
\end{align*}
If instead we have $i \in[n]\setminus (B_{n,1}\cup\dots\cup B_{n,K_n})$, then this bound holds trivially, since in that case $V_{n,1} = U_1$ by definition.  Since the first two terms are each $\mathrm{o}_P(1)$ by the Glivenko--Cantelli theorem, while the last term $\max_{k\in[K_n]}\max_{i\in B_{n,k}}\biggl|\frac{i}{n} - \frac{k}{K_n}\biggr|$ is vanishing by assumption, it follows that $V_{n,1} \stackrel{P}{\to} U_1$, so by the continuous mapping theorem, $\PP{|h(U_1) - h(V_{n,1})|>\epsilon} \leq \epsilon/4$ for sufficiently large $n$.  The result therefore follows from~\eqref{eq:decomposition-lusin}.
\end{proof}
\begin{lemma}\label{lem:f_Lipschitz}
    The function $f_n$ from~\eqref{Eq:fn} satisfies the Lipschitz property
    \[
    \bigl|f_n(y,y') - f_n(\tilde{y},\tilde{y}')\bigr|\leq 4\|y-\tilde{y}\|_1 + 4\|y'-\tilde{y}'\|_1 
    \]
    for $(y,y'),(\tilde{y},\tilde{y}') \in [-1,1]^{m_n} \times [-1,1]^{m_n}$.
\end{lemma}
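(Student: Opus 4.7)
The plan is to reduce the bound to two elementary facts: (i) the sorting map $y\mapsto\bigl(y_{(1)},\dots,y_{(m_n)}\bigr)$ is $1$-Lipschitz with respect to $\|\cdot\|_1$, and (ii) the map $t\mapsto (t)_+^2$ is $4$-Lipschitz on $[-2,2]$. Once these are in hand, the rest is triangle inequality applied termwise in the sum defining $f_n$.

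First I would fix $(y,y'),(\tilde y,\tilde y')\in[-1,1]^{m_n}\times[-1,1]^{m_n}$ and write, for each $i\in\{1,\dots,\lfloor m_n/2\rfloor\}$,
\[
\bigl|(y_{(m_n+1-i)}-y'_{(i)})_+^2-(\tilde y_{(m_n+1-i)}-\tilde y'_{(i)})_+^2\bigr|\le 4\bigl(|y_{(m_n+1-i)}-\tilde y_{(m_n+1-i)}|+|y'_{(i)}-\tilde y'_{(i)}|\bigr),
\]
using fact (ii) together with the $1$-Lipschitz property of $t\mapsto (t)_+$. Here the constant $4$ comes from the bound $(t)_+\le 2$ for $t\in[-2,2]$, so that $|a_+^2-b_+^2|\le (a_++b_+)|a_+-b_+|\le 4|a-b|$ whenever $a,b\in[-2,2]$.

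Next I would sum over $i=1,\dots,\lfloor m_n/2\rfloor$ and invoke fact (i): for any two vectors $u,\tilde u\in\mathbb R^{m_n}$, one has $\sum_{j=1}^{m_n}|u_{(j)}-\tilde u_{(j)}|\le\|u-\tilde u\|_1$ (this is the standard rearrangement inequality for $\ell_1$ distance between sorted sequences, which follows from an exchange argument). Restricting the sum to the indices actually used in $f_n$ only decreases it, so
\[
\sum_{i=1}^{\lfloor m_n/2\rfloor}|y_{(m_n+1-i)}-\tilde y_{(m_n+1-i)}|\le\|y-\tilde y\|_1,\qquad\sum_{i=1}^{\lfloor m_n/2\rfloor}|y'_{(i)}-\tilde y'_{(i)}|\le\|y'-\tilde y'\|_1.
\]
Combining with the termwise bound and the triangle inequality applied to $f_n(y,y')-f_n(\tilde y,\tilde y')$ yields the claimed estimate.

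There is no real obstacle here; the only slightly subtle point is making sure the sorting is $1$-Lipschitz in $\ell_1$, but this is a standard fact (and in fact holds for any $\ell_p$). If desired I would briefly justify it by noting that a single adjacent transposition in $\tilde u$ can only decrease $\sum_j|u_{(j)}-\tilde u_{(j)}|$ once both are sorted, and that any permutation can be achieved by adjacent transpositions. The entire proof is a few lines.
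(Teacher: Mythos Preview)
Your proposal is correct and follows essentially the same approach as the paper: both arguments reduce to the $4$-Lipschitz property of $t\mapsto t_+^2$ on $[-2,2]$ applied termwise, followed by the fact that sorting is $1$-Lipschitz in $\ell_1$. The only cosmetic difference is that the paper justifies the latter via the Wasserstein interpretation $\|y_{()}-\tilde y_{()}\|_1 = m_n\,{\rm d}_{\rm W}(\hat P_y,\hat P_{\tilde y})\le \|y-\tilde y\|_1$, whereas you invoke the equivalent rearrangement/transposition argument.
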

\begin{proof}[Proof of Lemma~\ref{lem:f_Lipschitz}]
    Since $t\mapsto t^2_+$ is $4$-Lipschitz on $[-2,2]$, we have for each $i=1,\ldots,\lfloor m/2 \rfloor$ that
    \[\left|(y_{(m_n+1-i)} - y'_{(i)})^2_+ - (\tilde{y}_{(m_n+1-i)} - \tilde{y}'_{(i)})^2_+\right|\leq 4|y_{(m_n+1-i)} - \tilde{y}_{(m_n+1-i)}|
    + 4|y'_{(i)} - \tilde{y}'_{(i)}|.\]
    Therefore, 
    \[
    \left|f_n(y,y') - f_n(\tilde{y},\tilde{y}')\right|
    \leq 4\sum_{i=1}^{\lfloor m_n/2\rfloor}|y_{(m_n+1-i)} - \tilde{y}_{(m_n+1-i)}| + 4\sum_{i=1}^{\lfloor m_n/2\rfloor} |y'_{(i)} - \tilde{y}'_{(i)}|.
    \]
    Writing $y_{()} = (y_{(1)},\dots,y_{(m_n)})$ for the sorted vector of $y$, and same for $y'_{()}$, $\tilde{y}_{()}$, and $\tilde{y}'_{()}$, we deduce that
    \[
    \left|f_n(y,y') - f_n(\tilde{y},\tilde{y}')\right|
    \leq 4\|y_{()} - \tilde{y}_{()}\|_1 + 4\|y'_{()} - \tilde{y}'_{()}\|_1.
    \]
    Finally, we write $\hat{P}_{y}:=\sum_{i=1}^{m_n} \delta_{y_i}$ and $\hat{P}_{\tilde{y}}:=\sum_{i=1}^{m_n} \delta_{\tilde{y}_i}$. Then, from the definition of the $L_1$-Wasserstein distance ${\rm d}_{\rm W}(\cdot,\cdot)$, it holds that
    \[
    \|y_{()} - \tilde{y}_{()}\|_1 = {\rm d}_{\rm W}(\hat{P}_{y},\hat{P}_{\tilde{y}})\leq \|y - \tilde{y}\|_1,\]
     and a similar inequality also holds for $y',\tilde{y}'$. This completes the proof.
\end{proof}
\begin{lemma}\label{lem:f_div}
    Let $Q$ be any distribution on $[-1,1]$, and let $A=(A_1,\dots,A_{m_n})\sim Q^{m_n}$ and $B=(B_1,\dots,B_{m_n})\sim Q^{m_n}$. Then, with $f_n$ as defined in~\eqref{Eq:fn}, there exists a univeral constant $C > 0$ such that
    \[
    \biggl|\frac{2}{m_n}\EE{f_n(A,B)} - \textnormal{Dev}(Q)\biggr| \leq \frac{C}{\sqrt{m_n}}.
    \]
\end{lemma}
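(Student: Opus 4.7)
The plan is to rewrite both $\frac{2}{m_n}\EE{f_n(A,B)}$ and $\textnormal{Dev}(Q)$ as integrals over $[0,1/2]$ of squared positive parts involving quantile functions, and then bound the resulting discrepancy via the $L^1$-Wasserstein distance between empirical and population distributions. Concretely, I would let $G = F_Q^{-1}$ denote the population quantile function, and define the sample quantile functions $\hat{G}_A, \hat{G}_B : (0,1] \to [-1,1]$ by $\hat{G}_A(q) = A_{(\lceil qm_n \rceil)}$ and analogously for $\hat{G}_B$. A straightforward piecewise computation then shows
\[
\int_{(i-1)/m_n}^{i/m_n}\bigl(\hat{G}_A(1-q) - \hat{G}_B(q)\bigr)^2_+ \, dq = \tfrac{1}{m_n}(A_{(m_n+1-i)} - B_{(i)})^2_+
\]
for each $i \leq \lfloor m_n/2\rfloor$, so that $\int_0^{1/2}(\hat{G}_A(1-q)-\hat{G}_B(q))^2_+\,dq = \tfrac{1}{m_n}f_n(A,B) + R$, with $|R| \leq 2/m_n$ accounting for the possible residual sub-interval of length at most $1/(2m_n)$ when $m_n$ is odd. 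On the population side, $G$ nondecreasing gives $G(1-q) \geq G(q)$ for $q \in [0,1/2]$, hence $(G(1-q)-G(q))^2_+ = (G(1-q)-G(q))^2$; combined with the symmetry of $(G(1-q)-G(q))^2$ under $q\mapsto 1-q$, this yields $\textnormal{Dev}(Q) = 2\int_0^{1/2}(G(1-q)-G(q))^2_+\,dq$.

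The key step is then to bound the difference of the two integrands by the Lipschitz property of $x\mapsto x^2_+$. Since $A$, $B$ take values in $[-1,1]$, the quantile values lie in $[-1,1]$, so the arguments of $x^2_+$ lie in $[-2,2]$, where $x\mapsto x^2_+$ is $4$-Lipschitz. Pointwise, this gives
\[
\bigl|(\hat{G}_A(1-q)-\hat{G}_B(q))^2_+ - (G(1-q)-G(q))^2_+\bigr| \leq 4|\hat{G}_A(1-q) - G(1-q)| + 4|\hat{G}_B(q) - G(q)|.
\]
Integrating in $q$ (enlarging the interval from $[0,1/2]$ to $[0,1]$ only costs a factor of two) and invoking the classical identity $\int_0^1 |\hat{G}(q) - G(q)|\,dq = W_1(\hat{Q}, Q)$ between the $L^1$-Wasserstein distance and the $L^1$-distance between quantile functions, I arrive at
\[
\biggl|\frac{2}{m_n}\EE{f_n(A,B)} - \textnormal{Dev}(Q)\biggr| \leq 8\,\EE{W_1(\hat{Q}_A, Q)} + 8\,\EE{W_1(\hat{Q}_B, Q)} + \mathrm{O}(1/m_n),
\]
where $\hat{Q}_A, \hat{Q}_B$ denote the empirical measures of $A$ and $B$.

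Finally, I would bound the expected empirical Wasserstein distance using the Kantorovich--Rubinstein identity $W_1(\hat{Q}, Q) = \int_{-1}^1 |F_{\hat{Q}}(t) - F_Q(t)|\,dt$ together with $\EE{|F_{\hat{Q}}(t) - F_Q(t)|} \leq \sqrt{F_Q(t)(1-F_Q(t))/m_n} \leq 1/(2\sqrt{m_n})$. Integration over $[-1,1]$ immediately yields $\EE{W_1(\hat{Q}, Q)} \leq 1/\sqrt{m_n}$, and substitution into the display above completes the proof with a universal constant. No single step presents a serious obstacle; the main conceptual point is simply that the sorting built into $f_n$ converts the empirical quantity into an integral of sample quantile functions, after which the problem reduces to the classical parametric empirical-Wasserstein rate for distributions on $[-1,1]$.
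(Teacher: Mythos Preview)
Your proof is correct and takes a genuinely different route from the paper's. Both approaches ultimately rest on the bound $\EE{W_1(\hat{Q},Q)}=O(m_n^{-1/2})$ for the empirical $1$-Wasserstein distance of a distribution supported on $[-1,1]$, but the paths to reducing the problem to this bound are quite distinct.

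The paper first introduces an auxiliary antitone coupling: writing $A_i=F_Q^{-1}(U_i)$ with $U_i\sim\textnormal{Unif}[0,1]$, it defines $A'_i=F_Q^{-1}(1-U_i)$ and replaces $B$ by $A'$ via the $\ell_1$-Lipschitz property of $f_n$ (their Lemma~\ref{lem:f_Lipschitz}), incurring a Wasserstein cost $4\,\EE{\|A'_{()}-B_{()}\|_1}$ bounded by Lei's result. It then analyzes $\EE{f_n(A,A')}$ combinatorially: using $A_{(i)}=F_Q^{-1}(U_{(i)})$ and $A'_{(i)}=F_Q^{-1}(1-U_{(m_n+1-i)})$, it expresses $f_n(A,A')+f_n(A',A)$ as $\sum_i (F_Q^{-1}(1-U_{(i)})-F_Q^{-1}(U_{(i)}))^2\mathbf{1}_{\mathcal{E}_i}$ for appropriate events $\mathcal{E}_i$, and controls the expected number of indices where $\mathcal{E}_i$ fails via a binomial fluctuation bound.

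Your approach bypasses both the intermediate coupling $A'$ and the combinatorial step. By recognising $m_n^{-1}f_n(A,B)$ directly as $\int_0^{1/2}(\hat{G}_A(1-q)-\hat{G}_B(q))^2_+\,dq$ (up to an $O(1/m_n)$ boundary term) and $\textnormal{Dev}(Q)$ as $2\int_0^{1/2}(G(1-q)-G(q))^2_+\,dq$, you reduce the comparison to a pointwise Lipschitz estimate followed by the quantile-function identity $\int_0^1|\hat{G}-G|=W_1(\hat{Q},Q)$. This is shorter, treats $A$ and $B$ symmetrically, and avoids any appeal to exchangeability of $(A,A')$ or to the events $\mathcal{E}_i$. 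Your self-contained bound $\EE{W_1(\hat{Q},Q)}\le 1/\sqrt{m_n}$ via the CDF representation also makes the constant explicit without an external reference. The paper's argument, by contrast, is closer in spirit to the analysis of cross-bin matching elsewhere in the appendix and reuses its own Lipschitz lemma for $f_n$; it is slightly longer but more modular within that framework.
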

\begin{proof}[Proof of Lemma~\ref{lem:f_div}]
First let $A_{(1)} \leq \dots \leq A_{(m_n)}$ and $B_{(1)}\leq \dots \leq B_{(m_n)}$ be the order statistics of $A$ and $B$, and write $A_{()} = (A_{(1)},\dots,A_{(m_n)})$ and $B_{()} = (B_{(1)},\dots,B_{(m_n)})$ for the corresponding sorted vectors. Then
\[
f_n(A,B) = f_n(A_{()},B_{()})
\]
since $(y,y') \mapsto f_n(y,y')$ is invariant to permutations of $y$, and invariant to permutations of $y'$. 

Next, let $F_Q^{-1}$ denote the quantile function corresponding to $Q$. Let $U_1,\dots,U_{m_n}\iidsim \textnormal{Unif}[0,1]$, so we can assume without loss of generality that $A_i = F_Q^{-1}(U_i)$ for $i \in [m_n]$. Define $A' = (A_1',\ldots,A_{m_n}')$ by $A'_i = F_Q^{-1}(1-U_i)$ for $i \in [m_n]$, and let $A'_{()}$ denote the corresponding sorted vector. Then by Lemma~\ref{lem:f_Lipschitz},
\begin{align}\label{eq:f_n-split-1}
    \bigl|\EE{f_n(A,B)} - \EE{f_n(A,A')}\bigr|&\leq \EE{\left|f_n(A,B) - f_n(A,A')\right|}\nonumber\\
    &= \EE{\left|f_n(A_{()},B_{()}) - f_n(A_{()},A'_{()})\right|} \leq 4\EE{\|A'_{()}-B_{()}\|_1}.
\end{align}
Next, let $U_{(1)}\leq \dots \leq U_{(m_n)}$ be the order statistics of $U_1,\ldots,U_{m_n}$. Then $A_{(i)} = F_Q^{-1}(U_{(i)})$ and $A'_{(i)} = F_Q^{-1}(1-U_{(m_n+1-i)})$ for each $i \in [m_n]$. Now, for $i \in [m_n]$, define the event
\[
\mathcal{E}_i:=\begin{cases}
    \{U_{(i)} < 1/2\} & \text{if } i \leq \lfloor m_n/2 \rfloor,\\
    \emptyset &\text{if } i=\lceil m_n/2\rceil\, \text{when } m_n/2\notin \mathbb{N},\\
    \{U_{(i)} \geq 1/2\} &\text{if } i\geq \lceil m_n/2\rceil+1.
\end{cases}
\]
Thus, it follows that 
\begin{align*}
    f_n(A',A) = \sum_{i=1}^{\lfloor m_n/2\rfloor} (A'_{(m_n+1-i)} - A_{(i)})^2_+ = \sum_{i=1}^{\lfloor m_n/2\rfloor}
\bigl\{F_Q^{-1}(1-U_{(i)}) - F_Q^{-1}(U_{(i)})\bigr\}^2\one{\mathcal{E}_i}
\end{align*}
and
\begin{align*}
f_n(A,A') &= \sum_{i=1}^{\lfloor m_n/2\rfloor} (A_{(m_n+1-i)} - A'_{(i)})^2_+ \\
&= \sum_{i=1}^{\lfloor m_n/2\rfloor}
\bigl\{F_Q^{-1}(U_{(m_n+1-i)}) - F_Q^{-1}(1-U_{(m_n+1-i)})\bigr\}^2\one{\mathcal{E}_{m_n+1-i}}\\
&=\sum_{i=\lceil m_n/2\rceil+1}^{m_n}
\bigl\{F_Q^{-1}(1-U_{(i)}) - F_Q^{-1}(U_{(i)})\bigr\}^2\one{\mathcal{E}_i}.
\end{align*}
Moreover, since $(A,A') \stackrel{d}{=} (A',A)$, we have $f_n(A,A') \stackrel{d}{=} f_n(A',A)$.
This further implies that
\begin{align*}
    \EE{f_n(A,A')}&=\EE{\frac{f_n(A',A)+f_n(A,A')}{2}}
  =\frac{1}{2}\, \mathbb{E}\biggl[ \sum_{i=1}^ {m_n} \bigl\{F_Q^{-1}(1-U_{(i)}) - F_Q^{-1}(U_{(i)})\bigr\}^2\cdot\one{\mathcal{E}_i}\biggr].
\end{align*}
By definition, $\mathbb{E}\bigl[\bigl\{F_Q^{-1}(U_i) - F_Q^{-1}(1-U_i)\bigr\}^2\bigr] = \textnormal{Dev}(Q)$ for each $i\in [m_n]$. Hence,
\begin{align}\label{eqn:f_n-split-2}
    \biggl|\EE{f_n(A,A')}& -\frac{m_n}{2}\textnormal{Dev}(Q)\biggr|\nonumber\\
    &=\biggl| \EE{f_n(A,A')} -  \frac{1}{2} \mathbb{E}\biggl[\sum_{i=1}^{m_n} \bigl\{F_Q^{-1}(1-U_i) - F_Q^{-1}(U_i)\bigr\}^2\biggr]\biggr|\nonumber\\
    &=\biggl| \EE{f_n(A,A')} -  \frac{1}{2} \mathbb{E}\biggl[\sum_{i=1}^{m_n} \bigl\{F_Q^{-1}(1-U_{(i)}) - F_Q^{-1}(U_{(i)})\bigr\}^2\biggr]\biggr| \leq  2\, \mathbb{E}\biggl[\sum_{i=1}^{m_n}\one{\mathcal{E}_i^c}\biggr],
\end{align}
where the last step follows by recalling that $Q$ is supported on $[-1,1]$.
Next, we define $I_1 := \sum_{i=1}^{m_n}\one{\{U_i<1/2\}}$.
Then by definition of the events $\mathcal{E}_i$, we have
\[\sum_{i=1}^{\lfloor m_n/2\rfloor}\one{\mathcal{E}_i^c} = \sum_{i=1}^{\lfloor m_n/2\rfloor}\one{U_{(i)} \geq 1/2} = \lfloor m_n/2\rfloor - \sum_{i=1}^{\lfloor m_n/2\rfloor}\one{U_{(i)} < 1/2} = \max\{\lfloor m_n/2\rfloor - I_1, 0\},\]
and similarly
\[\sum_{i=\lceil m_n/2\rceil + 1}^{m_n}\one{\mathcal{E}_i^c} = \sum_{i=\lceil m_n/2\rceil + 1}^{m_n}\one{U_{(i)} < 1/2} = \max\left\{I_1 - \lceil m_n/2\rceil, 0\right\}.\]
Therefore,
\begin{multline*}
\sum_{i=1}^{m_n}\one{\mathcal{E}_i^c}\leq 1 + \sum_{i=1}^{\lfloor m_n/2\rfloor}\one{\mathcal{E}_i^c} + \sum_{i=\lceil m_n/2\rceil+1}^{m_n}\one{\mathcal{E}_i^c}\\= 1+\max\{\lfloor m_n/2\rfloor - I_1, 0\}  + \max\left\{I_1 - \lceil m_n/2\rceil, 0\right\} \leq 1 + \left|I_1 - \frac{m_n}{2}\right|.
\end{multline*}
Since $I_1 = \sum_{i=1}^{m_n}\one{\{U_i<1/2\}}\sim \textnormal{Bin}(m_n,1/2)$ and $m_n\in \mathbb{N}$, by the Cauchy--Schwarz inequality we further have that
\begin{align}\label{eqn:f_n-split-3}
    \mathbb{E}\biggl[\sum_{i=1}^{m_n}\one{\mathcal{E}_i^c}\biggr] \leq 1+ \EE{\biggl|I_1-\frac{m_n}{2}\biggr|}\leq 1+\sqrt{\VV{I_1}} = 1 + \frac{\sqrt{m_n}}{2}\leq \frac{3\sqrt{m_n}}{2}.
\end{align}
Next we turn to bounding $\|A'_{()}-B_{()}\|_1$.
Write $\widehat{P}_{A'}$ and $\widehat{P}_B$ for the empirical distributions of $A'$ and $B$, respectively. Then
\[\|A'_{()}-B_{()}\|_1 = m_n{\rm d}_{\rm W}(\widehat{P}_{A'},\widehat{P}_B) \leq m_n{\rm d}_{\rm W}(\widehat{P}_{A'},Q) + m_n{\rm d}_{\rm W}(\widehat{P}_B,Q).\]
By bounds on the empirical $1$-Wasserstein distance \citep[Theorem~3.1]{lei2020}, we have
\[
\EE{{\rm d}_{\rm W}(\widehat{P}_{A'},Q)}\leq C'm_n^{-1/2}
\]
for a universal constant $C' > 0$, and same for $\widehat{P}_B$.  We conclude from this together with~\eqref{eq:f_n-split-1}, \eqref{eqn:f_n-split-2} and~\eqref{eqn:f_n-split-3} that
\[
\Bigl|\EE{f_n(A,B)} - \frac{m_n}{2}\textnormal{Dev}(Q)\Bigr| \leq 8C'\sqrt{m_n} + 3\sqrt{m_n},
\]
as required.
\end{proof}
\begin{lemma}[A version of Lusin's theorem]\label{lem:lusin-theorem}
    Let $f:\RR\to\RR$ be a measurable function and let $\nu$ be a probability measure on $\RR$. Then, given any $\epsilon>0$, there is a compactly supported continuous function $g:\RR\to\RR$ such that $\nu\{x:f(x)\neq g(x)\}<\epsilon$.
\end{lemma}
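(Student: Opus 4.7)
The plan is to reduce the statement to the classical version of Lusin's theorem (which produces a continuous function on a closed set of large measure) and then use a cutoff to ensure compact support. The statement is essentially a repackaging of standard results, so the proof amounts to carefully combining tightness of Borel probability measures on $\R$, the standard Lusin theorem, Tietze's extension theorem, and multiplication by a bump function.

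First, I would use inner regularity of the Borel probability measure $\nu$ on $\R$ to pick a compact set $K \subseteq \R$ with $\nu(\R \setminus K) < \epsilon/2$. Next, I would invoke the classical form of Lusin's theorem (see, e.g., Folland's Real Analysis, Thm.~7.10) to produce a closed set $F \subseteq K$ with $\nu(K \setminus F) < \epsilon/2$ such that the restriction $f|_F$ is continuous. Combining the two gives $\nu(\R \setminus F) < \epsilon$, and $F$ is closed and bounded (hence compact) because $F \subseteq K$.

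Then I would apply Tietze's extension theorem to the continuous function $f|_F : F \to \R$ to obtain a continuous extension $h : \R \to \R$ with $h|_F = f|_F$. To guarantee compact support, I would pick any continuous cutoff $\phi : \R \to [0,1]$ with $\phi \equiv 1$ on $F$ and $\phi \equiv 0$ outside a bounded neighbourhood of $F$ (for instance, take $\phi(x) = \max\{0, 1 - \mathrm{dist}(x, F)\}$ truncated appropriately, or construct it as a piecewise-linear bump supported on a slight enlargement of $K$). Finally, define $g := h \cdot \phi$. By construction $g$ is continuous, has compact support, and agrees with $f$ on $F$, so $\nu\{x : f(x) \neq g(x)\} \leq \nu(\R \setminus F) < \epsilon$, completing the proof.

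I do not anticipate any real obstacle; the only subtlety is that each cited ingredient (tightness of Borel probability measures on $\R$, the classical Lusin theorem stated for continuity on a closed subset, Tietze extension, and the existence of a compactly supported continuous bump equal to $1$ on $F$) is standard and well-known, so the entire argument is essentially a chain of citations and bookkeeping of the measures of the exceptional sets.
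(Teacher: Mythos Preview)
Your proposal is correct and essentially matches the paper's approach: the paper simply cites Theorems~2.18 and~2.24 of Rudin's \emph{Real and Complex Analysis}, which together give exactly the regularity-plus-Lusin-plus-Tietze argument you have spelled out (Rudin's Theorem~2.24 already packages the compactly supported continuous approximant). Your write-up is just a more detailed unpacking of the same standard chain of results.
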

\begin{proof}
    The proof follows by combining Theorem~2.18 and Theorem~2.24 of \citet{rudin1987real}.
\end{proof}

\section{Additional experiments}\label{app:additional_experiments}

In this section, we evaluate the proposed \texttt{PairSwap-ICI} testing procedure alongside several well-established benchmarks from the literature in the following simulation settings.  Let $Y_1,Z_1,\ldots,Y_n,Z_n\iidsim\mathcal{N}(0,1)$. Writing $\bY=(Y_1,\ldots,Y_n)$ and $\bZ=(Z_1,\ldots,Z_n)$, we assume that $\bX = (X_1,\ldots,X_n)\mid(\bY,\bZ)$ follows one of the following models:
\begin{enumerate}[(1)]
    \item $\bX = \beta_n\,\bY + \bZ + \beps$,
    \item $\bX = \beta_n\,\bY^2 + \bZ + \beps$,
    \item $\bX = \beta_n\,\bY^2 + \bZ + \bzeta$,
    \item $\bX = \beta_n\,\bY + \bZ + 0.4\cdot \sin (2\pi\bZ)+ \beps$,
\end{enumerate}
where $\epsilon_1,\ldots,\epsilon_n\iidsim\mathcal{N}(0,1)$ and $\zeta_1,\ldots,\zeta_n\iidsim t_4$ are generated independently of $(\bY,\bZ)$, and we write $\beps=(\epsilon_1,\ldots,\epsilon_n)$ and $\bzeta=(\zeta_1,\ldots,\zeta_n)$. In all cases, $\beta_n=0$ corresponds to the isotonic CI null $H_0^{\mathrm{ICI}}$, while $\beta_n>0$ represents a deviation from the null.

We consider the following tests of conditional independence:
\begin{itemize}
    \item The \texttt{PairSwap-ICI} test with the plug-in matching $\hat{M}$ and the weight matrix $\hat{W}$, as defined in \eqref{def:plug-in-matching}.  With the linear kernel $\psi(x,x')=x-x'$, the weight matrix $\hat{W}= (\hat{W}_{ij})_{i,j \in [n]}$ is computed using estimates $\hat{E}_{ij}$ of $\EEst{\psi(X_i, X_j)}{\bY, \bZ}$, constructed via sample splitting.
    
    \item The \textbf{Generalized Covariance Measure} (GCM) test, introduced by \citet{shah2020hardness}. Define
    \[
    R(X,Y,Z):=\bigl(X-\hat{\mathbb{E}}(X\mid Z)\bigr)\bigl(Y-\hat{\mathbb{E}}(Y\mid Z)\bigr),
    \]
    where $\hat{\mathbb{E}}(X\mid Z)$ and $\hat{\mathbb{E}}(Y\mid Z)$ are estimates of the corresponding conditional means obtained via sample splitting. We then compute the $p$-value
    \[
    p_{\rm GCM}
    =
    2-2\cdot\Phi\!\Biggl(
    \frac{\sqrt{n}\cdot \frac{1}{n}\bigl|\sum_{i=1}^n R(X_i,Y_i,Z_i)\bigr|}
    {\bigl\{\frac{1}{n}\sum_{i=1}^n R^2(X_i,Y_i,Z_i)
    -\bigl(\frac{1}{n}\sum_{i=1}^n R(X_i,Y_i,Z_i)\bigr)^2\bigr\}^{1/2}}
    \Biggr).
    \]

    \item The \textbf{Projected Covariance Measure} (PCM) test, introduced by \citet{lundborg2022projected}. At a high level, PCM follows a structure similar to GCM, but operates on
    \[
    L(X,Y,Z)
    =
    \bigl\{f(X,Z)-\hat{\mathbb{E}}\bigl(f(X,Z)\mid Z\bigr)\bigr\}
    \bigl(Y-\hat{\mathbb{E}}(Y\mid Z)\bigr),
    \]
    where $f$ is a data-adaptive function selected via sample splitting to capture departures from conditional mean independence. As in GCM, the $p$-value is computed as
    \[
    p_{\rm PCM}
    =
    2-2\cdot
    \Phi\!\Biggl(
    \frac{\sqrt{n}\cdot \frac{1}{n}\bigl|\sum_{i=1}^n L(X_i,Y_i,Z_i)\bigr|}
    {\bigl\{\frac{1}{n}\sum_{i=1}^n L^2(X_i,Y_i,Z_i)
    -\bigl(\frac{1}{n}\sum_{i=1}^n L(X_i,Y_i,Z_i)\bigr)^2\bigr\}^{1/2}}
    \Biggr).
    \]

    \item The \textbf{Conditional Randomization Test} (CRT), proposed by \citet{candes2018panning}. Given a test statistic
    $T:\cup_{n=1}^\infty(\mathcal{X}^n\times\mathcal{Y}^n\times\mathcal{Z}^n)\to\mathbb{R}$
    and a sampler $\hat{P}_{X\mid Z}$ approximating the true $P_{X\mid Z}$, we draw
    $\bX^{(1)},\ldots,\bX^{(M)}\iidsim\hat{P}_{X\mid Z}$ and compute the $p$-value
    \[
    p_{\rm CRT}
    =
    \frac{1+\sum_{\ell=1}^M
    \One{T(\bX^{(\ell)},\bY,\bZ)\ge T(\bX,\bY,\bZ)}}{M+1}.
    \]
    In our experiments, we take $T(\bX,\bY,\bZ)=\bX^\top\bY$, and use isotonic distributional regression \citep{henzi2021isotonic} to estimate $\hat{P}_{X\mid Z}$ in order to preserve stochastic monotonicity.
\end{itemize}

\paragraph{Type~I error control.}
We evaluate the Type~I error as the empirical rejection probability when $\beta_n=0$, averaged over $500$ Monte Carlo repetitions for each method.  Figure~\ref{fig:validity_general_expt} displays the Type~I error curves for each method as a function of the sample size. Consistent with Theorem~\ref{thm:main}, \texttt{PairSwap-ICI} maintains finite-sample Type~I error control across all sample sizes. Although Model~(4) exhibits mild violations of Assumption~\ref{asm:st} and thus falls outside the scope of Theorem~\ref{thm:main}, the empirical results indicate that the proposed \texttt{PairSwap-ICI} framework continues to perform reliably in practice. This suggests a degree of robustness to such deviations and its broader applicability beyond such setting.  By contrast, PCM and CRT provide only asymptotic validity guarantees, and we observe noticeable deviations from the nominal level~$\alpha$ in finite samples. GCM is also asymptotically valid; however, in these simulation settings, its Type~I error also remains below the nominal level across the sample sizes considered. 

\begin{figure}[!ht]
    \centering
    \fbox{\includegraphics[scale=0.52]{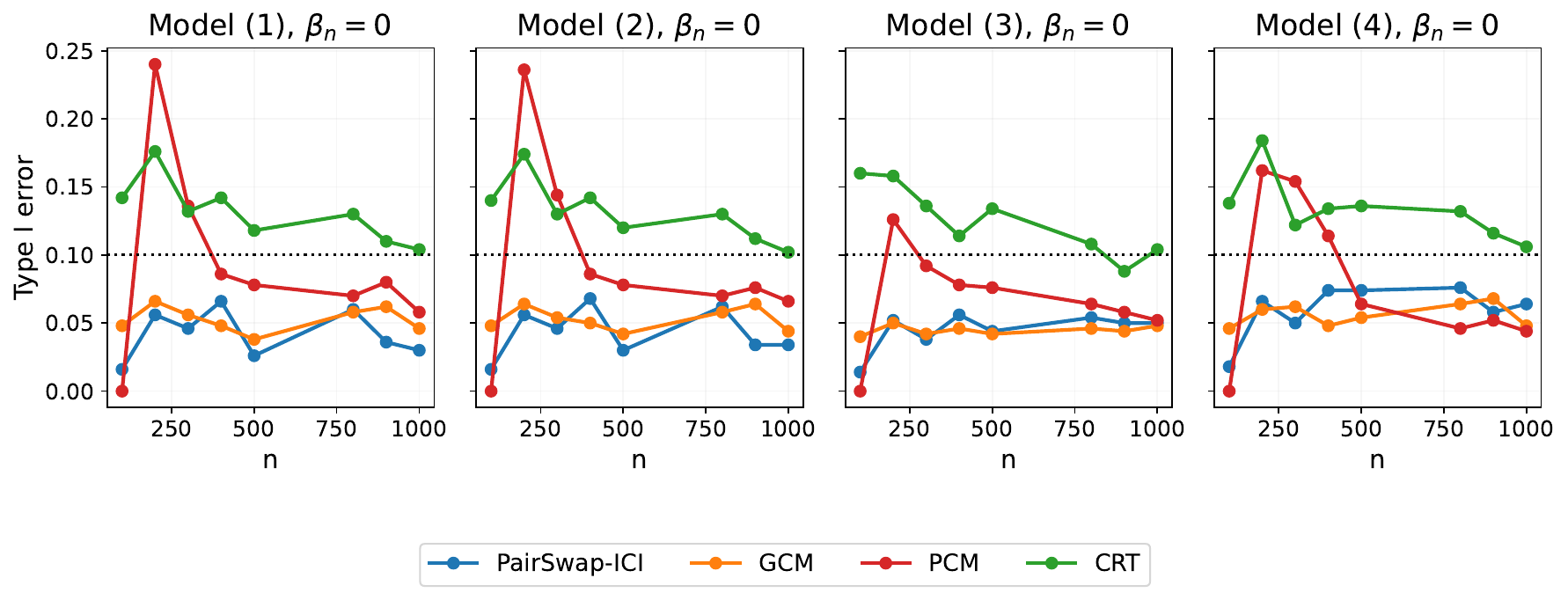}}
    \caption{Type~I error as a function of sample size under the simulation setting of Appendix~\ref{app:additional_experiments}. The nominal level $\alpha=0.1$ is indicated by the dotted black line.}
    \label{fig:validity_general_expt}
\end{figure}

\paragraph{Power.}

\begin{figure}[!ht]
    \centering
    \fbox{\includegraphics[scale=0.52]{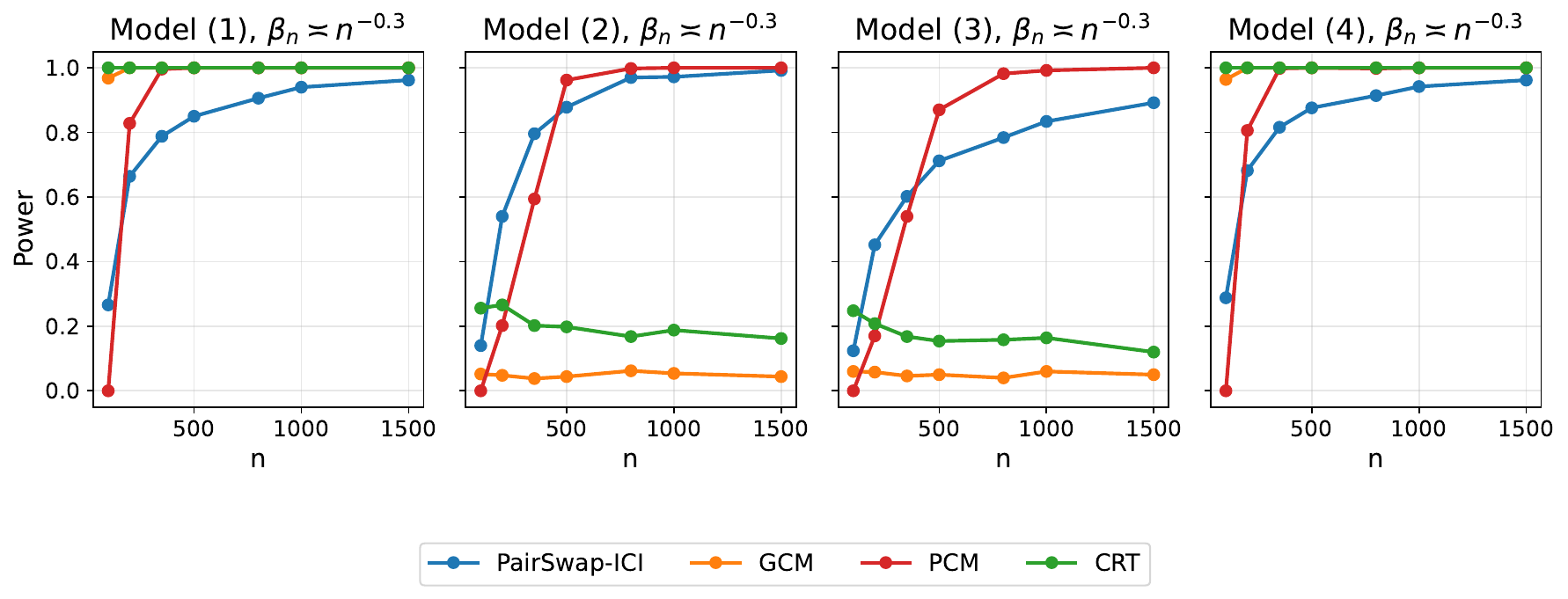}}
    \caption{Power as a function of sample size under the simulation models of Appendix~\ref{app:additional_experiments}, with $\beta_n=3\,n^{-0.3}$. Columns correspond to different data-generating models.}
    \label{fig:power-sim_general_expt}
\end{figure}

We next examine the power of the four methods under the alternatives described above, with $\beta_n=3\,n^{-0.3}$. Figure~\ref{fig:power-sim_general_expt} reports the power, evaluated as empirical rejection probability at different sample sizes, averaged over $500$ independent repetitions.

Model~(1) corresponds to the linear Gaussian setting analyzed in Section~\ref{sec:partial_linear_model}.  By Theorem~\ref{thm:neighbour_matching} and~\ref{thm:cross_bin_matching}, when $\beta_n \gg n^{-1/2}$, the power of \texttt{PairSwap-ICI} converges to $1$, as confirmed by the left panel of Figure~\ref{fig:power-sim_general_expt}. Although competing methods attain power close to $1$ at smaller sample sizes in this model, it is important to note several aspects: first, the CRT method does not control Type~I error at these smaller sizes (see the left panel of Figure~\ref{fig:validity_general_expt}); second, in model~(1), $\beta_n = \EE{\mathrm{Cov}(X,Y\mid Z)}$, which is the target of the GCM test statistic.  In this sense, the GCM method is tailored to this setting.  The drawback of testing the narrower null hypothesis  $H_0^{\mathrm{GCM}}: \EE{\mathrm{Cov}(X,Y\mid Z)}=0$ is illustrated in models~(2) and~(3), where $H_0^{\mathrm{GCM}}$ continues to hold despite the fact that conditional independence is violated.  Thus, the GCM has no power in these models, as can be seen in the middle two panels of Figure~\ref{fig:power-sim_general_expt}.  The CRT suffers a similar drawback, while the PCM has strong power in both models~(2) and~(3), since the departures from conditional independence also represent violations of conditional mean independence, and are therefore favorable settings for the PCM.  Nevertheless, the \texttt{PairSwap-ICI} procedure has competitive power with the PCM in both models~(2) and~(3).  Lastly, model~(4) closely mimics model~(1), and therefore the performances of all the methods are similar under models~(1) and (4).

In summary, the \texttt{PairSwap-ICI} procedure is the only one of the four methods studied that both retains finite-sample Type~I error control in all examples where the stochastic monotonicity holds true, and has competitive power in all settings considered.

\end{document}